\newif\ifmain
\newif\ifappendix
\newif\iffull
\newif\ifabstract
\newcommand{\executeiffilenewer}[3]{%
\ifnum\pdfstrcmp{\pdffilemoddate{#1}}%
{\pdffilemoddate{#2}}>0%
{\immediate\write18{#3}}\fi%
} 
\newcommand{%
\executeiffilenewer{.svg}{.pdf}%
{inkscape -z -D --file=.svg %
--export-pdf=.pdf --export-latex}%
{\input{.pdf_tex}}}[1]{%
\executeiffilenewer{#1.svg}{#1.pdf}%
{inkscape -z -D --file=#1.svg %
--export-pdf=#1.pdf --export-latex}%
{\input{#1.pdf_tex}}}%
\newcommand{\svg}[2]{\def\svgwidth{#1}%
\executeiffilenewer{#2.svg}{#2.pdf}%
{inkscape -z -D --file=#2.svg %
--export-pdf=#2.pdf --export-latex}%
{\input{#2.pdf_tex}}}
\newcommand{\etal}{\emph{et al.}}
\newcounter{nlistcounter}
\newcommand{\probsc}[1]{{\sc #1}}
\newcommand{\probclique}{\probsc{Clique}}
\newcommand{\probnonrep}{\probsc{Constrained Closed Walk}} 
\newcommand{\probATSP}{\probsc{ATSP}} 
\newcommand{\probbalanced}{\probsc{Edge Balancing}}
\newcommand{\probmcb}{\probsc{Multicolored Biclique}}
\newcommand{\typeset}{\mathcal{T}}
\newtheorem{theorem}{Theorem}[section]
\newtheorem{lemma}[theorem]{Lemma}
\newtheorem{definition}[theorem]{Definition}
\newtheorem{proposition}[theorem]{Proposition}
\newcommand{\OPT}{\mathsf{OPT}}
\newcommand{\cost}{\mathsf{cost}}
\newcommand{\nil}{\mathsf{nil}}
\newcommand{\indeg}{\mathsf{in\text{-}degree}}
\newcommand{\outdeg}{\mathsf{out\text{-}degree}}
\newcommand{\myleft}{\mathsf{left}}
\newcommand{\myright}{\mathsf{right}}
\newcommand{\myin}{\mathsf{in}}
\newcommand{\myout}{\mathsf{out}}
\newcommand{\eg}{\mathsf{eg}}
\newcommand{\eps}{\varepsilon}
\newcommand{\vG}{\vec{G}}
\newcommand{\vW}{\vec{W}}
\newcommand{\vH}{\vec{H}}
\newcommand{\vZ}{\vec{Z}}
\newcommand{\vR}{\vec{R}}
\newcommand{\vF}{\vec{F}}
\newcommand{\vS}{\vec{S}}
\newcommand{\mya}{\textbf{a}}
\newcommand{\cT}{{\cal T}}
\newcommand{\cP}{{\cal P}}
\title{Constant-factor approximations for asymmetric TSP
on nearly-embeddable graphs}
\titlerunning{Constant-factor approximations for ATSP
on nearly-embeddable graphs}
\author[1]{D\'aniel Marx}
\author[2]{Ario Salmasi}
\author[3]{Anastasios Sidiropoulos}
\affil[1]{Institute of Computer Science and Control,
 Hungarian Academy of Sciences (MTA SZTAKI), Budapest, Hungary.\\
  \texttt{dmarx@cs.bme.hu}}
\affil[2]{Dept.~of Computer Science and Engineering, The Ohio State University\\
Columbus OH, USA\\
\texttt{salmasi.1@osu.edu}}
\affil[3]{Dept.~of Computer Science and Engineering and Dept.~of Mathematics, The Ohio State University\\
  Columbus OH, USA\\
  \texttt{sidiropoulos.1@osu.edu}}
\authorrunning{D. Marx, A. Salmasi and A. Sidiropoulos} 
\subjclass{F.2.2 [Analysis of Algorithms and Problem Complexity]: Nonnumerical Algorithms and Problems--Computations on discrete structures; G.2.2 [Discrete Mathematics]: Graph Theory--Graph algorithms, Path and circuit problems}
\keywords{ATSP, Held-Karp LP, approximation algorithms, nearly embeddable graphs, graph minors}
\author{
D\'{a}niel Marx
\footnote{Institute of Computer Science and Control,
 Hungarian Academy of Sciences (MTA SZTAKI), Budapest, Hungary; \texttt{dmarx@cs.bme.hu}.
 Research supported by the European Research Council (ERC)  grant ``PARAMTIGHT: Parameterized complexity and the search for tight complexity results,'' reference 280152 and OTKA grant NK105645.}
\and
Ario Salmasi
\footnote{Dept.~of Computer Science and Engineering, The Ohio State University, Columbus OH, USA; \texttt{salmasi.1@osu.edu}.
Supported by NSF grant CCF 1423230 and CAREER 1453472.}
\and
Anastasios Sidiropoulos
\footnote{Dept.~of Computer Science and Engineering and Dept.~of Mathematics, The Ohio State University, Columbus OH, USA, \texttt{sidiropoulos.1@osu.edu}.
Supported by NSF grant CCF 1423230 and CAREER 1453472.}
}
\begin{document}

\maketitle

\begin{abstract}
In the Asymmetric Traveling Salesperson Problem (ATSP) the goal is to find a closed walk of minimum cost in a directed graph visiting every vertex.
We consider the approximability of ATSP on topologically restricted graphs.
It has been shown by Oveis Gharan and Saberi \cite{gharan2011asymmetric} that there exists polynomial-time constant-factor approximations on planar graphs and more generally graphs of constant orientable genus.
This result was extended to non-orientable genus by Erickson and Sidiropoulos \cite{erickson2014near}.

We show that for any class of \emph{nearly-embeddable} graphs, ATSP admits a polynomial-time constant-factor approximation.
More precisely, we show that for any fixed $k\geq 0$, there exist $\alpha, \beta>0$, such that ATSP on $n$-vertex $k$-nearly-embeddable graphs admits a $\alpha$-approximation in time $O(n^\beta)$.
The class of $k$-nearly-embeddable graphs contains graphs with at most $k$ apices, $k$ vortices of width at most $k$, and an underlying surface of either orientable or non-orientable genus at most $k$.
Prior to our work, even the case of graphs with a single apex was open.
Our algorithm combines tools from rounding the Held-Karp LP via thin trees with dynamic programming.

We complement our upper bounds by showing that solving ATSP exactly on graphs of pathwidth $k$ (and hence on $k$-nearly embeddable graphs) requires time $n^{\Omega(k)}$, assuming the Exponential-Time Hypothesis (ETH).
This is surprising in light of the fact that both TSP on undirected graphs and Minimum Cost Hamiltonian Cycle on directed graphs are FPT parameterized by treewidth.
\end{abstract}

\iffull
\newpage
\fi

\section{Introduction}

An instance of the Asymmetric Traveling Salesman Problem (ATSP) consists of a directed graph $\vG$ and a (not necessarily symmetric) cost function $c:E(\vG) \to \mathbb{R}^{+}$.
The goal is to find a spanning closed walk of $\vG$ with minimum total cost.
This is one of the most well-studied NP-hard problems.

Asadpour \etal~\cite{asadpour2010log} obtained a polynomial-time $O(\log n /\log\log n)$-approximation algorithm for ATSP, which was  the first asymptotic improvement in almost 30 years \cite{DBLP:journals/networks/FriezeGM82,blaser2008new,feige2007improved,kaplan2005approximation}.
Building on their techniques, Oveis Gharan and Saberi \cite{gharan2011asymmetric} described a polynomial-time $O(\sqrt{g} \log g)$-approximation algorithm when the input includes an embedding of the input graph into an orientable surface of genus~$g$.
Erickson and Sidiropoulos \cite{erickson2014near} improved the dependence on the genus by obtaining a $O(\log g/\log\log g)$-approximation.

Anari and Oveis Gharan \cite{gharan_spectrally} have recently shown that the integrality gap of the natural linear programming relaxation of ATSP proposed by Held and Karp \cite{held1970traveling} is $\log\log^{O(1)} n$. 
This implies a polynomial-time $\log\log^{O(1)}n$-approximation algorithm for the \emph{value} of ATSP.
We remark that the best known lower bound on the integrality gap of the Held-Karp LP is $2$ \cite{charikar2004integrality}.
Obtaining a polynomial-time constant-factor approximation algorithm for ATSP is a central open problem in optimization.

\subsection{Our contribution}
We study the approximability of ATSP on topologically restricted graphs.
Prior to our work, a constant-factor approximation algorithm was known only for graphs of bounded genus.
We significantly extend this result by showing that there exist a polynomial-time constant-factor approximation algorithm for ATSP on nearly embeddable graphs.
These graphs include graphs with bounded genus, with a bounded number of apices and a bounded number of vortices of bounded pathwidth.
For any $a,g,k,p\geq 0$, we say that a graph is $(a,g,k,p)$-nearly embeddable if it is obtained from a graph of Euler genus $g$ by adding $a$ apices and $k$ vortices of pathwidth $p$ (see \cite{lovasz2006graph,kawarabayashi2007some,diestel2000graph} for more precise definitions).
The following summarizes our result.

\begin{theorem}\label{thm:main}
Let $a,g,k\geq 0$, $p\geq 1$.
There exists a $O(a (g+k+1) + p^2)$-approximation algorithm for ATSP on $(a,g,k,p)$-nearly embeddable digraphs, with running time $n^{O((a+p)(g+k+1)^4)}$.
\end{theorem}

The above algorithm is obtained via a new technique that combines the Held-Karp LP with a dynamic program that solves the problem on  vortices.
We remark that it is not known whether the integrality gap of the LP is constant for graphs of constant pathwidth.


We complement this result by showing that solving ATSP exactly on graphs of pathwidth $p$ (and hence on $p$-nearly embeddable graphs) requires time $n^{\Omega(p)}$, assuming the Exponential-Time Hypothesis (ETH).
This is surprising in light of the fact that both TSP on undirected graphs and Minimum Cost Hamiltonian Cycle on directed graphs are FPT parameterized by treewidth.
The following summarizes our lower bound.

\begin{theorem}\label{thm:lower}
Assuming ETH, there is no $f(p)n^{o(p)}$ time algorithm for ATSP on graphs of pathwidth at most $p$ for any computable function $f$.
\end{theorem}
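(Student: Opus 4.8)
The plan is to give a polynomial-time reduction that turns a \probclique{Clique} instance with parameter $k$ into an \probATSP{ATSP} instance on $\mathrm{poly}(n,k)$ vertices whose underlying digraph has pathwidth $p \le c k$ for an absolute constant $c$, and whose optimum tour cost equals a prescribed value $B$ if the sought clique exists and exceeds $B$ otherwise. Since, assuming ETH, \probclique{Clique} (and its colored/biclique reformulations) admits no $f(k)\,n^{o(k)}$-time algorithm, an $f(p)\,n^{o(p)}$-time algorithm for \probATSP{ATSP} on pathwidth-$p$ graphs would, composed with this reduction, decide \probclique{Clique} in time $f(ck)\cdot \mathrm{poly}(n)^{o(k)} = f'(k)\,n^{o(k)}$, a contradiction. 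It is cleanest to build the reduction in stages, passing through a \probmcb{Multicolored Biclique} formulation and then through the auxiliary problems \probbalanced{Edge Balancing} and \probnonrep{Constrained Closed Walk}, where the relevant constraints (exact degree balance, mandatory or forbidden arcs, connectivity) can be imposed directly before being implemented by cost gadgets inside a genuine \probATSP{ATSP} instance.

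The heart of the construction is a ``grid-shaped but path-decomposable'' digraph: $k$ parallel horizontal \emph{value tracks}, one per vertex to be selected (one per color class in the multicolored version), crossed in series by $O(k^2)$ vertical \emph{checkpoints}, one per pair of tracks. The key idea is a \emph{dense} encoding: the value currently carried by a track is represented not by $\Theta(\log n)$ separate bit-vertices but by \emph{which arc of an $n$-element bundle} the closed walk traverses, so that a single track-arc crossing a separator already carries $\log n$ bits and a separator meeting all $k$ tracks carries the $\Theta(k\log n)$ bits needed to remember all $k$ selections while keeping every bag of size $O(k)$ --- this is precisely where pathwidth $O(k)$, and hence the strong $n^{\Omega(p)}$ form, originates. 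At the checkpoint for tracks $i$ and $j$ we install a verification gadget whose cheapest traversal attains a fixed value iff the vertices currently carried by the two tracks are adjacent in the input graph; chaining the $O(k^2)$ checkpoints in series makes the whole digraph a caterpillar-like structure of pathwidth $O(k)$, and, with costs tuned so each gadget contributes its minimum exactly when its adjacency test passes, the optimum closed walk has cost $B$ iff the $k$ selected vertices form a clique.

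The main obstacle --- and the reason the statement is surprising next to the fixed-parameter tractability of undirected TSP and of directed Minimum Cost Hamiltonian Cycle --- is that a feasible \probATSP{ATSP} solution is a \emph{connected Eulerian sub-multigraph} whose arcs may be repeated arbitrarily. Unbounded multiplicity is exactly what lets us pack $\log n$ bits into a single crossing (for a Hamiltonicity-type problem each vertex is visited once, the cross-separator state is bounded, and one obtains FPT), but it is also the source of the soundness difficulty: the bundle and checkpoint gadgets must be designed so that the walk cannot cheat by using several arcs of one bundle, by short-circuiting a checkpoint, or --- most delicately --- by carrying inconsistent values of the same track into two different checkpoints. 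This is where the \probbalanced{Edge Balancing} stage is useful, since there the degree-balance identities force the ``one arc per bundle'' behaviour and turn cross-checkpoint consistency into a clean parity/flow argument; connectivity is then restored by threading a cheap spine through all gadgets and charging a prohibitive penalty to any solution whose support disconnects, arranged so that these enforcement gadgets raise the pathwidth by only $O(1)$ and so that the \probnonrep{Constrained Closed Walk} constraints translate back into true \probATSP{ATSP} instances. What remains is routine bookkeeping: checking that the reduction runs in polynomial time, produces $\mathrm{poly}(n,k)$ vertices, and yields pathwidth at most $ck$, which gives the claimed lower bound ruling out $f(p)\,n^{o(p)}$-time algorithms under ETH.
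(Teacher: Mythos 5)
You name the same chain of reductions the paper uses (\probclique\ $\Rightarrow$ \probmcb\ $\Rightarrow$ \probbalanced\ $\Rightarrow$ \probnonrep\ $\Rightarrow$ \probATSP), and your framing of \emph{why} pathwidth $O(k)$ can coexist with an $n^{\Theta(k)}$ state space --- the information a closed walk carries across a separator is unbounded because arcs may be repeated and an $n$-way branch choice carries $\log n$ bits, which is exactly what is unavailable in the Hamiltonicity setting --- is the right intuition. The pathwidth accounting (a set of $O(k)$ hub vertices plus constant-pathwidth gadgets hanging off them) and the final cost-tuning step from \probnonrep\ to \probATSP\ are also essentially as in the paper.

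The genuine gap is the step you dismiss as ``the degree-balance identities \dots\ turn cross-checkpoint consistency into a clean parity/flow argument.'' This is precisely where the nontrivial idea lives, and balance alone does not supply it. In the \probbalanced\ instance, the selection for color class $i$ is read at $k$ different edges (one per partner class), and the balance condition at the corresponding vertex only constrains the \emph{sum} of the $k$ values read there against $k\cdot x_{j_i}$; a sum equal to $k$ times a target does not pin down the individual summands, so a cheating solution could report different vertices of class $i$ at different checkpoints. The paper closes this hole by drawing the values from a $k$-non-averaging set of polynomially bounded integers (Lemma~\ref{lem:nonaveraging}, due to Jansen et al.): the average of any $k$ elements of $X$ lies in $X$ only if they are all equal, so $\sum_{t=1}^{k} x_{j^{(t)}} = k x_{j_i}$ forces all the $x_{j^{(t)}}$ to equal $x_{j_i}$. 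A second instance of the same issue is hidden in your checkpoint gadgets: one must also certify that the \emph{pair} of values meeting at a checkpoint is an actual edge of $G$, which the paper achieves by encoding the edge $v_{i_1,j_1}v_{i_2,j_2}$ as the single integer $x_{j_1}+Bx_{j_2}$ for a large base $B$ and decoding the two coordinates from sums, again via non-averaging. Without this (or an equivalent rigidity mechanism) the soundness direction of your reduction --- cheap closed walk implies clique --- does not go through, so the proposal as written is a plan with the key lemma missing rather than a proof.
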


\subsection{Overview of the algorithm}

We now give a high level overview of the main steps of the algorithm and highlight some of the main challenges.

\begin{description}
\item{\textbf{Step 1: Reducing the number of vortices.}}
We first reduce the problem to the case of nearly embeddable graphs with a single vortex.
This is done by iteratively merging pairs of vortices.
We can merge two vortices by adding a new handle on the underlying surface-embedded graph.
For the remainder we will focus on the case of graphs with a single vortex.

\item{\textbf{Step 2: Traversing a vortex.}}
We obtain an exact polynomial-time algorithm for computing a closed walk that visits all the vertices in the vortex.
We remark that this subsumes as a special case the problem of visiting all the vertices in a single face of a planar graph, which was open prior to our work.

Let us first consider the case of a vortex in a planar graph.
Let $\vW$ be an optimal walk that visits all the vertices in the vortex.
Let $F$ be the face on which the vortex is attached.
We give a dynamic program that maintains a set of partial solutions for each subpath of $F$.
We prove correctness of the algorithm by establishing structural properties of $\vW$.
The main technical difficulty is that $\vW$ might be self-crossing.
We first decompose $\vW$ into a collection ${\cal W}$ of non-crossing walks.
We form a conflict graph ${\cal I}$ of ${\cal W}$ and consider a spanning forest ${\cal F}$ of ${\cal I}$.
This allows us to prove correctness via induction on the trees of ${\cal F}$.

The above algorithm can be extended to graphs of bounded genus.
The main difference is that the dynamic program now computes a set of partial solutions for each bounded collection of subpaths of $F$.

Finally, the algorithm is extended to the case of nearly-embeddable graphs by adding the apices to the vortex without changing the cost of the optimum walk.

\item{\textbf{Step 3: Finding a thin forest in the absence of  vortices.}}
The constant-factor approximation for graphs of bounded genus was obtained by showing by constructing thin forests with a bounded number of components in these graphs \cite{gharan2011asymmetric,erickson2014near}.
We extend this result by constructing thin forests with a bounded number of components in graphs of bounded genus and with a bounded number of additional apices.
Prior to our work even the case of planar graphs with a single apex was open; in fact, no constant-factor approximation algorithm was known for these graphs.


\item{\textbf{Step 4: Combining the Held-Karp LP with the dynamic program.}}
We next combine the dynamic program with the thin forest construction.
We first compute an optimal walk $\vW$ visiting all the vertices in the vortex, and we contract the vortex into a single vertex.
A natural approach would be to compute a thin forest in the contracted graph.
Unfortunately this fails because such a forest might not be thin in the original graph.
In order to overcome this obstacle we change the feasible solution of the Held-Karp LP by taking into account $\vW$, and we modify the forest construction so that it outputs a subgraph that is thin with respect to this new feasible solution.

\item{\textbf{Step 5: Rounding the forest into a walk.}}
Once we have a thin spanning subgraph of $G$ we can compute a solution to ATSP via circulations, as in previous work.

\end{description}

\subsection{Organization}
The rest of the paper is organized as follows.
Section \ref{sec:notation} introduces some basic notation.
Section \ref{sec:LP} defines the Held-Karp LP for ATSP.
Section \ref{sec:approx} presents the main algorithm, using the dynamic program and the thin forest construction as a black box.
Section \ref{sec:constant-constant-thin} presents the technique for combining the dynamic program with the Held-Karp LP.
Section \ref{sec:thin_1-apex} gives the algorithm for computing a thin tree in a $1$-apex graph.
This algorithm is generalized to graphs with a bounded number of apices in Section \ref{sec:thin_a-apex}, and to graphs of bounded genus and with a bounded number of apices in Section \ref{sec:thin_forests_higher_genus}.
In Section \ref{sec:main_Lemma_(a,g,1,p)-case} we show how to modify the thin forest construction so that we can compute a spanning thin subgraph in a nearly-embeddable graph, using the solution of the dynamic program.
\ifabstract
Due to lack of space, parts of the analysis of the thin subgraph construction, as well as the dynamic program and the lower bound are deferred to the appendix.
Sections \ref{sec:app:thin_1-apex}, \ref{sec:app:thin_a-apex}, and \ref{sec:app:main_theorem_genus-g} present the analysis of the algorithm for computing thin subgraphs for $1$-apex, $O(1)$-apex, and $O(1)$-genus graphs with $O(1)$ apices respectively.
Section \ref{sec:app:T'_is_thin_(a,g,1,p)-case} contains proofs omitted from Section \ref{sec:main_Lemma_(a,g,1,p)-case}.
\fi

The dynamic program is given in Sections \ref{sec:normalization}, \ref{sec:uncrossing}, \ref{sec:vortex_planar}, \ref{sec:vortex_genus}, and \ref{sec:vortex_nearly}.
More precisely, Section \ref{sec:normalization} introduces a certain preprocessing step.
Section \ref{sec:uncrossing} establishes a structural property of the optimal solution.
Section \ref{sec:vortex_planar} presents the dynamic program for a vortex in a planar graph.
Sections \ref{sec:vortex_genus} and \ref{sec:vortex_nearly} generalize this dynamic program to graphs of bounded genus and with a bounded number of apices respectively.

Finally, Section \ref{sec:lower} presents the lower bound.

\section{Notation}
\label{sec:notation}


In this section we introduce some basic notation that will be used throughout the paper.

\textbf{Graphs.}
Unless otherwise specified, we will assume that for every pair of vertices in a graph there exists a unique shortest path; this property can always achieved by breaking ties between different shortest paths in a consistent manner (e.g.~lexicographically).
Moreover for every edge of a graph (either directed or undirected) we will assume that its length is equal to the shortest path distance between its endpoints.
Let $\vG$ be some digraph.
Let $G$ be the undirected graph obtained from $\vG$ by ignoring the directions of the edges,
that is 
$V(G)=V(\vG)$ and $E(G)=\{\{u,v\}:(u,v) \in E(\vG) \text{ or } (v,u)\in E(\vG)\}$.
We say that $G$ is the \emph{symmetrization} of $\vG$.
For some $x:E(\vG)\to \mathbb{R}$ we define
$\cost_{\vG}(x) = \sum_{(u,v)\in E(\vG)} x((u,v))\cdot d_{\vG}(u,v)$.
For a subgraph $S\subseteq G$ we define
$\cost_G(S) = \sum_{e\in E(S)} c(e)$.
Let $z$ be a weight function on the edges of $G$. For any $A, B \subseteq V(G)$ we define $z(A, B) = \sum_{a \in A, b \in B} z(\{a,b\})$.

\textbf{Asymmetric TSP.}
Let $\vG$ be a directed graph with non-negative arc costs.
For each arc $(u,v)\in E(\vG)$ we denote the cost of $(u,v)$ by $c(u,v)$.
A \emph{tour} in $\vG$ is a closed walk in $\vG$.
The \emph{cost} of a tour $\tau=v_1,v_2,\ldots,v_k,v_1$ is defined to be 
$\cost_{\vG}(\tau) = d_{\vG}(v_k,v_1) + \sum_{i=1}^{k-1} d_{\vG}(v_i,v_{i+1})$.
Similarly the cost of an open walk $W=v_1,\ldots,v_k$ is defined to be $\cost_{\vG}(W) = \sum_{i=1}^{k-1} d_{\vG}(v_i, v_{i+1})$.
The cost of a collection ${\cal W}$ of walks is defined to be $\cost_{\vG}({\cal W}) = \sum_{W\in {\cal W}} \cost_{\vG}(W)$.
We denote by $\OPT_{\vG}$ the minimum cost of a tour traversing all vertices in $\vG$.
For some $U\subseteq V(\vG)$ we denote by $\OPT_{\vG}(U)$ the minimum cost of a tour in $\vG$ that visits all vertices in $U$.

\section{The Held-Karp LP}\label{sec:LP}

We recall the Held-Karp LP for ATSP \cite{hk-tspmst-70}.  Fix a directed graph $\vG$ and a cost function $c: E(\vG) \to \mathbb{R}^+$.  For any subset $U\subseteq V$, we define
\begin{align*}
	\delta_{\vG}^+(U)
		&:= \{(u,v) \in E(\vG) : u\in U \text{~and~} v\notin U\}
	\\
	\text{and}\quad
	\delta_{\vG}^-(U)
		&:= \delta_{\vG}^+(V\setminus U).
\end{align*}
We omit the subscript $\vG$ when the underlying graph is clear from context.
We also write $\delta^+(v) = \delta^+(\{v\})$ and $\delta^-(v) = \delta^-(\{v\})$ for any single vertex $v$. 

Let $G$ be the symmetrization of $\vG$.
For any $U\subseteq V(G)$, we define
\[
	\delta_{G}(U) := \{\{u,v\} \in E(G) : u\in U \text{~and~}  v\notin U\}.
\]
Again, we omit the subscript $G$ when the underlying graph is clear from  context.  We also extend the cost function $c$ to undirected edges by defining 
\[
	c(\{u,v\}) := \min\{ c((u,v)), c((v,u)) \}.
\]
For any function $x\colon E(\vG)\to \mathbb{R}$ and any subset $W\subseteq E(\vG)$, we write $x(W) = \sum_{a\in W} x(a)$.  With this notation, the Held-Karp LP relaxation is defined as follows.
\begin{center}
	\fbox{$
	\begin{array}{r@{\,}r@{\quad}l}
	\text{minimize} &
		\sum_{a\in A} c(a) \cdot x(a)&
\\[2ex]
	\text{subject to} &
		x(\delta^+(U)) \geq 1
			& \text{for all nonempty~} U\subsetneq V(\vG)
\\[0.5ex]&
		x(\delta^+(v)) = x(\delta^-(v))
			& \text{for all~} v\in V(\vG)
\\[0.5ex]&
		x(a) \geq 0
			& \text{for all~} a\in E(\vG)
	\end{array}
	$}
\end{center}
We define the \emph{symmetrization} of~$x$ as the function $z\colon E(G) \to \mathbb{R}$ where
\[
	z(\{u,v\}) := x((u,v)) + x((v,u))
\]
for every edge $\{u,v\} \in E(G)$.
For any subset $W\subseteq E(G)$ of edges, we write $z(W) := \sum_{e\in W} z(e)$.
Let $\vW \subseteq \vG$.
Let $\alpha,s>0$.
We say that $\vW$ is \emph{$\alpha$-thin} (w.r.t.~$z$) if for all $U\subseteq V$ we have
\[
|E(\vW)\cap \delta(U)| \leq \alpha \cdot z(\delta(U)).
\]
We also say that $\vW$ is \emph{$(\alpha,s)$-thin} (w.r.t.~$x$) if $\vW$ is $\alpha$-thin (w.r.t.~$z$) and 
\[
c(E(\vW)) \leq s\cdot \sum_{e\in E(\vG)} c(e) \cdot x(e).
\]
We say that $z$ is \emph{$\vec{W}$-dense} if for all $(u,v)\in E(\vec{W})$ we have $z(\{u,v\})\geq 1$.
We say that $z$ is \emph{$\eps$-thick} if for all $U\subsetneq V(G)$ with $U\neq \emptyset$ we have $z(\delta(U)) \geq \eps$.


\section{An approximation algorithm for nearly-embeddable graphs}
\label{sec:approx}

The following Lemma is implicit in the work of Erickson and Sidiropoulos \cite{erickson2014near} (see also \cite{asadpour2010approximation}).
\begin{lemma}\label{lem:from_thin_trees_to_walks}
Let $\vec{G}$ be a digraph and let $x$ be a feasible solution for the Held-Karp LP for $\vec{G}$. Let $\alpha, s > 0$, and let $S$ be a $(\alpha, s)$-thin spanning subgraph of $G$ (w.r.t.~$x$), with at most $k$ connected components. Then, there exists a polynomial-time algorithm which computes a collection of closed walks $C_1, \ldots, C_{k'}$, for some $k' \leq k$, such that their union visits all the vertices in $V(\vec{G})$, and such that $\sum_{i=1}^{k} \cost_{\vec{G}}(C_i) \leq (2\alpha + s )\sum_{e \in E(\vec{G})} c(e)\cdot x(e)$.
\end{lemma}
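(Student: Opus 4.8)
The plan is to follow the well-known template for turning a thin subgraph into a set of Eulerian circulations, adapting it to the directed setting with $k$ components. Here is how I would proceed.

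\textbf{Step 1: Reduce to orienting the thin subgraph.} First I would replace each undirected edge $e=\{u,v\}\in E(S)$ by the cheaper of its two directions, i.e.\ by the arc realizing $c(\{u,v\})=\min\{c((u,v)),c((v,u))\}$. Call the resulting directed subgraph $\vS$; it spans $V(\vG)$, has the same number $k'\le k$ of (weakly) connected components as $S$, and $\cost_{\vG}(\vS)=c(E(S))\le s\sum_{e}c(e)x(e)$ by $(\alpha,s)$-thinness. Now the task is to make each weakly connected component of $\vS$ into a closed walk; concatenating/merging components will be handled at the end, and since we allow up to $k$ separate closed walks, we need only produce one closed walk per component.

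\textbf{Step 2: Route a fractional circulation supported near $\vS$ and round it.} Within one weakly connected component $H$ of $\vS$, I would set up an auxiliary directed graph consisting of the arcs of $H$ together with the arcs of $\vG$, and look for a minimum-cost circulation $f$ that (i) sends at least one unit across every arc of $H$ (to guarantee Eulerian traversal of $H$'s arcs and hence all of $H$'s vertices) and (ii) is a valid circulation, i.e.\ conserves flow at every vertex. The key point is bounding $\cost_{\vG}(f)$: I would exhibit a feasible fractional solution by taking the arcs of $H$ with value $1$ plus $\lceil \alpha z(\cdot)\rceil$-type corrections routed along the LP solution $x$. The thinness hypothesis $|E(\vS)\cap\delta(U)|\le \alpha z(\delta(U))$ for all $U$ is exactly what is needed: by an LP/flow duality (Hoffman's circulation theorem), a circulation meeting the lower bounds "$\ge 1$ on arcs of $H$" exists as soon as for every cut the demand $|E(H)\cap\delta(U)|$ does not exceed the available capacity, which $\alpha z(\delta(U))$ units of $x$-flow (scaled by $\alpha$) can supply. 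Thus one obtains an integral circulation $f$ with $f\ge 1$ on $E(H)$ and $\cost_{\vG}(f)\le \cost_{\vG}(H) + 2\alpha\sum_e c(e)x(e)$, the factor $2$ coming from the symmetrization $z(e)=x((u,v))+x((v,u))$ and the fact that $x$-flow may need to be used in both directions. Summing the $2\alpha\cdots$ term over all components double-counts nothing beyond this, so globally we get $\sum_i \cost_{\vG}(f_i)\le s\sum_e c(e)x(e) + 2\alpha\sum_e c(e)x(e)=(2\alpha+s)\sum_e c(e)x(e)$.

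\textbf{Step 3: Extract closed walks.} An integral circulation $f_i$ with $f_i\ge 1$ on every arc of the connected subgraph $H_i$ is an Eulerian multidigraph that is connected (it contains $H_i$ which spans $V(H_i)$ and is weakly connected, and $f_i$ being a circulation makes it strongly connected on its support), so it decomposes into a single closed walk $C_i$ traversing every arc of $f_i$, in particular visiting every vertex of $H_i$. Since the $H_i$ partition $V(\vG)$, the union $C_1,\dots,C_{k'}$ visits all of $V(\vG)$, and $\sum_i\cost_{\vG}(C_i)=\sum_i\cost_{\vG}(f_i)\le(2\alpha+s)\sum_e c(e)x(e)$, as required. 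All steps (min-cost circulation, Euler tour extraction) are polynomial time.

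The main obstacle I expect is Step 2: setting up the auxiliary flow problem so that (a) the lower bounds on arcs of $H$ are met, (b) the feasibility certificate uses precisely the thinness inequality over \emph{all} cuts $U$ (not just singletons), and (c) the cost bound picks up exactly the factor $2\alpha$ and no more — in particular being careful that the $s\sum c(e)x(e)$ and $2\alpha\sum c(e)x(e)$ terms add rather than multiply, and that summing over the $\le k$ components does not blow up the $\alpha$-term. Getting the Hoffman-type feasibility argument clean (equivalently, an explicit fractional circulation built from $x$ scaled by $\alpha$ plus the indicator of $H$) is the crux; the rest is standard bookkeeping. Since the statement attributes this to prior work \cite{erickson2014near,asadpour2010approximation}, I would cite those for the detailed circulation argument and only sketch the adaptation to the $k$-component case here.
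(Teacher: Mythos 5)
The paper does not actually prove this lemma---it is cited as implicit in Erickson--Sidiropoulos and Asadpour et al.---and your proposal reconstructs exactly the intended argument: orient $S$ cheaply, impose lower bound $1$ on its arcs, certify a feasible circulation via Hoffman's theorem using thinness together with the fact that the Held--Karp solution $x$ is itself a circulation (so $x(\delta^+(U))=x(\delta^-(U))$, which is where the factor $2$ in $2\alpha$ enters), take a minimum-cost integral circulation, and read off Eulerian closed walks from the components of its support. So the route is the right one.

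There is, however, one genuine gap in your accounting. You solve a separate circulation problem for each weakly connected component $H_i$ and bound each by $\cost_{\vG}(H_i)+2\alpha\sum_e c(e)x(e)$; summing over the $k'\le k$ components then gives $s\sum_e c(e)x(e)+2k'\alpha\sum_e c(e)x(e)$, not $(2\alpha+s)\sum_e c(e)x(e)$. Your assertion that ``summing the $2\alpha$ term over all components double-counts nothing'' is not justified and, as stated, is false: each per-component feasibility certificate uses the \emph{entire} vector $2\alpha x$ as capacity, so the costs genuinely add $k'$ times. The fix is to solve a single circulation problem with lower bound $1$ simultaneously on every arc of the oriented $\vS$ and capacity of the form $\mathbf{1}_{\vS}+2\alpha x$; Hoffman feasibility for an arbitrary cut $U$ follows from $|E(S)\cap\delta(U)|\le\alpha z(\delta(U))=\alpha\bigl(x(\delta^+(U))+x(\delta^-(U))\bigr)=2\alpha\, x(\delta^-(U))$, and the resulting integral circulation has cost at most $c(E(S))+2\alpha\sum_e c(e)x(e)\le(2\alpha+s)\sum_e c(e)x(e)$. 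Its support contains the spanning subgraph $\vS$, so it has at most $k'\le k$ connected components, each Eulerian, yielding the desired closed walks. With that single-circulation formulation your argument goes through; the rest of your writeup (orientation, Euler tour extraction, polynomial time) is correct.
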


The following is the main technical Lemma that combines a solution to the Held-Karp LP with a walk traversing the vortex that is computed via the dynamic program.
The proof of Lemma \ref{lem:constant-constant-thin} is deferred to Section \ref{sec:constant-constant-thin}.

\begin{lemma}\label{lem:constant-constant-thin}
Let $a,g,p>0$, let $\vec{G}$ be a $(a,g,1,p)$-nearly embeddable graph, and let $G$ be its symmetrization.
There exists an algorithm with running time $n^{O((a+p)g^4)}$ which computes a feasible solution $x$ for the Held-Karp LP for $\vec{G}$ with cost $O(\OPT_{\vec{G}})$ and a spanning subgraph $S$ of $G$ with at most $O(a+g)$ connected components, such that $S$ is $(O(a\cdot g + p^2), O(1))$-thin w.r.t.~$x$.
\end{lemma}

Using Lemma \ref{lem:constant-constant-thin} we are now ready to obtain an approximation algorithm for nearly-embeddable graphs with a single vortex.

\begin{theorem}\label{thm:one_vortex}
Let $a,g\geq 0$, $p\geq 1$.
There exists a $O(a\cdot g + p^2)$-approximation algorithm for ATSP on $(a,g,1,p)$-nearly embeddable digraphs, with running time $n^{O((a+p)(g+1)^4)}$.
\end{theorem}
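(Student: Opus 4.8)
The plan is to derive Theorem \ref{thm:one_vortex} by feeding the output of Lemma \ref{lem:constant-constant-thin} into Lemma \ref{lem:from_thin_trees_to_walks}. First, I would handle the degenerate boundary cases: if $a=0$ and $g=0$, then a $(0,0,1,p)$-nearly embeddable graph is essentially a graph whose structure is captured entirely by a single vortex of pathwidth $p$, so the dynamic program of Step~2 (applied to the whole graph) computes an optimal tour directly, giving a $1$-approximation; similarly if $p$ were the only nontrivial parameter the claim still reduces to that dynamic program. So assume $a,g,p>0$, which is exactly the regime where Lemma \ref{lem:constant-constant-thin} applies. Invoke Lemma \ref{lem:constant-constant-thin} to obtain, in time $n^{O((a+p)g^4)}$, a feasible Held-Karp solution $x$ for $\vec{G}$ with $\cost_{\vec{G}}(x) = O(\OPT_{\vec{G}})$, together with a spanning subgraph $S$ of the symmetrization $G$ having $k = O(a+g)$ connected components and being $(O(a\cdot g + p^2), O(1))$-thin with respect to $x$.

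Next, apply Lemma \ref{lem:from_thin_trees_to_walks} with $\alpha = O(a\cdot g + p^2)$, $s = O(1)$, and this $S$. Since $S$ is a spanning subgraph of $G$ with at most $k = O(a+g)$ connected components and is $(\alpha,s)$-thin w.r.t.~$x$, the lemma yields, in polynomial time, a collection of closed walks $C_1,\dots,C_{k'}$ with $k'\le k$ whose union visits every vertex of $\vec{G}$ and with
\[
\sum_{i=1}^{k'} \cost_{\vec{G}}(C_i) \;\le\; (2\alpha + s)\sum_{e\in E(\vec{G})} c(e)\cdot x(e) \;=\; O(a\cdot g + p^2)\cdot \cost_{\vec{G}}(x).
\]
It then remains to stitch the $k'$ closed walks into a single tour. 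Here I would use the standard argument that, because $x$ is a feasible Held-Karp solution (so $G$ is connected in the support of $z$, or can be made so by adding shortest-path connectors), one can connect the $C_i$ into one closed walk by inserting shortest paths between consecutive components along a spanning tree of the "component graph." Each such connector has cost at most $\OPT_{\vec{G}}$ (in fact one charges the connectors to the LP as well, since the cut constraints $x(\delta^+(U))\ge 1$ force $\cost_{\vec{G}}(x)$ to dominate the distances being added), and there are $k'-1 = O(a+g)$ of them; since $a+g = O(a\cdot g + p^2)$ for $a,g,p>0$, this only inflates the cost by an additive $O(a\cdot g+p^2)\cdot\OPT_{\vec{G}}$. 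Combining with $\cost_{\vec{G}}(x) = O(\OPT_{\vec{G}})$ gives a tour of cost $O(a\cdot g + p^2)\cdot \OPT_{\vec{G}}$.

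Finally, the running time is the maximum of $n^{O((a+p)g^4)}$ from Lemma \ref{lem:constant-constant-thin} and the polynomial time of Lemma \ref{lem:from_thin_trees_to_walks} and the connector step, which is $n^{O((a+p)(g+1)^4)}$ after absorbing the $g=0$ case into the $(g+1)$ form. The main obstacle I anticipate is not the final assembly but making sure the connector/stitching step is charged correctly: one must argue that linking the $O(a+g)$ closed walks can be done within the LP budget rather than naively paying $k'$ copies of $\OPT_{\vec{G}}$ — but this is exactly the kind of circulation/shortcutting argument already used in prior work on bounded-genus ATSP and referenced for Lemma \ref{lem:from_thin_trees_to_walks}, so I expect it to go through with the bound $k' = O(a+g) = O(a\cdot g + p^2)$ doing the bookkeeping. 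A secondary point of care is verifying that $S$ being thin w.r.t.~the \emph{modified} solution $x$ produced by Lemma \ref{lem:constant-constant-thin} (rather than w.r.t.~an arbitrary optimal LP solution) is precisely what Lemma \ref{lem:from_thin_trees_to_walks} requires, which it is, since that lemma is stated for any feasible $x$ together with a subgraph thin w.r.t.~that same $x$.
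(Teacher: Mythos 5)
Your main line is the same as the paper's: invoke Lemma~\ref{lem:constant-constant-thin} to get the feasible Held--Karp solution $x$ and the $(O(a\cdot g+p^2),O(1))$-thin spanning subgraph $S$ with $O(a+g)$ components, then round via Lemma~\ref{lem:from_thin_trees_to_walks}, then stitch the $k'=O(a+g)$ closed walks together. The only real difference is the stitching: the paper picks one representative vertex per walk, builds a $k'$-vertex shortest-path ATSP instance, solves it \emph{exactly} in time $2^{O(a+g)}n^{O(1)}$ (so the stitching costs at most $\OPT_{\vec{G}}$), and composes; you instead pay $O(a+g)$ shortest-path connectors of cost at most $\OPT_{\vec{G}}$ each. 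Your version is cruder but suffices, since $a+g\le a\cdot g+1=O(a\cdot g+p^2)$ once $a,g\ge 1$; just note that in a digraph a ``spanning tree of the component graph'' must be traversed in both directions (or simply visit the representatives in an arbitrary cyclic order), which only doubles the count, and that the clean bound is $d_{\vec G}(u,v)\le\OPT_{\vec G}$ for every pair $u,v$ rather than any charging against the LP cut constraints.

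One genuine error: your treatment of the boundary case $a=g=0$ is wrong. A $(0,0,1,p)$-nearly embeddable graph is a planar graph with a vortex attached to a face, not a graph ``captured entirely by the vortex,'' and the dynamic program (Theorem~\ref{thm:vortex_nearly}) only produces a walk visiting the \emph{vortex} vertices, so it does not solve ATSP on the whole graph exactly (ATSP is already hard on planar graphs). The correct way to handle $a=0$ or $g=0$ is monotonicity of the graph classes: an $(a,g,1,p)$-nearly embeddable graph is also $(\max(a,1),\max(g,1),1,p)$-nearly embeddable, so one pads the parameters up to $1$ before applying Lemma~\ref{lem:constant-constant-thin}; this is exactly what the $(g+1)^4$ in the stated running time is absorbing.
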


\begin{proof}
We follow a similar approach to \cite{erickson2014near}.
The only difference is that in \cite{erickson2014near} the algorithm uses an optimal solution to the Held-Karp LP.
In contrast, here we use a feasible solution that is obtained by Lemma \ref{lem:constant-constant-thin}, together with an appropriate thin subgraph.

Let $\vec{G}$ be $(a,g,1,p)$-nearly embeddable digraph. By using Lemma \ref{lem:constant-constant-thin}, we find in time $n^{O((a+p)g^4)}$ a feasible solution $x$ for the Held-Karp LP for $\vec{G}$ with cost $O(\OPT_{\vec{G}})$ and a spanning subgraph $S$ of $G$ with at most $O(a+g)$ connected components, such that $S$ is $(O(a\cdot g + p^2), O(1))$-thin w.r.t.~$x$. Now we compute in polynomial time a collection of closed walks $C_1, \ldots, C_{k'}$, for some $k' \in O(a+g)$, that visit all the vertices in $V(\vec{G})$, and such that the total cost of all walks is at most $O((a\cdot g + p^2)\cdot OPT_{\vec{G}})$, using Lemma \ref{lem:from_thin_trees_to_walks}. For every $i \in \{1, \ldots, k'\}$, let $v_i \in V(\vec{G})$ be an arbitrary vertex visited by $C_i$. We construct a new instance $(\vec{G}' , c')$ of ATSP as follows. Let $V(\vec{G}') = \{v_1, \ldots, v_{k'}\}$. For any $u,v \in V(\vec{G}')$, we have an edge $(u,v)$ in $E(\vec{G}')$, with $c'(u,v)$ being the shortest-path distance between $u$ and $v$ in $G$ with edge weights given by $c$.
By construction we have $\OPT_{\vec{G}'} \leq \OPT_{\vec{G}}$. We find a closed tour $C$ in $\vec{G}'$ with $\cost_{\vec{G}'} (C) = \OPT_{\vec{G}'}$ in time $2^{O(|V(\vec{G}')|)} \cdot n^{O(1)} = 2^{O(a + g)} \cdot n^{O(1)}$. By composing $C$ with the $k'$ closed walks $C_1, \ldots, C_{k'}$, and shortcutting as in \cite{frieze1982worst}, we obtain a solution for the original instance, of total cost $O(a\cdot g + p^2)\cdot \OPT_{\vec{G}}$.
\end{proof}

We are now ready to prove the main algorithmic result of this paper.

\begin{proof}[Proof of Theorem \ref{thm:main}]
We may assume $k\geq 2$ since otherwise the assertion follows by Theorem \ref{thm:one_vortex}.
We may also assume w.l.o.g.~that $p\geq 2$.
Let $\vG$ be a $(a,g,k,p)$-nearly embeddable digraph.
It suffices to show that there exists a polynomial time computable $(a,g+k-1,1,2p)$-nearly embeddable digraph $\vG'$ with $V(\vG')=V(\vG)$ such that for all $u,v\in V(\vG)$ we have $d_{\vG}(u,v)=d_{\vG'}(u,v)$.
We compute $\vG'$ as follows.
Let $\vH_1,\ldots,\vH_k$ be the vortices of $\vG$ and let $\vF_1,\ldots,\vF_{k}$ be the faces on which they are attached.
For each $i\in \{1,\ldots,k\}$ pick distinct $e_i,f_i\in E(\vF_i)$, with $e_i=\{w_i,w_i'\}$, $f_i=\{z_i,z_i'\}$.
There exists a path decomposition $B_{i,1},\ldots,B_{i,\ell_i}$ of $\vH_i$, of width at most $2p$, and such that $B_{i,1}=e_i$, and $B_{i,\ell_i}=f_i$.
For each $i\in \{1,\ldots,k-1\}$, we add edges $(w_i,z_i)$, $(z_i,w_i)$, $(w_i',z_i')$, and $(z_i',w_i')$ to $\vG'$, and we set their length to be equal to the shortest path distance between their endpoints in $\vG$.
We also add a handle connecting punctures in the disks bounded by $\vF_i$ and $\vF_{i+1}$ respectively, and we route the four new edges along this handle.
Since we add $k-1$ handles in total the Euler genus of the underlying surface increases by at most $k-1$.
We let $\vH$ be the single vortex in $\vG'$ with $V(\vH)=\bigcup_{i=1}^k V(\vH_i)$ and
$E(\vH) = \left(\bigcup_{i=1}^k E(\vH_i)\right) \cup \left(\bigcup_{i=1}^{k-1} \{(w_i,z_i), (z_i,w_i), (w_i',z_i'), (z_i',w_i')\}\right)$.
It is immediate that 
\[
B_{1,1},\ldots,B_{1,\ell_1},\{f_1,e_2\},B_{2,1},\ldots,B_{2,\ell_2},\{f_2,e_3\},\ldots,B_{k,1}, \ldots ,B_{k,\ell_k}
\]
is a path decomposition of $\vH$ of width at most $2p$.
Thus $\vG'$ is $(a,g+k-1,1,2p)$-nearly embeddable, which concludes the proof.
\end{proof}

\section{Combining the Held-Karp LP with the dynamic program}
\label{sec:constant-constant-thin}

In this Section we show how to combine the dynamic program that finds an optimal closed walk traversing all the vertices in a vortex, with the Held-Karp LP.
The following summarizes our exact algorithm for traversing the vortex in a nearly-embeddable graph.
The proof of Theorem \ref{thm:vortex_nearly} is deferred to Section \ref{sec:vortex_nearly}.

\begin{theorem}\label{thm:vortex_nearly}
Let $\vG$ be an $n$-vertex $(a, g, 1, p)$-nearly embeddable graph and let $\vH$ be the single vortex of $\vG$.
Then there exists an algorithm which computes a walk $\vW$ visiting all vertices in $V(\vH)$ of total length at most $\OPT_{\vG}(V(\vH))$ in time $n^{O((a+p)g^4)}$.
\end{theorem}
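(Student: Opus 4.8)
The plan is to prove Theorem~\ref{thm:vortex_nearly} by reducing the apex case to the apex-free case and invoking the dynamic program for a vortex in a bounded-genus graph (the algorithm behind the claimed generalization of the planar dynamic program). First I would recall that a $(a,g,1,p)$-nearly embeddable graph $\vG$ consists of a graph $\vG_0$ embedded on a surface of Euler genus $g$, together with a single vortex $\vH$ of pathwidth $p$ attached to a face $\vF$, and a set $A$ of at most $a$ apex vertices that may be joined arbitrarily to the rest of the graph. The key observation is that we want a walk visiting all of $V(\vH)$, and the apices are \emph{not} required to be visited; they only matter insofar as an optimal such walk might route \emph{through} them to save cost. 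Since we have assumed (see the Notation section) that every edge length equals the shortest-path distance between its endpoints, any time the optimal walk $\vW$ passes through an apex $x\in A$ on a subpath between two vortex vertices $u,v$, that portion can be replaced by the single edge $(u,v)$ of the same length — so we never lose by forbidding the walk to visit apices, \emph{provided} we first add to the graph the metric completion on $V(\vH)$.

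So the concrete reduction is: form $\vG'$ from $\vG$ by deleting the apex set $A$, keeping $\vG_0$ and $\vH$, and adding, for every ordered pair $u,v\in V(\vH)$, an edge $(u,v)$ of length $d_{\vG}(u,v)$. These new edges can all be drawn inside the face $\vF$ (or, more carefully, absorbed into the vortex $\vH$: adding a clique on $V(\vH)$ inside the vortex changes its pathwidth by at most a constant factor, since a path decomposition of width $p$ already has each vertex in a bag together with its neighbors along the path — actually one must be slightly careful here, so alternatively route them through a bounded number of extra handles, or simply treat them as part of an enlarged vortex of width $O(p)$). In any case $\vG'$ is $(0, g', 1, p')$-nearly embeddable with $g' = O(g)$ and $p' = O(p)$, and $d_{\vG'}(u,v) = d_{\vG}(u,v)$ for all $u,v\in V(\vH)$ since the shortcut edges exactly realize the $\vG$-distances and no distance can decrease. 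Hence $\OPT_{\vG'}(V(\vH)) = \OPT_{\vG}(V(\vH))$: a walk in $\vG$ visiting $V(\vH)$ maps to one in $\vG'$ of no greater cost by replacing maximal apex-internal subwalks with shortcut edges, and conversely a walk in $\vG'$ is trivially realizable in $\vG$ at the same cost. Then I apply the bounded-genus-with-no-apices dynamic program (Step 2 of the overview, specialized to $a=0$) to $\vG'$, which runs in time $(n')^{O(p' g'^4)} = n^{O(p g^4)}$ and outputs an optimal walk $\vW'$ visiting $V(\vH)$; expanding each shortcut edge back into its defining shortest path in $\vG$ yields the desired walk $\vW$ of length $\OPT_{\vG}(V(\vH))$. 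The running time absorbs into $n^{O((a+p)g^4)}$ as claimed, with the $a$ appearing because computing the shortest-path metric among vortex vertices and the incidences with apices costs $n^{O(1)}$ but more importantly because one should double-check that enlarging the vortex to accommodate apex-shortcuts only inflates the width by an additive $O(a)$ rather than multiplicatively.

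The main obstacle I anticipate is making the ``add apices to the vortex without changing the optimum'' step fully rigorous in terms of the surface embedding and pathwidth bookkeeping. There are two subtleties: first, the shortcut edges among $V(\vH)$ are non-crossing-unfriendly — a clique cannot be drawn in a disk — so one genuinely needs to absorb them into the vortex structure, which requires verifying that a vortex of width $p$ attached to face $\vF$, after adding an arbitrary set of edges between its vertices, still admits a path decomposition of width $O(p)$ respecting the cyclic order on $\vF$ (this is where the structure of vortices as "bounded-width path-like attachments" is used, and it needs the standard fact that in a width-$p$ path decomposition one may add edges within consecutive bags freely). Second, one must confirm that routing the $a$ apices into the vortex — i.e. deleting them but recording their contribution via the metric — does not interact badly with the parts of $\vW$ that traverse the embedded portion $\vG_0$: this is fine because the metric completion dominates any such detour, but the write-up should state the exchange argument (replace each maximal subpath of $\vW$ whose internal vertices lie in $A$ by the corresponding shortcut edge) cleanly, noting that the result is still a closed walk and still visits all of $V(\vH)$. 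Once these two points are nailed down, the theorem follows immediately from the $a=0$ case, which is exactly what Sections~\ref{sec:vortex_planar} and~\ref{sec:vortex_genus} are claimed to establish.
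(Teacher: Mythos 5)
Your reduction founders on the step you yourself flag as the ``main obstacle'': absorbing the metric completion on $V(\vH)$ into the vortex. The lemma you hope for --- that a vortex of width $p$, after adding an \emph{arbitrary} set of edges between its vertices, still has a path decomposition of width $O(p)$ --- is false. A path decomposition must place the two endpoints of every edge in a common bag, so adding the (near-)clique of shortcut edges on $V(\vH)$ forces width $|V(\vH)|-1=\Omega(n)$; the ``add edges within consecutive bags'' fact only covers pairs already sharing a bag, whereas an apex adjacent to all of $V(\vH)$ creates shortcuts between vertices in arbitrarily distant bags. The surface part cannot host these edges either (a large clique has unbounded genus), so the graph $\vG'$ you construct is simply not $(0,g',1,p')$-nearly embeddable for any bounded $g',p'$, and the $a=0$ dynamic program cannot be applied to it. A secondary imprecision: the maximal subwalks ``whose internal vertices lie in $A$'' have endpoints that need not lie in $V(\vH)$ (an excursion can go surface vertex $\to$ apex $\to$ surface vertex), so the exchange argument must be phrased at the level of maximal segments between consecutive visits to $V(\vH)$ (or to the attachment face), as you do correctly earlier in the write-up.

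The paper avoids this trap by \emph{not} deleting the apices: for each guessed subset $A'\subseteq A$ it keeps the vertices of $A'$, joins each of them to every vertex of the attachment face $\vF$ by edges of length equal to the corresponding shortest-path distances, and declares $A'$ to be part of the vortex by inserting $A'$ into every bag of the path decomposition. This increases the width only additively, to $p+a$ --- which is exactly where the $(a+p)$ in the exponent of the running time comes from --- and the $2^a$ enumeration over $A'$ is needed because the bounded-genus dynamic program insists on visiting \emph{every} vortex vertex, so one must guess in advance which apices the optimal walk actually passes through. If you want to keep your ``delete the apices'' instinct, you would have to find some other bounded-width encoding of the routes through apices; as written, the proposal does not prove the theorem.
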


\begin{definition}[$\vec{W}$-augmentation]
Let $\vec{G}$ be a directed graph.
Let $x:E(\vec{G})\to \mathbb{R}$ and let $\vec{W}\subseteq \vec{G}$.
We define the \emph{$\vec{W}$-augmentation} of $x$ to be the function $x':E(\vec{G})\to \mathbb{R}$ such that for all $e\in E(\vec{G})$ we have
\[
x'(e) = \left\{\begin{array}{ll}
x(e) + 1 & \text{ if } e\in E(\vec{W})\\
x(e) & \text{ otherwise}\end{array}\right.
\]
\end{definition}

The following summarizes the main technical result for computing a thin spanning subgraph in a nearly embeddable graph.
The proof of Lemma \ref{lem:main_Lemma_(a,g,1,p)-case} is deferred to Section \ref{sec:main_Lemma_(a,g,1,p)-case}.

\begin{lemma}\label{lem:main_Lemma_(a,g,1,p)-case}
Let $\vec{G}$ be a $(a,g,1,p)$-nearly embeddable digraph, let $\vec{H}$ be its vortex, and let $\vec{W}$ be a walk in $\vec{G}$ visiting all vertices in $V(\vec{H})$.
Let $G$, $H$, and $W$ be the symmetrizations of $\vec{G}$, $\vec{H}$, and $\vec{W}$ respectively.
Let $z:E(G)\to \mathbb{R}_{\geq 0}$ be $\alpha$-thick for some $\alpha\geq 2$, and $\vec{W}$-dense.
Then there exists a polynomial time algorithm which given $\vec{G}$, $\vec{H}$, $A$, $\vec{W}$, $z$, and an embedding of $\vec{G}\setminus (A \cup \vec{H})$ into a surface of genus $g$, outputs a subgraph $S\subseteq G\setminus H$, satisfying the following conditions:
\begin{description}
\item{(1)}
$W\cup S$ is a spanning subgraph of $G$ and has $O(a + g)$ connected components.

\item{(2)}
$W\cup S$ is $O(a\cdot g + p^2)$-thin w.r.t.~$z$.
\end{description}
\end{lemma}

We are now ready to prove the main result of this section.

\begin{proof}[Proof of Lemma \ref{lem:constant-constant-thin}]
Let $\vec{H}$ be the single vortex of $\vec{G}$. 
We compute an optimal solution $y:E(\vec{G}) \to \mathbb{R}$ for the Held-Karp LP for $\vec{G}$.
We find a tour $\vec{W}$ in $\vec{G}$ visiting all vertices in $V(\vec{H})$, with $\cost_{\vec{G}}(\vec{W}) = O(\OPT_{\vec{G}})$ using Theorem \ref{thm:vortex_nearly}.
Let $x:E(\vec{G})\to \mathbb{R}$ be the $\vec{W}$-augmentation of $y$.
Since for all $e\in E(\vec{G})$ we have $x(e)\geq y(e)$, it follows that $x$ is a feasible solution for the Held-Karp LP.
Moreover since $\cost_{\vec{G}}(\vec{W}) = O(\OPT_{\vec{G}})$, we obtain that $\cost_{\vec{G}}(x) = \cost_{\vec{G}}(y) + \cost_{G}(\vec{W}) = O(\OPT_{\vec{G}})$.
Let $z$ be the symmetrization of $x$.

Note that $z$ is $2$-thick and $\vec{W}$-dense. Therefore, by Lemma \ref{lem:main_Lemma_(a,g,1,p)-case} we can find a subgraph $S \subseteq G \setminus H$ such that $T = W \cup S$ is a $O(a \cdot g + p^2)$-thin spanning subgraph of $G$ (w.r.t.~$z$), with at most $O(g+a)$ connected components. Therefore, there exists a constant $\alpha$ such that for every $U \subseteq V(G)$ we have $|T \cap \delta(U)| \leq \alpha \cdot (a\cdot g + p^2) \cdot z(\delta(U))$. We can assume that $\alpha \geq 1$. Now we follow a similar approach to \cite{erickson2014near}.

Let $m = {\left \lfloor n^2 / \alpha \right \rfloor}$. We define a sequence of functions $z_0, \ldots, z_m$, and a sequence of spanning forests $T_1, \ldots, T_m$ satisfying the following conditions.

\begin{description}
\item{(1)}
For any $i \in \{0, \ldots, m\}$, $z_i$ is non-negative, $2$-thick and $\vec{W}$-dense.

\item{(2)}
For any $i \in \{1, \ldots, m\}$, $T_i$ has at most $O(a+g)$ connected components. 

\item{(3)}
For every $U \subseteq V(G)$ we have $|T_{i+1} \cap \delta(U)| \leq \alpha \cdot (a\cdot g + p^2) \cdot z_i(\delta(U))$.
\end{description}

We set $z_0 = 3 {\left \lfloor z n^2 \right \rfloor} / n^2$. Now suppose for $i \in \{0, \ldots, m-1\}$ we have defined $z_i$. We define $z_{i+1}$ and $T_{i+1}$ as follows. We apply Lemma \ref{lem:main_Lemma_(a,g,1,p)-case} and we obtain a subgraph $T_{i+1}$ of $G$ with at most $O(a + g)$ connected components such that for every $U \subseteq V(G)$ we have $|T_{i+1} \cap \delta(U)| \leq \alpha \cdot (a\cdot g + p^2) \cdot z_i(\delta(U))$. Also, for every $e \in E(G)$ we set $z_{i+1} (e) = z_i (e)$ if $e \not\in T_{i+1}$, and $z_{i+1} (e) = z_i (e) - 1/n^2$ if $e \in T_{i+1}$. Now by using the same argument as in \cite{erickson2014near}, we obtain that $z_{i+1}$ is non-negative and $2$-thick.
By the construction, we know that $z$ is $\vec{W}$-dense and thus for all $(u,v)\in \vec{W}$ we have $z_0(\{u,v\})\geq 3$. 
Note that for all $e\in E(G)$ we have $z_{i+1}(e)\geq z_i(e)-1/n^2$.
Thus for all $e\in E(G)$ and for all $i\in \{0,\ldots,m\}$ we have $z_i(e)\geq 2$.
Therefore for all $i\in \{0,\ldots,m\}$ we have that $z_i$ is $\vec{W}$-dense.

Now, similar to \cite{erickson2014near} we set the desired $S$ to be the subgraph $T_i$ that minimizes $\cost_G(T_i)$, which implies that $S$ is a $(O(a\cdot g + p^2), O(1))$-thin spanning subgraph with at most $O(a + g)$ connected components.
\end{proof}

\ifmain

\section{Thin trees in $1$-apex graphs}
\label{sec:thin_1-apex}

The following is implicit in the work of Oveis Gharan and Saberi \cite{gharan2011asymmetric}.

\begin{theorem}\label{thm:thin_tree_planar}
If $G$ is a planar graph and $z$ is an $\alpha$-thick weight function on the edges of $G$ for some $\alpha>0$, then there exists a $10/\alpha$-thin spanning tree in $G$ w.r.t.~$z$.
\end{theorem}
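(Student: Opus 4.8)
The plan is to work in the planar dual. Fix a planar embedding of $G$ and let $G^*$ be its dual. Since $z$ is $\alpha$-thick with $\alpha>0$, every edge cut of $G$ is nonempty, so $G$ is connected, and we may use the classical bijection $e\leftrightarrow e^*$ between $E(G)$ and $E(G^*)$ under which (i) a spanning tree $T$ of $G$ corresponds to the complementary spanning tree $T^*:=\{e^*:e\in E(G)\setminus T\}$ of $G^*$, and (ii) the minimal edge cuts (bonds) $\delta_G(U)$ of $G$ correspond exactly to the cycles $C^*$ of $G^*$, with $e\in\delta_G(U)\iff e^*\in C^*$. Carrying $z$ over to the dual by $z^*(e^*):=z(e)$ gives $z(\delta_G(U))=z^*(C^*)$ and $|T\cap\delta_G(U)|=|C^*\setminus T^*|$ (an edge of a bond lies in $T$ exactly when its dual is a co-tree edge of $T^*$). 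Finally, $\alpha$-thickness of $z$ says precisely that $z^*(C^*)\geq\alpha$ for every cycle $C^*$ of $G^*$ (i.e.\ $G^*$ has weighted girth $\geq\alpha$), since every edge cut of $G$ is an edge-disjoint union of bonds and $z\geq 0$.

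Thus the theorem reduces to: a connected planar graph $H:=G^*$ with $z^*\geq 0$ of weighted girth $\geq\alpha$ admits a spanning tree $T^*$ with $|C^*\setminus T^*|\leq (10/\alpha)\,z^*(C^*)$ for every cycle $C^*$. Pulling the corresponding primal tree back and using that any edge cut decomposes into bonds (and that both sides of the inequality are additive over that decomposition) then yields a $10/\alpha$-thin spanning tree of $G$. To build $T^*$ I would split edges into \emph{fat} ones ($z^*(e^*)\geq\alpha/10$) and \emph{thin} ones ($z^*(e^*)<\alpha/10$): take $F$ a maximal spanning forest of the subgraph $H_{\mathrm{thin}}$ of thin edges, and extend $F$ to a spanning tree $T^*$ of $H$ using fat edges (possible since $H$ is connected; note any thin edge outside $F$ closes a cycle in $H_{\mathrm{thin}}$, so $T^*\cap H_{\mathrm{thin}}=F$ exactly). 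For a cycle $C^*$, its fat edges number at most $z^*(C^*)/(\alpha/10)=10\,z^*(C^*)/\alpha$, which already controls the fat part of $C^*\setminus T^*$; the point of the construction is that $F$ is forced to contain almost all thin edges of $C^*$. Here one exploits that, since every cycle is $z^*$-heavy while thin edges are light, $H_{\mathrm{thin}}$ has combinatorial girth at least $11$, hence is very sparse as a planar graph, and this sparsity should bound the number of thin edges of any cycle $C^*$ lying outside $F$ and let one balance the constants to the clean bound $10/\alpha$.

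The step I expect to be the crux is exactly this last estimate: showing that a single cycle $C^*$ cannot have too many of its light (thin) edges pushed out of the forest $F$. The danger is a cycle whose weight is only just $\geq\alpha$ but which runs along long stretches of thin edges, each forced into the co-tree because it closes a short thin cycle elsewhere in $H$; ruling this out presumably requires a planarity-specific choice of $F$ (guided by the face/cycle structure of the embedding, or an Euler-formula count of the regions that $C^*$ cuts out) rather than an arbitrary maximal forest, together with the girth-$\geq 11$ sparsity bound — and it is from this analysis that the absolute constant $10$ emerges. Once that combinatorial lemma is in hand, the dual reduction above is routine.
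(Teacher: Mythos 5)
Your dual reduction is fine: bonds of $G$ correspond to simple cycles of $G^*$, a spanning tree corresponds to its complementary cotree, general cuts decompose into edge-disjoint bonds with both sides of the thinness inequality additive, and $\alpha$-thickness becomes weighted girth $\geq\alpha$ in the dual. But the proof stops exactly where the theorem actually lives. You need to show that for every cycle $C^*$ the number of \emph{thin} edges of $C^*$ outside the chosen maximal forest $F$ of $H_{\mathrm{thin}}$ is $O(z^*(C^*)/\alpha)$, and neither the girth-$\geq 11$ sparsity of $H_{\mathrm{thin}}$ nor an arbitrary choice of $F$ gives this. Thickness lower-bounds cycle weights, not edge weights, so individual thin edges can have weight arbitrarily close to $0$; take $\alpha=1$ and $m$ disjoint $11$-cycles with edge weights $1/11$, joined by very light thin paths into a planar configuration admitting a long cycle $C^*$ that picks up one edge from each $11$-cycle. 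A maximal forest of $H_{\mathrm{thin}}$ necessarily excludes one edge per $11$-cycle, and nothing in your construction prevents all $m$ excluded edges from lying on $C^*$ while $z^*(C^*)$ stays close to $1$; then $|C^*\setminus T^*|\geq m \gg 10\,z^*(C^*)$. So the ``combinatorial lemma'' you defer is not a routine sparsity count — it is the entire theorem, and you have correctly diagnosed but not resolved it.

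For comparison: the paper does not prove this statement at all; it cites it as implicit in Oveis Gharan and Saberi, whose argument is quite different from your fat/thin dichotomy. One repeatedly uses Euler's formula to find a parallel class (ribbon) of total weight $\geq \alpha/6$ (there are at most $3n-6$ such classes while $\sum_v z(\delta(v))\geq \alpha n$), selects a \emph{central} edge of that ribbon into the tree, contracts the ribbon, and recurses; thinness then follows from a charging argument in which the two halves of each contracted ribbon, each of weight $\geq \alpha/12$, are charged by at most one tree edge of the cut thanks to the nesting structure that planarity imposes on the ribbons. The constant in the theorem comes out of that charging, not out of a girth bound. If you want to complete a duality-based proof you would have to build $T^*$ by an analogous greedy/charging scheme in the dual rather than by taking an arbitrary maximal forest of the light edges.
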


For the remainder of this section, let $G$ be an $1$-apex graph with planar part $\Gamma$ and apex $\mya$. Let $z$ be a $2$-thick weight function on the edges of $G$. We will find a $O(1)$-thin spanning tree in $G$ (w.r.t.~$z$). We describe an algorithm for finding such a tree in polynomial time. The algorithm proceeds in five phases.

\textbf{Phase 1.}
We say that a cut $U$ is \emph{tiny}~(w.r.t.~$z$) if $z(\delta(U)) < 0.1$.
We start with $\Gamma$ and we proceed to partition it via tiny cuts. Each time we find a tiny cut $U$, we partition the remaining graph by deleting all edges crossing $U$. This process will stop in at most $n$ steps. Let $\Gamma'$ be the resulting subgraph of $\Gamma$ where $V(\Gamma') = V(\Gamma)$ and $E(\Gamma') \subset E(\Gamma)$.

\textbf{Phase 2.}
By the construction, we know that there is no tiny cuts in each connected component of $\Gamma'$. Therefore, following \cite{gharan2011asymmetric}, in each connected component $C$ of $\Gamma'$, we can find a $O(1)$-thin spanning tree $T_C$ (w.r.t.~$z$). More specifically, we will find a $100$-thin spanning tree in each of them.

\textbf{Phase 3.}
We define a graph $F$ with $V(F)$ being the set of connected components of $\Gamma'$ and $\{C,C'\} \in E(F)$ iff there exists an edge between some vertex in $C$ and some vertex in $C'$ in $\Gamma$. We set the weight of $\{C,C'\}$ to be $z(C, C')$. We call $F$ the \emph{graph of components}.
\begin{center}
\scalebox{0.9}{\includegraphics{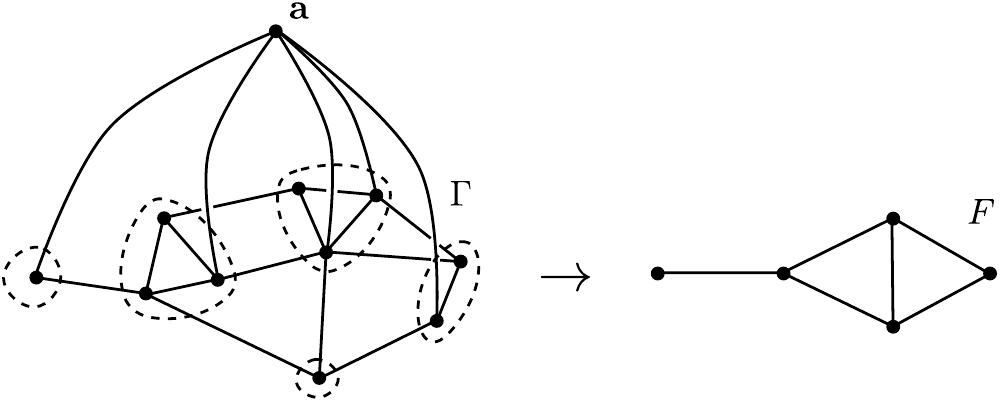}}
\end{center}
We define a graph $G'$ obtained from $G$ by contracting every connected component of $\Gamma'$ into a single vertex.
We remark that we may get parallel edges in $G'$.

\textbf{Phase 4.}
In this phase, we construct a tree $T'$ in $G'$.
We say that a vertex in $F$ is \emph{originally heavy}, if it has degree of at most $15$ in $F$.
Since $F$ is planar, the minimum degree of $F$ is at most $5$. We contract all vertices in $V(G')\setminus \{\mya\}$ into the apex sequentially. In each step, we find a vertex in $F$ with degree at most $5$, we contract it to the apex in $G'$, and we delete it from $F$. Since the remaining graph $F$ is always planar, there is always a vertex of degree at most $5$ in it, and thus we can continue this process until all vertices of $V(G')\setminus \{\mya\}$ are contracted to the apex.

Initially we consider all vertices of $F$ having no parent. In each step when we contract a vertex $C$ with degree at most $5$ in $F$ to the apex in $G'$, for each neighbor $C'$ of $C$ in $F$, we make $C$ the $\emph{parent}$ of $C'$ if $C'$ does not have any parents so far. Note that each vertex in $F$ can be the parent of at most $5$ other vertices, and can have at most one parent.

Every time we contract a vertex $C$ to the apex, we add an edge $e$ to $T'$. If $C$ is originally heavy, we add an arbitrary edge $e$ from $C$ to the apex; we will show in Lemma \ref{lem:originally_heavy_weight} that $z(C, \{a\}) \geq 0.5$, which  implies that such an edge always exists in $G$.
Otherwise, we add an arbitrary edge from $C$ to its parent (which is a neighbor vertex, therefore such an edge exists in $G$). We will show in the next section that each vertex in $F$ is originally heavy or it has a parent (or both). Therefore, $T'$ is a tree on $G'$.

\textbf{Phase 5.}
In this last phase we compute a tree $T$ in $G$.
We set $E(T)=E(T') \cup \bigcup_{C\in V(F)} E(T_C)$.
We prove in the next subsection that $T$ is a $O(1)$-thin spanning tree in $G$.

\ifabstract
We prove that the above algorithm computes a $O(1)$-thin tree.
The following summarizes the result. The proof of Theorem \ref{thm:thin_1-apex} is deferred to Section \ref{sec:app:thin_1-apex}.

\begin{theorem}\label{thm:thin_1-apex}
Let $G$ be a $1$-apex graph and let $z:E(G)\to \mathbb{R}_{\geq 0}$ be $\beta$-thick for some $\beta>0$.
Then there exists a polynomial time algorithm which given $G$ and $z$ outputs a $O(1/\beta)$-thin spanning tree in $G$ (w.r.t.~$z$).
\end{theorem}
\fi

\fi

\ifappendix
\ifabstract
\section{Analysis of the algorithm for constructing a thin tree in an $1$-apex graph}
\label{sec:app:thin_1-apex}

Let $T$ be the tree computed in the algorithm in Section \ref{sec:thin_1-apex}.
\fi
\iffull
\subsection{Analysis}
\fi
We next show that $T$ is a $O(1)$-thin spanning tree in $G$.

\begin{lemma}\label{lem:weight_of_edges_of_F}
The weight of every edge in $F$ is less than $0.1$.
\end{lemma}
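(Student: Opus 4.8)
The plan is to show that every connected component of $\Gamma'$ lies entirely on one side of each tiny cut used in Phase~1, from which the bound on the edge weights of $F$ falls out immediately.

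First I would fix notation for Phase~1: let $U_1,\dots,U_t$ be the tiny cuts used, in order, and let $\Gamma=\Gamma_0\supseteq\Gamma_1\supseteq\cdots\supseteq\Gamma_t=\Gamma'$ be the corresponding graphs, so that $\Gamma_i$ is obtained from $\Gamma_{i-1}$ by deleting all edges of $\Gamma_{i-1}$ with exactly one endpoint in $U_i$, and $z(\delta(U_i))<0.1$ for every $i$ by the definition of a tiny cut. The key invariant is: for every connected component $C$ of $\Gamma'$ and every $i$, either $C\subseteq U_i$ or $C\cap U_i=\emptyset$. Indeed, if $C$ met both $U_i$ and its complement, then $\Gamma'[C]$, which is connected, would contain an edge $e$ with exactly one endpoint in $U_i$; since $\Gamma'[C]\subseteq\Gamma'\subseteq\Gamma_{i-1}$, this $e$ would be deleted at step $i$ and hence be absent from $\Gamma'$, contradicting $e\in\Gamma'[C]$.

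Now fix an edge $\{C,C'\}$ of $F$. By definition of $F$ there is an edge $e_0$ of $\Gamma$ with one endpoint in $C$ and the other in $C'$; as $C$ and $C'$ are distinct components of $\Gamma'$, $e_0\notin E(\Gamma')$, so $e_0$ is deleted at some step $i$, i.e.\ its endpoints lie on opposite sides of $U_i$. Applying the invariant to the endpoint in $C$ and to the endpoint in $C'$ gives, after possibly renaming, $C\subseteq U_i$ and $C'\subseteq V(\Gamma)\setminus U_i$. Consequently every edge of $\Gamma$ between $C$ and $C'$ has exactly one endpoint in $U_i$, hence lies in $\delta(U_i)$.

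Finally I would conclude as follows. Since $G$ is obtained from its planar part $\Gamma$ by adding the apex $\mya$ together with some edges incident to $\mya$, and $\mya\notin C\cup C'$, every edge of $G$ with one endpoint in $C$ and the other in $C'$ is an edge of $\Gamma$, so $z(C,C')$ is exactly the sum of $z(e)$ over the edges $e$ of $\Gamma$ between $C$ and $C'$. Using the previous paragraph and $z\ge 0$,
\[
z(C,C') \;\le\; \sum_{e\in\delta(U_i)} z(e) \;=\; z(\delta(U_i)) \;<\; 0.1 ,
\]
which is the claim. I do not expect any real obstacle; the only point that needs a little care is the invariant that the components of $\Gamma'$ are never split by an intermediate tiny cut, which is exactly the connectivity argument above.
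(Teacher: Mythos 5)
Your argument follows the same strategy as the paper's proof: identify the tiny cut responsible for separating $C$ from $C'$ and charge all the $C$--$C'$ edges to it. Your formulation of the key invariant (each connected component of $\Gamma'$ lies entirely on one side of every $U_i$, proved via connectivity of $\Gamma'[C]$) is correct and is in fact spelled out more carefully than in the paper.

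There is, however, one charging step that needs care. In Phase~1 the weight $z(\delta(U_i))$ of a tiny cut is naturally measured in the \emph{remaining} graph $\Gamma_{i-1}$ (this is what makes Phase~2 work: the surviving components have no tiny cuts with respect to their own edge sets). You pick an arbitrary edge $e_0$ between $C$ and $C'$, let $i$ be the step at which $e_0$ is deleted, and then bound $z(C,C')$ by $z(\delta(U_i))<0.1$. But other edges of $\Gamma$ between $C$ and $C'$ may already have been deleted at earlier steps $j<i$; those edges do cross $U_i$ as a vertex cut, yet they are not in $\delta_{\Gamma_{i-1}}(U_i)$, so they are not covered by the tininess bound for $U_i$. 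The fix is one line: take $i$ to be the \emph{first} step at which any edge of $\Gamma$ between $C$ and $C'$ is deleted. Then all such edges are still present in $\Gamma_{i-1}$, your invariant shows that they all cross $U_i$, hence they all lie in $\delta_{\Gamma_{i-1}}(U_i)$ and $z(C,C')\leq z(\delta_{\Gamma_{i-1}}(U_i))<0.1$. With this modification the proof is complete.
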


\begin{proof}
Let $\{C,C'\}\in E(F)$.
By construction, each component of $\Gamma'$ is formed by finding a tiny cut in some other component. Suppose $C$ was formed either simultaneously with $C'$ or later than $C'$ by finding a tiny cut in some $C''$. If $C' \subseteq C''$ then $z(C,C') < z(\delta(C)) < 0.1$.
Otherwise, the total weight of edges from $C$ to $C'$ is a part of a tiny cut which means that $z(C,C') < 0.1$.
\end{proof}

\begin{lemma}\label{lem:originally_heavy_weight}
Let $C$ be an originally heavy vertex in $F$. Then $z(C, \{\mya\}) \geq 0.5$.
\end{lemma}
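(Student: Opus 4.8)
The plan is to apply the $2$-thickness of $z$ to the cut given by the vertex set $C$ itself, and then to control the ``non-apex part'' of that cut using the degree bound in the definition of originally heavy together with Lemma \ref{lem:weight_of_edges_of_F}.

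First I would observe that, regarded as a subset of $V(G)$, the component $C$ is nonempty (it is a connected component of $\Gamma'$) and proper (the apex $\mya$ is not a vertex of the planar part $\Gamma$, hence $\mya\notin C$). Thus $2$-thickness yields $z(\delta(C))\geq 2$. Next I would split the edges of $\delta(C)$ in $G$ into those joining $C$ to the apex and those staying inside $\Gamma$: since $G$ is obtained from $\Gamma$ by adding the single vertex $\mya$, every edge of $\delta(C)$ either joins $C$ to $\mya$ or joins $C$ to a vertex of $V(\Gamma)\setminus C$, and in the latter case that vertex lies in some other component $C'$ of $\Gamma'$. Summing $z$ over these two groups gives
\[
z(\delta(C)) \;=\; \sum_{\substack{C'\in V(F)\\ C'\neq C}} z(C,C') \;+\; z(C,\{\mya\}).
\]
The only components $C'$ contributing a nonzero term are exactly the neighbors of $C$ in the graph of components $F$, of which there are at most $15$ because $C$ is originally heavy; and each such term satisfies $z(C,C') < 0.1$ by Lemma \ref{lem:weight_of_edges_of_F}. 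Hence $\sum_{C'\neq C} z(C,C') < 15\cdot 0.1 = 1.5$, and rearranging gives $z(C,\{\mya\}) = z(\delta(C)) - \sum_{C'\neq C} z(C,C') > 2 - 1.5 = 0.5$, which is even slightly stronger than the claimed inequality.

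Since the computation is elementary, the only point that needs care — a genuinely minor one — is verifying that $\delta(C)$ decomposes cleanly, with no double counting, into apex edges and inter-component edges of $\Gamma$; this is immediate from $V(\Gamma')=V(\Gamma)$ and from $\mya$ being the unique vertex of $G$ outside $\Gamma$. One should also note the degenerate case in which $\Gamma'$ is connected: then the inter-component sum is empty and the conclusion $z(C,\{\mya\}) = z(\delta(C)) \geq 2 \geq 0.5$ holds trivially.
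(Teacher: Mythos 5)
Your proposal is correct and follows essentially the same argument as the paper: apply $2$-thickness to the cut $C$, bound each of the at most $15$ inter-component contributions by $0.1$ via Lemma \ref{lem:weight_of_edges_of_F}, and conclude $z(C,\{\mya\}) \geq 2 - 1.5 = 0.5$. The extra care you take in justifying the decomposition of $\delta(C)$ and the degenerate connected case is fine but not needed beyond what the paper states.
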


\begin{proof}
For every neighbor $C'$ of $C$ in $F$, by Lemma \ref{lem:weight_of_edges_of_F} we have that the weight of $\{C,C'\}$ is less than $0.1$.
By the assumption on $z$, we have that $z(\delta(C))\geq 2$.
Now since $C$ has degree of at most $15$, we have that $z(C, \{\mya\}) \geq 0.5$, as desired.
\end{proof}

\begin{lemma}\label{lem:heavy_or_parent}
Each vertex $C \in V(F)$ is originally heavy or it has a parent (both cases might happen for some vertices).
\end{lemma}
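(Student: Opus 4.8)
The plan is to prove the statement one vertex at a time. Fix $C \in V(F)$ and suppose $C$ is \emph{not} originally heavy; by definition this means $C$ has degree at least $16$ in the original graph of components $F$ (before any contraction). It suffices to show that $C$ acquires a parent at some step of the Phase~4 contraction process.

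The core is a monotonicity plus pigeonhole observation about how the degree of $C$ evolves. The only modification ever made to $F$ is the deletion of a vertex together with its incident edges when that vertex is contracted to $\mya$; edges are never added, and $F$ is simple, so the current degree of $C$ is non-increasing and decreases by exactly one each time a neighbor of $C$ is deleted. A vertex is eligible for contraction only when its current degree is at most $5$. By the termination property recorded in Phase~4 — planarity of $F$ is preserved under vertex deletion, so a vertex of degree at most $5$ always exists, and the process continues until every vertex of $V(G') \setminus \{\mya\}$, in particular $C$, is contracted — $C$ is eventually contracted, say at step $t$. Comparing the degree of $C$ before and at step $t$, at least $16 - 5 = 11$ neighbors of $C$ have been contracted (hence deleted from $F$) strictly before step $t$; in particular at least one has. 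Let $C'$ be the first neighbor of $C$ to be contracted. Parents are assigned only at contraction steps, and no neighbor of $C$ was contracted before $C'$, so $C$ still has no parent when $C'$ is processed and $C$ is still present in $F$ (since $C'$ is contracted before step $t$); therefore the parent-assignment rule makes $C'$ the parent of $C$. Thus $C$ has a parent, which is exactly what we need. The "both cases may happen" remark is then immediate: a vertex of small degree can be originally heavy while also later acquiring a parent.

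I do not expect a genuine obstacle — the argument is essentially bookkeeping. The only points requiring a careful sentence are: (i) that the degree of a vertex in $F$ can only decrease over the course of the process, so that the hypothesis "degree at least $16$" really does force neighbors to be removed before $C$; (ii) the termination of the contraction process, i.e., that every non-apex vertex is eventually contracted, which is the planarity argument already given in Phase~4; and (iii) the numerical slack $15 > 5$ between the "originally heavy" threshold and the contraction threshold, which is precisely what guarantees that at least one neighbor of $C$ disappears before $C$ does.
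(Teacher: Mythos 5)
Your proof is correct and follows essentially the same pigeonhole argument as the paper: since $C$ is only contracted once its current degree in $F$ drops to at most $5$, a non-originally-heavy vertex must lose neighbors first, and the first such contracted neighbor becomes its parent. The only difference is cosmetic (you track the first contracted neighbor explicitly, and your count of $11$ versus the paper's $10$ reflects a harmless off-by-one in the paper's restatement of the degree threshold).
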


\begin{proof}
Let $C \in V(F)$. If it is originally heavy, we are done. Otherwise, it has degree of at least $15$. We know that all vertices in $F$ are going to be contracted to $\mya$ at some point, and we only contract vertices with degree at most $5$ in each iteration. This means that at least $10$ other neighbors of $C$ were contracted to the apex before we decided to contract $C$ to the apex. Therefore one of them is the parent of $C$.
\end{proof}

\begin{lemma}\label{lem:T_is_spanning}
$T$ is a spanning tree in $G$.
\end{lemma}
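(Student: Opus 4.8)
The plan is to verify two things: that $T$ is connected and spans $V(G)$, and that $T$ is acyclic (equivalently, that $|E(T)| = |V(G)| - 1$, given connectivity). First I would argue connectivity. By construction $E(T) = E(T') \cup \bigcup_{C \in V(F)} E(T_C)$. Each $T_C$ is a spanning tree of the connected component $C$ of $\Gamma'$ (Phase 2), so $\bigcup_C E(T_C)$ spans $V(\Gamma') = V(\Gamma) = V(G) \setminus \{\mya\}$ and within each component is connected; moreover every vertex of $G$ other than $\mya$ lies in exactly one such component. It remains to see that the components, together with $\mya$, get tied together by $T'$. For this I would invoke Lemma \ref{lem:heavy_or_parent}: each $C \in V(F)$ is originally heavy or has a parent. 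If $C$ is originally heavy, Phase 4 adds an edge from $C$ to $\mya$ (which exists in $G$ by Lemma \ref{lem:originally_heavy_weight}, since $z(C,\{\mya\}) \ge 0.5 > 0$). If $C$ is not originally heavy, it has a parent $C'$ and Phase 4 adds an edge from $C$ to $C'$. Thus in $G'$ — the graph obtained by contracting each component to a vertex — the edge set $E(T')$ gives every non-apex vertex either a direct link to $\mya$ or a path to $\mya$ via its ancestors in the parent-forest (which terminates at $\mya$ because the parent relation was created by the contraction-to-apex process, so every chain of parents leads to a vertex contracted directly to $\mya$, i.e. an originally heavy vertex or one whose contraction step attached an edge to $\mya$). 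Hence $T$ is connected and spans $V(G)$.

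Next I would count edges to rule out cycles. There are exactly $|V(F)|$ components, and Phase 4 adds exactly one edge to $T'$ each time a component is contracted to the apex, so $|E(T')| = |V(F)|$. Wait — that would give $|E(T)| = |V(F)| + \sum_C (|C| - 1) = |V(F)| + (|V(G)| - 1) - |V(F)| = |V(G)| - 1$ only if $\sum_C |C| = |V(G)| - 1$, but in fact $\sum_C |C| = |V(\Gamma)| = |V(G)| - 1$ (the apex is not in any component). So $|E(T)| = |V(F)| + (|V(G)| - 1 - |V(F)|) = |V(G)| - 1$. Since $T$ is connected on $|V(G)|$ vertices and has exactly $|V(G)| - 1$ edges, it is a spanning tree.

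The one point that needs care — and which I expect to be the main obstacle — is making the edge count for $T'$ precise: a priori one contracts $|V(G')| - 1 = |V(F)|$ vertices to the apex (all of $V(G') \setminus \{\mya\}$), adding one edge each, so $|E(T')| = |V(F)|$, and one must confirm that these edges, viewed in $G'$, neither form a cycle nor are added redundantly. The cleanest way is to observe that each contracted vertex $C$ contributes exactly one edge joining $C$ to a vertex (its parent, or $\mya$) that was contracted to $\mya$ strictly earlier (or is $\mya$ itself); orienting each such edge toward the earlier-contracted endpoint yields an in-forest rooted at $\mya$ on $V(G')$, hence $E(T')$ is acyclic in $G'$ and, combined with the acyclic-and-spanning $T_C$'s inside the components, gives that $T$ is acyclic in $G$. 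Together with connectivity established above, $T$ is a spanning tree in $G$.
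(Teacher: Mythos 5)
Your proposal is correct and follows essentially the same route as the paper: the paper also combines the edge count ($|E(T)| = n-1$, from the $|V(\Gamma)| - |V(F)|$ forest edges plus one edge per contraction) with connectivity established by following the parent chain from any component until an originally heavy one with a direct edge to the apex is reached. Your additional observation that orienting each Phase-4 edge toward the earlier-contracted endpoint gives an in-forest rooted at $\mya$ is a slightly more explicit justification of acyclicity than the paper's, but it is not a different argument.
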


\begin{proof}
Suppose $G$ has $n$ vertices and $F$ has $m$ vertices. After the second phase of our algorithm, we obtain a spanning forest on $\Gamma$ with $m$ components and $n-m-1$ edges. Each time we contract a vertex of $F$ to the apex, we add a single edge to $T$. Therefore, $T$ has $n-1$ edges. It is now sufficient to show that $T$ is connected.\par

We will show that for every vertex $u$ in $\Gamma$, there is path between $u$ and $\mya$ in $T$. Let $u$ be a vertex of $\Gamma$. Suppose $u$ is in some component $C_u$ which is a vertex of $F$. If $C_u$ is originally heavy, then there is an edge $e$ in $T$ between a vertex $v \in C_u$ and the apex. Since we have a spanning tree in $C_u$, there is a path between $u$ and $v$ in $T$. Therefore, there is path between $u$ and the apex $a$ in $T$.\par

Otherwise, $C_u$ must have some parent $C_{u_1}$ and there is an edge between these two components. Therefore, there is a path between $u$ and each vertex of these two components. Now, the same argument applies for $C_{u_1}$. Either it is originally heavy or it has a parent $C_{u_2}$. If it is originally heavy, we are done. Otherwise, we use the same argument for $C_{u_2}$. Note that by construction and the definition of a parent, we do not reach the same component in this sequence. Therefore, at some point, we reach a component $C_{u_k}$ which is originally heavy and we are done.
\end{proof}

Now we are ready to show that $T$ is a $O(1)$-thin tree in $G$ (w.r.t.~$z$). We have to show that there exists some constant $\alpha$ such that for every cut $U$, $\left\vert{E(T) \cap \delta(U)}\right\vert \leq \alpha \cdot z(\delta(U))$. Let $U$ be a cut in $G$. We can assume w.l.o.g.~that $\mya\notin U$, since otherwise we can consider the cut $V(G) \setminus U$.
We partition $E(T) \cap \delta(U)$ into three subsets:

\begin{description}
\item{(1)}
$T_1 = \{\{\mya,v\} \in E(T)\cap \delta(U) : v\in V(\Gamma)\}$.

\item{(2)}
$T_2 = \{\{u,v\}\in E(T)\cap \delta(U) : \text{$u$ and $v$ are in the same component of $\Gamma'$}\}$.

\item{(3)}
$T_3 = \{\{u,v\}\in E(T)\cap \delta(U) : \text{$u$ and $v$ are in different components of $\Gamma'$}\}$.
\end{description}


\begin{lemma}\label{lem:T_1_is_thin}
There exists some constant $\alpha_1$ such that $\left\vert{T_1}\right\vert \leq \alpha_1 \cdot z(\delta(U))$.
\end{lemma}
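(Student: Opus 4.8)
The statement asserts that the edges of $T_1$ — those tree edges incident to the apex $\mya$ that cross the cut $U$ — are bounded by $O(1) \cdot z(\delta(U))$. The plan is to recall how edges get added to $T'$ incident to $\mya$: by Phase 4, every time a component $C$ is contracted to $\mya$ we add exactly one edge from $C$ to $\mya$, and this happens only when $C$ is \emph{originally heavy} (otherwise we add an edge to the parent of $C$ instead). So each edge of $T_1$ is of the form $\{\mya, v\}$ where $v$ lies in some originally heavy component $C$, and $C$ contributes at most one such edge. Since $\mya \notin U$, having $\{\mya,v\} \in \delta(U)$ means $v \in U$, so the component $C$ containing $v$ has at least one vertex in $U$.

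**The key estimate.** For each originally heavy component $C$ that contributes an edge to $T_1$, the component $C$ meets $U$. The plan is to charge the single edge $\{\mya,v\}$ of $T_1$ coming from $C$ against the weight $z(\delta(U) \cap \delta(C))$, i.e.\ the $z$-weight of edges of $\delta(U)$ incident to $C$ — or more robustly, against $z(\delta(C))$ restricted appropriately. I would argue as follows: if $C \subseteq U$, then the edge $\{\mya,v\}$ itself lies in $\delta(U)$ and, since $z$ is $\vec{W}$-irrelevant here but at least we know by Lemma~\ref{lem:originally_heavy_weight} that $z(C,\{\mya\}) \geq 0.5$, the entire apex-boundary $z$-weight of $C$ sits inside $\delta(U)$, contributing at least $0.5$ to $z(\delta(U))$. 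If $C \not\subseteq U$ but $C \cap U \neq \emptyset$, then $\delta(U)$ contains at least part of the boundary $\delta(C)$; since $z(\delta(C)) \geq 2$ and every edge of $F$ incident to $C$ has $z$-weight $< 0.1$ (Lemma~\ref{lem:weight_of_edges_of_F}) while $C$ has degree at most $15$ in $F$, the weight $z(C, V \setminus (C \cup \{\mya\}))$ is at most $1.5$, so again $z(C,\{\mya\}) \geq 0.5$ and, crucially, a cut $U$ that splits $C$ must cross a constant $z$-weight worth of edges internal to the region around $C$. The cleanest route is: every originally heavy component $C$ intersecting $U$ has $z(\delta(U) \cap (\delta(C) \cup \delta(C')\text{-type edges})) \geq$ some constant; summing over the distinct components (each counted once) and noting these weight contributions are disjoint across components gives $|T_1| \le \alpha_1 \, z(\delta(U))$.

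**Making the charging disjoint.** The main technical point to get right is that the charges for different components do not double-count the same $z$-weight. I would handle this by charging the edge from $C$ specifically to edges incident to $C$: either the apex edge $\{\mya, v\}$ with $v\in C\cap U$ (when $C\subseteq U$), whose weight is part of $z(\delta(U))$, or — when $U$ splits $C$ — to the $z$-weight of edges of $\delta(U)$ with both endpoints in $C$, which by the $2$-thickness and Lemma~\ref{lem:weight_of_edges_of_F}/degree bound is a definite positive constant. Since distinct components $C$ are vertex-disjoint, edges charged in the second case (both endpoints in $C$) are disjoint across components; and edges charged in the first case are apex edges to distinct components $C$, hence also distinct. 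Thus the total charge is at most $z(\delta(U))$ up to the reciprocal of the constant $0.5$ (or whatever threshold emerges), giving $\alpha_1 = O(1)$, e.g.\ $\alpha_1 = 2$ suffices for the case analysis sketched above.

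**Anticipated obstacle.** The delicate step is the case where $U$ properly splits an originally heavy component $C$: I need a lower bound on $z(\delta(U))$ restricted to $C$-internal edges, and this requires knowing that $C$ itself contains no tiny cut — which is exactly the output guarantee of Phase 1 ($\Gamma'$ has no tiny cuts in any component). So a tiny bit of care: any cut $U\cap C$ inside $C$ has $z$-weight of internal boundary at least $0.1$ (not tiny), and this is the constant I charge against. Verifying that this bound is compatible with the one-edge-per-component accounting, and that the two charging regimes never overlap, is the part I would write out most carefully; everything else is bookkeeping.
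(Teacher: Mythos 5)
Your proposal is correct and follows essentially the same argument as the paper: each edge of $T_1$ comes from a distinct originally heavy component $C$ meeting $U$, and it is charged either to $z(C,\{\mya\})\geq 0.5$ when $C\subseteq U$ or to the non-tiny internal cut $z(\delta(U)\cap E(G[C]))\geq 0.1$ when $U$ splits $C$, with disjointness of charges across components. The only cosmetic difference is the final constant (the paper takes $\alpha_1=10$, driven by the $0.1$ threshold, rather than $2$).
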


\begin{proof}
Let $e = \{\mya,v\} \in T_1$ where $v \in V(\Gamma)$.
Let $C_v \in V(F)$ such that $v \in C_v$. By the construction of $T$, $C_v$ is originally heavy. If $C_v \subseteq U$, we can charge $e$ to $z(C_v, \{\mya\})$, which we know is at least $0.5$.
Otherwise suppose $C_v$ is not a subset of $U$.
By the assumption we have $\mya\notin U$ and thus $v\in U$ which implies that $U\cap C_v \neq \emptyset$.
By the construction, we know that there is no tiny cuts in $C_v$.
Therefore, $z(\delta(U)\cap E(G[C_v]))\geq 0.1$.
Thus we can charge $e$ to the total weight of the edges in $\delta(U)\cap E(G[C_v])$.
Note that for each $C_v\in V(F)$, there is at most one edge in $T_1$ between $\mya$ and $C_v$.
Therefore we have that $\left\vert{T_1}\right\vert \leq 10 \cdot z(\delta(U))$.
\end{proof}

\begin{lemma}\label{lem:T_2_is_thin}
There exists some constant $\alpha_2$ such that $\left\vert{T_2}\right\vert \leq \alpha_2 \cdot z(\delta(U))$.
\end{lemma}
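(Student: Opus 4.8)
The plan is to bound $|T_2|$ by relating the tree edges inside a single component $C$ of $\Gamma'$ that cross $U$ to the weight $z(\delta(U) \cap E(G[C]))$, exploiting the fact that $T_C$ is a $100$-thin spanning tree of $C$ with respect to $z$ restricted to $C$. The key observation is that $T_2$ decomposes as a disjoint union over the components $C \in V(F)$ of the sets $E(T_C) \cap \delta(U)$, since every edge counted in $T_2$ has both endpoints in the same component and hence lies in the spanning tree $T_C$ chosen for that component.

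First I would fix a component $C$ of $\Gamma'$ and consider the cut $U_C := U \cap V(C)$ inside the graph $G[C]$. The edges of $T_C$ crossing $U$ that have both endpoints in $C$ are exactly the edges of $T_C$ crossing $U_C$, i.e. $E(T_C) \cap \delta(U) = E(T_C) \cap \delta_{G[C]}(U_C)$. Since $T_C$ is $100$-thin in $C$ with respect to $z$, we get $|E(T_C) \cap \delta_{G[C]}(U_C)| \le 100 \cdot z(\delta_{G[C]}(U_C))$. Now $\delta_{G[C]}(U_C) \subseteq \delta_G(U)$: any edge of $G[C]$ with exactly one endpoint in $U_C$ has exactly one endpoint in $U$ (its other endpoint is in $C$ but not in $U$, hence not in $U$). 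Therefore $z(\delta_{G[C]}(U_C)) \le z(\delta(U) \cap E(G[C]))$.

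Then I would sum over all components:
\[
|T_2| = \sum_{C \in V(F)} |E(T_C) \cap \delta(U)| \le 100 \sum_{C \in V(F)} z(\delta(U) \cap E(G[C])) \le 100 \cdot z(\delta(U)),
\]
where the last inequality holds because the edge sets $E(G[C])$ over distinct components $C$ are pairwise disjoint (each is the set of edges internal to a component of $\Gamma'$), so the sets $\delta(U) \cap E(G[C])$ are pairwise disjoint subsets of $\delta(U)$ and their total weight is at most $z(\delta(U))$. This gives the claim with $\alpha_2 = 100$.

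The only real subtlety — and the step I would be most careful about — is the containment $\delta_{G[C]}(U_C) \subseteq \delta_G(U)$ and the disjointness of the edge sets $E(G[C])$; both are essentially bookkeeping, but one must be precise that $U$ is a cut in all of $G$ while $T_C$ is thin only with respect to cuts of $C$, so one restricts $U$ to $C$ and checks that restriction does not create new crossing edges or lose any. There is also a trivial edge case where $U_C = \emptyset$ or $U_C = V(C)$, in which case $\delta_{G[C]}(U_C) = \emptyset$ and that component contributes nothing to $T_2$, consistent with the bound. No genus or apex structure is needed here — this is purely the planar thin-tree guarantee of Phase 2 applied component-wise.
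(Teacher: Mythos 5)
Your proof is correct and follows essentially the same route as the paper: decompose $T_2$ over the components $C$ of $\Gamma'$, apply the $100$-thinness of each $T_C$ to the restricted cut $U\cap V(C)$, and sum using the disjointness of the edge sets $E(G[C])$. The paper's version is just a terser rendering of the same chain of (in)equalities, with the same constant $\alpha_2=100$.
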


\begin{proof}
We have
\begin{align*}
\left\vert{T_2}\right\vert &= \left\vert{\bigcup_{C\in V(F)} (E(C) \cap T_2)}\right\vert\\
 &= \left\vert{\bigcup_{C\in V(F)} (E(C) \cap E(T) \cap \delta(U))}\right\vert\\
 &= \left\vert{\bigcup_{C\in V(F)} (E(T_C) \cap \delta(U))}\right\vert\\
 &\leq \sum_{C\in V(F)} 100\cdot z(\delta(U) \cap E(C))\\
 &\leq 100\cdot z(\delta(U)),
\end{align*}
concluding the proof.
\end{proof}

\begin{lemma}\label{lem:T_3_is_thin}
There exists some constant $\alpha_3$ such that $\left\vert{T_3}\right\vert \leq \alpha_3 \cdot z(\delta(U))$.
\end{lemma}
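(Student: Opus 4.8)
\textbf{Proof plan for Lemma \ref{lem:T_3_is_thin}.}
The plan is to bound $|T_3|$, the set of tree edges in $\delta(U)$ whose endpoints lie in different components of $\Gamma'$, by charging each such edge to weight in $z(\delta(U))$. Every edge of $T_3$ is an edge added during Phase~4, i.e.~an edge of $T'$. Such an edge $e$ corresponds to a step in which some component $C$ is contracted to the apex, and $e$ either goes from $C$ to $\mya$ (when $C$ is originally heavy) or from $C$ to its parent in $F$. In either case $e$ corresponds to an edge of the graph of components $F$, so it suffices to bound the number of edges of $F$ lying in the cut induced by $U$ on $V(F)$, where we keep in mind that $\mya$ is treated as its own side (with $\mya\notin U$).

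First I would recall from Lemma \ref{lem:weight_of_edges_of_F} that every edge of $F$ has weight $z(C,C') < 0.1$, while for every component $C$ we have $z(\delta(C)) \geq 2$ by $2$-thickness (and Lemma \ref{lem:originally_heavy_weight} handles the apex incidences of originally heavy vertices). The key combinatorial fact to exploit is that each edge of $T'$ is either the edge to the apex contributed by an originally heavy vertex $C$ — of which there is at most one per $C$ — or the edge from $C$ to its unique parent; and, crucially, each vertex of $F$ serves as a parent of at most $5$ other vertices, and each vertex has at most one parent. Hence $T'$, viewed as a forest on $V(F)\cup\{\mya\}$, has maximum degree bounded by a constant (at most $5$ children plus one parent plus one apex edge, so at most $7$). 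So $|T_3|$ is at most a constant times the number of vertices $C\in V(F)$ that are incident in $T'$ to an edge crossing $U$, and each such $C$ has either $C\subseteq U$ with $U$ cutting one of its $F$-edges or apex-edges, or $U$ properly splits $C$.

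The charging then splits into two cases, mirroring the proof of Lemma \ref{lem:T_1_is_thin}. If $U$ properly splits a component $C$ incident to an edge of $T_3$, then since $\Gamma'$ has no tiny cuts inside $C$ we get $z(\delta(U)\cap E(G[C])) \geq 0.1$, and we charge the (constantly many) $T_3$-edges at $C$ to this weight; distinct components give disjoint edge sets, so the total charge here is $O(z(\delta(U)))$. If instead $C\subseteq U$ (or symmetrically $C$ is entirely outside $U$) but $C$ is incident in $T'$ to an edge $e$ of $\delta(U)$, then $e$ itself is an edge of $G$ crossing $U$, either an edge of $F$ between $C$ and a component on the other side, or an apex edge with $\mya\notin U$; we charge $e$ to $z(e)$. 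Since each $T'$-edge crossing $U$ is charged $O(1)$ times this way and $z(e)$ contributes to $z(\delta(U))$, we again get a bound of $O(z(\delta(U)))$. Summing the two cases yields $|T_3|\leq \alpha_3 \cdot z(\delta(U))$ for an absolute constant $\alpha_3$.

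The main obstacle I anticipate is the bookkeeping in the case $C\subseteq U$: an edge of $T'$ with both endpoints' components inside $U$ does not cross $U$, so the only $T'$-edges of $T_3$ incident to a component $C\subseteq U$ are those joining $C$ to a component on the opposite side or to the apex, and I must make sure every edge of $T_3$ is accounted for by exactly one of "$U$ splits a component" or "the $T'$-edge itself crosses $U$", without double counting and without missing the edges where the parent relationship points across the cut. Once the constant-degree property of $T'$ is pinned down and the no-tiny-cut property is applied, the two cases are exhaustive and the charging is clean; combined with Lemmas \ref{lem:T_1_is_thin}, \ref{lem:T_2_is_thin}, and \ref{lem:T_3_is_thin}, setting $\alpha = \alpha_1 + \alpha_2 + \alpha_3$ shows $|E(T)\cap\delta(U)| \leq \alpha\cdot z(\delta(U))$, i.e.~$T$ is $O(1)$-thin.
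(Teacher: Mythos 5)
There is a genuine gap in the case the paper calls $T_{33}$: tree edges $\{u,v\}$ for which neither incident component is split by $U$, i.e.\ $C_u\subseteq U$ and $C_v\cap U=\emptyset$. Your plan charges such an edge $e$ to $z(e)$, but this fails: $e$ is an \emph{arbitrary} edge chosen from $C_u$ to its parent, and by Lemma \ref{lem:weight_of_edges_of_F} the \emph{total} weight $z(C_u,C_v)$ of all edges between the two components is already below $0.1$, so the individual edge can carry arbitrarily small weight. (The same issue arises for the arbitrary apex edge of an originally heavy component, though note those edges live in $T_1$, not $T_3$ — every edge of $T_3$ is a parent edge.) Since you need each charged edge of $T_3$ to absorb at least a constant amount of weight from $z(\delta(U))$, charging to $z(e)$ gives no bound at all in this case. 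Your first case (a split component supplies $z(\delta(U)\cap E(G[C]))\geq 0.1$, shared among the $O(1)$ tree edges at $C$) is fine and matches the paper's treatment of $T_{31}$ and $T_{32}$.

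The missing idea is a global averaging argument. The paper rounds $U$ to a cut $U_1$ that is a union of components, lets $U_2$ be the set of components inside $U_1$, and considers the planar graph $F[U_2]$. By Euler's formula its average degree is at most $6$, so the number of components of degree at least $19$ is at most the number of components of degree at most $5$. A low-degree component $C$ has $z(\delta(C))\geq 2$ but at most $5\times 0.1=0.5$ of that weight can stay inside $U_1$, so at least $1.5$ (or at least $0.1$ if $C$ was split, i.e.\ is ``important'') must cross to the other side $Y$ and hence contributes to $z(\delta(U))$. Each high-degree component carries at most $6$ edges of $T_{33}$ (at most $5$ children plus one parent), and these are charged to the crossing weight of a distinct low-degree component via the counting inequality $|V_1|\leq|V_2|$. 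Without this planarity-based step the $T_{33}$ edges cannot be paid for, so your argument as written does not close.
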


\begin{proof}
We partition $T_3$ into three subsets:

\begin{description}
\item{(1)}
$T_{31} = \{\{u,v\} \in T_3 : u \in U, v \notin U, \exists C_u, C_v \in V(F) \text{ such that } u \in C_u, v \in C_v, U \cap C_v \neq \emptyset \}$.
\item{(2)}
$T_{32} = \{\{u,v\} \in T_3 : u \in U, v \notin U, \exists C_u, C_v \in V(F) \text{ such that } u \in C_u, v \in C_v, U \cap C_u \neq \emptyset \}$.
\item{(3)}
$T_{33} = \{\{u,v\} \in T_3 : u \in U, v \notin U, \exists C_u, C_v \in V(F) \text{ such that } u \in C_u, v \in C_v, U \cap C_v = \emptyset, C_u \subseteq U \}$.
\end{description}

First for each $e = \{u,v\} \in T_{31}$ where $v \in C_v$ for some $C_v \in V(F)$, we have that $U_v = U \cap C_v$ is a cut in $C_v$ which is not tiny. By the construction, $C_v$ can be the parent of at most five other vertices in $F$ and it can have at most one parent. Therefore, there are at most six edges in $T_3$ with a vertex in $C_v$. So we can charge $e$ and at most five other edges to $z(\delta(U_v))$. Since $z(\delta(U_v)) \geq 0.1$ we get $\left\vert{T_{31}}\right\vert \leq 60 \cdot z(\delta(U))$.

Second for each $e = \{u,v\} \in T_{32}$ where $u \in C_u$ for some $C_u \in V(F)$, we have that $U_u = U \cap C_u$ is a cut in $C_u$ which is not tiny. Therefore, the same argument for $C_v$ as in the first case, applies here for $C_u$ and we get $\left\vert{T_{32}}\right\vert \leq 60 \cdot z(\delta(U))$.

Finally, for $T_{33}$ we need to find a constant $\alpha_{33}$ such that $\left\vert{T_{33}}\right\vert \leq \alpha_{33} \cdot z(\delta(U))$. First, we define a new cut $U_1$ as follows. For every $C \in V(F)$ with $C \cap U \neq \emptyset$, if $C \cap U \neq C$, we add all the other vertices of $C$ to $U$ and we say that $C$ is \emph{important}. This process leads to a new cut $U_1$ such that for every $C \in F$, either $C \cap U_1 = \emptyset$ or $C \subseteq U_1$. Let $U_2 = \{C \in V(F) : C \subseteq U_1\}$. Let $X = \{C \in V(F) : C \notin U_2\}$ and $Y = X \cup \{\mya\}$.

Let 
\[
T_{331} = \{\{u,v\}\in T_{33} : u\in U, u\in C \text{ for some } C \in V(F) \text{ with } \deg_{F[U_2]}(C) < 19\}.
\]
Let also $T_{332} = T_{33} \setminus T_{331}$.

For each edge $e = \{u,v\} \in T_{331}$ where $u \in U$ and $u \in C_u$ for some $C_u \in V(F)$, we have $\deg_{F[U_2]}(C_u) < 19$. By Lemma \ref{lem:weight_of_edges_of_F}, we know that for any $C, C' \in V(F)$, $z(C, C') \leq 0.1$. Therefore, we get $z(C_u, Y) \geq 0.2$. Note that there are at most six edges in $T_{331}$ with a vertex in $C_u$. So we can charge $e$ and at most five other edges to $z(C_u, Y)$. Therefore, $\left\vert{T_{331}}\right\vert \leq 30 \cdot z(\delta(U))$.

Let $V_1 = \{ C \in U_2 : \deg_{F[U_2]}(C) \geq 19 \}$ and $V_2 = \{ C \in U_2 : \deg_{F[U_2]}(C) \leq 5 \}$. By Euler's formula, we know that the average degree of a planar graph is at most $6$. Since $F[U_2]$ is planar, we get $\left\vert V_1 \right\vert \leq \left\vert V_2 \right\vert$. For any $C \in V_2$, if $C$ is important, then $C \cap U$ is a cut for $C$ and we have $z(C \cap U, Y) \geq 0.1$. If $C$ is not important, then we have $z(C , Y) \geq 1.5$. Note that for any $C' \in V_1$, there are at most six edges in $T_{332}$ with a vertex in $C'$. Therefore, we have $\left\vert{T_{332}}\right\vert \leq 60 \cdot z(\delta(U))$. 

Now since $T_3 = T_{31} \cup T_{32} \cup T_{331} \cup T_{332}$, we have $\left\vert{T_3}\right\vert \leq 210 \cdot z(\delta(U))$ completing the proof.
\end{proof}

\begin{lemma}\label{lem:T_is_thin_1_apex}
$T$ is a $O(1)$-thin spanning tree in $G$.
\end{lemma}

\begin{proof}
By Lemma \ref{lem:T_is_spanning} we know that $T$ is a spanning tree. For any $U \subseteq V(G)$, by Lemmas \ref{lem:T_1_is_thin}, \ref{lem:T_2_is_thin} and \ref{lem:T_3_is_thin} we get $\left\vert{T}\right\vert \leq 320 \cdot z(\delta(U))$. This completes the proof.
\end{proof}

We are now ready to prove the main result of this Section.

\iffull
\begin{theorem}\label{thm:thin_1-apex}
Let $G$ be a $1$-apex graph and let $z:E(G)\to \mathbb{R}_{\geq 0}$ be $\beta$-thick for some $\beta>0$.
Then there exists a polynomial time algorithm which given $G$ and $z$ outputs a $O(1/\beta)$-thin spanning tree in $G$ (w.r.t.~$z$).
\end{theorem}
\fi

\ifabstract
\begin{proof}[Proof of Theorem \ref{thm:thin_1-apex}]
\fi
\iffull
\begin{proof}
\fi
For $\beta \geq 2$, by Lemma \ref{lem:T_is_thin_1_apex} we know that we can find a $320$-thin spanning tree in $G$. For any $\beta$ with $0<\beta<2$, the assertion follows by scaling $z$ by a factor of $2/\beta$.
\end{proof}

\fi

\ifmain

\section{Thin forests in graphs with many apices}\label{sec:k-apices}
\label{sec:thin_a-apex}

Let $a\geq 1$.
In this section, we describe an algorithm for finding thin-forests in an $a$-apex graph. The high level approach is analogous to the case of $1$-apex graphs. We construct a similar graph $F$ of components and contract each vertex of $F$ to some apex.

Let $e_0 = \{u_0, v_0\} \in E(G)$.
Let $G'$ be obtained from $G$ by contracting $e_0$.
We define a new weight function $z'$ on the edges of $G'$ as follows.
For any $\{u , v_0\}\in E(G)$, we set $z'(\{u, v_0\}) = z(\{u, v_0\}) + z(\{u, u_0\})$.
For any other edge $e$ we set $z'(e) = z(e)$.
We say that $z'$ is \emph{induced} by $z$.
Similarly, when $G'$ is obtained by contracting a subset $X$ of edges in $G$, we define $z'$ by inductively contracting the edges in $X$ in some arbitrary order.

For the remainder of this section let $G$ be a $a$-apex graph with the set of apices $A = \{\mya_1, \mya_2, \cdots, \mya_a\}$. Let $\Gamma$ be the planar part of $G$. Let $z$ be a $2$-thick weight function on the edges of $G$. The algorithm proceeds in $5$ phases.

\textbf{Phase 1.}
We say that a cut $U$ is \emph{tiny} (w.r.t.~$z$) if $z(\delta(U)) < 1/(100\cdot a)$. Similar to the case of $1$-apex graphs, we start with $\Gamma$ and repeatedly partition it via tiny cuts until there are no more such cuts and we let $\Gamma'$ be the resulting graph.

\textbf{Phase 2.}
For each connected component $C$ of $\Gamma'$ we find a $O(a)$-thin tree $T_C$ using Theorem \ref{thm:thin_tree_planar}.

\textbf{Phase 3.}
We define $F$ and $G'$ exactly the same way as in the case of $1$-apex graphs.


\begin{lemma}\label{lem:weight_of_edges_of_F_k-apex_case}
For every $\{C, C'\} \in E(F)$, we have $z(C, C') \leq 1/(100 \cdot a)$.
\end{lemma}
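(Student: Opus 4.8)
The plan is to mimic the proof of Lemma~\ref{lem:weight_of_edges_of_F} verbatim, but keeping track of the smaller threshold $1/(100\cdot a)$ that now defines a tiny cut in the $a$-apex setting. First I would fix an edge $\{C, C'\} \in E(F)$; by definition of the graph of components, $C$ and $C'$ are distinct connected components of $\Gamma'$, and there is at least one edge of $\Gamma$ running between a vertex of $C$ and a vertex of $C'$. Recall from Phase~1 that $\Gamma'$ is obtained from $\Gamma$ by repeatedly selecting a tiny cut $U$ (one with $z(\delta(U)) < 1/(100\cdot a)$) in the current graph and deleting all edges crossing $U$; this terminates after at most $n$ steps.

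The key step is a simple case analysis on the moment at which $C$ and $C'$ became separated. Consider the step in which the cut that split them off was chosen; say $C$ was formed (simultaneously with $C'$ or later) by applying a tiny cut $U$ inside some component $C''$ of the intermediate graph that contained $C$. Since $\{C, C'\}\in E(F)$, there is an edge of $\Gamma$ between $C$ and $C'$, so the edges of $\Gamma$ joining $C$ to $C'$ are not all present in $\Gamma'$; in particular they must be among the edges deleted at that step, i.e.\ among $\delta(U)$ (within $C''$). There are two subcases: if $C' \subseteq C''$, then every edge between $C$ and $C'$ lies in $\delta(C)$ within $C''$, hence $z(C, C') \le z(\delta(C) \cap E(G[C''])) \le z(\delta(U))< 1/(100\cdot a)$; if instead $C'$ lies outside $C''$, then the edges between $C$ and $C'$ that get deleted are still part of the cut $\delta(U)$ that was certified tiny, so again $z(C, C') < 1/(100\cdot a)$. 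Either way the inequality $z(C,C') \le 1/(100\cdot a)$ holds (the non-strict version being more than enough).

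I do not expect any real obstacle here: the argument is a direct transcription of the $1$-apex case with the constant $0.1$ replaced by $1/(100\cdot a)$, and the only thing to be a little careful about is phrasing the case split so that it genuinely covers the possibility that $C$ and $C'$ are separated at the same step versus at different steps, and so that the charging of edges between $C$ and $C'$ is correctly attributed to a single tiny cut. The lemma will be used later exactly as Lemma~\ref{lem:weight_of_edges_of_F} was used in the $1$-apex analysis, namely to bound, for an originally heavy component, the total weight of edges to apices from below by subtracting the total weight to $F$-neighbors.
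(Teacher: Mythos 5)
Your proposal is correct and matches the paper's proof, which itself consists only of the remark that the argument of Lemma~\ref{lem:weight_of_edges_of_F} applies verbatim with the tiny-cut threshold changed from $0.1$ to $1/(100\cdot a)$. Your spelled-out case analysis (separated at the same step vs.\ a later step, charging the $C$--$C'$ edges to the single tiny cut that severed them) is exactly the argument being invoked.
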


\ifabstract
The proof of Lemma \ref{lem:weight_of_edges_of_F_k-apex_case} is deferred to Section \ref{sec:app:thin_a-apex}.
\fi
\iffull
\begin{proof}
The same argument as in Lemma \ref{lem:weight_of_edges_of_F} applies here. The only difference here is that a cut $U$ is tiny if $z(\delta(U)) < 1/(100\cdot a)$.
\end{proof}
\fi

\textbf{Phase 4.}
We construct a forest $T'$ on $G'$. Let $m=\left\vert{V(F)}\right\vert$. We define a sequence of planar graphs $F_0, F_1, \cdots, F_m$, a sequence of graphs $G'_0, G'_1, \cdots, G'_m$ and a sequence of weight functions $z_0, z_1, \cdots, z_m$ as follows. Let $F_0 = F$, $G'_0 = G'$ and $z_0 = z$.
We also define a sequence of forests $P_0,\ldots,P_m$ where each $P_j$ contains a tree rooted at each $\mya_i\in A$.
We set $P_0$ to be the forest that contains a tree for each $\mya_i\in A$ and with no other vertices.

Let $C \in V(F_j)$ for some $j$. For any $\mya_i \in A$, we say that $C$ is \emph{$\mya_i$-heavy} in $F_j$ if $z_j(C, \{\mya_i\}) \geq 1/a$. Let $C' \in V(F)$. For any $\mya_i \in A$, we say that $C'$ is \emph{originally $\mya_i$-heavy} if $C'$ is $\mya_i$-heavy in $F$.

We maintain the following inductive invariant:
\begin{description}
\item{(I1)}
For any $j\in \{0,\ldots,m-1\}$, let $v\in V(F_j)$ be a vertex of minimum degree.
Then either there exists some $\mya_i\in A$ such that $v$ is originally $\mya_i$-heavy or $v\in V(P_j)$.
\end{description}

Consider some $i\in \{0,\ldots,m-1\}$.
Let $v_i \in V(F_i)$ be a vertex with minimum degree.
If $v_i$ is originally $\mya_j$-heavy for some $\mya_j\in A$, then we contract $v_i$ to $\mya_j$.
Otherwise, by the inductive invariant (I1), we have that $v_i\in V(P_i)$.
Thus there exists a tree in $P_i$ containing $v_i$ that is rooted in some $\mya_j\in A$; we contract $v_i$ to $\mya_j$.
In either case, by contracting $v_i$ to $\mya_j$ we obtained $G'_{i+1}$ from $G'_i$.
We also delete $v_i$ from $F_i$ to obtain $F_{i+1}$.
We let $z_{i+1}$ be the weight function on $G'_{i+1}$ induced by $z_i$.

Finally, we need to define $P_{i+1}$.
If $v_i$ was originally $\mya_j$-heavy then we add $v_i$ to $P_i$ via an edge $\{v_i,\mya_j\}$.
For each $u \in V(F_i)$ that is a neighbor of $v_i$, and is not in $V(P_i)$, we add $u$ to $P_{i+1}$ by adding the edge $\{v_i,u\}$ iff the following conditions hold:
\begin{description}
\item{(i)}
For all $\mya_r\in A$, we have that $u$ is not $\mya_r$-heavy in $F_i$.

\item{(ii)}
$u$ is $\mya_j$-heavy in $F_{i+1}$.
\end{description}
In this case we say that $v$ is the $\emph{parent}$ of $u$. This completes the description of the process that contracts each vertex in $V(G')$ into some apex.

\begin{lemma}\label{lem:being_a_i_heavy}
Let $j \in \{0, 1, \ldots, m-1\}$. Let $C \in V(F_j)$ be a vertex with minimum degree. Then there exists $\mya_i \in A$ such that $C$ is $\mya_i$-heavy in $F_j$.
\end{lemma}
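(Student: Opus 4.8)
The plan is to play three facts against each other: (i) the induced weight functions $z_j$ inherit the $2$-thickness of $z$; (ii) since $F_j$ is planar and $C$ has minimum degree, $\deg_{F_j}(C)\le 5$; and (iii) by Lemma \ref{lem:weight_of_edges_of_F_k-apex_case} every edge of $F_j$ incident to $C$ has $z_j$-weight at most $1/(100a)$. Together (ii) and (iii) show that only a negligible part of the cut $\delta_{G'_j}(C)$ can go to other components of $\Gamma'$, so by (i) almost all of its weight (in fact more than $1$) must go to the apices, and then an averaging argument produces an apex receiving weight at least $1/a$ from $C$.

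In detail, I would first check that $z_j$ is $2$-thick on $G'_j$. The graph $G'_j$ is obtained from $G$ by contracting the components of $\Gamma'$ and then contracting $v_0,\dots,v_{j-1}$ into apices, and any nonempty proper $U\subseteq V(G'_j)$ ``uncontracts'' to a nonempty proper $\widehat U\subseteq V(G)$ with $z_j(\delta_{G'_j}(U))=z(\delta_G(\widehat U))$ by definition of the induced weight function; hence $z_j(\delta_{G'_j}(U))\ge 2$, and in particular $z_j(\delta_{G'_j}(C))\ge 2$. Next, split this cut according to the other endpoint: in $G'_j$ every neighbor of $C$ is either another vertex of $F_j$ or an apex, so
\[
z_j(\delta_{G'_j}(C)) \;=\; \sum_{C'\,:\,\{C,C'\}\in E(F_j)} z_j(\{C,C'\}) \;+\; \sum_{i=1}^{a} z_j(C,\{\mya_i\}).
\]
Since $C$ has not yet been contracted, each edge $\{C,C'\}$ of $F_j$ still carries the weight it had in $F$, which is at most $1/(100a)$ by Lemma \ref{lem:weight_of_edges_of_F_k-apex_case}; and $\deg_{F_j}(C)\le 5$ because $F_j$ is a subgraph of the planar graph $F$ and every planar graph has a vertex of degree at most $5$. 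Hence the first sum is at most $5/(100a)=1/(20a)$, so $\sum_{i=1}^{a} z_j(C,\{\mya_i\})\ge 2-1/(20a)>1$. Finally, if $z_j(C,\{\mya_i\})<1/a$ held for every $\mya_i\in A$, the sum would be strictly less than $a\cdot(1/a)=1$, a contradiction; therefore some $\mya_i\in A$ has $z_j(C,\{\mya_i\})\ge 1/a$, i.e., $C$ is $\mya_i$-heavy in $F_j$.

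The only genuinely delicate point is the bookkeeping behind step (i): verifying that contraction together with the ``induced'' weight rule sets up a weight-preserving correspondence between cuts of $G'_j$ and cuts of $G$ (so that $2$-thickness is preserved), and confirming that the $F_j$-edges incident to the still-un-contracted vertex $C$ have not accumulated extra weight from the earlier contractions. Both are immediate once the definition of the induced weight function is unwound, but a careless treatment of the contractions is where the argument could go wrong; everything else is planarity plus averaging.
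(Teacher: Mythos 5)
Your proposal is correct and follows essentially the same argument as the paper: the minimum-degree vertex of the planar graph $F_j$ has at most $5$ neighbors, each contributing at most $1/(100a)$ to the cut by Lemma \ref{lem:weight_of_edges_of_F_k-apex_case}, so $z_j(C,A)\geq 2-5/(100a)\geq 1$ and averaging over the $a$ apices finishes the proof. The extra bookkeeping you supply (thickness preserved under contraction, $F_j$-edge weights between uncontracted components unchanged) is exactly what the paper leaves implicit.
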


\begin{proof}
Since $C$ has at most $5$ neighbors in $F_j$, by Lemma \ref{lem:weight_of_edges_of_F_k-apex_case} we have $z_j(C, A) \geq 2 - 5/(100 \cdot a) \geq 1$. Therefore by averaging, there exists an apex $\mya_i$ such that $z_j(C, \{\mya_i\}) \geq 1/a$.
\end{proof}

\begin{lemma}\label{lem:degree_in_P_j}
For any $j \in \{0, \ldots, m\}$ and for any $v \in V(\Gamma) \cap V(P_j)$, we have $\deg_{P_j}(v) \leq 6$.
\end{lemma}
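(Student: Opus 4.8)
The plan is to control, for a fixed non-apex vertex $v$ of $P_j$ (that is, a component $v\in V(F)$ appearing in $P_j$, which is what $V(\Gamma)\cap V(P_j)$ refers to), the two kinds of edges of $P_j$ incident to $v$: a single \emph{upward} edge joining $v$ to its parent, and the \emph{downward} edges joining $v$ to its children. I would show there is at most one of the former and at most five of the latter.

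First I would analyze how a non-apex vertex enters $V(P)$. Inspecting the process, $v$ is inserted into $P$ at exactly one moment: either $v$ is the vertex $v_i$ contracted at some step $i$ and is originally $\mya_j$-heavy, in which case the single edge $\{v_i,\mya_j\}$ is added; or $v$ becomes a child of the vertex $v_i$ contracted at some step $i$, in which case the single edge $\{v_i,v\}$ is added. Each case contributes exactly one edge at $v$, and I would argue the cases are mutually exclusive and cannot recur. For mutual exclusivity: to become a child of $v_i$, condition (i) in the definition of a child requires that $v$ be non-$\mya_r$-heavy in $F_i$ for every $\mya_r\in A$, i.e.\ $z_i(v,\{\mya_r\})<1/a$; since contracting a component into an apex $\mya_r$ only increases the weight between any surviving vertex and $\mya_r$ and leaves the weights to the other apices unchanged, we get $z_0(v,\{\mya_r\})\le z_i(v,\{\mya_r\})<1/a$, so $v$ is not originally heavy, and conversely an originally $\mya_j$-heavy vertex stays $\mya_j$-heavy in every $F_i$ and so can never satisfy (i). For non-recurrence: once $v\in V(P_i)$ the rule adds $v$ as a child of $v_i$ only when $v\notin V(P_i)$, so $v$ never acquires a second parent. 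Hence $v$ has at most one upward edge in $P_j$.

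Next I would bound the downward edges. The key point is that $v$ can be the parent of another vertex only at the single step at which $v$ itself is contracted: after that step $v$ is deleted from $F$ and never again plays the role of the contracted vertex $v_i$, while children are attached only to the currently contracted vertex. At that step every child of $v$ is a neighbour of $v$ in the planar graph $F_i$, and since $v$ is chosen of minimum degree in $F_i$, it has at most $5$ neighbours there; hence $v$ has at most $5$ children, which account for all of its downward edges in every $P_j$. Combining, $\deg_{P_j}(v)\le 1+5=6$. (If $v$ has not been contracted by step $j$ then it has no children yet and $\deg_{P_j}(v)\le 1$, still within the bound.)

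I expect the only delicate point to be the monotonicity of apex weights under the induced-weight contraction used in the mutual-exclusivity argument; but this is immediate from the definition of the induced weight function, since identifying a component $C$ with $\mya_j$ replaces each edge $\{u,C\}$ by an edge $\{u,\mya_j\}$ of weight $z_i(\{u,C\})+z_i(\{u,\mya_j\})\ge z_i(\{u,\mya_j\})$ and does not touch edges incident to the other apices. A minor bookkeeping point is that a vertex $v_i$ that is already in $P_i$ when it is contracted (which, by invariant (I1), is exactly the case where $v_i$ is not originally heavy) does not also receive an apex edge; this again follows from the mutual-exclusivity argument above.
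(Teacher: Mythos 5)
Your proof is correct and follows essentially the same route as the paper's: since the contracted vertex has minimum degree in the planar graph $F_i$, it acquires at most $5$ children, and it has at most one upward edge, giving $\deg_{P_j}(v)\le 6$. The paper states the "at most one parent" part without justification, whereas you additionally verify that the apex-edge and child-edge cases are mutually exclusive via monotonicity of the induced weights; this is a worthwhile detail but not a different approach.
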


\begin{proof}
By the construction, in each step we pick a vertex of minimum degree and contract it into some apex. Since $F_i$ is planar, its minimum degree is at most $5$. This means that for any $v \in V(\Gamma) \cap V(P_j)$, $v$ can be the parent of at most five other vertices and can have at most one parent. This completes the proof.
\end{proof}

\begin{lemma}\label{lem:Originally_a-i-heavy_or_parent}
The inductive invariant (I1) is maintained.
\end{lemma}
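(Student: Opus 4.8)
The plan is to prove the inductive invariant (I1) by induction on $j$. The base case $j=0$ is immediate: $P_0$ consists of the apices only, and for the minimum-degree vertex $v$ of $F_0=F$ we may invoke Lemma~\ref{lem:being_a_i_heavy}, which guarantees that $v$ is $\mya_i$-heavy in $F_0=F$ for some $\mya_i\in A$; since ``$\mya_i$-heavy in $F$'' is by definition ``originally $\mya_i$-heavy,'' the invariant holds. For the inductive step, assume (I1) holds for all indices up to $j$, and consider the minimum-degree vertex $u$ of $F_{j+1}$. I want to show $u$ is either originally $\mya_i$-heavy for some $i$, or $u\in V(P_{j+1})$.

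The key step is a case analysis on whether $u$ was $\mya_r$-heavy in $F_j$ for some $\mya_r$. First, if $u$ \emph{was} $\mya_r$-heavy in $F_j$ for some $r$: then since weights of edges incident to $u$ only change when $u$'s neighbors get contracted into apices (which only increases $z_j(u,\{\mya_r\})$, never decreases it), $u$ was already $\mya_r$-heavy back in $F_0=F$ — so $u$ is originally $\mya_r$-heavy and we are done. (I should double-check the direction of the weight change: contracting a neighbor $C$ of $u$ into $\mya_r$ adds $z(\{u,C\})$ to $z(\{u,\mya_r\})$, which can only make $u$ ``more heavy'' with respect to $\mya_r$; contracting $C$ into a different apex leaves $z(\{u,\mya_r\})$ unchanged. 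So heaviness with respect to a fixed apex is monotone along the process, hence heavy-in-$F_j$ implies originally heavy. This monotonicity is the one fact I must verify carefully.)

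Second, suppose $u$ was \emph{not} $\mya_r$-heavy in $F_j$ for any $\mya_r$. I need to exhibit $u\in V(P_{j+1})$. Here I invoke Lemma~\ref{lem:being_a_i_heavy} applied to $F_{j+1}$: the minimum-degree vertex $u$ of $F_{j+1}$ is $\mya_j$-heavy in $F_{j+1}$ for some apex; say this apex is the one into which $v_j$ (the vertex contracted at step $j$) was absorbed. The subtlety is that I need $u$ to actually acquire its heaviness precisely by the contraction of $v_j$ — i.e., $u$ must have been a neighbor of $v_j$ in $F_j$, and $u$ must satisfy conditions (i) and (ii) in the definition of $P_{j+1}$. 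Condition (i) is exactly the assumption of this case ($u$ not $\mya_r$-heavy in $F_j$ for any $r$). Condition (ii) is that $u$ is $\mya_j$-heavy in $F_{j+1}$, where $\mya_j$ is the apex $v_j$ was contracted to — but a priori Lemma~\ref{lem:being_a_i_heavy} only tells me $u$ is heavy with respect to \emph{some} apex. So the remaining obstacle is: if $u$ was not heavy with respect to any apex in $F_j$ but is heavy with respect to some apex in $F_{j+1}$, that apex must be the one $v_j$ was contracted into, and $u$ must have been adjacent to $v_j$ — otherwise $u$'s incident apex-weights would be unchanged between $F_j$ and $F_{j+1}$, contradicting the jump from ``not heavy anywhere'' to ``heavy somewhere.'' With that established, conditions (i) and (ii) both hold, so by construction $v_j$ becomes the parent of $u$ and $u$ is added to $P_{j+1}$; thus $u\in V(P_{j+1})$, completing the induction.

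The main obstacle I expect is precisely the bookkeeping in this last case: tracking exactly which edge-weights change when a single vertex is contracted, and confirming that the only way $u$ can become heavy is through $v_j$'s contraction with $v_j$ as a neighbor. This rests entirely on the definition of the induced weight function $z'$ (edges from $u$ to the surviving contracted vertex $v_0$ pick up the weight of the parallel edge to $u_0$, all other edges unchanged), so the argument is essentially a careful unwinding of that definition together with the monotonicity observation; no estimates beyond those already in Lemmas~\ref{lem:weight_of_edges_of_F_k-apex_case} and~\ref{lem:being_a_i_heavy} are needed.
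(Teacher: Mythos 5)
Your overall strategy (induction on $j$ with a case split on whether the vertex was already heavy in the previous graph) is workable, and your Case 2 correctly identifies the key mechanism: a vertex that jumps from ``not heavy with respect to any apex'' to ``heavy with respect to some apex'' must be a neighbor of the vertex just contracted, and must have become heavy with respect to precisely the apex that vertex was contracted into, so conditions (i) and (ii) fire and it acquires a parent. However, Case 1 contains a genuine logical error. You correctly observe that $z_j(u,\{\mya_r\})$ is non-decreasing in $j$ (contracting a neighbor of $u$ into $\mya_r$ only adds weight to the edge $\{u,\mya_r\}$, and contracting it into a different apex leaves that weight unchanged). But monotonicity in this direction yields ``originally $\mya_r$-heavy $\Rightarrow$ $\mya_r$-heavy in $F_j$,'' not the converse you assert. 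A vertex can be light with respect to every apex in $F_0$ and become heavy only after several of its neighbors have been absorbed into apices; for such a vertex your Case 1 concludes ``originally heavy,'' which is false, and the invariant is not established for it.

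The uncovered case is exactly the one the paper's proof handles. If $u$ is $\mya_r$-heavy in $F_j$ for some $r$ but not originally heavy with respect to any apex, let $j^*\le j$ be the \emph{first} index at which $u$ is heavy with respect to some apex. At step $j^*$, the vertex of $F_{j^*-1}$ contracted at that step must be a neighbor of $u$ and must have been contracted into the apex with respect to which $u$ became heavy --- this is your own Case 2 reasoning, applied at step $j^*$ rather than at step $j+1$ --- so conditions (i) and (ii) hold there, that vertex becomes $u$'s parent, and $u\in V(P_{j^*})\subseteq V(P_{j+1})$ since the forests only grow. With Case 1 replaced by this ``first transition step'' argument your proof becomes correct and essentially coincides with the paper's, which dispenses with the outer induction and argues directly for each $j$: either $v$ is originally heavy, or (by Lemma~\ref{lem:being_a_i_heavy}) it is heavy in $F_j$, and the first step at which it became heavy is where it was assigned a parent.
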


\begin{proof}
For any $j \in \{0, \ldots, m-1\}$, let $v \in V(F_j)$ be a vertex of minimum degree. If there exists some $\mya_i \in A$ such that $v$ is originally $\mya_i$-heavy, then we are done. Suppose for all $\mya_i \in A$, $v$ is not originally $\mya_i$-heavy. By Lemma \ref{lem:being_a_i_heavy} we know that there exists some $\mya_l \in A$ such that $v$ is $\mya_l$-heavy in $F_j$. 
Let $j^*\in \{1,\ldots,j\}$ be minimum such that $v$ is not $\mya_t$-heavy in $F_{j^*-1}$ for all $\mya_t\in A$, and $v$ is $\mya_{t'}$-heavy in $F_{j^*}$ for some $\mya_{t'}\in A$.
Let $u\in V(F_{j^*-1})$ be vertex that is contracted to some apex in step $j^*$.
It follows by construction that $u$ is the parent of $v$ in $P_{j^*}$.
Since $j\geq j^*$ it follows that $v\in V(P_j)$, concluding the proof.
\end{proof}

Now we are ready to describe how to construct $T'$ in $G'$. For any $l \in \{0, \ldots, m-1\}$, let $C \in V(F_l)$ be a vertex of minimum degree. If $C$ is originally $\mya_i$-heavy for some $\mya_i \in A$ and we contract $C$ to $\mya_i$, we pick an arbitrary edge $e$ between $C$ and $\mya_i$ and we add it to $T'$. Otherwise, by Lemma \ref{lem:Originally_a-i-heavy_or_parent} we have $C \in V(P_l)$. This means that $C$ has a parent $C'$. In this case, we pick an arbitrary edge $e$ between $C$ and $C'$ and we add it to $T'$.

\textbf{Phase 5.}
We construct a forest $T$ in $G$ the same way as in the $1$-apex case. We set $E(T)=E(T') \cup \bigcup_{C\in V(F)} E(T_C)$.

This completes the description of the algorithm.
\ifabstract
The following shows that the resulting spanning forest is $O(a)$-thin and with at most $a$ components.
The proof of Theorem \ref{thm:thin_a-apex} is deferred to Section \ref{sec:app:thin_a-apex}.

\begin{theorem}\label{thm:thin_a-apex}
Let $a\geq 1$ and let $G$ be a $a$-apex graph with set of apices $A=\{\mya_1,\ldots,\mya_a\}$.
Let $z:E(G)\to \mathbb{R}_{\geq 0}$ be $\beta$-thick for some $\beta>0$.
Then there exists a polynomial time algorithm which given $G$, $A$ and $z$ outputs a $O(a/\beta)$-thin spanning forest in $G$ (w.r.t.~$z$) with at most $a$ connected components.
\end{theorem}
\fi

\fi

\ifappendix

\ifabstract
\section{Analysis of the algorithm for constructing a thin forest in an $O(1)$-apex graph}
\label{sec:app:thin_a-apex}

\begin{proof}[Proof of Lemma \ref{lem:weight_of_edges_of_F_k-apex_case}]
The same argument as in Lemma \ref{lem:weight_of_edges_of_F} applies here. The only difference here is that a cut $U$ is tiny if $z(\delta(U)) < 1/(100\cdot a)$.
\end{proof}
\fi
\iffull
\subsection{Analysis}
\fi

\ifabstract
For the remainder of this Section let $T$ be the forest computed by the algorithm in Section \ref{sec:thin_a-apex}.
\fi
By the construction, $T$ has $a$ connected components. We will show that $T$ is a $O(a)$-thin spanning forest. Let $U$ be a cut in $G$. Similar to the $1$-apex case, we partition $E(T) \cap \delta(U)$ into three subsets:

\begin{description}
\item{(1)}
$T_1 = \{\{\mya_i,v\} \in E(T)\cap \delta(U) : \mya_i \in A, v\in V(\Gamma)\}$.

\item{(2)}
$T_2 = \{\{u,v\}\in E(T)\cap \delta(U) : \text{$u$ and $v$ are in the same component of $\Gamma'$}\}$.

\item{(3)}
$T_3 = \{\{u,v\}\in E(T)\cap \delta(U) : \text{$u$ and $v$ are in different components of $\Gamma'$}\}$.
\end{description}

\begin{lemma}\label{lem:T_1_is_thin_k-apex}
There exists a constant $\alpha_1$ such that $\left\vert{T_1}\right\vert \leq \alpha_1 \cdot a \cdot z(\delta(U))$.
\end{lemma}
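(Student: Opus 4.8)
The plan is to adapt the proof of Lemma~\ref{lem:T_1_is_thin} from the $1$-apex case, keeping track of how the constants degrade once the tiny-cut threshold is $1/(100a)$ and an originally heavy component is only guaranteed $z$-weight $1/a$ to its apex. Fix the cut $U$. The first step is a structural observation: every edge of $T_1$, say $e=\{\mya_i,v\}$ with $v\in V(\Gamma)$, is an edge that was placed into $T'$ when the connected component $C_v$ of $\Gamma'$ containing $v$ was contracted to $\mya_i$ in Phase~4 via the originally-heavy branch. Indeed, the only other edges of $T'$ are the parent-edges, which join two distinct components of $\Gamma'$ and therefore lie in $T_3$, while the edges of $\bigcup_{C}E(T_C)$ lie inside single components of $\Gamma'$ and hence in $T_2$. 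Consequently $C_v$ is originally $\mya_i$-heavy, so $z(C_v,\{\mya_i\})\geq 1/a$; and since each component of $\Gamma'$ is contracted exactly once, $C_v$ is incident to at most one edge of $T_1$.

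Next I would fix such an edge $e=\{\mya_i,v\}\in T_1\cap\delta(U)$ and split into two cases according to how $C_v$ meets $U$. If $C_v$ lies entirely on one side of $U$, then, because $v$ and $\mya_i$ lie on opposite sides and $v\in C_v$, the apex $\mya_i$ sits on the side of $U$ opposite to all of $C_v$; hence every edge between $C_v$ and $\mya_i$ crosses $U$, and I charge $e$ to this edge set, whose total $z$-weight is $z(C_v,\{\mya_i\})\geq 1/a$. If instead $C_v$ straddles $U$, i.e.\ $\emptyset\subsetneq C_v\cap U\subsetneq C_v$, then $C_v\cap U$ induces a non-tiny cut inside the component $C_v$ of $\Gamma'$ (no tiny cut survives Phase~1), so $z\bigl(\delta(U)\cap E(G[C_v])\bigr)\geq 1/(100a)$, and I charge $e$ to that edge set.

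Finally I would note that all the edge sets serving as charge targets are pairwise disjoint subsets of $\delta(U)$: apex-incident edges attached to distinct components are disjoint, edges inside $G[C_v]$ are disjoint for distinct $C_v$, and no apex-incident edge lies inside any $G[C_v]$; moreover, by the structural observation, each target is charged by at most one edge of $T_1$. Summing the two cases yields $|T_1|\le a\cdot z(\delta(U))+100a\cdot z(\delta(U))=101\,a\cdot z(\delta(U))$, so $\alpha_1=101$ suffices. I do not expect a genuine obstacle here; the care-requiring points are verifying that every apex-incident tree edge really comes from the originally-heavy branch (so the bound $z(C_v,\{\mya_i\})\ge 1/a$ is available) and that the two families of charged sets are disjoint across components. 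The extra factor $a$, compared with the absolute constant in the $1$-apex case, is precisely the combined price of the weaker heaviness bound $1/a$ and the smaller threshold $1/(100a)$, consistent with the $O(a/\beta)$-thinness claimed for the whole forest $T$.
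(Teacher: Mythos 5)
Your proof is correct and follows essentially the same charging argument as the paper, which simply invokes the $1$-apex proof with the two thresholds ($1/(100a)$ for tiny cuts and $1/a$ for heaviness) swapped in. Your version is in fact slightly more careful than the paper's, since with several apices one cannot normalize to $\mya_i\notin U$ by complementing $U$, and your case split on whether $C_v$ lies on one side of $U$ or straddles it handles this cleanly.
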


\begin{proof}

A similar argument as in the case of $1$-apex graph applies here with two differences. First, a cut $U$ is tiny if $z(\delta(U)) < 1/(100 \cdot a)$. Second, for any $\mya_i \in A$ and $C \in V(F)$ where $C$ is originally $\mya_i$-heavy, we have that $z(C, \{\mya_i\}) \geq 1/a$. Therefore, we get $\left\vert{T_1}\right\vert \leq 100 \cdot a \cdot z(\delta(U))$.
\end{proof}

\begin{lemma}\label{lem:T_2_is_thin_k-apex}
There exists a constant $\alpha_2$ such that $\left\vert{T_2}\right\vert \leq \alpha_2 \cdot a \cdot z(\delta(U))$.
\end{lemma}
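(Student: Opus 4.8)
The plan is to mimic the argument for the $1$-apex case (Lemma~\ref{lem:T_2_is_thin}) almost verbatim, the only change being that each per-component tree $T_C$ is now $O(a)$-thin rather than $O(1)$-thin, which introduces the extra factor of $a$ in the bound. First I would unwind the definition of $T_2$: by the construction in Phase~5, $E(T) = E(T') \cup \bigcup_{C\in V(F)} E(T_C)$, and the edges of $T'$ all join distinct components of $\Gamma'$ (they come from contractions in $G'$), so every edge of $T_2$ lies in some $E(T_C)$. Hence
\[
T_2 \;=\; \bigcup_{C\in V(F)} \bigl(E(T_C)\cap \delta(U)\bigr),
\]
and since the edge sets $E(G[C])$ are pairwise disjoint, $|T_2| = \sum_{C\in V(F)} |E(T_C)\cap \delta(U)|$.

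Next I would bound each term $|E(T_C)\cap \delta(U)|$ using the thinness of $T_C$. After Phase~1 there is no tiny cut inside any component $C$ of $\Gamma'$, i.e.\ $z$ restricted to $C$ is $(1/(100a))$-thick, so Theorem~\ref{thm:thin_tree_planar} guarantees that the tree $T_C$ produced in Phase~2 is $c\cdot a$-thin w.r.t.\ $z$ for an absolute constant $c$ (concretely, $c = 1000$). A cut $U$ of $G$ induces the cut $U\cap V(C)$ of $C$; the edges of $T_C$ crossing $U$ are exactly those crossing $U\cap V(C)$ within $C$, and $z(\delta_C(U\cap V(C))) = z\bigl(\delta(U)\cap E(G[C])\bigr)$. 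Therefore $|E(T_C)\cap\delta(U)| \le c\cdot a\cdot z\bigl(\delta(U)\cap E(G[C])\bigr)$.

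Finally I would sum over all components:
\[
|T_2| \;\le\; \sum_{C\in V(F)} c\cdot a\cdot z\bigl(\delta(U)\cap E(G[C])\bigr) \;\le\; c\cdot a\cdot z(\delta(U)),
\]
where the last inequality again uses disjointness of the sets $E(G[C])$ together with nonnegativity of $z$. Taking $\alpha_2 = c$ completes the proof. There is essentially no obstacle here: the only point requiring a line of care is the reduction from a cut of $G$ to the induced cut of $C$ so that the (component-local) thinness of $T_C$ can be invoked, and this is identical to the corresponding step in the $1$-apex analysis.
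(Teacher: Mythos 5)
Your proposal is correct and follows essentially the same route as the paper: the paper's proof simply invokes the $1$-apex argument (decompose $T_2$ component-by-component, apply the per-component thinness of each $T_C$, and sum using disjointness of the $E(G[C])$), noting that the only change is the new threshold $1/(100a)$ for tiny cuts, which is exactly the source of your extra factor of $a$. Your write-up just makes explicit the two steps the paper leaves implicit (that $T'$ contributes nothing to $T_2$, and the restriction of the cut $U$ to each component), so there is nothing to flag.
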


\begin{proof}
Again, a similar argument as in the case of $1$-apex graphs applies here. The only difference here is the definition of tiny cut. Therefore, we get $|T_2| \leq 100 \cdot a \cdot z(\delta(U))$.
\end{proof}

\begin{lemma}\label{lem:T_3_is_thin_k-apex}
There exists a constant $\alpha_3$ such that $\left\vert{T_3}\right\vert \leq \alpha_3 \cdot a \cdot z(\delta(U))$.
\end{lemma}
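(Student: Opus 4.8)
The plan is to bound $|T_3|$ by following the same template used in the $1$-apex case (Lemma~\ref{lem:T_3_is_thin}), but carefully tracking the extra factor of $a$ that comes from the $a$ apices and from the redefined notion of a tiny cut ($z(\delta(U)) < 1/(100a)$). As before, I would assume w.l.o.g.~that $A \cap U$ is handled by possibly swapping $U$ for $V(G)\setminus U$ so that we may reason about the apices on one side, and then partition $T_3$ — the edges of $T$ in $\delta(U)$ whose endpoints lie in different components of $\Gamma'$ — according to how the cut $U$ meets the components of $\Gamma'$ containing the endpoints. Concretely, I would split $T_3 = T_{31}\cup T_{32}\cup T_{33}$ exactly as in Lemma~\ref{lem:T_3_is_thin}: $T_{31}$ consists of edges $\{u,v\}$ with $u\in U$, $v\notin U$, and the component $C_v$ of $v$ satisfying $U\cap C_v \neq\emptyset$; $T_{32}$ is the symmetric version with $U\cap C_u\neq\emptyset$; and $T_{33}$ is the remaining case where $C_u\subseteq U$ and $U\cap C_v=\emptyset$.

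For $T_{31}$ and $T_{32}$, the key point is that $U\cap C_v$ (resp.~$U\cap C_u$) is a nonempty proper cut inside a component of $\Gamma'$, and since Phase~1 removed all tiny cuts, $z(\delta(U\cap C_v)) \geq 1/(100a)$. By Lemma~\ref{lem:degree_in_P_j} each vertex of $\Gamma$ that lies in $V(P_j)$ has degree at most $6$ in $P_j$, so each component $C$ of $\Gamma'$ is incident to $O(1)$ edges of $T'$, hence $O(1)$ edges of $T_3$ with an endpoint in $C$. Thus we can charge each edge of $T_{31}$ (together with a bounded number of siblings) to $z(\delta(U\cap C_v)) \geq 1/(100a)$, yielding $|T_{31}| \leq O(a)\cdot z(\delta(U))$, and symmetrically for $T_{32}$. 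For $T_{33}$ I would mimic the $1$-apex argument: form the augmented cut $U_1$ by completing every partially-covered component to be fully inside, call those components \emph{important}, let $U_2$ be the set of components fully in $U_1$, and let $Y$ be the complementary components together with all apices in $A$. Split $T_{33} = T_{331}\cup T_{332}$ according to whether the $F[U_2]$-degree of the component of the $U$-side endpoint is small (below some threshold $\Theta(a)$, reflecting that each edge of $F$ has weight at most $1/(100a)$ by Lemma~\ref{lem:weight_of_edges_of_F_k-apex_case}) or large. For $T_{331}$, a low-degree component $C_u$ has $z(C_u, Y) \geq 2 - \Theta(a)/(100a) = \Omega(1)$ by $2$-thickness, and again only $O(1)$ edges of $T_{33}$ touch $C_u$, so $|T_{331}| \leq O(a)\cdot z(\delta(U))$; the factor $a$ in the threshold is exactly what forces the $a$ loss. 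For $T_{332}$, I would invoke planarity of $F[U_2]$ (average degree $\leq 6$) to conclude that the number of high-degree components is at most the number of low-degree ones, each of which contributes $\Omega(1)$ weight towards $z(C,Y)$ (splitting on whether $C$ is important), and each high-degree component is incident to $O(1)$ edges of $T_{332}$; this gives $|T_{332}| \leq O(a)\cdot z(\delta(U))$.

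Combining the four pieces gives $|T_3| \leq O(a)\cdot z(\delta(U))$, as claimed, with the constant $\alpha_3$ absorbing the various $100$'s and the bounded degrees from Lemma~\ref{lem:degree_in_P_j}. Together with Lemmas~\ref{lem:T_1_is_thin_k-apex} and \ref{lem:T_2_is_thin_k-apex}, this will then yield that $T$ is an $O(a)$-thin spanning forest with at most $a$ components, proving Theorem~\ref{thm:thin_a-apex} (after the usual scaling of $z$ by $2/\beta$ to handle general $\beta$-thick weight functions). The main obstacle I expect is making the $T_{33}$ analysis airtight: one has to be careful that the passage from $U$ to the augmented cut $U_1$ does not increase $z(\delta(U_1))$ beyond $O(z(\delta(U)))$ — this is where the weight bound $z(C,C')\leq 1/(100a)$ on edges of $F$ (Lemma~\ref{lem:weight_of_edges_of_F_k-apex_case}) and the bounded number of important components along the cut must be balanced against each other — and that the degree threshold separating $T_{331}$ from $T_{332}$ is chosen consistently so that both the ``low-degree implies heavy weight to $Y$'' estimate and the planarity counting argument go through with the same constant. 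The rest is a routine, if bookkeeping-heavy, adaptation of the $1$-apex proof.
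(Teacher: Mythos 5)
Your treatment of $T_{31}$ and $T_{32}$ matches the paper's: both are handled verbatim as in the $1$-apex case, rescaling the tiny-cut threshold to $1/(100a)$ and using that each component meets $O(1)$ edges of $T'$. The gap is in $T_{33}$. You propose to transplant the $1$-apex argument wholesale, taking $Y$ to be the components outside the augmented cut \emph{together with all apices in $A$}, and then charging a low-degree component $C_u\subseteq U$ to $z(C_u,Y)=\Omega(1)$. This step fails when $a\ge 2$: you cannot arrange, by replacing $U$ with its complement, that all apices lie on one side of the cut, so some apices may lie inside $U$. For a component $C_u\subseteq U$, the weight $z(C_u,\{\mya_i\})$ to an apex $\mya_i\in U$ does not cross $\delta(U)$ and cannot be charged; since essentially all of the $2$-thickness of $C_u$ may be concentrated on such apices, the inequality ``$z(C_u,Y)=\Omega(1)$ implies a chargeable $\Omega(1)$ contribution to $z(\delta(U))$'' is simply false. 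The difficulty you flag (controlling $z(\delta(U_1))$ versus $z(\delta(U))$) is not the real obstruction; the apices straddling the cut are.

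The paper's proof of this lemma circumvents this with machinery your proposal does not use. For each $e=\{u,v\}\in T_{33}$ it identifies the specific apex $\mya_i$ to which both $C_u$ and $C_v$ were contracted, and invokes the invariant from Phase~4 that $z_j(C_u,\{\mya_i\})\ge 1/a$ in the \emph{induced} weight function at the contraction step $j$. This weight traces back to original edges either directly from $C_u$ to $\mya_i$, or to earlier-contracted components in the connected component $D$ of $F[B]$ (where $B$ is the set of components sent to $\mya_i$). The proof then needs \emph{two} augmented cuts --- $U_1$ obtained by filling partially-cut components into $U$, and $U_2$ obtained by carving them out --- and splits into the cases $\mya_i\notin U$ and $\mya_i\in U$; in the first case it charges either $z_0(C_u,\{\mya_i\})$, or the weight from $C_u$ into $D_{U'_1}^{\myout}$, or falls back on the planar average-degree argument applied to $D_{U'_1}^{\myin}$ (not to all of $F[U_2]$); the second case is symmetric with $C_v$, $U'_2$, and the roles of $\myin/\myout$ swapped. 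Without tracking per-edge which apex absorbed the components and where that apex sits relative to $U$, the $T_{33}$ bound does not go through, so your proof as written is incomplete.
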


\begin{proof}

Similar to the $1$-apex case, we partition $T_3$ into three subsets:

\begin{description}

\item{(1)}
$T_{31} = \{\{u,v\} \in T_3 : u \in U, v \notin U, \exists C_u, C_v \in V(F) \text{ such that } u \in C_u, v \in C_v, U \cap C_v \neq \emptyset \}$.

\item{(2)}
$T_{32} = \{\{u,v\} \in T_3 : u \in U, v \notin U, \exists C_u, C_v \in V(F) \text{ such that } u \in C_u, v \in C_v, U \cap C_u \neq \emptyset \}$.

\item{(3)}
$T_{33} = \{\{u,v\} \in T_3 : u \in U, v \notin U, \exists C_u, C_v \in V(F) \text{ such that } u \in C_u, v \in C_v, U \cap C_v = \emptyset, C_u \subseteq U \}$.

\end{description}

The arguments for $T_{31}$ and $T_{32}$ are the same as in $1$-apex graphs. The only difference here is that a cut $U$ is tiny if $z(\delta(U)) < 1/(100\cdot a)$. Therefore, we have $\left\vert{T_{31}}\right\vert \leq 600 \cdot a \cdot z(\delta(U))$ and $\left\vert{T_{32}}\right\vert \leq 600 \cdot a \cdot z(\delta(U))$.

Now for $T_{33}$ we want to find a constant $\alpha_{33}$ such that $\left\vert{T_{33}}\right\vert \leq \alpha_{33} \cdot a \cdot z(\delta(U))$. We define two new cuts $U_1$ and $U_2$ as follows. For every $C \in V(F)$ with $C \cap U \neq \emptyset$, if $C \cap U \neq C$, we add all other vertices of $C$ to $U$ (delete all other vertices of $C$ from $U$) to obtain $U_1$ ($U_2$) and we say that $C$ is $U$-\emph{important}. Let $U'_1 = \{C \in V(F) : C \subseteq U_1\}$ and $U'_2 = \{C \in V(F) : C \subseteq U_2\}$.

For any $e =\{u, v\} \in T_{33}$ where $u \in U$, $v \notin U$, $u \in C_u$ and $v \in C_v$ for some $C_u, C_v \in V(F)$, by the construction of $T_3$, we have that both $C_u$ and $C_v$ have been contracted to the same apex $\mya_i$ for some $\mya_i \in A$. Let $j \in \{0, \ldots, m\}$ be the step during which $C_u$ is contracted to $\mya_i$.
Let $B = \{ C \in V(F) : C \text{ is contracted to } \mya_i \}$.
Let $D$ be the connected component of $F[B]$ containing $C_u$. Let $D_{U'_1}^{\myin} = D[U'_1]$, $D_{U'_1}^{\myout} = D[V(D) \setminus U'_1]$, $D_{U'_2}^{\myin} = D[U'_2]$, $D_{U'_2}^{\myout} = D[V(D) \setminus U'_2]$.


We consider the following two cases:

\begin{description}
\item{Case 1: $\mya_i\notin U$.}
We know that $C_u \in U$. By the construction, we have that $z_j(C_u, \{\mya_i\}) \geq 1/a$. If $z_0(C_u, \{\mya_i\}) \geq 1/(100 \cdot a)$, we can charge $e$ to $z_0(C_u, \{\mya_i\})$ and we know that there are at most six edges in $T_{33}$ with a vertex in $C_u$. 

Otherwise, we have that $z_0(C_u, (V(D) \setminus C_u)) \geq 99/(100 \cdot a)$. If $z_0(C_u, (V(D_{U'_1}^{\myout}) \setminus C_u)) \geq 1/(100 \cdot a)$, then by the construction of $D_{U'_1}^{\myout}$ we get $z_0(C_u, (G \setminus U)) \geq 1/(100 \cdot a)$. Therefore we can charge $e$ to  $z_0(C_u, (G \setminus U))$.

Otherwise, we have that $z_0(C_u, (V(D_{U'_1}^{\myin}) \setminus C_u)) \geq 98/(100 \cdot a)$. This implies that $\deg_{D_{U'_1}^{\myin}} (C_u) \geq 98$. Now note that $D_{U'_1}^{\myin}$ is a planar graph and its average degree is at most $6$. Now a similar argument as in the $1$-apex case applies here. Let $V_1 =\{C \in D_{U'_1}^{\myin} : \deg_{D_{U'_1}^{\myin}} (C) \geq 98\}$. Let $V_2 = \{C \in D_{U'_1}^{\myin} : \deg_{D_{U'_1}^{in}} (C) \leq 5\}$. By planarity of $D_{U'_1}^{\myin}$, we have that $\left\vert V_1 \right\vert \leq \left\vert V_2 \right\vert$. For any $C \in V_2$, if $C$ is $U$-important, then $C \cap U$ is a cut for $C$ which is not tiny. Therefore we have $z(C \cap U, G \setminus U) \geq 1/(100 \cdot a)$. If $C$ is not $U$-important, we have $z(C, G \setminus U) \geq 95/(100 \cdot a)$. Now note that for any $C' \in V_1$ there are at most six edges in $T_{33}$ with a vertex in $C'$. Therefore we have $\left\vert{T_{33}}\right\vert \leq 1000 \cdot a \cdot z(\delta(U))$.

\item{Case 2: $\mya_i\in U$.}
Let $X_1 =\{C \in D_{U'_2}^{out} : \deg_{D_{U'_2}^{out}} (C) \geq 98\}$. Let $X_2 = \{C \in D_{U'_2}^{out} : \deg_{D_{U'_2}^{out}} (C) \leq 5\}$.
We know that $C_v \notin U$. We follow a similar approach as in the first case by considering $U'_2$, $D_{U'_2}^{\myin}$ and $D_{U'_2}^{\myout}$. The same argument applies here by replacing $U_1'$, $D_{U'_1}^{\myin}$, $D_{U'_1}^{\myout}$, $X_1$ and $X_2$ with $U'_2$, $D_{U'_2}^{\myout}$, $D_{U'_2}^{\myin}$, $V_1$ and $V_2$ respectively. Therefore, we get $\left\vert{T_{33}}\right\vert \leq 1000 \cdot a \cdot z(\delta(U))$.
\end{description}

Now from what we have discussed, we have $\left\vert{T_{31}}\right\vert \leq 600 \cdot a \cdot z(\delta(U))$, $\left\vert{T_{32}}\right\vert \leq 600 \cdot a \cdot z(\delta(U))$ and $\left\vert{T_{33}}\right\vert \leq 1000 \cdot a \cdot z(\delta(U))$. Therefore, we get $\left\vert{T_{3}}\right\vert \leq 2200 \cdot a \cdot z(\delta(U))$ completing the proof.
\end{proof}

\begin{lemma}\label{T_is_thin_in_G}
$T$ is a $O(a)$-thin spanning forest in $G$ with at most $a$ connected components.
\end{lemma}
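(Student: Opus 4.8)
The plan is to assemble the three thinness bounds already established together with two routine structural facts: that $T$ is acyclic and that it has exactly $a$ connected components. I would first count edges. Writing $n = |V(G)|$ and $m = |V(F)|$, the planar part $\Gamma$ has $n - a$ vertices, so the spanning forest $\bigcup_{C \in V(F)} T_C$ of $\Gamma$ produced in Phase 2 has exactly $(n-a) - m$ edges (one component per vertex of $F$). Phase 4 adds precisely one edge to $T'$ each time a vertex of $F$ is contracted to an apex, i.e.\ $m$ edges in total, and Phase 5 adds nothing new; hence $|E(T)| = (n-a) - m + m = n - a$.

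Next I would verify connectivity to the apices, following essentially verbatim the argument of Lemma \ref{lem:T_is_spanning} from the $1$-apex case. Every $v \in V(\Gamma)$ lies in a unique component $C \in V(F)$, and $T_C$ is a spanning tree of $C$, so it suffices to exhibit, for each $C$, a path in $T$ from some vertex of $C$ to some apex. If $C$ is originally $\mya_i$-heavy, the edge added in Phase 4 joins $C$ directly to $\mya_i$. Otherwise, by invariant (I1) and Lemma \ref{lem:Originally_a-i-heavy_or_parent}, $C$ has a parent $C'$; following parent pointers, which by construction never revisit a component and therefore terminate, we reach a component that is originally heavy and hence joined to an apex. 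Consequently $T$ has at most $a$ connected components. Combined with $|E(T)| = n - a$, this forces $T$ to have exactly $a$ components and to be acyclic (a spanning subgraph on $n$ vertices with $n-a$ edges and at most $a$ components is a forest with exactly $a$ components), so $T$ is a spanning forest with $a$ components.

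For thinness, fix any $U \subseteq V(G)$. The sets $T_1, T_2, T_3$ defined above form a partition — not merely a cover — of $E(T) \cap \delta(U)$: each edge of $T$ is either incident to an apex (so it lies in $T_1$), or lies inside some $T_C$ and thus joins two vertices of a single component of $\Gamma'$ (so it lies in $T_2$), or is a Phase-4 edge joining two distinct components of $\Gamma'$ (so it lies in $T_3$); no edge of $T$ falls into two of these categories, since parents are never apices and a component is never its own parent. Lemmas \ref{lem:T_1_is_thin_k-apex}, \ref{lem:T_2_is_thin_k-apex}, and \ref{lem:T_3_is_thin_k-apex} then give
\[
|E(T) \cap \delta(U)| = |T_1| + |T_2| + |T_3| \le (\alpha_1 + \alpha_2 + \alpha_3)\, a\, z(\delta(U)) = O(a)\, z(\delta(U)),
\]
which is exactly the claimed $O(a)$-thinness.

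Since the quantitative core is entirely contained in the three preceding lemmas, the only real work here is this bookkeeping. The one subtle point I expect to spend care on is confirming that the Phase-4 process assigns every non-apex vertex of $G'$ to exactly one apex, so that the number of trees is exactly $a$ rather than fewer; this is where I would lean on the facts that we only ever contract a current minimum-degree vertex of $F_j$ and that invariant (I1) is maintained (Lemma \ref{lem:Originally_a-i-heavy_or_parent}), which guarantees that the parent forest $P_j$ never becomes ``stuck'' and the parent chains always terminate at originally-heavy components. (The generalization to an arbitrary $\beta$-thick $z$ claimed in Theorem \ref{thm:thin_a-apex} then follows by rescaling $z$ by a factor of $2/\beta$, exactly as in the $1$-apex case.)
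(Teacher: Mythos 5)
Your proof is correct and follows the same route as the paper: the paper's own proof simply asserts the spanning-forest and component-count claims ``by the construction'' and then sums the bounds $\alpha_1+\alpha_2+\alpha_3$ from Lemmas \ref{lem:T_1_is_thin_k-apex}, \ref{lem:T_2_is_thin_k-apex}, and \ref{lem:T_3_is_thin_k-apex} over the partition $T_1\cup T_2\cup T_3$ of $E(T)\cap\delta(U)$. The only difference is that you make explicit the edge count and the parent-chain/connectivity bookkeeping that the paper leaves implicit (deferring to the $1$-apex argument of Lemma \ref{lem:T_is_spanning}), and these details check out.
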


\begin{proof}
By the construction, $T$ is a spanning forest and has at most $a$ connected components. Let $\alpha = \alpha_1 + \alpha_2 + \alpha_3$, where $\alpha_1$, $\alpha_2$ and $\alpha_3$ are obtained by Lemmas \ref{lem:T_1_is_thin_k-apex}, \ref{lem:T_2_is_thin_k-apex}, \ref{lem:T_3_is_thin_k-apex}. Therefore for any $U \subset V(G)$, we have $\left\vert{T}\right\vert \leq \alpha \cdot a \cdot z(\delta(U)) = 2400 \cdot a \cdot z(\delta(U))$ completing the proof.
\end{proof}

We are now ready to prove the main result of this Section.

\iffull
\begin{theorem}\label{thm:thin_a-apex}
Let $a\geq 1$ and let $G$ be a $a$-apex graph with set of apices $A=\{\mya_1,\ldots,\mya_a\}$.
Let $z:E(G)\to \mathbb{R}_{\geq 0}$ be $\beta$-thick for some $\beta>0$.
Then there exists a polynomial time algorithm which given $G$, $A$ and $z$ outputs a $O(a/\beta)$-thin spanning forest in $G$ (w.r.t.~$z$) with at most $a$ connected components.
\end{theorem}
\fi

\ifabstract
\begin{proof}[Proof of Theorem \ref{thm:thin_a-apex}]
\fi
\iffull
\begin{proof}
\fi
By Lemma \ref{T_is_thin_in_G}, for $\beta \geq 2$, we know that we can find a $(2400 a)$-thin spanning forest in $G$ (w.r.t.~$z$) with at most $a$ connected components. For any other $0 < \beta < 2$, the claim follows by scaling $z$ by a factor of $2/\beta$.
\end{proof}

\fi

\ifmain

\section{Thin forests in higher genus graphs with many apices}\label{sec:thin_forests_higher_genus}

The following theorem is implicit in the work of Erickson and Sidiropoulos \cite{erickson2014near}.

\begin{theorem}[Erickson and Sidiropoulos \cite{erickson2014near}]\label{tim:result_of_genus_g}
Let $G$ be a graph with $\eg(G)=g$, and let $z$ be a $\beta$-thick weight function on the edges of $G$ for some $\beta \geq 0$.
Then there exists a polynomial time algorithm which given $G$, $z$, and an embedding of $G$ into a surface of Euler genus $g$, outputs a $O(1/\beta)$-thin spanning forest in $G$ (w.r.t.~$z$), with at most $g$ connected components. 
\end{theorem}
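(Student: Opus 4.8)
The plan is to reduce the bounded-genus case to the apex case already handled by Theorem~\ref{thm:thin_a-apex}, by trading genus for apices. Concretely, given a graph $G$ with $\eg(G)=g$ together with an embedding into a surface $\Sigma$ of Euler genus $g$, I would first argue that one can cut the surface open along a system of non-separating curves to reduce the genus, at the cost of introducing a bounded number of new ``boundary'' vertices that will play the role of apices. More precisely, I would pick a shortest non-contractible cycle (or a shortest non-separating cycle) $\gamma$ in $G$ with respect to the combinatorial structure, and observe that cutting along $\gamma$ produces a graph embedded in a surface of strictly smaller Euler genus. The vertices of $\gamma$ become the candidate apices; since we only want $O(g)$ of them in total, we cannot afford to keep an entire cycle, so instead the right move is the standard one: add one new \emph{apex} vertex $\mya$ adjacent to all vertices of $\gamma$ (with appropriate edge weights, e.g.\ shortest-path distances), which lets us ``planarize'' one handle while increasing the apex count by exactly one. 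Iterating $g$ times yields a graph $G^\star$ with at most $g$ apices and a planar part, to which Theorem~\ref{thm:thin_a-apex} applies.

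The key steps, in order, are: (1) Formalize the surgery: from $(G,\Sigma)$ with Euler genus $g$, produce $(G^\star, A)$ where $G^\star$ is an $a$-apex graph with $a\le g$, $V(G)\subseteq V(G^\star)$, and such that the weight function $z$ extends to a weight function $z^\star$ on $G^\star$ that is still $\Omega(\beta)$-thick; here the new apex edges should be given weight $0$ (or infinitesimally small), so thickness of cuts cannot decrease. (2) Apply Theorem~\ref{thm:thin_a-apex} to $(G^\star, A, z^\star)$ to obtain an $O(a/\beta) = O(g/\beta)$-thin spanning forest $T^\star$ of $G^\star$ with at most $a\le g$ connected components. (3) Project $T^\star$ back to $G$: delete the apex vertices and the apex edges, and for each apex edge in $T^\star$ substitute a short path in $G$ along $\gamma$ (or simply argue the apices contribute only $O(g)$ extra components when removed). (4) Verify that the resulting subgraph of $G$ is spanning, has at most $g$ components, and is $O(1/\beta)$-thin: for any cut $U$ of $G$, relate $\delta_G(U)$ to a cut $U^\star$ of $G^\star$ and use that $z^\star(\delta(U^\star)) = z(\delta(U))$ up to the zero-weight apex edges, while $|T\cap \delta_G(U)|$ differs from $|T^\star\cap\delta(U^\star)|$ by at most an additive $O(g)$ term absorbed into thickness since $z(\delta(U))\ge\beta$ forces a multiplicative bound. (5) Finally, reduce general $\beta$ to $\beta\ge 2$ by scaling $z$, exactly as in the previous two theorems.

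I expect the main obstacle to be step~(3)/(4): controlling what happens when the apex vertices are deleted. Deleting an apex can, in the worst case, split a tree $T_C$ of the forest into many components, which would blow the component bound past $g$. The fix is to be careful about \emph{which} tree edges are incident to apices — the construction in Theorem~\ref{thm:thin_a-apex} only attaches each ``component'' of the planar part to an apex through a single edge, so after deletion each such component either remains connected to the rest through planar-part edges or becomes its own component, and a counting argument (each of the $\le g$ apices contributes $O(1)$ new components) keeps the total at $O(g)$; to get exactly $g$ one uses the sharper component bound in Theorem~\ref{thm:thin_a-apex} together with the fact that connectivity through $\gamma$'s vertices is preserved by the substituted paths. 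A secondary subtlety is making sure the genus-reduction surgery is actually realizable combinatorially (choosing a valid non-separating cycle, updating rotation systems), but this is standard topological-graph-theory bookkeeping and, since an embedding is given as input, can be done in polynomial time; I would cite the relevant machinery from \cite{erickson2014near} rather than redo it. This matches the paper's own remark that the theorem is ``implicit in the work of Erickson and Sidiropoulos,'' so the proof should largely be an assembly of their handle-cutting reduction with the new $a$-apex thin-forest result.
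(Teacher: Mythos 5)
There is a genuine gap, and it starts at step~(1): the ``trade genus for apices'' surgery you describe does not exist. Adding a new vertex $\mya$ adjacent to all vertices of a non-separating cycle $\gamma$ does not reduce the genus of the rest of the graph --- all the edges of $G$ that force the handle are still present, so $G^\star\setminus A$ need not be planar after $g$ iterations. Cutting the surface along $\gamma$ does reduce the genus, but it duplicates the vertices of $\gamma$ and yields a different graph, and you cannot absorb an entire cycle's worth of vertices into $O(1)$ apices. In fact the two graph classes are incomparable: the $n\times n$ toroidal grid has Euler genus $2$ but is not an $O(1)$-apex graph, so no reduction of the proposed form can work. Even if one granted such a reduction, three further steps fail. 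First, Theorem~\ref{thm:thin_a-apex} with $a=g$ apices only gives an $O(g/\beta)$-thin forest, whereas the statement you must prove requires $O(1/\beta)$-thinness with the constant independent of $g$; nothing in your projection step can remove that factor of $g$. Second, giving the apex edges weight $0$ makes $z^\star$ fail to be thick (the singleton cut $\{\mya\}$ has weight $0$), and the apex construction crucially relies on components being $\mya_i$-heavy (Lemmas~\ref{lem:originally_heavy_weight} and~\ref{lem:being_a_i_heavy} need $z(C,A)\geq 1$), so with weightless apex edges the algorithm of Theorem~\ref{thm:thin_a-apex} simply does not run. Third, an apex in $T^\star$ can have degree $\Omega(n)$ (every originally heavy component attaches to it by its own edge), so deleting it can create $\Omega(n)$ components, and your proposed reconnection by subpaths of $\gamma$ adds edges whose intersection with a cut $\delta(U)$ is not controlled by $z(\delta(U))$, so thinness can be destroyed.

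The intended proof (which the paper does not reproduce, citing it as implicit in \cite{erickson2014near}) is entirely different: it is the ribbon-contraction argument that this paper adapts in Section~\ref{sec:main_Lemma_(a,g,1,p)-case}. One repeatedly contracts the heaviest ribbon (maximal class of mutually homotopic parallel edges) in the current contracted graph, adds a central edge of that ribbon to the forest, and stops when $g$ vertices remain; the heaviest ribbon always carries a constant fraction of the weight of the cut around its endpoint (using Euler's formula on the embedded contracted graph), which is what yields $O(1/\beta)$-thinness with a universal constant, and stopping at $g$ vertices gives at most $g$ components. If you want to prove Theorem~\ref{tim:result_of_genus_g} yourself rather than cite it, that is the argument to reconstruct; the apex machinery of Theorem~\ref{thm:thin_a-apex} is the wrong tool here, and indeed the paper's logical dependence runs the other way (Section~\ref{sec:thin_forests_higher_genus} uses the genus theorem as a black box inside the apex construction).
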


In this section, we study the problem in higher genus graphs. First, the following two Lemmas can be obtained by Euler's formula.

\begin{lemma}\label{lem:minimum_degree_in_G}
Let $G$ be a graph of genus $g \geq 1$ with $|V(G)| \geq 10 g$. Then there exists $v_0 \in V(G)$ with $\deg_G (v_0) \leq 7$.
\end{lemma}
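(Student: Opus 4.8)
The plan is a standard double-counting argument via Euler's formula. I would first recall the edge-sparsity bound for bounded-genus graphs: if $G$ is a (simple) graph with $|V(G)| \ge 3$ and $\eg(G) = g$, then $|E(G)| \le 3|V(G)| - 6 + 3g$. This follows from Euler's formula $|V| - |E| + |F| \ge 2 - g$ for an embedding of $G$ into a surface of Euler genus $g$ (we may assume the embedding is cellular, since otherwise $G$ embeds into a surface of strictly smaller Euler genus, which only improves the bound), combined with $3|F| \le 2|E|$, which holds because $G$ is simple with at least three vertices, so every face boundary is a closed walk of length at least three. If $G$ is disconnected the same bound holds, for example by applying the connected case componentwise and using additivity of the Euler genus over components. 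The hypothesis $|V(G)| \ge 10g \ge 10 \ge 3$ guarantees we are in the applicable regime.

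Next I would argue by contradiction: suppose every vertex of $G$ has degree at least $8$. Then $2|E(G)| = \sum_{v \in V(G)} \deg_G(v) \ge 8|V(G)|$, so $|E(G)| \ge 4|V(G)|$. Combining with the sparsity bound yields $4|V(G)| \le 3|V(G)| - 6 + 3g$, hence $|V(G)| \le 3g - 6$. Since $g \ge 1$ we have $3g - 6 < 10g$, contradicting $|V(G)| \ge 10g$. Therefore some vertex $v_0 \in V(G)$ satisfies $\deg_G(v_0) \le 7$.

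This argument is entirely routine, so I do not expect a genuine obstacle; the only things to be careful about are verifying that the hypotheses of the Euler-formula sparsity bound are met (simplicity of $G$, which is consistent with the paper's standing convention that shortest paths are unique, and $|V(G)| \ge 3$, which follows from $|V(G)| \ge 10g$), and checking that the final numerical inequality $3g - 6 < 10g$ indeed holds for every $g \ge 1$.
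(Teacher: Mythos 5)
Your proof is correct and follows exactly the route the paper intends: the paper gives no detailed argument for this lemma, stating only that it "can be obtained by Euler's formula," and your Euler-formula sparsity bound combined with the degree-sum contradiction is precisely that argument. The numerical slack is ample under either the Euler-genus or orientable-genus convention (the latter gives $|V(G)|\le 6g-6<10g$), so the constant $7$ is safe in both cases.
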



\begin{lemma}\label{lem:average_degree_in_G}
Let $G$ be an $n$ vertex graph of genus $g \geq 1$. Then the average degree of vertices of $G$ is at most $6 + 12(g-1)/n$.
\end{lemma}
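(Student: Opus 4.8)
I would prove this as a routine consequence of Euler's formula for embedded graphs, via the standard bound on the number of edges of a genus-$g$ graph.

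\textbf{Approach.} First I would reduce to the case that $G$ is simple and connected. Parallel edges and loops do not arise in the applications (edge lengths are shortest-path distances), so we may assume $G$ is simple; and if $G$ is disconnected with components $G_1,\dots,G_c$ then, using the given embedding on a surface of genus $g$, one can add $c-1$ edges joining the components through suitable faces without increasing the genus, obtaining a connected simple graph $G'$ on the same vertex set with $E(G)\subseteq E(G')$. Since an upper bound on $|E(G')|$ is also an upper bound on $|E(G)|$, and the average degree of $G$ is $2|E(G)|/n$, it suffices to prove the claim for $G$ connected. If $n\le 2$ the graph has at most one edge, so the average degree is at most $1\le 6+12(g-1)/n$ because $g\ge 1$; hence we may also assume $n\ge 3$.

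\textbf{Main computation.} For a connected simple graph $G$ with $n\ge 3$ vertices embeddable on a surface of genus $g$, I would invoke the standard Euler-formula edge bound $|E(G)|\le 3n-6+6g$: taking a minimum-genus (hence cellular) embedding, Euler's formula gives $n-|E(G)|+|F|=2-2g'$ with $g'\le g$, every face boundary walk has length at least $3$ by simplicity, and summing boundary lengths over faces counts each edge at most twice, so $3|F|\le 2|E(G)|$; substituting yields $|E(G)|\le 3n-6+6g'\le 3n-6+6g$. Consequently the sum of the vertex degrees is $2|E(G)|\le 6n-12+12g$, and dividing by $n$ gives that the average degree is at most $\dfrac{6n-12+12g}{n}=6+\dfrac{12(g-1)}{n}$, as claimed. (If the paper's "genus" means Euler genus, the bound $|E(G)|\le 3n-6+3g$ is even stronger and the same conclusion follows a fortiori.)

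\textbf{Expected obstacle.} There is no real difficulty here; the only things requiring care are the bookkeeping conventions — whether "genus $g$" refers to orientable or Euler genus, the reduction from a disconnected embedded graph to a connected one without raising the genus, and the degenerate cases $n<3$ — none of which is substantial. The heart of the argument is the single application of Euler's formula.
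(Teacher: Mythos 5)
Your proof is correct and is exactly the argument the paper intends: the paper offers no written proof, merely asserting that the lemma "can be obtained by Euler's formula," and your derivation of $|E(G)|\le 3n-6+6g$ from a cellular embedding followed by dividing $2|E(G)|$ by $n$ is the standard way to obtain it (and matches the stated constant, which corresponds to orientable genus). Your attention to the simple/connected reductions and to the orientable-versus-Euler-genus convention is appropriate but not a substantive deviation.
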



For the remainder of this section, let $G$ be an $a$-apex graph with the set of apices $A = \{\mya_1, \mya_2, \ldots, \mya_a\}$ on a surface of genus $g$. Let $\Gamma = G \setminus A$, where $\Gamma$ is a graph of genus $g$. 
Let $z$ be a $2$-thick weight function on the edges of $G$.
We will find a $O(a \cdot g)$-thin (w.r.t.~$z$) spanning forest in $G$ with at most $O(a + g)$ connected components. The high level approach is similar to the case where $\Gamma$ was planar. The algorithm proceeds in $5$ phases. 

\textbf{Phase 1.}
We say that a cut $U$ is \emph{tiny} (w.r.t.~$z$) if $z(\delta(U)) < 1/(1000\cdot a \cdot g)$. We construct $\Gamma'$ the same way as in Section \ref{sec:k-apices}. The only difference here is the definition of tiny cut.

\textbf{Phase 2.} 
Similar to the planar case, for each connected component $C$ of $\Gamma'$ we find a $O(a \cdot g)$-thin forest $T_C$, with at most $g$ connected components, using Theorem \ref{tim:result_of_genus_g}

\textbf{Phase 3.}
We define $F$ and $G'$ the exact same way as in Section \ref{sec:k-apices}.

\begin{lemma}\label{lem:weight_of_edges_of_F_genus_g}
For every $\{C, C'\} \in E(F)$, we have $z(C, C') \leq 1/(1000 \cdot a \cdot g)$.
\end{lemma}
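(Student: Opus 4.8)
The statement to prove is Lemma~\ref{lem:weight_of_edges_of_F_genus_g}: for every edge $\{C,C'\}$ of the graph of components $F$, we have $z(C,C') \leq 1/(1000\cdot a\cdot g)$. This is the higher-genus analogue of Lemma~\ref{lem:weight_of_edges_of_F} (planar case) and Lemma~\ref{lem:weight_of_edges_of_F_k-apex_case} ($a$-apex case), and the proof should be essentially identical to those, with only the threshold in the definition of a tiny cut changed.

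The plan is to recall how $\Gamma'$ is constructed in Phase~1: starting from $\Gamma$, we repeatedly find a tiny cut $U$ (here, one with $z(\delta(U)) < 1/(1000\cdot a\cdot g)$) in some current component and split along it by deleting all edges of $\delta(U)$, until no component contains a tiny cut. The connected components of $\Gamma'$ are the vertices of $F$. So first I would fix an edge $\{C,C'\}\in E(F)$; by definition of $F$ this means there is at least one edge of $\Gamma$ between $C$ and $C'$, and $z(C,C')$ is the total $z$-weight of all such edges. Then I would trace back to the moment in the partitioning process at which $C$ and $C'$ became separated — i.e., the first time a tiny cut $U$ was applied inside some component $C''$ with $C\subseteq C''$ such that afterwards $C$ and $C'$ lie on opposite sides. (Equivalently, take the step that creates the later-formed of $C$, $C'$, say $C$ is formed simultaneously with or later than $C'$ by splitting some $C''\supseteq C$ along a tiny cut $U$.)

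The key observation is that every edge of $\Gamma$ from $C$ to $C'$ is deleted during the partitioning, and in fact all of them are deleted at the single step where $C$ gets separated from $C'$: since $C$ is a subset of $C''$ on one side of the tiny cut $U$ and $C'$ lies (entirely, at that moment) on the other side, every $C$–$C'$ edge crosses $U$, hence belongs to $\delta(U)$. Therefore $z(C,C') \leq z(\delta(U)) < 1/(1000\cdot a\cdot g)$, which is exactly what we want. (One small bookkeeping point: the side of $U$ containing $C'$ at that step may itself later get subdivided, but that only removes edges, so $z(C,C')$ only measures a subset of $\delta(U)$; also, if $C'\subseteq C''$ then directly $z(C,C')\le z(\delta(C))$ among the pieces — this is the same case split as in Lemma~\ref{lem:weight_of_edges_of_F}.) I do not expect any real obstacle here; the only thing to be careful about is phrasing the "which step separates $C$ from $C'$" argument cleanly so that it genuinely covers all $C$–$C'$ edges, but this is routine and identical to the earlier lemmas. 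Accordingly, the proof is just:

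\begin{proof}
The same argument as in Lemma \ref{lem:weight_of_edges_of_F} applies here. The only difference is that a cut $U$ is tiny if $z(\delta(U)) < 1/(1000\cdot a \cdot g)$. Indeed, let $\{C,C'\}\in E(F)$. Each component of $\Gamma'$ is obtained by repeatedly deleting the edges of a tiny cut inside some current component. Consider the moment at which $C$ and $C'$ first become separated: there is a component $C''$ with $C\subseteq C''$ and a tiny cut $U$ of $C''$ applied at that step such that afterwards $C$ and $C'$ lie on opposite sides. If $C'\subseteq C''$ then every $C$--$C'$ edge lies in $\delta(C)$ within $C''$, so $z(C,C')\le z(\delta(U)) < 1/(1000\cdot a\cdot g)$. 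Otherwise every $C$--$C'$ edge crosses the tiny cut $U$, so again $z(C,C') \le z(\delta(U)) < 1/(1000\cdot a\cdot g)$.
\end{proof}
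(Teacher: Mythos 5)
Your proof is correct and matches the paper's: the paper likewise just invokes the argument of Lemma~\ref{lem:weight_of_edges_of_F} with the tiny-cut threshold changed to $1/(1000\cdot a\cdot g)$, and your spelled-out case split (tracing back to the step at which $C$ and $C'$ are separated) is exactly the argument used there.
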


\ifabstract
The proof of Lemma \ref{lem:weight_of_edges_of_F_genus_g} is deferred to Section \ref{sec:app:main_theorem_genus-g}.
\fi

\iffull
\begin{proof}
The same argument as in Lemma \ref{lem:weight_of_edges_of_F} applies here. The only difference here is the definition of tiny cut.
\end{proof}
\fi

\textbf{Phase 4.} 
We construct a spanning forest $T'$ on $G'$, with at most $a + 10 g$ connected components. We follow a similar approach as in the planar case. Let $m = |V(F)| - 10 g$. If $m \leq 0$, we set $E(T') = \emptyset$ and we skip to the next phase. Otherwise,  we define two sequences of graphs $F_0, F_1, \ldots, F_m$, $G'_0, G'_1, \ldots, G'_m$, a sequence of weight functions $z_0, z_1, \ldots, z_m$, a sequence of forests $P_0, P_1, \ldots, P_m$ satisfying the inductive invariant (I1) the exact same way as in Section \ref{sec:k-apices}. For any $j \in \{0, 1, \ldots, m-1\}$, $C \in V(F_j)$ and $\mya_i \in A$, we also define the notion of \emph{$\mya_i$-heavy} and \emph{originally $\mya_i$-heavy} the same way as in Section \ref{sec:k-apices}. The only differences here is that $m = |V(F)| - 10 g$ instead of $|V(F)|$.

\begin{lemma}\label{lem:minimum_degree_is_heavy_genus_g}
Let $j \in \{0, 1, \ldots, m\}$. Let $C \in V(F_j)$ be a vertex of minimum degree. Then there exists $\mya_i \in A$ such that $C$ is $\mya_i$-heavy in $F_j$.
\end{lemma}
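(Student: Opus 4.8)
The plan is to follow the same strategy as Lemma~\ref{lem:being_a_i_heavy}, replacing the planarity bound on the minimum degree by the genus bound of Lemma~\ref{lem:minimum_degree_in_G}, and being careful that passing to induced weight functions preserves the quantities we rely on. Throughout we use that $g\geq 1$ (the case $g=0$ is exactly the setting of Section~\ref{sec:k-apices}), and we may assume $m\geq 1$ since otherwise Phase~4 is skipped and there is nothing to prove.

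First I would observe that $F_j$ has genus at most $g$: the graph $F$ is obtained from $\Gamma$ by contracting each connected component of $\Gamma'$ --- which is a connected subgraph of $\Gamma$ --- to a single vertex, hence $F$ is a minor of $\Gamma$ and $\eg(F)\leq \eg(\Gamma)=g$; and $F_j$ is obtained from $F$ by deleting $j$ vertices, so $\eg(F_j)\leq g$ as well. Next, since $j\leq m=|V(F)|-10g$ we have $|V(F_j)|=|V(F)|-j\geq 10g$, so Lemma~\ref{lem:minimum_degree_in_G} applies to $F_j$ and yields a vertex of degree at most $7$; as $C$ has minimum degree in $F_j$, we conclude $\deg_{F_j}(C)\leq 7$.

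The heart of the argument is to show that $C$ sends a lot of $z_j$-weight to the apices. Here I would use that contracting an edge and passing to the induced weight function leaves the weight of every cut that does not split the contracted pair unchanged; applying this inductively along the whole sequence of contractions that produces $G'$ and then $G'_0,\dots,G'_j$, the total weight of edges of $G'_j$ incident to the vertex $C$ equals $z(\delta_G(C))$, the weight of the cut in the original graph separating the vertex set of the component $C$ from the rest. Since $A\neq\emptyset$, this set is a nonempty proper subset of $V(G)$, so $2$-thickness of $z$ gives $z_j(\delta(C))=z(\delta_G(C))\geq 2$. Now the edges of $G'_j$ incident to $C$ are of two kinds: edges to the apices, contributing $z_j(C,A)$ in total, and edges to the other vertices of $F_j$; each edge of the latter kind is an edge $\{C,C'\}$ of $F$, and by Lemma~\ref{lem:weight_of_edges_of_F_genus_g} it carries weight at most $1/(1000ag)$. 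Using $\deg_{F_j}(C)\leq 7$, we get
\[
z_j(C,A)\;\geq\; z_j(\delta(C)) - \frac{7}{1000ag}\;\geq\; 2-\frac{7}{1000ag}\;\geq\;1,
\]
where the last inequality uses $a,g\geq 1$. Averaging over the $a$ apices in $A$, there is some $\mya_i\in A$ with $z_j(C,\{\mya_i\})\geq 1/a$, i.e.\ $C$ is $\mya_i$-heavy in $F_j$.

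The step I expect to require the most care is the cut-preservation bookkeeping in the last paragraph: one must check that the induced weight function keeps $z(\delta_G(C))$ invariant through the contractions of the other components (which merely redirect $C$-to-component edges into $C$-to-apex edges, preserving their total weight), and simultaneously that such contractions elsewhere do not inflate the weight of any surviving edge of $F_j$ incident to $C$ beyond the bound of Lemma~\ref{lem:weight_of_edges_of_F_genus_g}. Both are immediate from the definition of ``induced by $z$'', but they should be spelled out rather than taken for granted.
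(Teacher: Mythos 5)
Your proof is correct and follows essentially the same route as the paper's: bound $\deg_{F_j}(C)\leq 7$ via Lemma~\ref{lem:minimum_degree_in_G} (using $|V(F_j)|\geq 10g$), subtract at most $7/(1000\cdot a\cdot g)$ of weight on component-to-component edges via Lemma~\ref{lem:weight_of_edges_of_F_genus_g} from the $2$-thickness bound to get $z_j(C,A)\geq 1$, and average over the $a$ apices. The additional bookkeeping you supply (that $F_j$ is a minor of $\Gamma$ so the genus bound applies, and that the induced weight functions preserve $z(\delta_G(C))$) is correct and merely makes explicit what the paper leaves implicit.
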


\begin{proof}
By the construction, $|V(F_j)| \geq 10  g$. Therefore by Lemma \ref{lem:minimum_degree_in_G}, we have $\deg_{F_j} (C) \leq 7$. Therefore by Lemma \ref{lem:weight_of_edges_of_F_genus_g}, we get $z_j(C, A) \geq 2 - 7/(1000 \cdot a \cdot g) \geq 1$. This implies that there exists $\mya_i \in A$ such that $z_j(C, \{\mya_i\}) \geq 1/a$.
\end{proof}

\begin{lemma}\label{lem:degree_in_P_j_genus_g}
For any $j \in \{0, \ldots, m\}$ and for any $v \in \Gamma \cap P_j$, we have $\deg_{P_j}(v) \leq 8$.
\end{lemma}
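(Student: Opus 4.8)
The plan is to follow the proof of Lemma~\ref{lem:degree_in_P_j} almost verbatim, substituting the genus-$g$ degree bound of Lemma~\ref{lem:minimum_degree_in_G} for the planarity bound used there. Recall that the forest $P_j$ is only ever grown during the contraction steps $i\in\{0,\dots,m-1\}$: at step $i$ we select a minimum-degree vertex $v_i$ of $F_i$ and contract it to some apex, and the only edges then added to $P$ are (a) the edge $\{v_i,\mya_{j'}\}$ to its root, in case $v_i$ is originally $\mya_{j'}$-heavy, and (b) edges $\{v_i,u\}$ to some of the $F_i$-neighbours $u$ of $v_i$, making $v_i$ the parent of $u$. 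Consequently I will bound $\deg_{P_j}(v)$, for $v\in V(\Gamma)\cap V(P_j)$, by separately bounding the number of edges $v$ receives as a \emph{child} and the number it receives as a \emph{parent}.

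First, $v$ receives at most one child-edge: $v$ joins $P$ exactly once, either at the step it is contracted (if $v$ is originally heavy, with an apex as its parent) or earlier as a child of the vertex contracted at that step, and these cases are exclusive, since a vertex is added as a child of $v_i$ only if it is not $\mya_r$-heavy in $F_i$ for any $r$, whereas contracting edges into an apex can only increase the weights incident to that apex, so an originally heavy vertex remains heavy in every $F_i$ and is never added as a child. (That $v$ does lie in $V(P_i)$ whenever it is contracted without being originally heavy is exactly the inductive invariant (I1).) Second, $v$ receives parent-edges only at the step $i$ at which $v=v_i$ is contracted, and then at most $\deg_{F_i}(v_i)$ of them; if $v$ is never contracted it receives none. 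The crux is that $\deg_{F_i}(v_i)\le 7$ for every relevant $i$: because $m=|V(F)|-10g$ we have $|V(F_i)|=|V(F)|-i\ge|V(F)|-m=10g$ for all $i\in\{0,\dots,m-1\}$, and $F_i$ is a vertex-deleted subgraph of $F$, which is a minor of $\Gamma$ and therefore has genus at most $g$; hence Lemma~\ref{lem:minimum_degree_in_G} guarantees that $F_i$ has a vertex of degree at most $7$, so its minimum-degree vertex $v_i$ satisfies $\deg_{F_i}(v_i)\le 7$.

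Putting the two bounds together yields $\deg_{P_j}(v)\le 1+7=8$, as claimed. This is mostly bookkeeping; the points that need care are the monotonicity-of-heaviness claim used above to rule out a vertex having two parents, and the inequality $|V(F_i)|\ge 10g$ that makes Lemma~\ref{lem:minimum_degree_in_G} applicable — the latter being precisely the reason the construction sets $m=|V(F)|-10g$ (and skips Phase~4 altogether when $m\le 0$, in which case the statement is vacuous).
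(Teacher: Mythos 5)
Your proof is correct and follows essentially the same route as the paper's: the paper's argument is exactly that the minimum degree in each $F_i$ is at most $7$ (rather than $5$ as in the planar case), so a vertex acquires at most $7$ edges as a parent and at most one edge to its own parent. You supply some details the paper leaves implicit (the monotonicity of heaviness ruling out a second incoming edge, and the inequality $|V(F_i)|\ge 10g$ justifying the appeal to Lemma~\ref{lem:minimum_degree_in_G}), but the substance is identical.
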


\begin{proof}
A similar argument as in the planar case applies here. The only difference here is that the minimum degree is at most $7$. Therefore, every vertex can be the parent of at most $7$ other vertices and can have at most one parent.
\end{proof}

\begin{lemma}\label{lem:Originally_a-i-heavy_or_parent_genus_g}
The inductive invariant (I1) is maintained.
\end{lemma}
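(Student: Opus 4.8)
The plan is to mimic the proof of Lemma~\ref{lem:Originally_a-i-heavy_or_parent} in the bounded-apex planar setting, with the only substantive change being the bound on the minimum degree. Fix $j\in\{0,\ldots,m-1\}$ and let $v\in V(F_j)$ be a vertex of minimum degree. If $v$ is originally $\mya_i$-heavy for some $\mya_i\in A$, the invariant is satisfied, so assume $v$ is not originally $\mya_i$-heavy for any $\mya_i\in A$. First I would apply Lemma~\ref{lem:minimum_degree_is_heavy_genus_g}: since $|V(F_j)|\geq 10g$ holds for all $j\leq m$ (by the choice $m=|V(F)|-10g$), that lemma guarantees that $v$ is $\mya_l$-heavy in $F_j$ for some $\mya_l\in A$. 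So $v$ is not $\mya_t$-heavy in $F_0=F$ for any $t$, but it is $\mya_l$-heavy in $F_j$; hence there is a well-defined minimum index $j^*\in\{1,\ldots,j\}$ such that $v$ is not $\mya_t$-heavy in $F_{j^*-1}$ for all $\mya_t\in A$ but $v$ is $\mya_{t'}$-heavy in $F_{j^*}$ for some $\mya_{t'}\in A$.

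Next I would identify the vertex $u\in V(F_{j^*-1})$ that is contracted to an apex in step $j^*$, and argue that $v$ became a neighbor's child at that step. Since the weight $z_{j^*}$ is obtained from $z_{j^*-1}$ by the induced contraction of $u$ into $\mya_{t'}$, and since $v$'s weight to $\mya_{t'}$ strictly increased past the threshold $1/a$ precisely at this step, $v$ must be a neighbor of $u$ in $F_{j^*-1}$ (only edges incident to $u$ have their endpoint-weights merged into $\mya_{t'}$). Moreover $v\notin V(P_{j^*-1})$: if $v$ were already in $P_{j^*-1}$, then by the construction $v$ would have been added to $P$ at some earlier step $j'<j^*$ via an edge making $v$ $\mya_r$-heavy in $F_{j'}$ for the relevant apex $\mya_r$, contradicting the minimality of $j^*$ (the relevant heaviness would already hold before step $j^*$). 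Thus at step $j^*$ both conditions (i) and (ii) in the construction of $P_{j^*}$ hold for $v$ with respect to $u$: condition (i) because $v$ is not $\mya_r$-heavy in $F_{j^*-1}$ for any $\mya_r$, and condition (ii) because $v$ is $\mya_{t'}$-heavy in $F_{j^*}$. Hence $u$ becomes the parent of $v$ and $v\in V(P_{j^*})$. Since the forests $P_0\subseteq P_1\subseteq\cdots$ only grow and $j\geq j^*$, we get $v\in V(P_j)$, which is exactly what the invariant requires.

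The main obstacle I anticipate is being careful about the bookkeeping: one must verify that the degree hypothesis $|V(F_j)|\geq 10g$ used by Lemma~\ref{lem:minimum_degree_is_heavy_genus_g} is exactly what the choice of $m$ guarantees throughout the induction (this is where the higher-genus argument differs from the planar one, via Lemma~\ref{lem:minimum_degree_in_G} giving minimum degree at most $7$ rather than at most $5$), and that the "first time $v$ becomes heavy" argument correctly rules out the possibility that $v$ was already placed in $P$ earlier — which is the crux, since the construction adds a vertex to $P$ only when its heaviness to the relevant apex crosses the threshold. Everything else is a direct transcription of the earlier proof with $5$ replaced by $7$ and the tiny-cut threshold adjusted, and the inequalities are routine.
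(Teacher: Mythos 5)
Your proof is correct and follows exactly the route the paper takes: the paper's proof of this lemma simply invokes the argument of Lemma~\ref{lem:Originally_a-i-heavy_or_parent} (define the minimal step $j^*$ at which $v$ first becomes $\mya_{t'}$-heavy, observe that only the contraction of that step's vertex $u$ into $\mya_{t'}$ can cause this, and conclude $u$ becomes the parent of $v$), with Lemma~\ref{lem:minimum_degree_is_heavy_genus_g} supplying the heaviness guarantee in place of Lemma~\ref{lem:being_a_i_heavy}. You in fact spell out the details (why $v$ must neighbor $u$, why $v\notin V(P_{j^*-1})$, why the degree bound $|V(F_j)|\geq 10g$ holds) more carefully than the paper does.
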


\begin{proof}
The exact same argument as in Section \ref{sec:k-apices} applies here.
\end{proof}

Now we construct a forest $T'$ on $G'$ the same way as in Section \ref{sec:k-apices}.

\begin{lemma}\label{T'_has_k+10g_components}
$T'$ has at most $a + 10 g$ connected components.
\end{lemma}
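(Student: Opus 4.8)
The statement to prove is that $T'$ has at most $a + 10g$ connected components. The key observation is that $T'$ is constructed on $G'$ by a contraction process that runs for exactly $m = |V(F)| - 10g$ steps (when $m > 0$; otherwise $E(T') = \emptyset$ and $T'$ trivially has at most $|V(G')| = $ a small number of components, which I will handle separately). I would first recall that $V(G')$ consists of the $a$ apices $\mya_1,\ldots,\mya_a$ together with one vertex per connected component of $\Gamma'$, i.e.\ one vertex per element of $V(F)$; so $|V(G')| = a + |V(F)|$. Each step of Phase~4 contracts one vertex of $F_i$ into some apex and adds exactly one edge to $T'$; hence after all $m$ steps, $T'$ has exactly $m$ edges, and the vertex set of $G'$ has been reduced: the $m$ contracted component-vertices have been merged into apices, leaving $a + (|V(F)| - m) = a + 10g$ vertices in the final contracted graph.

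\textbf{Key steps, in order.} First I would establish that every edge added to $T'$ is a genuine edge of $G'$ — this is where Lemma~\ref{lem:minimum_degree_is_heavy_genus_g} is used: when $C$ is contracted because it is originally $\mya_i$-heavy, an edge between $C$ and $\mya_i$ exists in $G$ since $z(C,\{\mya_i\}) \geq 1/a > 0$ forces at least one such edge; when $C$ is contracted via its parent $C'$ in $P_l$, the edge $\{C,C'\}$ of $F$ guarantees an edge of $G$ between the two components. So $T' \subseteq G'$ is well-defined. Second, I would argue $T'$ is acyclic: each vertex $C \in V(F)$ contributes at most one edge to $T'$, namely the edge added at the step when $C$ itself is contracted, and this edge connects $C$ to a vertex (an apex, or a component $C'$) that was either an apex from the start or contracted at an earlier step — so orienting each edge from the later-contracted endpoint toward the earlier-contracted (or apex) endpoint gives each non-apex vertex out-degree exactly one and produces no cycle. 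Thus $T'$ is a forest on $G'$ with $|E(T')| = m$ edges. Third, since $|V(G')| = a + |V(F)|$ and a forest on $N$ vertices with $E$ edges has exactly $N - E$ components, $T'$ has $a + |V(F)| - m = a + 10g$ components. Finally I would dispose of the degenerate case $m \leq 0$: then $|V(F)| \leq 10g$, so $|V(G')| = a + |V(F)| \leq a + 10g$, and $T'$ (with no edges) has exactly $|V(G')| \leq a + 10g$ components.

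\textbf{Main obstacle.} The routine bookkeeping — counting edges and vertices — is easy; the one point that needs genuine care is verifying that $T'$ is \emph{acyclic}, i.e.\ that the contraction process never creates a cycle when we add the single new edge at each step. The subtlety is that after contractions, $G'_i$ may contain parallel edges and the "parent" relation in $P_i$ is defined by a somewhat delicate condition involving $\mya_i$-heaviness in $F_i$ versus $F_{i+1}$; I would need to check that following parent pointers (or apex-contraction assignments) never loops back, which follows from the fact that a vertex's parent is always contracted strictly earlier than the vertex itself, so the "step at which contracted" function strictly decreases along parent chains and terminates at an apex. Once that monotonicity is pinned down, the component count is immediate from the forest formula.
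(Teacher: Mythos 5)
Your proof is correct and follows essentially the same route as the paper's: the paper also argues that after $m = |V(F)| - 10g$ contraction steps the $a$ apices each root one tree while the $10g$ vertices of $F_m$ remain untouched, giving at most $a+10g$ components. Your version merely makes explicit the bookkeeping (one edge per step, acyclicity via the contraction order, and the forest formula $N-E$) that the paper compresses into ``by the construction.''
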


\begin{proof}
If $|V(F)| \leq 10 g$, then we are done. Otherwise, we have $|F_m| = 10 g$. Now since $|A| = a$, by the construction, we have that the number of connected components of $T'$ is at most $a + 10 g$.
\end{proof}

\textbf{Phase 5.}
We construct a forest $T$ in $G$ the exact same way as in Section \ref{sec:k-apices}, by setting $E(T)=E(T') \cup \bigcup_{C\in V(F)} E(T_C)$.

This completes the description of the algorithm.
\ifabstract
The main result of this section can now be stated as follows.
The proof of Theorem \ref{thm:main_theorem_genus-g} is deferred to Section \ref{sec:app:main_theorem_genus-g}.

\begin{theorem}\label{thm:main_theorem_genus-g}
Let $a, g \geq 1$.
Let $G$ be a graph and $A\subseteq V(G)$, with $|A|=a$, such that $H = G \setminus A$ is a graph of genus $g$.
Let $z:E(G)\to \mathbb{R}_{\geq 0}$ be $\beta$-thick for some $\beta>0$.
Then there exists a polynomial time algorithm which given $G$, $A$, an embedding of $H$ on a surface of genus $g$, and $z$ outputs a $O((a \cdot g) / \beta)$-thin spanning forest in $G$ (w.r.t.~$z$) with at most $O(a + g)$ connected components.
\end{theorem}
\fi

\fi

\ifappendix

\iffull
\subsection{Analysis}
\fi

\ifabstract
\section{Analysis of the algorithm for constructing a thin forest in a $O(1)$-genus graph with $O(1)$ apices}
\label{sec:app:main_theorem_genus-g}

\begin{proof}[Proof of Lemma \ref{lem:weight_of_edges_of_F_genus_g}]
The same argument as in Lemma \ref{lem:weight_of_edges_of_F} applies here. The only difference here is the definition of tiny cut.
\end{proof}

For the remainder of this Section let $T$ be the forest constructed by the algorithm in Section \ref{sec:thin_forests_higher_genus}.
\fi

\begin{lemma}\label{lem:T_is_spanning_forest_genus_g}
$T$ is a spanning forest in $G$, with at most $O(a + g)$ connected components.
\end{lemma}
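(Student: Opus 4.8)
The plan is to verify three properties of $T=(V(G),\,E(T')\cup\bigcup_{C\in V(F)}E(T_C))$: that it is acyclic, that it spans $V(G)$, and that it has $O(a+g)$ connected components. The ingredients are already in place: each $T_C$ is a spanning forest of the component $C$ of $\Gamma'$ with at most $1+\eg(C)$ components (Phase~2, via Theorem~\ref{tim:result_of_genus_g}, using a spanning tree when $C$ is planar), $T'$ is a forest on $G'$ with at most $a+10g$ components by Lemma~\ref{T'_has_k+10g_components}, and $G'$ is obtained from $G$ by contracting each component of $\Gamma'$.

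For acyclicity I would contract every component of $\Gamma'$ back to a single vertex; call this map $\phi$. Under $\phi$ the edges of the various $T_C$ become loops (discarded) while every other edge of $T$ remains an edge of the forest $T'$, so $\phi$ sends $T$ to a subgraph of $T'$. If $Z$ were a cycle in $T$, then $\phi(Z)$ would be a closed walk in $T'$ traversing each of its (non-contracted) edges exactly once, i.e.\ a closed trail; a closed trail of positive length in a forest contains a cycle, which is impossible, so $Z$ uses no edges between distinct components of $\Gamma'$ and hence lies inside a single $T_C$, contradicting that $T_C$ is a forest. Spanning is then immediate: the components $C$ partition $V(\Gamma)$ and each $T_C$ spans its $C$, so $V(\Gamma)\subseteq V(T)$, and the remaining $a$ apices lie in $V(T)=V(G)$ by definition (possibly as isolated vertices, which affects the component count by only an additive $a$).

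The delicate part, and the step I expect to be the real obstacle, is the component bound: one cannot simply add up $\sum_C\operatorname{comp}(T_C)$, which may be as large as $|V(F)|$, so the $T'$-edges must be used to absorb most of these components. Instead I would rebuild $T$ from $T'$ by ``blowing up'' the super-vertices of $G'$ one at a time, showing that replacing a super-vertex $C$ (together with its incident $T'$-edges) by $V(C)$ and $T_C$ raises the number of components by at most $\operatorname{comp}(T_C)-1$: deleting $C$ splits its component into pieces, each of which was joined to $C$ by at least one $T'$-edge and hence, after the blow-up, attaches to some component of $T_C$, so the blown-up region contributes at most $\operatorname{comp}(T_C)$ components. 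Summing over $C\in V(F)$ gives
\[
\operatorname{comp}(T)\ \le\ \operatorname{comp}(T')+\sum_{C\in V(F)}\bigl(\operatorname{comp}(T_C)-1\bigr).
\]
Finally, since the $C$ are exactly the connected components of the subgraph $\Gamma'\subseteq\Gamma$, additivity of Euler genus over connected components together with its monotonicity under taking subgraphs gives $\sum_C\eg(C)=\eg(\Gamma')\le\eg(\Gamma)=g$, so $\sum_C\operatorname{comp}(T_C)\le\sum_C(1+\eg(C))\le|V(F)|+g$ and thus $\sum_C(\operatorname{comp}(T_C)-1)\le g$; combined with $\operatorname{comp}(T')\le a+10g$ this yields $\operatorname{comp}(T)\le a+11g=O(a+g)$, completing the proof.
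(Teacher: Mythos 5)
Your proof is correct and follows essentially the same route as the paper's (much terser) argument: combine the bound of at most $a+10g$ components for $T'$ with the per-component counts of the $T_C$'s and use $\sum_{C}\eg(C)\le g$ to conclude $\operatorname{comp}(T)\le a+11g$. Your write-up is in fact more careful than the paper's, which omits the acyclicity and spanning checks and the blow-up accounting entirely.
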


\begin{proof}
For any $C \in V(F)$, let $g_C$ be the genus of $\Gamma[C]$. By Theorem \ref{tim:result_of_genus_g} we know that the number of connected components in $T_C$ is at most $g_C$. Therefore, by Lemma \ref{T'_has_k+10g_components} we have that the number of connected components in $T$ is at most $a+10g+\sum_{C\in V(F)} g_C \leq a+11g$.
\end{proof}

For the thinness of $T$, we follow a similar approach as in the planar case. There are two main differences here: First for any $j \in \{0, 1, \ldots, m\}$, by Lemma \ref{lem:average_degree_in_G} we have that the average degree of $F_j$ is at most $20 g$. Second a cut $U$ is tiny if $z(\delta(U)) < 1 / (1000 \cdot a \cdot g)$.

Let $U$ be a cut in $G$. Similar to the planar case, we partition $E(T) \cap \delta(U)$ into three subsets:

\begin{description}
\item{(1)}
$T_1 = \{\{\mya_i,v\} \in E(T)\cap \delta(U) : \mya_i \in A, v\in V(\Gamma)\}$.

\item{(2)}
$T_2 = \{\{u,v\}\in E(T)\cap \delta(U) : \text{$u$ and $v$ are in the same component of $\Gamma'$}\}$.

\item{(3)}
$T_3 = \{\{u,v\}\in E(T)\cap \delta(U) : \text{$u$ and $v$ are in different components of $\Gamma'$}\}$.
\end{description}

Also similar to the planar case, we partition $T_3$ into three subsets:

\begin{description}
\item{(1)}
$T_{31} = \{\{u,v\} \in T_3 : u \in U, v \notin U, \exists C_u, C_v \in V(F) \text{ such that } u \in C_u, v \in C_v, U \cap C_v \neq \emptyset \}$.

\item{(2)}
$T_{32} = \{\{u,v\} \in T_3 : u \in U, v \notin U, \exists C_u, C_v \in V(F) \text{ such that } u \in C_u, v \in C_v, U \cap C_u \neq \emptyset \}$.

\item{(3)}
$T_{33} = \{\{u,v\} \in T_3 : u \in U, v \notin U, \exists C_u, C_v \in V(F) \text{ such that } u \in C_u, v \in C_v, U \cap C_v = \emptyset, C_u \subseteq U \}$.
\end{description}

\begin{lemma}\label{lem:first_4_are_the_same}
For any index $i \in \{1, 2, 31, 32\}$, there exists a constant $\alpha_i$ such that $|T_i| \leq \alpha_i \cdot a \cdot g \cdot z(\delta(U)) $.
\end{lemma}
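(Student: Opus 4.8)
The plan is to recycle, essentially verbatim, the charging argument from the $a$-apex analysis (Lemmas~\ref{lem:T_1_is_thin_k-apex}, \ref{lem:T_2_is_thin_k-apex}, and the $T_{31}$, $T_{32}$ portions of Lemma~\ref{lem:T_3_is_thin_k-apex}), while tracking how the parameter changes of Section~\ref{sec:thin_forests_higher_genus} propagate. The relevant changes are: the \emph{tiny} threshold is now $1/(1000\cdot a\cdot g)$; an originally $\mya_i$-heavy component $C$ still satisfies $z(C,\{\mya_i\})\geq 1/a$; and by Lemma~\ref{lem:degree_in_P_j_genus_g} a component of $\Gamma'$ is incident to at most $8$ edges of $T'$, hence to at most $8$ edges lying in $T_3$. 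In every case I will charge each edge of $T_i$ to a subset of $\delta(U)$ of $z$-weight $\Omega(1/(ag))$, arranging that each such subset receives $O(1)$ charges and that the subsets used are pairwise edge-disjoint; this gives $|T_i|\leq O(ag)\cdot z(\delta(U))$. The one point needing marginally more care than in the $1$-apex case is that with several apices we cannot place all of them outside $U$, so a couple of sub-cases are stated for both sides of the cut.

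For $T_1$, an edge $e=\{\mya_i,v\}$ is in $T$ only because the component $C_v$ of $\Gamma'$ containing $v$ is originally $\mya_i$-heavy, so $z(C_v,\{\mya_i\})\geq 1/a$. Since $e\in\delta(U)$, exactly one of $\mya_i,v$ lies in $U$. If $C_v$ lies entirely on $v$'s side of the cut, then $\mya_i$ is on the other side, every edge counted by $z(C_v,\{\mya_i\})$ crosses $U$, and I charge $e$ to $z(C_v,\{\mya_i\})$. Otherwise $U$ splits $C_v$ nontrivially; since $C_v$ contains no tiny cut, $z(\delta(U)\cap E(G[C_v]))\geq 1/(1000ag)$, and I charge $e$ there. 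For each pair $(C,\mya_i)$ there is at most one edge of $T_1$ of the first type, and the sets $z(C,\{\mya_i\})$ as well as the sets $\delta(U)\cap E(G[C])$ are pairwise edge-disjoint subsets of $\delta(U)$, so $|T_1|\leq O(ag)\cdot z(\delta(U))$.

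For $T_2$, each tree $T_C$ produced in Phase~2 is $O(a\cdot g)$-thin with respect to $z$ restricted to $G[C]$ (Theorem~\ref{tim:result_of_genus_g}), and since $T_2=\bigcup_C(E(T_C)\cap\delta(U))$ with the $E(G[C])$ pairwise disjoint, we get $|T_2|=\sum_C|E(T_C)\cap\delta(U)|\leq\sum_C O(ag)\cdot z(\delta(U)\cap E(G[C]))\leq O(ag)\cdot z(\delta(U))$. For $T_{31}$ (and symmetrically $T_{32}$), an edge $\{u,v\}$ with $u\in U$, $v\notin U$, $v\in C_v$, and $U$ splitting $C_v$ nontrivially yields $z(\delta(U)\cap E(G[C_v]))\geq 1/(1000ag)$; since at most $8$ edges of $T_3$ touch $C_v$, I charge $e$ and those at most $7$ companions to this within-component cut weight, obtaining $|T_{31}|,|T_{32}|\leq O(ag)\cdot z(\delta(U))$. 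Letting $\alpha_i$ be the maximum of the four implied constants completes the argument.

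The hard part will not be here: this lemma deliberately collects exactly the cases in which one heavy apex-attachment, or one nontrivial within-component cut, already supplies $\Omega(1/(ag))$ of chargeable weight, so only careful bookkeeping of charging multiplicities is needed. The genuinely delicate case, $T_{33}$, requires running the higher-genus analogue of the $F[U_2]$ averaging argument (using that the graphs $F_j$ have average degree $O(g)$ by Lemma~\ref{lem:average_degree_in_G}), and is handled separately.
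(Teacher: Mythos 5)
Your proposal is correct and follows essentially the same route as the paper, which simply invokes the $a$-apex lemmas with the two noted changes (the $1/(1000\cdot a\cdot g)$ tiny-cut threshold and the degree-$8$ bound from Lemma~\ref{lem:degree_in_P_j_genus_g}); your explicit case split for which side of the cut contains the apex is a harmless elaboration the paper leaves implicit. The charging multiplicities and constants you obtain match the paper's.
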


\begin{proof}
A similar argument as in Lemmas \ref{lem:T_1_is_thin_k-apex}, \ref{lem:T_2_is_thin_k-apex} and \ref{lem:T_3_is_thin_k-apex} applies here. There are two differences here. First the definition of a tiny cut is different. Second, for each $C \in V(F)$, $C$ can be the parent of at most seven vertices and can have at most one parent. Therefore we get $|T_1| \leq 1000 \cdot a \cdot g \cdot z(\delta(U))$, $|T_2| \leq 1000 \cdot a \cdot g \cdot z(\delta(U))$, $|T_{31}| \leq 8000 \cdot a \cdot g \cdot z(\delta(U))$ and $|T_{32}| \leq 8000 \cdot a \cdot g \cdot z(\delta(U))$.
\end{proof}

\begin{lemma}\label{lem:T_33_is_thin_genus-g}
There exists a constant $\alpha_{33}$ such that $|T_{33}| \leq \alpha_{33} \cdot a \cdot g \cdot z(\delta(U))$.
\end{lemma}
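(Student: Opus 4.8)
The plan is to mirror the treatment of $T_{33}$ from the $O(1)$-apex case (Lemma \ref{lem:T_3_is_thin_k-apex}, Case analysis) but with the degree thresholds rescaled to account for the fact that $F_j$ now has average degree at most $20g$ rather than $6$, and with the tiny-cut threshold $1/(1000\cdot a\cdot g)$. First I would set up the same auxiliary objects: form $U_1$ and $U_2$ by, for every $C\in V(F)$ with $\emptyset\neq C\cap U\neq C$, respectively adding all of $C$ to $U$ or deleting all of $C$ from $U$, calling such $C$ being $U$-\emph{important}; put $U'_1=\{C\in V(F):C\subseteq U_1\}$ and $U'_2=\{C\in V(F):C\subseteq U_2\}$. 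For an edge $e=\{u,v\}\in T_{33}$ with $u\in C_u$, $v\in C_v$, both $C_u,C_v$ get contracted to a common apex $\mya_i$ (this is exactly what the definition of $T_{33}$ together with $T'$ guarantees); let $j$ be the step at which $C_u$ is contracted, let $B$ be the set of components contracted to $\mya_i$, and let $D$ be the connected component of $F[B]$ containing $C_u$, with $D^{\myin}_{U'_1},D^{\myout}_{U'_1},D^{\myin}_{U'_2},D^{\myout}_{U'_2}$ defined as the induced subgraphs on the in/out sides of $U'_1,U'_2$.

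Then I would split into the two cases $\mya_i\notin U$ and $\mya_i\in U$, symmetric to the planar argument. In Case~1, since $C_u\in U$ and $C_u$ is $\mya_i$-heavy in $F_j$ we have $z_j(C_u,\{\mya_i\})\geq 1/a$; if already $z_0(C_u,\{\mya_i\})\geq 1/(1000\cdot a\cdot g)$ we charge $e$ (and the at most $7$ other $T_{33}$-edges incident to $C_u$, using Lemma \ref{lem:degree_in_P_j_genus_g}) directly to that. Otherwise the bulk of the weight $z_0(C_u,V(D)\setminus C_u)\geq (1-1/(1000g))/a$ lives inside $D$; if enough of it crosses into $D^{\myout}_{U'_1}$ then it crosses $\delta(U)$ and we charge there, and otherwise $\deg_{D^{\myin}_{U'_1}}(C_u)$ is large — at least of order $g$ — so $C_u$ sits among the high-degree vertices of the genus-$g$ graph $D^{\myin}_{U'_1}$. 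Here I invoke Lemma \ref{lem:average_degree_in_G}: if $V_1$ is the set of vertices of $D^{\myin}_{U'_1}$ of degree at least (roughly) $40g$ and $V_2$ the set of degree at most $7$, then $|V_1|\leq |V_2|$ (after discarding the $O(g)$ slack, which only costs another additive $O(g)$ in the component count, already absorbed). Each $C\in V_2$ that is $U$-important contributes $z(C\cap U,G\setminus U)\geq 1/(1000\cdot a\cdot g)$ across $\delta(U)$, and each non-$U$-important $C\in V_2$ has $z(C,G\setminus U)$ bounded below by a constant over $a$ since only $O(g)$ of its weight-$\geq 2$ boundary can go to other components of $F$; pairing $V_1$ with $V_2$ and noting each vertex of $V_1$ is incident to at most $8$ edges of $T_{33}$ gives the bound. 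Case~2 ($\mya_i\in U$) is identical after swapping the roles of $U'_1\leftrightarrow U'_2$, $D^{\myin}\leftrightarrow D^{\myout}$, and using $C_v\notin U$ in place of $C_u\in U$. Summing the charges over the cases yields $|T_{33}|\leq \alpha_{33}\cdot a\cdot g\cdot z(\delta(U))$ for an absolute constant $\alpha_{33}$ (one can take something like $10000$).

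The main obstacle is the bookkeeping in the high-degree subcase: one must be careful that the threshold chosen for $V_1$ is simultaneously (i)~large enough that $|V_1|\leq|V_2|$ follows from the average-degree bound $6+12(g-1)/n$ on a genus-$g$ graph, and (ii)~not so large that $C_u$ — whose $D^{\myin}_{U'_1}$-degree we only know is $\gtrsim g$ — fails to land in $V_1$. Since the constants $1000, 20g, 8$ were all fixed upstream precisely to leave this margin (the weight lost to inter-component edges of $F$ is at most $(\text{degree})\cdot 1/(1000\cdot a\cdot g)\leq 7/(1000\cdot a\cdot g)$ at the moment of contraction, and at most $(20g)\cdot 1/(1000\cdot a\cdot g) = 1/(50a)$ in the worst case over $D^{\myin}_{U'_1}$), the argument goes through; but verifying that the non-$U$-important vertices of $V_2$ still retain weight bounded below by a constant times $1/a$ to the ``outside'' set $Y=X\cup\{\mya_i\}$, after subtracting all the $O(g)$ intra-$F$ leakage, is the one place where the genus factor genuinely enters and must be tracked explicitly.

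\begin{proof}
We partition $T_{33}$ and charge its edges exactly as in the planar case (Lemma~\ref{lem:T_3_is_thin_k-apex}), with the degree thresholds multiplied by $\Theta(g)$ and the tiny-cut threshold taken to be $1/(1000\cdot a\cdot g)$.

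Form the cuts $U_1$ and $U_2$ as follows: for every $C\in V(F)$ with $\emptyset\neq C\cap U\neq C$ we add all the remaining vertices of $C$ to $U$ (to get $U_1$) or delete all the remaining vertices of $C$ from $U$ (to get $U_2$), and we call such a $C$ being $U$-\emph{important}. Set $U'_1=\{C\in V(F):C\subseteq U_1\}$ and $U'_2=\{C\in V(F):C\subseteq U_2\}$.

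Fix $e=\{u,v\}\in T_{33}$ with $u\in U$, $v\notin U$, $u\in C_u$, $v\in C_v$ for some $C_u,C_v\in V(F)$. By the construction of $T'$, both $C_u$ and $C_v$ are contracted to a common apex $\mya_i\in A$. Let $j$ be the step during which $C_u$ is contracted to $\mya_i$, let $B=\{C\in V(F):C\text{ is contracted to }\mya_i\}$, let $D$ be the connected component of $F[B]$ containing $C_u$, and set $D_{U'_1}^{\myin}=D[U'_1]$, $D_{U'_1}^{\myout}=D[V(D)\setminus U'_1]$, $D_{U'_2}^{\myin}=D[U'_2]$, $D_{U'_2}^{\myout}=D[V(D)\setminus U'_2]$. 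By Lemma~\ref{lem:degree_in_P_j_genus_g} there are at most $8$ edges of $T_{33}$ incident to any fixed component of $F$.

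\textbf{Case 1: $\mya_i\notin U$.} Since $C_u\in U$ and $C_u$ is $\mya_i$-heavy in $F_j$, we have $z_j(C_u,\{\mya_i\})\geq 1/a$. If $z_0(C_u,\{\mya_i\})\geq 1/(1000\cdot a\cdot g)$ we charge $e$, together with the at most $7$ other edges of $T_{33}$ incident to $C_u$, to $z_0(C_u,\{\mya_i\})\le z(\delta(U))$. Otherwise $z_0(C_u,V(D)\setminus C_u)\geq 999/(1000\cdot a\cdot g)$. If $z_0(C_u,V(D_{U'_1}^{\myout})\setminus C_u)\geq 1/(1000\cdot a\cdot g)$, then by construction of $U'_1$ these edges lie in $\delta(U)$, and we charge $e$ and the $\leq 7$ other $T_{33}$-edges at $C_u$ to them. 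Otherwise $z_0(C_u,V(D_{U'_1}^{\myin})\setminus C_u)\geq 998/(1000\cdot a\cdot g)$, so $\deg_{D_{U'_1}^{\myin}}(C_u)\geq 998$. Let $V_1=\{C\in V(D_{U'_1}^{\myin}):\deg_{D_{U'_1}^{\myin}}(C)\geq 40g+8\}$ and $V_2=\{C\in V(D_{U'_1}^{\myin}):\deg_{D_{U'_1}^{\myin}}(C)\leq 7\}$; since $D_{U'_1}^{\myin}$ has genus at most $g$, Lemma~\ref{lem:average_degree_in_G} gives $|V_1|\leq|V_2|$, and $C_u\in V_1$. For each $C\in V_2$: if $C$ is $U$-important then $C\cap U$ is a non-tiny cut of $C$, so $z(C\cap U,G\setminus U)\geq 1/(1000\cdot a\cdot g)$; if $C$ is not $U$-important then, since $z(\delta(C))\geq 2$ and at most $7\cdot 1/(1000\cdot a\cdot g)$ of this weight goes to other components of $F$, we get $z(C,G\setminus U)\geq 1$. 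Pairing each vertex of $V_1$ with a distinct vertex of $V_2$ and using that at most $8$ edges of $T_{33}$ are incident to a vertex of $V_1$, we charge all such $e$ within $\Theta(a\cdot g)\cdot z(\delta(U))$.

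\textbf{Case 2: $\mya_i\in U$.} Then $C_v\notin U$, and the argument of Case~1 applies verbatim after replacing $U'_1$, $D_{U'_1}^{\myin}$, $D_{U'_1}^{\myout}$ by $U'_2$, $D_{U'_2}^{\myout}$, $D_{U'_2}^{\myin}$ respectively and working from $C_v$ in place of $C_u$.

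Summing the charges in both cases yields $|T_{33}|\leq 10000\cdot a\cdot g\cdot z(\delta(U))$, so $\alpha_{33}=10000$ works.
\end{proof}
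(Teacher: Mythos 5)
Your proposal follows the paper's route exactly: the paper's own proof of this lemma simply redefines $V_1$, $V_2$, $X_1$, $X_2$ with degree thresholds scaled by $\Theta(g)$, invokes Lemma \ref{lem:average_degree_in_G} for $|V_1|\leq|V_2|$ and $|X_1|\leq|X_2|$, and declares the rest identical to Lemma \ref{lem:T_3_is_thin_k-apex}. Your write-up supplies the same charging scheme in more detail, and your threshold pair $(40g+8,\,7)$ does yield $|V_1|\leq|V_2|$ from the average-degree bound, just as the paper's pair $(98g,\,20g)$ does.

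There is, however, one step that fails as literally written. From $z_j(C_u,\{\mya_i\})\geq 1/a$ and $z_0(C_u,\{\mya_i\})<1/(1000\cdot a\cdot g)$ the correct conclusion is $z_0(C_u,V(D)\setminus C_u)\geq 1/a-1/(1000\cdot a\cdot g)\geq 999/(1000\cdot a)$ --- your plan paragraph even records this as $(1-1/(1000g))/a$ --- but in the proof body you wrote $999/(1000\cdot a\cdot g)$, losing a factor of $g$. Dividing that weaker quantity by the per-edge bound $1/(1000\cdot a\cdot g)$ of Lemma \ref{lem:weight_of_edges_of_F_genus_g} gives only $\deg_{D_{U'_1}^{\myin}}(C_u)\geq 998$, which does not place $C_u$ in your $V_1=\{C:\deg\geq 40g+8\}$ once $g>24$; the claimed pairing argument then has nothing to charge $e$ to. The repair is immediate: the correct weight bound gives $\deg_{D_{U'_1}^{\myin}}(C_u)\geq 998g$, comfortably above both your threshold and the paper's $98g$. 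A second, smaller caveat (shared with the paper's own terse treatment) is that for a non-$U$-important $C\in V_2$ the bound $\deg_{D_{U'_1}^{\myin}}(C)\leq 7$ controls only the weight leaking to neighbors inside $D_{U'_1}^{\myin}$; weight to components of $F$ outside $D$ or in $D_{U'_1}^{\myout}$ must be argued to land in $\delta(U)$ (which it does, since such components are disjoint from $U$ when $C\subseteq U$), rather than simply discarded.
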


\begin{proof}
Let $e = \{u, v\} \in T_{33}$. We follow a similar approach as in the planar case. We define $U_1$, $U_2$, $U'_1$, $U'_2$, $B$, $D_{U'_1}^{\myin}$, $D_{U'_1}^{\myout}$, $D_{U'_2}^{\myin}$ and $D_{U'_2}^{\myout}$ the exact same way as in Lemma \ref{lem:T_3_is_thin_k-apex}. Let $V_1 =\{C \in D_{U'_1}^{\myin} : \deg_{D_{U'_1}^{\myin}} (C) \geq 98 g\}$, $V_2 = \{C \in D_{U'_1}^{\myin} : \deg_{D_{U'_1}^{\myin}} (C) \leq 20  g\}$, $X_1 =\{C \in D_{U'_2}^{out} : \deg_{D_{U'_2}^{out}} (C) \geq 98g\}$, and $X_2 = \{C \in D_{U'_2}^{out} : \deg_{D_{U'_2}^{out}} (C) \leq 5g\}$.
By Lemma \ref{lem:average_degree_in_G} we have that $|V_1| \leq |V_2|$ and $|X_1| \leq |X_2|$. With these definitions, the rest of the proof is the same as in Lemma \ref{lem:T_3_is_thin_k-apex}, and thus we get $|T_{33}| \leq 10000 \cdot a \cdot g \cdot z(\delta(U))$.
\end{proof}

\begin{lemma}\label{T_is_thin_and_spanning_genus-g}
$T$ is a $O(a \cdot g)$-thin spanning forest in $G$, with at most $O(a + g)$ connected components.
\end{lemma}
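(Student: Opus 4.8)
The plan is to assemble Lemma~\ref{T_is_thin_and_spanning_genus-g} directly from the structural facts established above, combining a counting statement for the number of components with a per-cut thinness bound obtained by partitioning $E(T)\cap\delta(U)$. First I would record that $T$ is spanning and has few components: this is exactly Lemma~\ref{lem:T_is_spanning_forest_genus_g}, which gives at most $a+11g=O(a+g)$ connected components, so that part needs no further work. It then remains to exhibit a constant $\alpha$ with $|E(T)\cap\delta(U)|\le \alpha\cdot a\cdot g\cdot z(\delta(U))$ for every cut $U\subseteq V(G)$. As in the $a$-apex case one fixes, for each crossing edge, the apex $\mya_i$ to which its endpoints were contracted and the component of the contraction process, and argues per apex exactly as in the proof of Lemma~\ref{lem:T_3_is_thin_k-apex}.

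Next I would partition $E(T)\cap\delta(U)$ into $T_1\cup T_2\cup T_3$ according to whether an edge joins an apex to $\Gamma$, stays inside one component of $\Gamma'$, or crosses between two components of $\Gamma'$, and further split $T_3=T_{31}\cup T_{32}\cup T_{33}$, precisely as in Section~\ref{sec:thin_a-apex} and the higher-genus construction above. For the first four pieces $T_1,T_2,T_{31},T_{32}$ the bounds $|T_i|\le\alpha_i\cdot a\cdot g\cdot z(\delta(U))$ are already supplied by Lemma~\ref{lem:first_4_are_the_same}, which carries over the planar/$a$-apex charging arguments after replacing the tiny-cut threshold by $1/(1000\,a\,g)$, using $z(C,C')\le 1/(1000\,a\,g)$ from Lemma~\ref{lem:weight_of_edges_of_F_genus_g}, using that no component of $\Gamma'$ admits a tiny cut, and using that in the genus-$g$ setting a vertex of $F_j$ can have at most $7$ children and one parent (Lemma~\ref{lem:degree_in_P_j_genus_g}), so each crossing edge is charged $O(1)$ times. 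For the last and most delicate piece $T_{33}$ I would invoke Lemma~\ref{lem:T_33_is_thin_genus-g}, whose proof adapts the planar charging scheme by comparing $z_j(C_u,\{\mya_i\})\ge 1/a$ at the step $j$ where $C_u$ is contracted with $z_0(C_u,\cdot)$, splitting according to whether the ``missing'' weight escapes $U$ or is concentrated on the subgraphs $D_{U'_1}^{\myin}$ and $D_{U'_2}^{\myout}$, and then exploiting that these are genus-$g$ graphs of average degree $O(g)$ (Lemma~\ref{lem:average_degree_in_G}), so the set of vertices of degree $\ge 98g$ is no larger than the set of vertices of degree $\le 20g$ (resp.\ $\le 5g$).

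Summing the four bounds for $T_1,T_2,T_{31},T_{32}$ from Lemma~\ref{lem:first_4_are_the_same} together with $|T_{33}|=O(a\,g)\cdot z(\delta(U))$ from Lemma~\ref{lem:T_33_is_thin_genus-g}, and letting $\alpha$ be the sum of the resulting constants, gives $|E(T)\cap\delta(U)|\le\alpha\cdot a\cdot g\cdot z(\delta(U))$ for every cut $U$; combined with Lemma~\ref{lem:T_is_spanning_forest_genus_g} this is exactly the assertion. The only genuinely substantive step is the analysis of $T_{33}$ — every other piece is a cosmetic modification of the $a$-apex argument with the thresholds rescaled by a $\Theta(g)$ factor — so I expect the main obstacle to be the bookkeeping inside Lemma~\ref{lem:T_33_is_thin_genus-g}: verifying that the two pairs of degree thresholds are compatible with the genus-$g$ average-degree bound, and that, since every vertex of $F$ still touches only boundedly many edges of $T_{33}$, each crossing edge receives only $O(1)$ charge even after the genus-dependent rescaling.
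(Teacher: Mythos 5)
Your proposal matches the paper's proof: the paper simply combines Lemma~\ref{lem:T_is_spanning_forest_genus_g} (spanning, at most $a+11g$ components), Lemma~\ref{lem:first_4_are_the_same} (bounds on $T_1,T_2,T_{31},T_{32}$), and Lemma~\ref{lem:T_33_is_thin_genus-g} (bound on $T_{33}$) to get $|E(T)\cap\delta(U)|\le 24000\cdot a\cdot g\cdot z(\delta(U))$. Your additional commentary on how those cited lemmas work is accurate but not needed for this step.
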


\begin{proof}
By combining Lemmas \ref{lem:T_is_spanning_forest_genus_g}, \ref{lem:first_4_are_the_same} and \ref{lem:T_33_is_thin_genus-g} and we get $|T| \leq 24000 \cdot a \cdot g \cdot z(\delta(U))$, which proves the assertion.
\end{proof}

We are now ready to prove the main result of this Section.

\iffull
\begin{theorem}\label{thm:main_theorem_genus-g}
Let $a, g \geq 1$.
Let $G$ be a graph and $A\subseteq V(G)$, with $|A|=a$, such that $H = G \setminus A$ is a graph of genus $g$.
Let $z:E(G)\to \mathbb{R}_{\geq 0}$ be $\beta$-thick for some $\beta>0$.
Then there exists a polynomial time algorithm which given $G$, $A$, an embedding of $H$ on a surface of genus $g$, and $z$ outputs a $O((a \cdot g) / \beta)$-thin spanning forest in $G$ (w.r.t.~$z$) with at most $O(a + g)$ connected components.
\end{theorem}
\fi

\ifabstract
\begin{proof}[Proof of Theorem \ref{thm:main_theorem_genus-g}]
\fi
\iffull
\begin{proof}
\fi
For $\beta \geq 2$, by Lemma \ref{T_is_thin_and_spanning_genus-g}, we can find a $(24000 \cdot a \cdot g)$-thin spanning forest with at most $a + 11 g$ connected components. For $0 < \beta < 2$, the claim follows by scaling $z$ by a factor of $2/\beta$.
\end{proof}

\fi

\ifmain

\section{Thin subgraphs in nearly-embeddable graphs}
\label{sec:main_Lemma_(a,g,1,p)-case}

\subsection{$(0, g, 1, p)$-nearly embeddable graphs}\label{subsec:planar}

For the remainder of this subsection, let $\vec{G}$ be a $(0, g, 1, p)$-nearly embeddable digraph and let $G$ be its symmetrization. Let $\vec{H}$ be the single vortex of $\vec{G}$ of width $p$, attached to some face $\vec{F}$ of $\vec{G}$. Let $H$ and $F$ be the symmetrizations of $\vec{H}$ and $\vec{F}$ respectively. Let $\{B_v\}_{v\in V(F)}$ be a path-decomposition of $H$ of width $p$. Let $\vec{W}$ be a closed walk in $\vec{G}$ visiting all vertices in $V(\vec{H})$ and let $W$ be its symmetrization.
Let $z:E(G)\to \mathbb{R}_{\geq 0}$ be $\alpha$-thick, for some $\alpha\geq 2$, and $\vec{W}$-dense. Let $G'$ be the graph obtained by contracting $F$ to a single vertex $v^*$ in $G \setminus H$.

Following \cite{erickson2014near} we introduce the following notation.
For any $u,v \in V(G)$, a \emph{ribbon} $R$ between $u$ and $v$ is the set of all parallel edges $e = \{u,v\}$ such that for every $e,e' \in R$, there exists a homeomorphism between $e$ and $e'$ on the surface. Let $R'$ be a set of parallel edges in $G$. We say that an edge $e \in R'$ is central if the total weight of edges on each side of $e$ in $R'$ (containing $e$), is at least $z(R')/2$. 

We will find a $O(1)$-thin spanning forest $S$ in $G'$ (w.r.t.~$z$), with at most $g$ connected components, such that $S$ is $O(1)$-thin in $G$ (w.r.t.~$z$). We follow a similar approach to \cite{erickson2014near} to construct $S$. We apply some modifications that assure $S$ is $O(1)$-thin in $G$ (w.r.t.~$z$).

\subsubsection{The modified ribbon-contraction argument}

If $|V(G')| \leq g$, then we set $E(S) = \emptyset$ and we are done. Otherwise, let $l = |V(G')| - g$. We define two sequences of graphs $G_0, \ldots, G_l$ and $G'_0, \ldots, G'_l$, with $G_0 = G$ and $G'_0 = G'$. For each $j \in \{0, \ldots, l\}$, $G_j$ is obtained by uncontracting $v^* \in V(G'_j)$. Let $i \geq 0$ and suppose we have defined $G'_i$. Let $R_i$ be the heaviest ribbon in $G'_i$ (w.r.t.~$z$). Let $R'_i \subseteq E(G_i)$ be the corresponding set of edges in $G_i$. We contract all the edges in $R_i$ and we let $G'_{i+1}$ be the graph obtained after contracting $R_i$. We also perform the contraction in a way such that for all $i \in \{0, \ldots, l-1\}$ we have $v^{*} \in G'_i$.

Let $i \in \{0, \ldots, l-1\}$. If $R_i = \{u, v\}$ where $u, v \neq v^*$, similar to \cite{erickson2014near}, we let $e_i$ be a central edge in $R_i$ and we add $e_i$ to $S$. Otherwise, suppose that $R_i = \{u, v^*\}$ for some $u \in V(G'_i)$. If there exists an edge $e \in R_i$ with $e \in W$ or $z(e) \geq 0.1$, we let $e_i = e$ and we add it to $S$. Otherwise, we can assume that there is no edge $e \in R$ with $e \in W$ or $z(e) \geq 0.1$.

Let $Q_i$ be the set of vertices $v \in V(F)$ with an endpoint in $R'_{i}$. By the construction, $Q_i$ is a subpath of $F$. Let $v_1, v_2 \in V(F)$ be the endpoints of $Q_i$. Let $W'_i$ be the restriction of $W$ on $\bigcup_{v \in Q_i} B_v$. Let $W''_i$ be the subgraph of $W'_i$ obtained by deleting all edges $e$ with both endpoints in $B_{v_1}$ or $B_{v_2}$.

\begin{center}
\scalebox{1.0}{\includegraphics{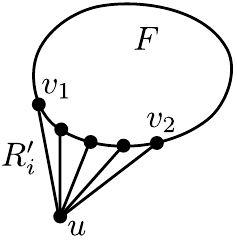}}
\end{center}

For any subgraph $C$ of $W$, we define the \emph{$i$-load} of $C$ as follows. The $i$-load of $C$ is the total weight of all edges in $R'_i$ with an endpoint in $C$. Let $C_{i}$ be the connected component of $W''_i$ with the maximum $i$-load. Let $Y_i = \{e \in R''_i : \text{ e has an endpoint in } C_{i} \}$. We let $e_i$ be a central edge in $Y_i$ and we add $e_i$ to $S$.

We set $T = S \cup W$. We will show that $T$ is a $O(p^2)$-thin spanning subgraph of $G$ (w.r.t.~$z$), with at most $g$ connected components.

\begin{lemma}\label{lem:T_is_spanning_forest_g-components}
$T$ is a spanning subgraph of $G$ with at most $g$ connected components.
\end{lemma}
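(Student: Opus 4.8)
The plan is to derive both parts of the statement from a single structural fact about $S$ together with two elementary properties of $W$: since $W$ is the symmetrization of a closed walk in $\vec G$, its underlying graph is connected, and since that walk visits every vertex of $\vec H$ we have $V(H)\subseteq V(W)$, and in particular $V(F)\subseteq V(W)$. The structural fact is that $S$ is a \emph{forest} in $G'$. Exactly one edge $e_i$ is added to $S$ in each of the $l$ steps, and in every branch of the construction $e_i$ is an edge of the ribbon $R_i$ that is contracted at step $i$: if $e_i$ is a central edge of $R_i$ (or has $e_i\in W$ or $z(e_i)\ge 0.1$) this is by definition, and in the remaining branch $e_i\in Y_i\subseteq R''_i$, which is a set of (parallel) edges of $R_i$. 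Since contracting a ribbon identifies the same pair of vertices as contracting any single one of its edges, $V(G'_i)$ is $V(G')$ with $\{e_0,\dots,e_{i-1}\}$ contracted; equivalently, the vertices of $G'_i$ are the connected components of the graph $(V(G'),\{e_0,\dots,e_{i-1}\})$. Because $R_i$ joins two distinct vertices of $G'_i$, the endpoints of $e_i$ fall in two distinct such components, so adding $e_i$ closes no cycle. By induction $S=\{e_0,\dots,e_{l-1}\}$ is a forest; with $l=|V(G')|-g$ edges on $|V(G')|$ vertices it therefore has exactly $g$ connected components as a spanning subgraph of $G'$. (If $|V(G')|\le g$ then $S=\emptyset$ and its number of components is $|V(G')|\le g$, so the bounds below persist.)

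Spanning-ness is then immediate: $V(S)=V(G')$ lifts (uncontracting $v^*$ into $V(F)$) to $V(G\setminus H)$, we have $V(W)\supseteq V(H)$, and $V(G\setminus H)\cup V(H)=V(G)$; moreover every edge of $S$ and of $W$ is an edge of $G$. For the component count the key is not to first lift $S$ to $G$ — uncontracting $v^*$ into $V(F)$ can split the component of $v^*$ into many pieces — but to work with $T=S\cup W$ directly. Since $W$ is connected, contracting it to a single vertex $\omega$ does not change the number of connected components of $T$; in the resulting graph every edge of $W$ becomes a loop, so what remains is exactly $S$ with the vertex set $V(S)\cap V(W)$ collapsed to $\omega$. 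As $V(F)\subseteq V(W)$, this collapse already subsumes the identification of $V(F)$ into $v^*$, so $T/W$ is a contraction of the graph ``$S$ inside $G'$''; since contracting vertices never increases the number of components, $\#\mathrm{comp}(T)=\#\mathrm{comp}(T/W)\le\#\mathrm{comp}(S\ \text{in}\ G')=g$, which is the desired bound (and $\le|V(G')|\le g$ in the degenerate case).

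I expect the only delicate point to be exactly this last matching of the contraction of $W$ in $G$ with the contraction of $F$ in $G'$: one should check that the $g-1$ components of $S$ that avoid $v^*$ survive as genuine components of $T$ (their vertices lie in $V(G)\setminus V(H)$, whereas $W$ is confined to a single connected set containing $V(H)$), while the one component through $v^*=V(F)\subseteq V(W)$ gets absorbed into $W$'s component — plus a quick verification of the $|V(G')|\le g$ and $g=0$ boundary cases, in which $T$ is simply connected, consistent with the claimed bound. The forest claim itself is a routine Bor\r{u}vka-style contraction argument, so most of the write-up will be bookkeeping rather than new ideas.
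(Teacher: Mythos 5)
Your proposal is correct and follows essentially the same route as the paper: the paper likewise observes that by construction $S$ has at most $g$ connected components in $G'$ (one edge of the contracted ribbon is added per contraction step, so $S$ is a spanning forest of $G'$ with $|V(G')|-g$ edges), and then uses that $W$ is a connected closed walk containing all of $V(F)$ to conclude that lifting $v^*$ back to $F$ does not increase the component count of $T=S\cup W$. Your write-up simply makes explicit the forest/contraction bookkeeping and the degenerate case $|V(G')|\le g$ that the paper leaves implicit.
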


\begin{proof}
By the construction, $S$ has at most $g$ connected components in $G'$. Now note that $W$ is a closed walk visiting $H$ in $G$. Therefore, all vertices of $F$ are in the same connected component in $T$. This means that $T$ has at most $g$ connected components.
\end{proof}

\begin{lemma}\label{lem:W'_has_O(p)_components}
For any $i \in \{0, \ldots, l-1\}$, $W'_i$ has at most $2  p$ connected components.
\end{lemma}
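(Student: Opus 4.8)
The plan is to use that $W$, being the symmetrization of the closed walk $\vW$, is connected, together with the interval (Helly) property of the path decomposition $\{B_v\}_{v\in V(F)}$ of the vortex $H$. Since $Q_i$ is a subpath of $F$, let $v_1,v_2\in V(F)$ be its two endpoints and set $U=\bigcup_{v\in Q_i}B_v$, which is exactly the vertex set of $W'_i$; note $U\subseteq V(H)\subseteq V(W)$, so $W'_i=W[U]$ has vertex set $U$. Because the vortex has width $p$, every bag has at most $p$ vertices, so in particular $|B_{v_1}|,|B_{v_2}|\le p$. The statement I would actually prove is: every connected component of $W'_i$ contains a vertex of $B_{v_1}\cup B_{v_2}$. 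The desired bound $2p$ then follows at once, since distinct components are vertex-disjoint and therefore inject into $B_{v_1}\cup B_{v_2}$.

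To prove the claim, let $C$ be a component of $W'_i$ and suppose $C\cap(B_{v_1}\cup B_{v_2})=\emptyset$. If $C=V(W)$ then $W'_i=W$ is connected and there is nothing to prove; otherwise connectedness of $W$ yields an edge $\{x,y\}\in E(W)$ with $x\in C$ and $y\notin C$, and since $C$ is a full component of $W[U]$ we must have $y\notin U$. Thus $x$ is a vertex of $U$ having a neighbour (in $W$, hence in $G$) outside $U$. Now $x\in U\setminus(B_{v_1}\cup B_{v_2})$, so by the interval property the bags containing $x$ all lie strictly between $B_{v_1}$ and $B_{v_2}$ along $Q_i$; consequently $x$ is an interior vortex vertex, so every $G$-edge incident to $x$ — in particular $\{x,y\}$ — is an edge of $H$, hence is covered by some bag $B_w$. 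Since $y\notin U$, this bag $B_w$ lies strictly outside $Q_i$; the interval property then forces $x\in B_{v_1}$ or $x\in B_{v_2}$, contradicting the choice of $C$. This proves the claim and hence the lemma.

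The step I expect to be the main obstacle is exactly the implication ``a vertex $x\in U$ with a neighbour outside $U$ must lie in $B_{v_1}\cup B_{v_2}$'', since $W$ traverses edges of all of $\vG$ and not only of the vortex. For an interior vortex vertex this is clean, because all incident edges lie in $H$ and the interval property applies directly; the delicate case is a vertex of $V(F)$ lying strictly inside the arc $Q_i$, which a priori could have non-vortex edges leading out of $U$. Pinning this down — by appealing to the precise definition of a width-$p$ vortex attached to a face, and/or to the normalization of $\vW$ established earlier in the paper, so that such vertices only reach the complement of $U$ through vortex edges — is where the genuine work lies; the remainder is routine bookkeeping with the interval property of path decompositions.
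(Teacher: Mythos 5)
Your argument establishes the key claim (every component of $W'_i$ meets $B_{v_1}\cup B_{v_2}$) only in the case where the edge $\{x,y\}$ leaving the component is an edge of the vortex $H$, and the case you defer at the end is not a technicality but the substantive one. The step ``the bags containing $x$ all lie strictly between $B_{v_1}$ and $B_{v_2}$, consequently $x$ is an interior vortex vertex, so every $G$-edge incident to $x$ is an edge of $H$'' fails exactly when $x\in V(F)$ lies strictly inside the arc $Q_i$: such an $x$ satisfies $x\in B_x\subseteq U$ but need not lie in $B_{v_1}\cup B_{v_2}$, and it carries edges of $E(F)$ and edges into the genus-$g$ piece, none of which is covered by any bag (recall $F$ is part of $G\setminus H$, which is why the construction contracts $F$ inside $G\setminus H$). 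Since $\vec{W}$ is a closed walk in all of $\vec{G}$ rather than in $\vec{H}$, it may enter and leave such an $x$ exclusively through these embedded edges, in which case the component of $x$ in $W'_i$ is disjoint from $B_{v_1}\cup B_{v_2}$; a priori the number of such components is bounded only by $|V(Q_i)|$, so the injection into $B_{v_1}\cup B_{v_2}$ collapses.

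A telling symptom is that your proof never uses the hypothesis that no edge of the ribbon $R_i$ belongs to $W$, whereas the paper's (one-sentence) proof cites precisely this fact together with the path-decomposition property. That unused hypothesis is the ingredient needed to control how $W$ can pass between $\bigcup_{v\in Q_i}B_v$ and the rest of the graph through the embedded part, and bringing it to bear requires an argument about which surface edges incident to the arc $Q_i$ can leave $U$ --- the ``genuine work'' you name but do not carry out. The interval-property bookkeeping you do give is correct and is surely part of the intended argument, but as written the proposal proves only the vortex-edge half of the boundary analysis and therefore does not prove the lemma.
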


\begin{proof}
This follows immediately from the fact that $\{B_v\}_{v\in V(F)}$ is a path-decomposition of width $p$ and there is no edge in $R_i \cap W$.
\end{proof}

\begin{lemma}\label{lem:W"_has_O(p^2)_components}
For any $i \in \{0, \ldots, l-1\}$, $W''_i$ has at most $p(p+1)$ connected components.
\end{lemma}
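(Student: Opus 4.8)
The plan is to bound how many new connected components can appear when we pass from $W'_i$ to $W''_i$, starting from the bound of at most $2p$ components for $W'_i$ provided by Lemma~\ref{lem:W'_has_O(p)_components}. Recall that $W''_i$ is obtained from $W'_i$ by deleting exactly the edges spanned by the two boundary bags $B_{v_1}$ and $B_{v_2}$ of the subpath $Q_i$, and that since $\{B_v\}_{v\in V(F)}$ has width $p$, each of these two bags has at most $p+1$ vertices. So it suffices to control, bag by bag, the effect on the component count of removing all edges internal to a vertex set of size at most $p+1$.

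The key tool I would isolate is the following elementary fact: if $C$ is a connected graph and $X\subseteq V(C)$, then deleting from $C$ all edges with both endpoints in $X$ leaves at most $\max(1,|X|)$ connected components. I would prove this by contracting every surviving (non-deleted) edge of $C$; these contractions preserve connectivity, so the resulting multigraph $M$ is connected, and every edge of $M$ is the image of a deleted edge and hence joins two super-nodes each of which contains a vertex of $X$. Thus, unless $M$ is a single node, every node of $M$ contains a distinct vertex of $X$, giving at most $|X|$ of them, and the number of components after deletion equals the number of nodes of $M$. I would also record here that every edge of $W'_i$ is spanned by some bag of the path decomposition (which follows from the normalization of $\vec W$), so that moving from $W'_i$ to $W''_i$ really is the deletion of the edges spanned by $B_{v_1}$ together with those spanned by $B_{v_2}$.

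With this fact in hand, I would apply it to each of the at most $2p$ components of $W'_i$, first for the set $B_{v_1}$ and then for $B_{v_2}$, and sum the resulting bounds, using that $W$ — being a closed walk — is connected so each component under consideration is connected. A component of $W'_i$ that meets neither boundary bag is left untouched, while the total ``splitting budget'' contributed across all components by $B_{v_1}$ (and likewise $B_{v_2}$) is governed by the bag size $p+1$, since distinct components of $W'_i$ meet pairwise disjoint subsets of a fixed bag. Collecting these contributions yields the claimed bound of $p(p+1)$ on the number of connected components of $W''_i$. The part demanding the most care is precisely this final accounting: one must handle the $\max(1,\cdot)$ terms so that it is the bag sizes, rather than the product of a bag size with the number of pieces already present, that controls the count; once that is done the remaining steps are routine.
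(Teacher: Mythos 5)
Your proposal is correct in its core mechanics but takes a genuinely different route from the paper, and the two routes deliver different bounds. The paper's proof is a one-line edge count: $W''_i$ is obtained from $W'_i$ by deleting the edges spanned by the two boundary bags, the number of such edges is bounded (the paper writes $p(p-1)$), and since each single edge deletion increases the number of components by at most one, the bound follows by adding this to the $2p$ components of $W'_i$. Your contraction lemma --- that deleting from a connected graph all edges with both endpoints in a set $X$ leaves at most $\max(1,|X|)$ components --- is correct as stated and proved, and it replaces this quadratic edge count by a linear vertex count: applying it to each component $C_j$ of $W'_i$ with $X_j=V(C_j)\cap(B_{v_1}\cup B_{v_2})$ and summing, using that the $X_j$ are disjoint subsets of a set of size at most $2(p+1)$, gives at most $2p+2(p+1)=4p+2$ components. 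This is asymptotically stronger than $p(p+1)$ and suffices for every downstream use of the lemma (Lemmas \ref{lem:i-load_of_C_max} and \ref{T'_is_thin_in_G} only need an $O(p^2)$ bound), which is what your approach buys.

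The one point to flag is your closing claim that the accounting ``yields the claimed bound of $p(p+1)$'': it does not, literally, for small $p$. The sum you control is at most $4p+2$ (or $4p$ with the tightest handling of the $\max(1,\cdot)$ terms), and $4p+2>p(p+1)$ for $p\in\{1,2\}$, so as written the proposal establishes a different (and for larger $p$ better) bound rather than the stated inequality. Either state and prove the linear bound you actually obtain, or, if you want the lemma verbatim, fall back on the cruder ``one new component per deleted edge'' count that the paper uses.
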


\begin{proof}
By Lemma \ref{lem:W'_has_O(p)_components} we know that $W'_i$ has at most $2p$ connected components. $W''_i$ is obtained by deleting at most $p(p-1)$ edges of $W'_i$. Therefore, $W''_i$ has at most $p(p+1) = 2p + p(p-1)$ connected components.
\end{proof}

\begin{lemma}\label{lem:i-load_of_W'}
For any $i \in \{0, \ldots, l-1\}$, the $i$-load of $W'_i$ is at least $0.4$.
\end{lemma}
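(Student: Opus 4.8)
The plan is to track where the weight of the ribbon $R_i$ goes. Recall that in this branch of the construction $R_i = \{u, v^*\}$ is a ribbon of total $z$-weight $z(R_i) = z(R_i')$ (summed over the parallel copies in $G_i$), and we are in the case where every edge $e \in R_i$ satisfies $e \notin W$ and $z(e) < 0.1$. First I would lower bound $z(R_i)$ itself: since $R_i$ is the \emph{heaviest} ribbon in $G_i'$ and (by planarity-type arguments as in \cite{erickson2014near}) the graph $G_i'$ after contracting all lighter ribbons has a vertex of small degree incident to few ribbons, averaging the $\alpha$-thickness guarantee $z(\delta(U)) \ge \alpha \ge 2$ over the $O(1)$ ribbons at that vertex forces $z(R_i) \ge c$ for an absolute constant $c$; I will pick the cut-off constant $0.1$ in the case distinction precisely so that this averaging yields $z(R_i) \ge 0.5$, say. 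Then I would account for which endpoints the weight of $R_i'$ lands on: every edge of $R_i'$ has one endpoint among the vertices $Q_i \subseteq V(F)$, and in fact (by the ribbon/path-decomposition structure) each such endpoint lies in some bag $B_v$ with $v \in Q_i$, hence in $\bigcup_{v \in Q_i} B_v$, which is exactly the vertex set on which $W_i'$ is defined.

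Next I would pass from $R_i'$ to $R_i'' $ (the edges of $R_i'$ counted in $W_i''$), i.e.\ discard the weight of edges of $R_i'$ whose $F$-endpoint lies in the two boundary bags $B_{v_1}$ or $B_{v_2}$. Here the point is that $|B_{v_1}|, |B_{v_2}| \le p+1$, so only a bounded number of vertices are removed; but more to the point, the edges we discard are exactly those incident to $B_{v_1} \cup B_{v_2}$, and since each individual edge of $R_i'$ has weight $< 0.1$ and there are at most $2(p+1)$ such vertices $\ldots$ — no, that over-counts. The cleaner route: the edges of $R_i'$ incident to $B_{v_1}$ (resp.\ $B_{v_2}$) form, together with the remaining edges of $\delta(U)$ for the cut $U$ separating one side of the ribbon, a small-weight set, and because the ribbon is ``central-free'' in the sense that no single edge carries $\ge 0.1$, removing the $O(1)$ many boundary-bag vertices removes at most a constant fraction — I will instead argue that the total $z$-weight of $W_i' \setminus W_i''$, as it pertains to $R_i'$-incidence, is at most $0.1$ (each of the two boundary bags contributes via cuts of weight $< 0.1$ by the no-tiny-cut property inside the component, or directly because those deleted edges have small weight and there are few of them). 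Subtracting, the total $z$-weight of edges of $R_i'$ with an endpoint in $W_i''$ — that is, the sum over all connected components $C$ of $W_i''$ of the $i$-load of $C$ — is at least $z(R_i) - 0.1 \ge 0.4$. Summing the $i$-loads of all components of $W_i''$ and using that this sum is the $i$-load of $W_i'$ (every edge of $R_i'$ not deleted has its endpoint in some component), we get that the $i$-load of $W_i'$ is at least $0.4$.

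The main obstacle I anticipate is the bookkeeping in the middle step: precisely quantifying how much $R_i'$-weight is lost when passing from $W_i'$ to $W_i''$, and getting the two constants ($z(R_i) \ge 0.5$ and loss $\le 0.1$) to fit. This hinges on choosing the thresholds in the case distinction of the construction ($0.1$ for ``$z(e)$ large'', $1/(100 a)$-type constants for tiny cuts) consistently with the averaging argument that bounds $z(R_i)$ from below, and on using that the two boundary bags have size $\le p+1$ together with the absence of heavy individual edges in $R_i$ to conclude the discarded weight is small. Everything else — the incidence structure of ribbons with path decompositions, the fact that $Q_i$ is a subpath of $F$, and that every $R_i'$-edge touches $\bigcup_{v \in Q_i} B_v$ — is structural and follows from the setup and from \cite{erickson2014near}, so I would cite those and keep the proof to the weight-accounting.
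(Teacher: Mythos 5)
Your proposal contains the paper's entire proof buried in its middle: the $i$-load of $W'_i$ is by definition the total weight of edges of $R'_i$ with an endpoint in $W'_i$, every edge of $R'_i$ has an endpoint in $Q_i\subseteq \bigcup_{v\in Q_i}B_v = V(W'_i)$, so the $i$-load of $W'_i$ equals $z(R'_i)=z(R_i)$, and the ribbon-contraction argument of \cite{erickson2014near} gives $z(R_i)\geq 2/5=0.4$. That is all this lemma needs, and you should have stopped there.

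The rest of your proposal is a detour through $W''_i$ that does not belong to this lemma (it is the content of the \emph{next} one) and that introduces a genuine gap. To make the subtraction route work you assert $z(R_i)\geq 0.5$, justified only by ``I will pick the cut-off constant $0.1$ \ldots{} so that this averaging yields $z(R_i)\geq 0.5$.'' The bound actually available from the heaviest-ribbon argument is $2/5$, and you are not free to retune the construction's thresholds inside a proof of a lemma about that construction. With the correct bound $z(R_i)\geq 2/5$, your route gives $2/5 - 0.1 = 0.3$, which falls short of the claimed $0.4$; moreover the paper's own accounting of the boundary-bag loss is $0.2$ (one $0.1$ for each of $B_{v_1}$ and $B_{v_2}$), not $0.1$, so even the subtracted term is off. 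None of this is needed: drop the passage to $W''_i$ entirely and the two-line direct argument closes the lemma.
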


\begin{proof}
The $i$-load of $W'_i$ is $z(R'_i)$. Following \cite{erickson2014near} we know that $z(R_i) \geq 2/5$ and thus $z(R'_i) \geq 2/5$.
\end{proof}

\begin{lemma}\label{lem:i-load_of_W''}
For any $i \in \{0, \ldots, l-1\}$, the $i$-load of $W''_i$ is at least $0.2$.
\end{lemma}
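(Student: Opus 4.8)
The plan is to compare the $i$-load of $W''_i$ with that of $W'_i$ and show that the edge-deletion performed in passing from $W'_i$ to $W''_i$ can destroy only a small amount of $i$-load. Recall that $R'_i$ consists of the edges of $G_i$ joining the blob $u$ to vertices of $Q_i\subseteq V(F)$, and that we are in the case where no edge of $R_i$ has $z$-weight at least $0.1$. Since $u$ is a contracted vertex of $G\setminus H$, it belongs to no bag $B_v$, hence to neither $V(W'_i)$ nor $V(W''_i)$; consequently an edge $\{u,v\}\in R'_i$ contributes to the $i$-load of a subgraph $C\in\{W'_i,W''_i\}$ precisely when $v$ is a vertex of $C$. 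As every $v\in Q_i$ lies in its own bag $B_v$, all of $Q_i$ survives in $W'_i$, which is exactly why Lemma~\ref{lem:i-load_of_W'} gives that the $i$-load of $W'_i$ equals $z(R'_i)\geq 2/5$; I would take this as the starting point.

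Next I would bound the loss. Passing from $W'_i$ to $W''_i$ deletes only edges with both endpoints inside the single bag $B_{v_1}$ or inside the single bag $B_{v_2}$, so every vertex lying outside $B_{v_1}\cup B_{v_2}$ retains all of its incident edges and therefore remains in $W''_i$; thus the vertices that can disappear are contained in $B_{v_1}\cup B_{v_2}$. Since $Q_i$ is a subpath of $F$ with endpoints $v_1,v_2$, and since the path-decomposition of the vortex can be chosen so that each vertex of $V(F)$ occurs only in its own bag, we have $(B_{v_1}\cup B_{v_2})\cap V(F)=\{v_1,v_2\}$. Hence the only edges of $R'_i$ that may fail to reach a vertex of $W''_i$ are $\{u,v_1\}$ and $\{u,v_2\}$, and by the case hypothesis each of $z(\{u,v_1\})$ and $z(\{u,v_2\})$ is strictly less than $0.1$. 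Therefore
\[
\text{($i$-load of $W''_i$)}\ \geq\ z(R'_i)-z(\{u,v_1\})-z(\{u,v_2\})\ >\ \tfrac{2}{5}-\tfrac{1}{10}-\tfrac{1}{10}\ =\ \tfrac{1}{5},
\]
which is the claimed bound.

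The delicate point is the structural claim $(B_{v_1}\cup B_{v_2})\cap V(F)=\{v_1,v_2\}$: one must verify that the two extreme bags of the subpath $Q_i$ do not also capture face-vertices from the interior of $Q_i$, since otherwise further ribbon edges could be lost and all slack would evaporate; this is exactly where the facts that $Q_i$ is a \emph{subpath} of $F$ and that the vortex decomposition $\{B_v\}_{v\in V(F)}$ is chosen in normalized form are used. A related minor point is that the argument needs the ribbon edges incident to $v_1$ (respectively $v_2$) to have total weight below $1/10$ — immediate if each vertex of $Q_i$ receives at most one edge of $R'_i$, as is standard for a ribbon fanning out along a face boundary, but worth stating explicitly to cover possible parallel ribbon edges. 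Finally, I would make explicit the convention that a vertex left isolated by the deletion is regarded as absent from $W''_i$; without this convention the statement is vacuous, and with it the computation above goes through.
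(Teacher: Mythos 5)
Your proof is correct and follows essentially the same route as the paper's: start from the bound of $0.4$ on the $i$-load of $W'_i$ (Lemma~\ref{lem:i-load_of_W'}) and charge the loss incurred in passing to $W''_i$ to the two ribbon edges at $v_1$ and $v_2$, each of weight below $0.1$ by the case hypothesis. The delicate points you flag --- that among the face vertices only $v_1$ and $v_2$ can lose all of their incident edges, and the convention about isolated vertices --- are indeed left implicit in the paper, whose proof simply asserts that the deletion decreases the $i$-load by at most $z(\{u,v_1\})+z(\{u,v_2\})\le 0.2$.
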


\begin{proof}
By Lemma \ref{lem:i-load_of_W'} the $i$-load of $W'$ is at least $0.4$. By the construction, we have $z(\{u, v_1\}) \leq 0.1$ and $z(\{u, v_2\}) \leq 0.1$. By deleting edges with both endpoints in $B_{v_1}$ or $B_{v_2}$, we decrease the $i$-load by at most $0.2$. Therefore, the $i$-load of $W''_i$ is at least $0.2$.
\end{proof}

\begin{lemma}\label{lem:i-load_of_C_max}
For any $i \in \{0, \ldots, l-1\}$, the $i$-load of $C_{i}$ is at least $1/5p(p+1)$.
\end{lemma}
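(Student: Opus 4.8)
The plan is a one-step averaging argument over the connected components of $W''_i$, building on the two facts already established: Lemma~\ref{lem:W"_has_O(p^2)_components} bounds the number of connected components of $W''_i$ by $p(p+1)$, and Lemma~\ref{lem:i-load_of_W''} guarantees that the $i$-load of $W''_i$ is at least $0.2$.

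First I would observe that the $i$-load of $W''_i$ is at most the sum of the $i$-loads of its connected components. Indeed, by definition every edge $e\in R'_i$ counted towards the $i$-load of $W''_i$ has an endpoint in $V(W''_i)$; that endpoint lies in some connected component $C$ of $W''_i$, so $e$ is also counted towards the $i$-load of $C$. Hence each such $e$ contributes at least once to the sum of the $i$-loads over components (an edge with its two endpoints in distinct components is counted twice there, which only strengthens the inequality), which gives the claim.

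Combining the two observations, the sum of the $i$-loads of the at most $p(p+1)$ connected components of $W''_i$ is at least $0.2$, so by averaging some component has $i$-load at least $0.2/(p(p+1)) = 1/(5p(p+1))$. Since $C_i$ was defined to be the connected component of $W''_i$ of maximum $i$-load, its $i$-load is at least $1/(5p(p+1))$, which is precisely the asserted bound.

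There is essentially no genuine obstacle here; the argument is pure pigeonhole. The only points that need care are the direction of the inequality relating the $i$-load of $W''_i$ to the sum of the $i$-loads of its components (it is ``$\le$'', which is exactly what the argument uses, since an edge may be double-counted but never omitted), and making sure that Lemmas~\ref{lem:W"_has_O(p^2)_components} and \ref{lem:i-load_of_W''} are being invoked for the same index $i$ and the same graph $W''_i$.
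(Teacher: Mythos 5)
Your proof is correct and follows essentially the same route as the paper's: the paper likewise combines Lemma~\ref{lem:i-load_of_W''} (total $i$-load at least $0.2$) with Lemma~\ref{lem:W"_has_O(p^2)_components} (at most $p(p+1)$ components) and concludes by averaging, with the maximality of $C_i$ giving the bound. Your extra remark that the $i$-load of $W''_i$ is bounded above by the sum of the component $i$-loads is a small point the paper leaves implicit, but it does not change the argument.
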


\begin{proof}
By Lemma \ref{lem:i-load_of_W''} we know that the $i$-load of $W''_i$ is at least $0.2$. By Lemma \ref{lem:W"_has_O(p^2)_components} there are at most $p(p+1)$ connected components in $W''_i$. Therefore the $i$-load of $C_{i}$ is at least $1/5p(p+1)$.
\end{proof}

\begin{lemma}\label{T'_is_thin_in_G}
There exists a constant $\beta$ such that for any $U \subseteq V(G)$, we have $|S \cap \delta(U)| \leq \beta \cdot p^2 \cdot z(\delta(U))$.
\end{lemma}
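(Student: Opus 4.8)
The plan is to follow the thinness analysis of the Erickson--Sidiropoulos \cite{erickson2014near} ribbon-contraction argument underlying Theorem~\ref{tim:result_of_genus_g}, adapting it to the modified construction above. Fix a cut $U \subseteq V(G)$; we must bound $|S \cap \delta(U)|$. Recall $S = \{e_0,\dots,e_{l-1}\}$, where $e_i$ is produced at step $i$ from the heaviest ribbon $R_i$ of $G_i'$, with associated edge set $R_i' \subseteq E(G_i)$; note that the sets $R_0',\dots,R_{l-1}'$ are pairwise disjoint, since contracting $R_i$ deletes these edges from every subsequent $G_j'$. First I would split $S \cap \delta(U)$ into three parts by how $e_i$ was chosen: (I) $R_i$ not incident to $v^*$, so $e_i$ is a central edge of $R_i$; (II) $R_i = \{u,v^*\}$ with $e_i$ an edge of $R_i$ lying in $W$ or of weight $z(e_i)\ge 0.1$; and (III) $R_i = \{u,v^*\}$ with $e_i$ a central edge of $Y_i$. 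The goal in each case is to assign to each $e_i$ in that part a set $Z_i \subseteq \delta(U)\cap R_i'$ of $z$-weight $\Omega(1)$ (parts~I,~II) or $\Omega(1/p^2)$ (part~III); since the $R_i'$ are disjoint, the $Z_i$ are disjoint subsets of $\delta(U)$, and summing their weights against $z(\delta(U))$ gives $|S\cap\delta(U)| \le \beta p^2 z(\delta(U))$ for an absolute constant $\beta$.

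Part~(II) is immediate: if $e_i\in W$ then $z(e_i)\ge 1$ by $\vec{W}$-density, and otherwise $z(e_i)\ge 0.1$; taking $Z_i=\{e_i\}$ works, and distinctness of the $e_i$ makes these singleton certificates disjoint. Part~(I) is handled exactly as in the proof of Theorem~\ref{tim:result_of_genus_g}: there $e_i$ is the central edge of the globally heaviest ribbon, which has weight $\Omega(1)$ at every step of the process (the bound $z(R_i)\ge 2/5$ established in the proof of Lemma~\ref{lem:i-load_of_W'}), and the Erickson--Sidiropoulos centrality argument produces the required $\Omega(1)$-share of $\delta(U)$ for each such $e_i$; since $R_i$ is disjoint from the vortex and from $W$, nothing new enters and their argument applies verbatim.

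The substantive part is~(III). Here the modification is that the centrality argument is applied to the sub-bundle $Y_i$ --- the set of $R_i'$-edges incident to the connected piece $C_i$ of $W_i''$ of largest $i$-load --- rather than to all of $R_i$. The point is that $C_i$, being connected inside $W$ (hence inside $T$) and confined to the contiguous window $Q_i$ of the path-decomposition, can be treated as a single endpoint of the bundle $Y_i$: either $V(C_i)$ lies entirely on one side of $U$, in which case $Y_i$ is a nested bundle between the vertex $u$ and the super-node $C_i$, and centrality of $e_i$ forces $z(\delta(U)\cap Y_i)=\Omega(z(Y_i))$; or some edge of $C_i\subseteq W$ lies in $\delta(U)$, in which case I would route the charge for $e_i$ through that $W$-edge using $\vec{W}$-density. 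By Lemma~\ref{lem:i-load_of_C_max}, $z(Y_i)$ is at least the $i$-load of $C_i$, which is $\ge 1/(5p(p+1))$, so in the first case $z(Z_i)=\Omega(1/p^2)$ with $Z_i=\delta(U)\cap Y_i\subseteq\delta(U)\cap R_i'$. Combining the three parts yields the lemma.

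I expect the crux to be the case-(III) centrality estimate: pinning down the precise sense in which the multi-vertex blob $C_i$ acts as one side of the cut $U$, so that centrality of $e_i$ within $Y_i$ genuinely forces an $\Omega(z(Y_i))$ fraction of $\delta(U)$ onto $Y_i$. This is where one must combine connectedness of $C_i$ in $W$, $\vec{W}$-density of $z$, the fact that $C_i$ occupies at most a $(p{+}1)$-sized slice of each bag, and the surface structure that makes $Y_i$ a nested sub-bundle of $R_i'$. A smaller point to record carefully is the pairwise disjointness of the edge sets $R_i'$, which is what keeps the certificates $Z_i$ from colliding across steps.
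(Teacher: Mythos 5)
Your proposal follows essentially the same route as the paper: your parts (I)--(III) correspond exactly to the paper's split of $S\cap\delta(U)$ into $S_1$ (edges away from $V(F)$, handled by the unmodified argument of \cite{erickson2014near}) and $S_2$ (edges into $V(F)$, handled by charging either to $z(e)\geq 1/2$ for $W$-edges or heavy edges, to a $W$-edge of $C_i$ crossing $U$, or to the $i$-load of $C_i$ via Lemma \ref{lem:i-load_of_C_max} when $C_i$ lies on one side of the cut), and your disjointness concerns are resolved in the paper exactly as you anticipate (disjointness of the $R_i'$, plus the observation that an edge of $W_i''$ not internal to $B_{v_1}$ or $B_{v_2}$ belongs to no other $C_j$). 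The only cosmetic difference is that in the last case the paper argues all of $Y_i$ lies in $\delta(U)$ rather than invoking centrality of $e_i$ within $Y_i$; both yield the same $\Omega(1/p^2)$ certificate.
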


\begin{proof}
First we partition $S \cap \delta(U)$ into two subsets:

\begin{description}
\item{(1)}
$S_1 = \{ \{u, v\} \in S \cap \delta(U) : u, v \not\in V(F)\}$.

\item{(2)}
$S_2 = \{ \{u, v\} \in S \cap \delta(U) : v \in V(F)\}$.
\end{description}

By the construction, following \cite{erickson2014near} we have $|S_1| \leq 20 \cdot z(\delta(U))$. Let $e = \{u, v\} \in S_2$. Let $i \in \{0, \ldots, l-1\}$ be the step that we add $e$ to $S$. If $e \in W$, we can charge it to $z(e) \geq 1/2$ and we are done. Suppose $e \not\in W$. If there exists an edge $e' \in E(C_i) \cap \delta(U)$, we know that by the construction, $e'$ does not have both endpoints in $B_{v_1}$ or $B_{v_2}$. Therefore, for all $j \neq i$ we have $e' \notin E(C_j)$. Thus we can charge e to $z(e') \geq 1/2$ and we are done. Otherwise, suppose there is no edge in $E(C_i) \cap \delta(U)$. In this case, by the construction, for all $e'' \in R_i$ with an endpoint in $C_i$, we have $e'' \in \delta(U)$. Now we know that $e$ is the central edge in $Y_i$. By Lemma \ref{lem:i-load_of_C_max} we know that the $i$-load of $C_i$ is at least $1/5p(p+1)$. Therefore, we can charge $e$ to the $i$-load of $C_i$ and we get $|S_2| \leq 10 \cdot p^2 z(\delta(U))$. Therefore, we have $|S \cap \delta(U)| \leq 20 \cdot p^2 \cdot z(\delta(U))$.
\end{proof}

\begin{lemma}\label{lem:main_Lemma_(0,g,1,p)-case}
Let $\vec{G}$ be a $(0,g,1,p)$-nearly embeddable digraph, let $\vec{H}$ be its vortex, and let $\vec{W}$ be a walk in $\vec{G}$ visiting all vertices in $V(\vec{H})$.
Let $G$, $H$, and $W$ be the symmetrizations of $\vec{G}$, $\vec{H}$, and $\vec{W}$ respectively.
Let $z:E(G)\to \mathbb{R}_{\geq 0}$ be $\alpha$-thick for some $\alpha \geq 2$, and $\vec{W}$-dense.
Then there exists a polynomial time algorithm which given $\vec{G}$, $\vec{H}$, $\vec{W}$, $z$, and an embedding of $\vec{G}\setminus \vec{H}$ into a surface of genus $g$, outputs a subgraph $S\subseteq G\setminus H$, satisfying the following conditions:
\begin{description}
\item{(1)}
$W\cup S$ is a spanning subgraph of $G$ and has $O(g)$ connected components.

\item{(2)}
$W\cup S$ is $O(p^2)$-thin w.r.t.~$z$.
\end{description}
\end{lemma}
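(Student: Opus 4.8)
The plan is to let the algorithm be precisely the modified ribbon-contraction procedure constructed above, which outputs a subgraph $S\subseteq G\setminus H$, and to set $T=W\cup S$; the two required properties then follow by reading off the auxiliary lemmas already established in this subsection, so the proof amounts to an assembly step.

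First, condition~(1) is exactly Lemma~\ref{lem:T_is_spanning_forest_g-components}: the heaviest-ribbon contraction halts once $G'$ has been shrunk to $g$ vertices, so $S$ has at most $g$ components as a subgraph of $G'$; since $W$ is a closed walk through every vertex of $F$, all of $V(F)$ collapses into a single component of $T$, and hence $T$ spans $G$ and has $O(g)$ connected components.

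Second, for condition~(2) I would fix an arbitrary cut $U\subseteq V(G)$ and decompose
\[
E(T)\cap\delta(U) \;=\; \bigl(E(W)\cap\delta(U)\bigr)\cup\bigl(E(S)\cap\delta(U)\bigr).
\]
Lemma~\ref{T'_is_thin_in_G} bounds the second part: $|E(S)\cap\delta(U)|\le \beta\,p^2\,z(\delta(U))$ for an absolute constant $\beta$. For the first part I would use that $z$ is $\vec{W}$-dense, so every edge of $W$ has $z$-weight at least $1$; since $z$ is nonnegative,
\[
|E(W)\cap\delta(U)| \;\le\; \sum_{e\in E(W)\cap\delta(U)} z(e) \;\le\; z(\delta(U)).
\]
Adding the two estimates gives $|E(T)\cap\delta(U)|\le (1+\beta p^2)\,z(\delta(U))=O(p^2)\,z(\delta(U))$, i.e.\ condition~(2). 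The running time is polynomial: the procedure performs $O(n)$ ribbon contractions, and within each step finding the heaviest ribbon, restricting $W$ to $\bigcup_{v\in Q_i}B_v$, identifying the maximum-$i$-load component $C_i$, and selecting a central edge are all polynomial-time operations.

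The genuinely nontrivial content -- already isolated in Lemma~\ref{T'_is_thin_in_G} and the chain of load estimates preceding it -- is controlling the edges of $S$ incident to the contracted face-vertex $v^*$: a central edge of a ribbon at $v^*$ may lie deep inside the vortex face and need not be thin in $G$. The resolution, which I would not re-derive here, is to route through the vortex using $W$: restrict $W$ to the union of the bags over the subpath $Q_i$, discard the edges living entirely in the two boundary bags $B_{v_1},B_{v_2}$ (so that each surviving component is relevant to only one contraction step and can be charged to a private witness edge in $E(C_i)\cap\delta(U)$ when one exists), and take a central edge of the restricted ribbon $Y_i$ on the component $C_i$ of largest $i$-load. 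The quantitative crux is that, because the vortex has pathwidth $p$, $W''_i$ has only $O(p^2)$ components (Lemmas~\ref{lem:W'_has_O(p)_components}--\ref{lem:W"_has_O(p^2)_components}), so $C_i$ carries an $\Omega(1/p^2)$ fraction of the ribbon weight (Lemma~\ref{lem:i-load_of_C_max}); this is exactly where the $p^2$ factor in the final thinness bound originates. Thus the main obstacle is not the assembly above but this charging argument, which the preceding subsection already handles.
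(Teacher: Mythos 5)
Your proposal matches the paper's proof, which is exactly the one-line assembly of Lemma~\ref{lem:T_is_spanning_forest_g-components} for condition~(1) and Lemma~\ref{T'_is_thin_in_G} for condition~(2). In fact you are slightly more careful than the paper: you explicitly charge the edges of $W$ crossing the cut to $z(\delta(U))$ via $\vec{W}$-density, a step the paper leaves implicit here (it only spells it out in the later $(a,g,1,p)$ version, Lemma~\ref{lem:T_is_thin_(a,g,1,p)-case}).
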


\begin{proof}
The assertion follows immediately by Lemmas \ref{lem:T_is_spanning_forest_g-components} and \ref{T'_is_thin_in_G}.
\end{proof}

\subsection{$(a,g,1,p)$-nearly embeddable graphs}

For the remainder of this subsection, let $a,g,k,p \geq 0$ and $\vec{G}$ be an $n$-vertex $(a,g,1,p)$-nearly embeddable digraph and let $G$ be its symmetrization. Let $\vec{H}$ be the single vortex of width $p$, attached to a face $\vec{F}$ of $\vec{G}$. Let $H$ and $F$ be the symmetrization of $\vec{H}$ and $\vec{F}$ respectively. Let $A \subseteq V(G)$ with $|A| = a$ be the set of apices of $G$, where $\Gamma = G \setminus (A \cup H)$ is a graph of genus $g$. Let $\vec{W}$ be a walk in $\vec{G}$ visiting all vertices in $V(\vec{H})$ and let $W$ be its symmetrization.
Let $z:E(G)\to \mathbb{R}_{\geq 0}$ be $\alpha$-thick, for some $\alpha\geq 2$, and $\vec{W}$-dense. Let $G'$ be the graph obtained by contracting $F$ to a single vertex $v^*$ in $G \setminus H$.

We follow a similar algorithm as in Section \ref{sec:thin_forests_higher_genus} to find a $O(a \cdot g + p^2)$-thin spanning forest $S$ in $G'$, with at most $O(a+g)$ connected components. We modify the algorithm such that $S$ is a $O(a \cdot g + p^2)$-thin subgraph of $G$.

We first start with $\Gamma$ and construct $\Gamma'$ the same way as in Section \ref{sec:thin_forests_higher_genus}. For each connected component $C$ of $\Gamma'$, we want to find a $O(p^2)$-thin spanning forest $T_C$, with at most $g$ connected components. Let $C_{v^*}$ be the connected component of $\Gamma'$ with $v^* \in C_{v^*}$. $C_{v^*}$ is a graph of genus at most $g$. For this component, we apply the modified ribbon-contraction argument on Subsection \ref{subsec:planar} to find $T_{C_{v^*}}$. Therefore, $T_{C_{v^*}}$ is a $O(p^2)$-thin spanning forest in $C_{v^*}$ with at most $g$ connected components. The rest of the algorithm is the same as in Section \ref{sec:thin_forests_higher_genus} and we find a $O(a\cdot g + p^2)$-thin spanning forest $S$ in $G'$, with at most $O(a+g)$ connected components. Let $T = S \cup W$.

\begin{lemma}\label{T_is_spanning_(0,g,1,p)-case}
$T$ is a spanning subgraph of $G$ with at most $O(a+g)$ connected components.
\end{lemma}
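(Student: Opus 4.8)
The plan is to deduce the statement from two ingredients, mirroring the proof of Lemma~\ref{lem:T_is_spanning_forest_g-components} for the $(0,g,1,p)$ case but inserting the apex bookkeeping of Section~\ref{sec:thin_forests_higher_genus}. First I would show that the subgraph $S$ produced in $G'$ is a \emph{spanning} subgraph of $G'$ with $O(a+g)$ connected components. Second I would show that un-contracting $v^*$ back to the face $F$ and taking the union with $W$ neither drops a vertex nor increases the number of connected components. Combining the two gives that $T=S\cup W$ is a spanning subgraph of $G$ with $O(a+g)$ components.

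For the first ingredient, recall that by construction $S=T'\cup\bigcup_{C}E(T_C)$, where $C$ ranges over the connected components of $\Gamma'$ and $T'$ is the forest on $G'$ obtained by contracting components into apices exactly as in Section~\ref{sec:thin_forests_higher_genus}. Each $T_C$ is a spanning subgraph of $C$: for $C\neq C_{v^*}$ this is the output of Theorem~\ref{tim:result_of_genus_g}, with at most $g_C$ components where $g_C$ is the genus of $\Gamma[C]$; for $C=C_{v^*}$ it is the output of the modified ribbon-contraction argument of Subsection~\ref{subsec:planar} (see the proof of Lemma~\ref{lem:T_is_spanning_forest_g-components}), with $O(g)$ components. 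Since the genera of the components of $\Gamma'$ sum to at most $g$, the $T_C$'s contribute $O(g)$ components in total, and $T'$ links these into the $a$ apices, so Lemma~\ref{T'_has_k+10g_components} gives that $T'$ has at most $a+10g$ components. Hence $S$ spans $V(G')$ and has $O(a+g)$ connected components; this is the direct analogue of Lemma~\ref{lem:T_is_spanning_forest_genus_g}.

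For the second ingredient, note $S\subseteq G\setminus H$, so un-contracting $v^*$ into the vertex set $V(F)$ turns $S$ into a subgraph of $G$ containing every vertex of $V(G\setminus H)$; together with $W$, whose vertex set is $V(H)\supseteq V(F)$ (the path decomposition $\{B_v\}_{v\in V(F)}$ of $H$ forces $V(F)\subseteq V(H)$), we get $V(S)\cup V(W)=V(G)$, so $T=S\cup W$ is spanning. For the component count, let $K$ be the component of $S$ in $G'$ containing $v^*$; un-contracting $v^*$ splits $K$ into pieces, but each piece inherits from $K$ an edge of $S$ incident to $v^*$, and that edge, viewed in $G$, joins the piece to some vertex of $V(F)$. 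Thus every such piece, as well as every isolated vertex of $V(F)$, contains a vertex of $V(F)$; and since $W$ is a \emph{connected} subgraph with $V(W)=V(H)\supseteq V(F)$, all these pieces together with all of $V(H)$ lie in a single connected component of $T$. The remaining components of $S$ are disjoint from $V(H)=V(W)$ and are therefore unaffected, so $T$ has at most as many components as $S$ had in $G'$, i.e.\ $O(a+g)$.

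The step I expect to be the main obstacle is this last one: one has to check carefully that contracting and then un-contracting $v^*$ does not silently lose a vertex of $F$, and — crucially — that the fragments into which the $v^*$-component of $S$ breaks are all re-glued by $W$. This is exactly where the connectivity of the walk $W$ and the containment $V(F)\subseteq V(H)$ are used, and both must be invoked explicitly; the rest is the routine accounting already carried out in Sections~\ref{sec:thin_forests_higher_genus} and~\ref{subsec:planar}.
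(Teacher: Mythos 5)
Your proposal is correct and follows essentially the same route as the paper: the paper's proof simply invokes the argument of Lemma~\ref{lem:T_is_spanning_forest_g-components}, noting that $S$ has $O(a+g)$ components in $G'$ and that $W$, being a closed walk visiting all of $H$, places all of $V(F)$ in one component of $T$, so un-contracting $v^*$ does not increase the count. Your more detailed accounting of how the $v^*$-component fragments and is re-glued by $W$ is a faithful (if more explicit) version of that same argument.
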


\begin{proof}
The same proof as in Lemma \ref{lem:T_is_spanning_forest_g-components} applies here. The only difference here is that $S$ has at most $O(a+g)$ connected components in $G'$. Therefore, $T$ has at most $O(a+g)$ connected components in $G$.
\end{proof}

\begin{lemma}\label{lem:T'_is_thin_(a,g,1,p)-case}
$S$ is a $O(a \cdot g + p^2)$-thin subgraph of $G$ (w.r.t.~$z$).
\end{lemma}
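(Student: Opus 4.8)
The plan is to combine the analysis of the thin-forest construction for genus-$g$ graphs with $a$ apices (Section~\ref{sec:thin_forests_higher_genus}) with the bound furnished by the modified ribbon-contraction argument of Subsection~\ref{subsec:planar}. Recall that $S$ is obtained exactly as in Section~\ref{sec:thin_forests_higher_genus}, except that the component $C_{v^*}$ of $\Gamma'$ containing the contracted face vertex $v^*$ receives its thin subgraph $T_{C_{v^*}}$ from the modified ribbon-contraction argument rather than from Theorem~\ref{tim:result_of_genus_g}. So fix a cut $U\subseteq V(G)$ and, following Section~\ref{sec:thin_forests_higher_genus}, split $E(S)\cap\delta(U)$ into the edges with an endpoint in $A$, the edges lying inside some $T_C$ (both endpoints in a single component of $\Gamma'$), and the edges of $T'$ (endpoints in distinct components of $\Gamma'$); the last set is split further into $T_{31},T_{32},T_{33}$ as before.

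First I would observe that every part of this partition \emph{except} the edges of $T_{C_{v^*}}$ is produced by the verbatim procedure of Section~\ref{sec:thin_forests_higher_genus}: the tiny-cut threshold is $1/(1000\,a g)$, the graph of components $F$ still has Euler genus at most $g$, and the notions of $\mya_i$-heavy, originally $\mya_i$-heavy, and parent are unchanged. Hence the charging arguments of Lemmas~\ref{lem:first_4_are_the_same} and~\ref{lem:T_33_is_thin_genus-g} apply without change and bound the total size of those parts by $O(a g)\cdot z(\delta(U))$; here the contribution $\sum_{C\neq C_{v^*}} |E(T_C)\cap\delta(U)|$ is handled using that each such $T_C$ is $O(1)$-thin with respect to $z$ rescaled by the tiny-cut threshold (Theorem~\ref{tim:result_of_genus_g}), so that $|E(T_C)\cap\delta(U)| \le O(a g)\cdot z(\delta(U)\cap E(C))$, and these bounds telescope over the components. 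It then remains only to bound $|E(T_{C_{v^*}})\cap\delta(U)|$, and here I would invoke Lemma~\ref{T'_is_thin_in_G} (equivalently, the $O(p^2)$-thinness conclusion of Lemma~\ref{lem:main_Lemma_(0,g,1,p)-case}) for the $(0,g,1,p)$-nearly embeddable instance consisting of $C_{v^*}$ together with the attached vortex $\vec{H}$ and the walk $\vec{W}$: since $z$ is $\vec{W}$-dense, this yields $|E(T_{C_{v^*}})\cap\delta(U)| \le O(p^2)\cdot z(\delta(U))$, despite the possibly huge $z$-weight of the ribbon of parallel edges incident to $v^*$. Summing the two estimates gives $|E(S)\cap\delta(U)| \le O(a g + p^2)\cdot z(\delta(U))$ for every $U$, which is the assertion.

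The step I expect to be the main obstacle is the last one: reconciling the thickness hypotheses at the interface between the two regimes. The tiny-cut partitioning inherited from Section~\ref{sec:thin_forests_higher_genus} only guarantees that $z$ restricted to $C_{v^*}$ is $1/(1000\,a g)$-thick, whereas the ribbon-contraction argument of Subsection~\ref{subsec:planar} is stated for a weight function that is $2$-thick (and $\vec{W}$-dense). One therefore has to argue that, when run on $C_{v^*}$, the ribbon argument uses the $2$-thickness only for the ``outer'' central edges $S_1$ (which have already been charged with the correct $\Theta(a g)$ factor by the genus/apex machinery), while the control of the many heavy edges incident to $F$ --- Lemmas~\ref{lem:i-load_of_W'}--\ref{lem:i-load_of_C_max} --- rests solely on the $\vec{W}$-density of $z$, whose constants are unaffected by rescaling. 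Verifying that the two contributions consequently stay \emph{additive}, rather than multiplying to $O(a g\cdot p^2)$, is the delicate point; the remaining bookkeeping of connected components follows from Lemma~\ref{T_is_spanning_(0,g,1,p)-case} and Theorem~\ref{tim:result_of_genus_g}.
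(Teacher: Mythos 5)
Your proof follows essentially the same route as the paper's: split $\delta(U)\cap E(S)$ so that everything outside the ribbon-contracted component $C_{v^*}$ is charged via the $O(a\cdot g)$ machinery of Section~\ref{sec:thin_forests_higher_genus} and the part inside $C_{v^*}$ is charged via Lemma~\ref{T'_is_thin_in_G}, giving an additive $O(a\cdot g+p^2)$ bound; the paper organizes the split by $V(F)$-incidence (its $S_1$, $S_{21}$, $S_{22}$, $S_{23}$, bounding $S_{21}$ and $S_{22}$ by absolute constants) rather than by your $T_1/T_2/T_3$ classification, but the charging is the same. The thickness-interface subtlety you flag at the end is a genuine gap in the write-up, but it is equally present in the paper's own proof, which simply invokes Lemma~\ref{T'_is_thin_in_G} for $S_{23}$ without comment.
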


\ifabstract
The proof of Lemma \ref{lem:T'_is_thin_(a,g,1,p)-case} is deferred to Section \ref{sec:app:T'_is_thin_(a,g,1,p)-case}.
\fi
\iffull
\begin{proof}
Let $U \subseteq V(G)$ be a cut. Similar to Subsection \ref{subsec:planar}, we partition $S \cap \delta(U)$ into two subsets.

\begin{description}
\item{(1)}
$S_1 = \{ \{u, v\} \in S \cap \delta(U) : u, v \not\in V(F)\}$.

\item{(2)}
$S_2 = \{ \{u, v\} \in S \cap \delta(U) : v \in V(F)\}$.
\end{description}

First, by the construction and Lemma \ref{T_is_thin_and_spanning_genus-g}, we have $|S_1| \leq 24000 \cdot a \cdot g \cdot z(\delta(U))$. Now we partition $S_2$ into three subsets.

\begin{description}
\item{(1)}
$S_{21} = \{ \{a_j, v\} \in S_2 : a_j \in A, v \in V(F)\}$.

\item{(2)}
$S_{22} = \{ \{u, v\} \in S_2 : v \in V(F), u \text{ and } v \text{ are in different components of } \Gamma'\}$.

\item{(3)}
$S_{23} = \{ \{u, v\} \in S_2 : v \in V(F), u, v \in C_{v^*}\}$. 
\end{description}

By the construction, we know that $|S_{21}| \leq 1$ and $|S_{22}| \leq 7$. Also, by Lemma \ref{T'_is_thin_in_G} we have $|S_{23}| \leq 20 \cdot p^2 \cdot z'(\delta(U))$. Therefore, we have $|S \cap \delta(U)| \leq 8(24000 \cdot a \cdot g + 20 p^2) z(\delta(U))$.
\end{proof}
\fi

\begin{lemma}\label{lem:T_is_thin_(a,g,1,p)-case}
$T$ is a $O(a\cdot g + p^2)$-thin subgraph of $G$ (w.r.t.~$z$).
\end{lemma}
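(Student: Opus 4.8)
The plan is to combine the two thinness estimates we already have: the bound on $S$ and the bound on $W$, and then add them. First I would recall that $T = S \cup W$, so for any cut $U \subseteq V(G)$ we have $|E(T) \cap \delta(U)| \leq |E(S) \cap \delta(U)| + |E(W) \cap \delta(U)|$, and it suffices to bound each term by $O(a\cdot g + p^2)\cdot z(\delta(U))$.

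The term $|E(S)\cap\delta(U)|$ is handled directly by Lemma \ref{lem:T'_is_thin_(a,g,1,p)-case}, which gives $|E(S)\cap\delta(U)| \leq 8(24000\, a g + 20 p^2)\, z(\delta(U)) = O(a\cdot g + p^2)\, z(\delta(U))$. For the term $|E(W)\cap\delta(U)|$, I would use the hypothesis that $z$ is $\vec{W}$-dense: every edge $e = \{u,v\} \in E(W)$ satisfies $z(e) \geq 1$. Hence for any cut $U$,
\[
|E(W)\cap\delta(U)| = \sum_{e\in E(W)\cap\delta(U)} 1 \leq \sum_{e\in E(W)\cap\delta(U)} z(e) \leq z(\delta(U)),
\]
so $W$ is $1$-thin w.r.t.~$z$, which is certainly $O(a\cdot g + p^2)$-thin since $a,g,p \geq 0$ and $p \geq 1$.

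Adding the two bounds, for every $U \subseteq V(G)$ we get $|E(T)\cap\delta(U)| \leq \left(8(24000\, a g + 20 p^2) + 1\right) z(\delta(U)) = O(a\cdot g + p^2)\, z(\delta(U))$, which is exactly the claim. There is essentially no obstacle here: the only mild point is making sure the two edge sets $E(S)$ and $E(W)$ are treated as (multi)sets so the union bound is literally valid — but since $S \subseteq G\setminus H$ while $W$ may use edges of $H$ as well as of $G \setminus H$, counting $|E(S)\cap\delta(U)| + |E(W)\cap\delta(U)|$ is a valid (in fact slightly wasteful) upper bound on $|E(T)\cap\delta(U)|$ regardless of overlap. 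All the real work has already been done in Lemma \ref{lem:T'_is_thin_(a,g,1,p)-case} and in the definition of $\vec{W}$-density, so this lemma is just the bookkeeping step that packages them together, and it feeds directly into the proof of Lemma \ref{lem:main_Lemma_(a,g,1,p)-case} together with Lemma \ref{T_is_spanning_(0,g,1,p)-case}.
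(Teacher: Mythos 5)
Your proof is correct and follows the same route as the paper: the paper's own argument is just the one-line observation that $S$ is $O(a\cdot g+p^2)$-thin by Lemma \ref{lem:T'_is_thin_(a,g,1,p)-case} and that $\vec{W}$-density of $z$ makes $W$ (at worst) $1$-thin, so the union stays $O(a\cdot g+p^2)$-thin. You have simply written out the bookkeeping that the paper leaves implicit.
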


\begin{proof}
By Lemma \ref{lem:T'_is_thin_(a,g,1,p)-case} we know that $S$ is a $O(a\cdot g + p^2)$-thin subgraph of $G$ (w.r.t.~$z$). Now note that $z$ is $\vec{W}$-dense. Therefore, $T = S \cup W$ is a $O(a\cdot g + p^2)$-thin subgraph of $G$ (w.r.t.~$z$).
\end{proof}

We are now ready to prove the main result of this Section.

\begin{proof}[Proof of Lemma \ref{lem:main_Lemma_(a,g,1,p)-case}]
It follows by Lemmas \ref{T_is_spanning_(0,g,1,p)-case} and \ref{lem:T_is_thin_(a,g,1,p)-case}.
\end{proof}

\fi

\ifappendix

\ifabstract
\section{Proofs omitted from Section \ref{sec:main_Lemma_(a,g,1,p)-case}}
\label{sec:app:T'_is_thin_(a,g,1,p)-case}

\begin{proof}[Proof of Lemma \ref{lem:T'_is_thin_(a,g,1,p)-case}]
Let $U \subseteq V(G)$ be a cut. Similar to Subsection \ref{subsec:planar}, we partition $S \cap \delta(U)$ into two subsets.

\begin{description}
\item{(1)}
$S_1 = \{ \{u, v\} \in S \cap \delta(U) : u, v \not\in V(F)\}$.

\item{(2)}
$S_2 = \{ \{u, v\} \in S \cap \delta(U) : v \in V(F)\}$.
\end{description}

First, by the construction and Lemma \ref{T_is_thin_and_spanning_genus-g}, we have $|S_1| \leq 24000 \cdot a \cdot g \cdot z(\delta(U))$. Now we partition $S_2$ into three subsets.

\begin{description}
\item{(1)}
$S_{21} = \{ \{a_j, v\} \in S_2 : a_j \in A, v \in V(F)\}$.

\item{(2)}
$S_{22} = \{ \{u, v\} \in S_2 : v \in V(F), u \text{ and } v \text{ are in different components of } \Gamma'\}$.

\item{(3)}
$S_{23} = \{ \{u, v\} \in S_2 : v \in V(F), u, v \in C_{v^*}\}$. 
\end{description}

By the construction, we know that $|S_{21}| \leq 1$ and $|S_{22}| \leq 7$. Also, by Lemma \ref{T'_is_thin_in_G} we have $|S_{23}| \leq 20 \cdot p^2 \cdot z'(\delta(U))$. Therefore, we have $|S \cap \delta(U)| \leq 8(24000 \cdot a \cdot g + 20 p^2) z(\delta(U))$.
\end{proof}
\fi

\fi

\iffull

\section{A preprocessing step for the dynamic program}
\label{sec:normalization}

\begin{definition}[Facial normalization]
Let $g\geq 0$, $p\geq 1$ and let $\vG$ be a $(0,g,1,p)$-nearly embeddable graph.
Let $\vF$ be the face on which the vortex is attached.
We say that $\vG$ is \emph{facially normalized} if the symmetrization of $\vF$ is a simple cycle and every $v\in V(\vF)$ has at most one incident edge that is not in $E(\vF)$.
\end{definition}

\begin{lemma}\label{lem:facial_normalization}
Let $g\geq 0$, $p\geq 1$ and let $\vG$ be a $(0,g,1,p)$-nearly embeddable graph.
There exists a polynomial-time computable $(0,g,1,p)$-nearly embeddable facially normalized graph $\vG'$ such that the following holds.
Let $\vH$ be the vortex in $\vG$ and let $\vH'$ be the vortex in $\vG'$.
Then $\OPT_{\vG}(V(\vH)) = \OPT_{\vG'}(V(\vH'))$.
Moreover there exists a polynomial-time algorithm which given any closed walk $\vW'$ in $\vG'$ that visits all vertices in $V(\vH')$, outputs some closed walk $\vW$ in $\vG$ that visits all vertices in $V(\vH)$ with $\cost_{\vG}(\vW)=\cost_{\vG'}(\vW')$.
\end{lemma}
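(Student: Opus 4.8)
The plan is to construct $\vG'$ by a sequence of local modifications that replace the face boundary $\vF$ with a simple cycle and "spread out" the edges incident to the boundary so that each boundary vertex touches at most one non-boundary edge, while showing that none of these modifications changes the relevant optimum or the realizable walk costs. First I would handle the case where the symmetrization of $\vF$ fails to be a simple cycle: a vertex $v$ may appear multiple times along the facial walk, or an edge may be traversed twice. For each such vertex $v$ appearing $t\geq 2$ times on the facial walk, I would split $v$ into $t$ copies $v^{(1)},\dots,v^{(t)}$ arranged consecutively along the new face boundary, joined in a little path by new zero-length edges, and distribute the incident edges (both the facial edges and the vortex edges, guided by the path decomposition $\{B_v\}$) among the copies in the cyclic order dictated by the rotation system; the vortex bag $B_v$ is replicated across all copies, keeping the width at most $p$ since a bag of size at most $p+1$ is just repeated. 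Because the new internal edges have length zero and $v$ is still reachable (all copies are identified in the shortest-path metric via the zero-length path), the shortest-path distances between original vertices are preserved, $\OPT$ over the vortex vertex set is unchanged, and any walk in one graph maps to a walk of equal cost in the other by inserting/deleting traversals of the zero-length edges.

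Next I would enforce the second condition: every $v\in V(\vF)$ has at most one incident non-facial edge. If $v$ has $\ell\geq 2$ incident edges not in $E(\vF)$, I would subdivide the two facial edges at $v$ — or rather, replace $v$ by a short path $v=v_1,v_2,\dots,v_\ell$ of new zero-length edges lying along the boundary, again consistent with the embedding's rotation at $v$ — and reattach the non-facial edges one per new copy, updating the path decomposition of the vortex $\vH$ to include the extra copies in the corresponding bags (width stays $\leq p$ by the same replication argument). The same reasoning as before shows the shortest-path metric on the original vertex set is preserved and that closed walks transfer between $\vG$ and $\vG'$ at equal cost, establishing $\OPT_{\vG}(V(\vH))=\OPT_{\vG'}(V(\vH'))$ and the cost-preserving walk-transformation claim. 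Finally I would note that all these operations are clearly polynomial-time: the number of splits is bounded by $\sum_{v} (\text{multiplicity of }v) + \sum_v \deg(v) = O(|E(\vG)|)$, and each split is a constant-size local surgery on the embedding.

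The main obstacle I expect is bookkeeping rather than conceptual difficulty: one must verify carefully that the surgeries are simultaneously compatible with (i) the surface embedding of $\vG\setminus\vH$ (the rotation system must be respected so that $\vF$ remains a face of genus $g$ and no genus is created), (ii) the path-decomposition structure of the vortex and its attachment to $\vF$ in the prescribed linear order, and (iii) the shortest-path/length conventions from Section~\ref{sec:notation}, in particular re-establishing that every edge length equals the shortest-path distance between its endpoints after adding zero-length edges. One subtle point is that after subdividing we could in principle create new shortcuts between previously distant original vertices; this does not happen because every new edge has length zero and connects copies of a single original vertex, so no original pair gets a shorter connection. I would package the two reductions (simple-cycle boundary, then degree condition) as separate claims and then compose them, noting that the second reduction preserves the simple-cycle property achieved by the first since it only inserts zero-length edges along the already-simple boundary cycle.
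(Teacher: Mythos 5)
Your overall strategy --- two rounds of zero-length local surgery, first making the boundary of $\vF$ a simple cycle and then ensuring each boundary vertex has at most one non-facial edge --- is the same as the paper's, and your second stage is essentially identical to the paper's (replace $v$ by a zero-length path of copies, one per non-facial neighbor). The first stage differs in mechanism: the paper decomposes $\partial F$ into simple cycles joined by simple paths (the segments the facial walk traverses twice) and \emph{duplicates each such path}, adding zero-length rungs between the original and the copy; you instead split each repeated vertex into one copy per occurrence. As written, your version has a genuine gap: vertex splitting alone does not make the facial walk simple when an \emph{edge} of $\partial F$ is traversed twice, i.e.\ when $F$ lies on both sides of it. You name this case but your surgery never touches it. Concretely, if the walk reads $\ldots,a,v,u,v,b,\ldots$ with $u$ of degree one, the two corners of $F$ at $v$ share the edge $\{v,u\}$; after splitting $v$ that edge can be assigned to only one copy, yet the new walk needs it at both, so the walk remains non-simple (and there is no non-$F$ corner at $v$ through which to route the zero-length edge joining the two copies). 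Such bridge-like segments are not exotic --- they are exactly what the paper's path-duplication is built to handle --- so you must additionally duplicate every doubly-traversed edge into a parallel pair, at which point your construction essentially collapses into the paper's.

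Two further points need repair. First, the $t$ copies of a repeated vertex $v$ are \emph{not} ``arranged consecutively along the new face boundary'': they must sit at the positions of the $t$ occurrences of $v$ on the facial walk, which are in general far apart along the walk. What is true is that they are consecutive in the rotation at $v$, and you need to argue that the connecting zero-length path can be routed through the non-$F$ corners at $v$ so that $\vF$ survives as a single face of the same surface. Second, replicating the bag $B_v$ at all $t$ copies can destroy the path decomposition of the vortex: since the copies are not consecutive along the cycle, a vertex of $B_v$ would then lie in a non-contiguous set of bags, violating the interval property. The fix is to keep $B_v$ and the vortex edges at a single designated copy and give the remaining copies trivial bags, which is what the paper implicitly does by leaving all bags on the original path and attaching nothing to the duplicated one. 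Your observations about distance preservation (zero-length edges only connect copies of a single original vertex, so no new shortcuts arise) and about transferring walks in both directions are correct and match the paper.
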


\begin{proof}
Let $\vF$ be the face in $\vG$ that $\vH$ is attached to. Let $F$ be the symmetrization of $\vF$. We first construct a $(0,g,1,p)$-nearly embeddable graph $\vG''$, with a vortex $\vH''$ attached to a face $\vF''$ such that the symmetrization of $\vF''$ is a simple cycle. Initially we set $\vG'' = \vG$.
If $\vF$ is a simple cycle then there is nothing to be done.
Otherwise, suppose that $\vF$ is not a simple cycle. Therefore, $\partial F$ contains a family of simple cycles ${\cal C} = \{C_1, \ldots, C_k\}$ for some $k$, and a family of simple paths ${\cal P} =\{P_1, \ldots, P_l\}$ for some $l$, such that every $P_i \in {\cal P}$ is a path $x_1,x_2, \ldots, x_m$ where $x_1 \in C_\alpha$ and $x_m \in C_\beta$ for some $C_\alpha, C_\beta \in \cal{C}$. We allow ${\cal P}$ to contain paths of length $0$.

For every $P = x_1, x_2, \ldots, x_m \in \cal{P}$, where $x_1 \in C$ and $x_m \in C'$ for some $C, C' \in \cal{C}$, we update $\vG''$ as follows. Let $x'_1 \in V(C)$ and $x'_m \in V(C')$ be neighbors of $x_1$ and $x_m$. We first duplicate $P$ to get a new path $P' = y_1, y_2, \ldots, y_m$. For every edge $e \in E(P)$, we set the cost of the corresponding edge $e' \in P'$ equal to the cost of $e$. Also, for every $j \in \{1,\ldots, m\}$, we add two edges $(x_i , y_i)$ and $(y_i, x_i)$ to $\vG''$ with $\cost_{\vG''}(x_i, y_i) = \cost_{\vG''}(y_i, x_i) = 0$. Also, we delete edges $(x_1,x_1')$, $(x_1',x_1)$, $(x_m,x_m')$, and $(x_m',x_m)$ and we add edges $(y_1,x_1')$, $(x_1',y_1)$, $(y_m,x_m')$ and $(x_m',y_m)$ with the same cost respectively.

\begin{center}
\scalebox{0.85}{\includegraphics{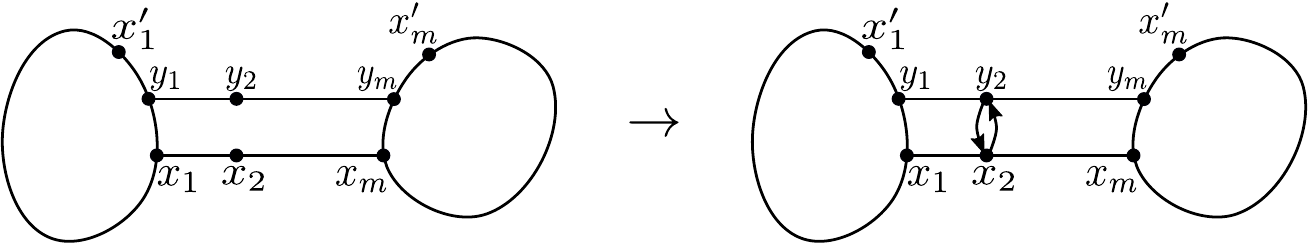}}
\end{center}

By the construction, $\vG''$ is a $(0,g,1,p)$-nearly embeddable graph, such that $\vF''$ is a simple cycle. Also suppose that $\vW''$ is a closed walk in $\vG''$ that visits all vertices in $V(\vH'')$. Then we can find a closed walk $\vW$ in $\vG$ that visits all vertices in $V(\vH)$ with $\cost_{\vG}(\vW)=\cost_{\vG''}(\vW'')$.

Now we construct a facially normalized graph $\vG'$. Initially we set $\vG' = \vG''$. For every $v \in V(\vF'')$ that has more than one incident edge in $E(\vG'') \setminus E(\vF'')$ we update $\vG'$ as follows. Let $v_{\myleft} , v_{\myright} \in V(\vF'')$ be the left and right neighbors of $v$ on $\vF''$. Let ${\cal V} = \{v_1,\ldots, v_m\} $ be the set of all neighbors of $v$ in $V(\vG'') \setminus V(\vF'')$. Let ${\cal V}' = \{v'_1, \ldots, v'_m\}$. First we delete $v$ from $\vG'$ and we add ${\cal V}'$ to $V(\vG')$. For every $(v,v_i) \in E(\vG'')$ we add $(v'_i,v_i)$ to $E(\vG')$ with $\cost_{\vG'}(v'_i,v_i)=\cost_{\vG''}(v,v_i)$. Also, for every $j \in \{1, \ldots, m-1\}$, we add $(v'_j,v'_{j+1})$ and $(v'_{j+1}, v'_j)$ to $E(\vG')$ with $\cost_{\vG'}(v'_j,v_{j+1}) = 0$. Finally we add $(v'_1, v_{\myleft})$, $(v_{\myleft}, v'_1)$, $(v'_m,v_{\myright})$ and $(v_{\myright}, v'_m)$ to $\vG'$ with the same costs as in $\vG''$.

\begin{center}
\scalebox{0.8}{\includegraphics{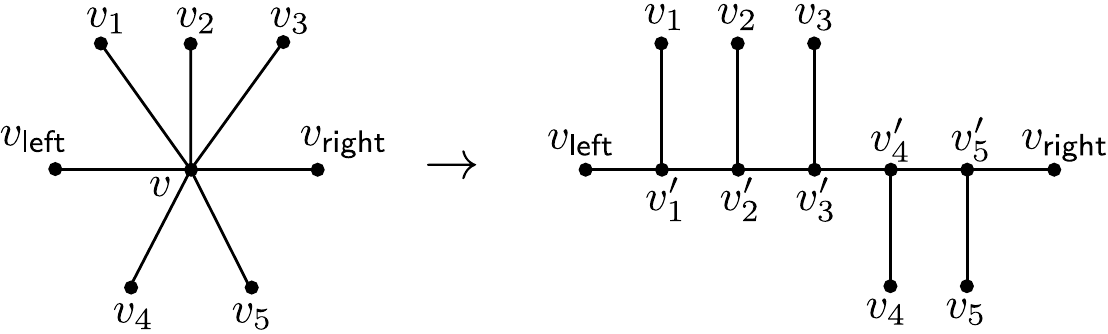}}
\end{center}

It is immediate that $\vG'$ is the desired graph.
\end{proof}

\begin{lemma}\label{lem:cross_normalization}
Let $g\geq 0$, $p\geq 1$ and let $\vG$ be a $(0,g,1,p)$-nearly embeddable graph.
There exists a polynomial-time computable $(0,g,1,p)$-nearly embeddable facially normalized graph $\vG''$ such that the following conditions hold:
\begin{description}
\item{(1)}
Let $\vH$ be the vortex in $\vG$ and let $\vH''$ be the vortex in $\vG''$.
Then $\OPT_{\vG}(V(\vH)) = \OPT_{\vG'}(V(\vH''))$.
\item{(2)}
There exists a polynomial-time algorithm which given any closed walk $\vW''$ in $\vG''$ that visits all vertices in $V(\vH'')$, outputs some closed walk $\vW$ in $\vG$ that visits all vertices in $V(\vH)$ with $\cost_{\vG}(\vW)=\cost_{\vG''}(\vW'')$.
\item{(3)}
Let $\vec{\Gamma}$ be the genus-$g$ piece of $\vG''$.
Let $\vF''$ be the face of $\vec{\Gamma}$ on which the vortex $\vH''$ is attached.
Then any $v\in V(\vec{\Gamma})\setminus V(\vF'')$ has degree at most 4.
\item{(4)}
There exists some closed walk $\vW^*$ in $\vG''$ that visits all vertices in $V(\vH'')$, with $\cost_{\vG''}(\vW^*)=\OPT_{\vG''}(V(\vH''))$, and such that every edge in $\vec{\Gamma}$ is traversed at most once by $\vW^*$.
\end{description}
We say that a graph $\vG''$ satisfying the above conditions is \emph{cross normalized}.
\end{lemma}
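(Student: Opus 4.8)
The plan is to build $\vG''$ by composing, on top of the facially normalized graph produced by Lemma~\ref{lem:facial_normalization}, two further graph modifications, each of which (i) keeps the graph $(0,g,1,p)$-nearly embeddable and facially normalized, (ii) preserves the optimum cost of a closed walk visiting the vortex, and (iii) comes with a polynomial-time procedure converting such a walk into one in the previous graph of the same cost. Lemma~\ref{lem:facial_normalization} already gives us (1) and (2) for the facially normalized graph, so after the two extra modifications conditions (1), (2) and facial normalization follow by composing the walk translations, while (3) and (4) are the specific effects of the two new modifications. Write $\vec\Gamma$ for the genus-$g$ piece and $\vF''$ for the face carrying the vortex.

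For condition (3), I would reduce the degree of every vertex $v\in V(\vec\Gamma)\setminus V(\vF'')$: if $v$ has degree $d$, replace it by a path $v^1,\dots,v^d$ of new vertices joined by zero-length edges, reattaching the edges formerly incident to $v$ one per $v^i$ in the cyclic order prescribed by the rotation system at $v$. This is a local operation inside a small disk around $v$, so the embedding extends to the same surface and the genus does not change; zero-length edges preserve every shortest-path distance; $\vF''$ is untouched, so facial normalization survives; the new vertices are not vertices of the vortex, so the optimum cost of visiting $V(\vH'')$ is unchanged; and contracting the zero-length paths translates walks back. After processing all such $v$, every vertex of $\vec\Gamma$ outside $\vF''$ has degree at most $3$.

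For condition (4) --- the step I expect to be the crux --- I would first establish a structural fact about optimal walks in the degree-reduced graph, and then turn it into a graph modification. Take any closed walk $\vW$ visiting $V(\vH'')$ with $\cost_{\vG''}(\vW)=\OPT_{\vG''}(V(\vH''))$ that, among all optimal such walks, minimizes the number of traversals of edges of $\vec\Gamma$ (counted with multiplicity). I claim every edge of $\vec\Gamma$ is then traversed at most twice: a maximal sub-walk of $\vW$ that stays inside $\vec\Gamma$ between two vertices of $\vF''$ may be replaced by the unique shortest path between its endpoints without raising the cost, and if after this some edge of $\vec\Gamma$ is still traversed more than twice, one uses the fact that $c$ is a shortest-path metric (the preprocessing of Section~\ref{sec:notation}) together with the bounded degree from condition~(3) to reroute $\vW$ and strictly decrease either its cost or its traversal count, contradicting the choice of $\vW$. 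Given this, the second modification replaces each edge of $\vec\Gamma$ by two parallel copies of the same length; this only introduces bigon faces, hence changes neither the genus nor any distance, and it is trivially reversible. In the resulting graph $\vG''$ we reassign the (at most two) traversals of each original edge to its two copies, obtaining an optimal walk $\vW^*$ using every edge of $\vec\Gamma$ at most once.

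The delicate points are all in the last paragraph. The exchange argument must be carried out in the \emph{directed} setting, where a sub-walk cannot simply be reversed, so one has to show that repeated uses of a directed edge $(u,v)$ can be removed without disconnecting the walk or omitting a mandatory vortex vertex --- this is precisely where bounded degree and uniqueness of shortest paths enter. One must also treat edges of $\vF''$ with care, since naively duplicating them would give a $\vF''$-vertex two incident edges outside $E(\vF'')$ and destroy facial normalization; I would handle these either by subdividing the corresponding ``gateway'' edges before duplication, or by arguing directly that an optimal walk can be taken to cross each edge of $\vF''$ at most once, rerouting any excess crossings through the vortex.
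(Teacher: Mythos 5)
Your treatment of conditions (1)--(3) matches the paper: compose with Lemma~\ref{lem:facial_normalization}, then locally expand each high-degree internal vertex of $\vec{\Gamma}$ using zero-length edges (the paper uses a tree of maximum degree $4$ where you use a path; both work). The problem is condition (4), where your argument has a genuine gap.

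Your plan for (4) rests on the structural claim that an optimal walk, chosen to minimize the number of traversals of edges of $\vec{\Gamma}$, traverses each such edge at most twice. You do not prove this claim, and it appears to be false. After replacing each maximal excursion into $\vec{\Gamma}$ by the unique shortest path between its endpoints, an optimal walk can still consist of $\Omega(n)$ excursions between \emph{distinct} pairs of vertices of $\vF''$, and nothing prevents all of these shortest paths from sharing a common arc $(u,v)$ deep inside $\vec{\Gamma}$ (e.g., a cheap corridor through which every shortest path between the two ``sides'' of the face is forced). Each excursion is individually necessary and individually forced through $(u,v)$ by uniqueness of shortest paths, so no local exchange or rerouting reduces the multiplicity; swapping tails of two excursions that share $(u,v)$ preserves the traversal count. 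Bounded degree does not help either: a degree-$3$ vertex can still be entered and exited arbitrarily many times by a walk. So the reduction ``duplicate each edge into two parallel copies'' is not sufficient, and the crux of the lemma is missing. The paper avoids needing any such structural bound: it only uses the trivial observation that an optimal walk has at most $n^2$ edges, replaces each internal vertex of $\vec{\Gamma}$ by a $3n^2\times 3n^2$ grid of zero-length edges and each edge of $\vec{\Gamma}$ by a matching of size $3n^2$ between the corresponding grid sides, and then routes the $t$-th traversal of each original edge through the $t$-th matching edge, connecting consecutive traversals inside each grid by explicitly constructed pairwise edge-disjoint grid paths. If you want to salvage your approach, you would have to either prove a genuine multiplicity bound (which I do not believe holds) or switch to a blow-up whose size scales with the worst-case number of traversals, which is exactly the paper's construction.
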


\begin{proof} 
We begin with computing the facially normalized graph $\vG'$ given by Lemma \ref{lem:facial_normalization}.
Clearly $\vG'$ satisfies conditions (1) and (2).

We next modify $\vG'$ so that it also satisfies (3).
This can be done as follows.
Let $\vec{\Gamma}'$ be the genus-$g$ piece of $\vG'$ and let $\vF'$ be the face on which the vortex is attached.
We replace each $v\in V(\vec{\Gamma}')\setminus V(\vF')$ of degree $d>4$ by a tree $T_v$ with $d$ leaves and with maximum degree $4$; we replace each edge incident to $v$ an edge incident to a unique leaf, and we set the length of every edge in $E(T_v)$ to 0.

It remains to modify $\vG'$ so that it also satisfies (4).
Let $\vH'$ be the vortex in $\vG'$.
Let $\vW'$ be a walk in $\vG'$ that visits all vertices in $\vH'$ with $\cost_{\vG'}(\vW') = \OPT_{\vG'}(V(\vH'))$.
We may assume w.l.o.g.~that $\vW'$ contains at most $n^2$ edges.
Thus, every vertex in $v\in V(\vec{\Gamma}')\setminus V(\vF')$ is visited at most $n^2$ times by $\vW'$.
We replace each $v\in V(\vec{\Gamma}')\setminus V(\vF')$ by a grid $A_v$ of size $3n^2\times 3n^2$, with each edge having length 0.
Each edge incident to $v$ in $\vG'$, corresponds to a unique sides of $A_v$ so that the ordering of the sides agrees with the ordering of the edges around $v$ (in $\psi$).
We replace each $(u,v)\in E(\vec{\Gamma}')$, by a matching of size $3n^2$ between the corresponding sides of $A_u$ and $A_v$, where each edge in the matching has length equal to the length of $(u,v)$.
Let $\vG''$ be the resulting graph.

We obtain the desired walk $\vW^*$ in $\vG''$ as follows.
Let $x_1,\ldots,x_{4\ell}$ be the vertices in the boundary of $A_v$, with $\ell=3n^2-1$, appearing in this order along a clockwise traversal of $A_v$, and such that $x_1$ is the lower left corner.
Then for any $i\in \{1,\ldots,4\}$, the vertices $x_{(i-1)\ell+1},\ldots,x_{(i-1)\ell+n^2}$ correspond to the copies of $v$ on the $i$-th side of $A_v$.
We traverse $\vG'$ starting at some arbitrary vertex in $V(\vH')$, and we inductively construct the walk $\vW^*$.
We consider each edge $(u,v)$ in the order that it is traversed by $\vW'$.
Suppose that $(u,v)$ is the $t$-th edge traversed by $\vW'$, for some $t\in \{1,\ldots,n^2\}$.
For each $i \in \{1,\ldots,4\}$, we let the $t$-th copy of $v$ on the $i$-th side of $A_v$ to be $x_{(i-1)\ell + t}$.
We distinguish between the following cases:
(i) If $u,v\in V(\vH')$, then $(u,v)\in E(\vG'')$ and we simply traverse $(u,v)$ in $\vG''$.
(ii) If $u\in V(\vH')$ and $v\notin V(\vH')$ then we traverse the edge in $\vG''$ that connects $u$ to the $t$-th copy of $v$ in the appropriate side of $A_v$.
(iii) If $u\notin V(\vH')$ and $v\in V(\vH')$ then we traverse the edge in $\vG''$ that connects the $t$-th copy of $u$ in the appropriate side of $A_u$ to $v$.
(iv) If $u,v\notin V(\vH')$ then we traverse the edge in $\vG''$ that connects the $t$-th copy of $u$ to the $t$-th copy of $v$ in the appropriate sides of $A_u$ and $A_v$ respectively.
Finally, for any pair of consecutive edges $(u,v)$, $(v,w)$ traversed by $\vW'$, with $v\notin V(\vH')$, we need to add a path $P$ in $A_v$ connecting two copies of $v$ in the corresponding sides of $A_v$.
Since $A_v$ is a grid of size $3n^2\times 3n^2$ this can be done so that all these paths are edge-disjoint.
More precisely, this can be done as follows.
Suppose that $(u,v)$ is the $t$-th edge traversed by $\vW'$, for some $t \in \{1,\ldots,n^2\}$.
If $P$ connects vertices $x$ and $y$ in consecutive sides of $A_v$, then we proceed as follows. We may assume w.l.o.g~that $x = x_t$ and $y = x_{\ell+t+1}$, since other cases can be handled in a similar way. We set $P$ to be the unique path starting at $x_t$, following $t+1$ horizontal edges in $A_v$, and finally following $\ell - t$ vertical edges to $x_{\ell+t+1}$.
Otherwise, if $P$ connects vertices $x$ and $y$ in opposite sides of $A_v$, then we proceed as follows.
We may assume w.l.o.g~that $x=x_t$ and $y=x_{2\ell + t +1}$, since other cases can be handled in a similar way.
We set $P$ to be the unique path starting at $x_t$, following $t+n^2$ horizontal edges in $A_v$, and then following $\ell - 2t$ vertical edges, and finally following $\ell - t - n^2 - 1$ horizontal edges to $x_{2\ell + t +1}$.
\begin{center}
\scalebox{0.45}{\includegraphics{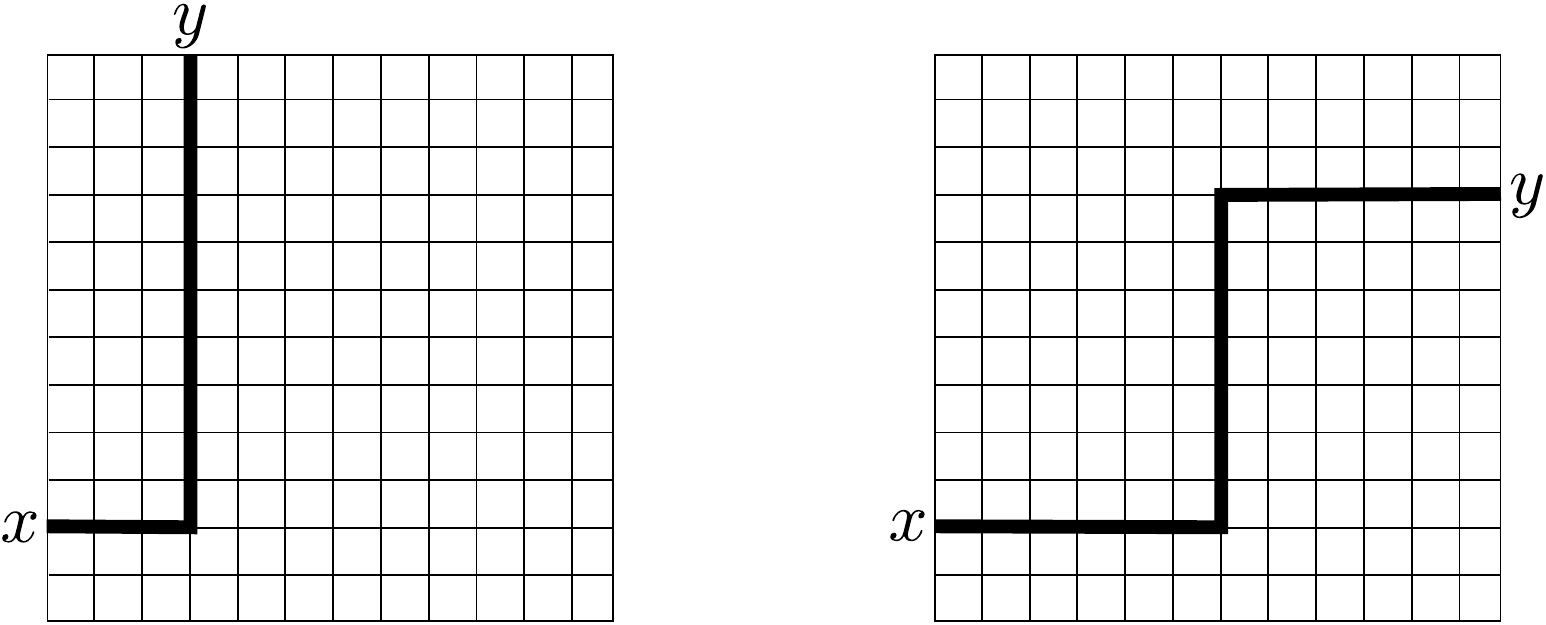}}
\end{center}
By the construction, it is immediate that all the paths $P$ constructed above are pairwise edge-disjoint, which implies that every edge in $E(\vG'')$ is visited by $\vW^*$ at most once, concluding the proof.
\end{proof}

\section{Uncrossing an optimal walk traversing a vortex}
\label{sec:uncrossing}

Let $g\geq 0$, $p\geq 1$ and
let $\vG$ be a $(0,g,1,p)$-nearly embeddable graph.
By Lemma \ref{lem:cross_normalization} we may assume w.l.o.g.~that $\vG$ is facially normalized and cross normalized.
Let $\vG'\subseteq \vG$ be the piece of genus $g$ and fix a drawing $\psi$ of $\vG'$ into a surface of genus $g$.
Let $\vH$ be the single vortex in $\vG$ and suppose that $\vH$ is attached to some face $\vF$ of $\vG'$.
Fix an optimal solution $\vW_{\OPT}$, that is a closed walk in $\vG$ that visits all vertices in $\vH$ minimizing $\cost_{\vG}(\vW)$;
if there are multiple such walks pick consistently one with a minimum number of edges.
Since $\vG$ is cross normalized we may assume w.l.o.g.~that $\vW_\OPT$ traverses every edge in $E(\vG')\setminus E(\vF)$ at most once.

\subsection{The structure of an optimal solution}

\begin{definition}[Shadow]
Let ${\cal W}$ be a collection of walks in $\vG$.
We define \emph{shadow} of ${\cal W}$ (w.r.t.~$\vG'$) to be the collection of open and closed walks obtained by restricting every walk in ${\cal W}$ on $\vG'$
(note that a walk in ${\cal W}$ can give rise to multiple walks in ${\cal W}'$, and every open walk in ${\cal W}'$ must have both endpoints in $\vF$).
\end{definition}

We say that two edje-disjoint paths $P$, $P'$ in $\vG'$ \emph{cross} (w.r.to $\psi$) if there exists $v\in V(P)\cap V(P')$ such that $v$ has degree 4 (recall that $\vG$ is cross normalized), with neighbors $u_1,\ldots,u_4$, such that the edges $\{v,u_1\},\ldots,\{v,u_4\}$ appear in this order around $v$ in the embedding $\psi$, $P$ contains the subpath $u_1,v,u_3$, and $P'$ contains the subpath $u_2,v,u_4$.
We say that two walks in $\vG'$ cross (at $v$, w.r.to $\psi$) if they contain crossing subpaths.
Finally, a walk is \emph{self-crossing} if it contains two disjoint crossing subpaths.

\begin{lemma}[Uncrossing an optimal walk of a vortex]\label{lem:uncrossing_walks_vortex}
There exists a collection ${\cal W}=\{\vec{W_1},\ldots,\vec{W_\ell}\}$ of closed walks in $\vG$ satisfying the following conditions:
\begin{description}
\item{(1)}
Every edge in $E(\vG')\setminus E(\vF)$ is traversed in total at most once by all the walks in ${\cal W}$.
\item{(2)}
$V(\vec{W_1})\cup \ldots \cup V(\vec{W_{\ell}})$ is a strongly-connected subgraph of $\vG$.
\item{(3)}
$V(\vH) \subseteq V(\vec{W_1}) \cup \ldots \cup V(\vec{W_{\ell}})$.
\item{(4)}
$\sum_{i=1}^{\ell} \cost_{\vG}(\vec{W_i}) \leq \OPT_{\vG}(V(\vH))$.
\item{(5)}
Let ${\cal W}'$ be the shadow of ${\cal W}$.
Then the walks in ${\cal W}'$ are non-self-crossing and pairwise non-crossing.
\end{description}
\end{lemma}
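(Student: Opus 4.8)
The plan is to obtain $\mathcal{W}$ from the single closed walk $\vW_{\OPT}$ by a finite sequence of purely local ``uncrossing'' moves, each of which leaves the multiset of traversed edges (with multiplicities) unchanged and strictly decreases the number of crossings in the shadow. Note first that $\mathcal{W}_0=\{\vW_{\OPT}\}$ already satisfies (1)--(4): (1) because $\vG$ is cross normalized, (2) because a single closed walk spans a strongly connected subgraph, (3) because $\vW_{\OPT}$ visits $V(\vH)$, and (4) with equality. Only (5) can fail. The structural fact to isolate is that a crossing of the shadow is always witnessed, at a vertex $v\in V(\vG')\setminus V(\vF)$ of degree $4$, by two \emph{passages} through $v$, i.e.\ two length-$2$ subpaths $u_i,v,u_j$; moreover all four edges incident to $v$ lie in $E(\vG')\setminus E(\vF)$, so by (1) the current collection traverses them at most once in total, whence $v$ carries at most two passages, using four pairwise distinct directed edges (two entering $v$, two leaving $v$). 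Exactly one of the two ways of matching the two in-edges of $v$ with the two out-edges of $v$ yields a non-crossing pair of passages; call it the \emph{good} matching.

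The moves are as follows. As long as the shadow has a crossing, fix a witnessing degree-$4$ vertex $v$; its two passages currently realize the bad matching. If both belong to the same walk $\vW_i$, write $\vW_i$ cyclically as the concatenation of its two lobes based at $v$, $\vW_i = v\,A\,v\,B$, and replace $\vW_i$ in $\mathcal{W}$ by the two closed walks $v\,A\,v$ and $v\,B\,v$; if the two passages belong to distinct walks $\vW_i=v\,A\,v$ and $\vW_j=v\,B\,v$, replace them by the single closed walk $v\,A\,v\,B$. In each case the transitions at $v$ are rewired to the good matching, no edge is ever reversed (every rewiring joins an edge entering $v$ to an edge leaving $v$), so the results are genuine directed closed walks; the multiset of traversed edges is unchanged; and the only passage that changes is the one at $v$, which is now non-crossing.

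For termination let $\Phi(\mathcal{W})$ be the number of degree-$4$ vertices of $\vG'$ whose two passages cross; clearly $\Phi\le n$. A passage through $v$ can only cross a passage through $v$ (a length-$2$ path can only witness a crossing at its midpoint), and the move above leaves every passage except the one at $v$ untouched; hence it deletes exactly the crossing at $v$ from the tally and creates no new one, so $\Phi$ strictly decreases and reaches $0$ after at most $n$ moves. At that point the shadow walks are non-self-crossing and pairwise non-crossing, giving (5); and since the multiset of traversed edges has equalled that of $\vW_{\OPT}$ throughout, the underlying edge set---hence its vertex set and its strong connectivity---and the total cost are unchanged, giving (1)--(4). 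The delicate point is the bookkeeping at $v$: one must check that the described move indeed installs the good matching (using that the four edges at $v$ are distinct and oriented two-in/two-out), that splitting a walk cannot secretly create a crossing between the two resulting walks (it cannot, since all other passages are literally the same subpaths), and that each move decreases $\Phi$ by exactly one rather than possibly increasing it.
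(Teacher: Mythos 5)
Your proof is correct and follows essentially the same route as the paper's: initialize with $\vW_{\OPT}$ (which already satisfies (1)--(4) by cross normalization), repeatedly rewire the two passages at a crossing degree-$4$ vertex (splitting or merging walks as needed), and terminate via a strictly decreasing count of crossings, noting that the traversed edge multiset---hence cost, coverage, and strong connectivity---never changes. If anything, your bookkeeping is more explicit than the paper's, which merely asserts that each swap reduces the number of crossings by at least one; your observation that condition (1) bounds the number of passages through the crossing vertex, so that the rewiring cannot create new crossings there, is exactly the justification that step needs.
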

\begin{proof}
Initially, we set ${\cal W}=\{\vW_\OPT\}$.
Recall that since $\vG$ is cross normalized, every edge in $E(\vG')\setminus E(\vF)$ is traversed at most once by $\vW_{\OPT}$.
Clearly, this choice of ${\cal W}$ satisfies conditions (1)--(4).
We proceed to iteratively modify ${\cal W}$ until condition (4) is also satisfied, while inductively maintaining (1)--(4).

Suppose that the current choice for ${\cal W}$ does not satisfy (5).
This means that either there exist two distinct crossing walks in ${\cal W}$, or there exists some self-crossing walk in ${\cal W}$.
In either case, it follows that there exist subpaths $P$, $P'$ of the walks in ${\cal W}$ that are crossing (w.r.to $\phi$).
This means that there exists $v\in V(P)\cap V(P')$ and $e_1,e_2\in E(P)$, $e_3,e_4\in E(P')$ such that $\psi(e_1)$, $\psi(e_4)$, $\psi(e_2)$, $\psi(e_3)$ appear in this order around $\psi(v)$.
We modify $P$ and $P'$ by swapping $e_1$ and $e_3$.
It is immediate that the above operation preserves conditions (1)--(4).
Moreover, after performing the operation, the total number of crossings and self-crossings (counted with multiplicities) between the walks in ${\cal W}$ decreases by at least one. Since the original number of crossings is finite, it follows that the process terminates after a finite number of iterations.
By the inductive condition, it is immediate that when the process terminates the collection ${\cal W}$ satisfies condition (5), concluding the proof.
\end{proof}




For the remainder of this section let ${\cal W}$ and ${\cal W}'$ be as in Lemma \ref{lem:uncrossing_walks_vortex}.
Let ${\cal I}$ be a graph with $V({\cal I})={\cal W}'$ and with
\[
E({\cal I}) = \left\{\{W,W'\}\in {{\cal W}' \choose 2} : V(W)\cap V(W')\neq \emptyset\right\}.
\]

Let $\vW,\vZ$ be distinct closed walks in some digraph, and let $v\in V(\vW)\cap V(\vZ)$.
Suppose that $\vW=x_1,\ldots,x_k,v,x_{k+1},\ldots,x_{k'},x_1$ and $\vZ=y_1,\ldots,y_{r},v,y_{r+1},\ldots,y_{r'},y_1$.
Let $\vS$ be the closed walk 
$x_1,\ldots,x_k,v,y_{r+1},\ldots,y_{r'},y_1,\ldots,y_r,v,x_{k+1},\ldots,x_{k'},x_1$.
We say that $\vS$ is obtained by \emph{shortcutting} $\vW$ and $\vZ$ (at $v$).

Let ${\cal J}$ be a subgraph of ${\cal I}$.
Let ${\cal W}^{\cal J}$ be a collection of walks in $\vG$ constructed inductively as follows.
Initially we set ${\cal W}^{\cal J}={\cal W}$.
We consider all $\{W,W'\}\in E({\cal J})$ in an arbitrary order.
Note that since $\{W,W'\}\in E({\cal J})$, it follows that $W$ and $W'$ cross at some $v\in V(\vG')$.
Let $R$ and $R'$ be the walks in ${\cal W}^{\cal J}$ such that $W$ and $W'$ are sub-walks of $R$ and $R'$ respectively.
If $R\neq R'$ then we replace $R$ and $R'$ in ${\cal W}^{\cal J}$ by the walk obtained by shortcutting $R$ and $R'$ at $v$.
This completes the construction of ${\cal W}^{\cal J}$.
We say ${\cal W}^{\cal J}$ is obtained by \emph{shortcutting ${\cal W}$ at ${\cal J}$}.

\begin{lemma}\label{lem:forest}
There exists some forest ${\cal F}$ in ${\cal I}$ such that the collection of walks obtained by shortcutting ${\cal W}$ at ${\cal F}$ contains a single walk.
\end{lemma}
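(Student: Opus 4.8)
The plan is to build $\mathcal{F}$ greedily, maintaining in parallel the collection of walks produced so far by shortcutting, and to read the forest property directly off the greedy rule. First I would record the basic structural fact about the shortcutting operation: if $\vS$ is obtained by shortcutting $\vW$ and $\vZ$ at a common vertex $v$, then $E(\vS)=E(\vW)\cup E(\vZ)$ and $V(\vS)=V(\vW)\cup V(\vZ)$, every edge of $\vW$ and of $\vZ$ being used exactly once. Hence for every subgraph $\mathcal{J}$ of $\mathcal{I}$, the subgraph of $\vG$ formed by the walks of $\mathcal{W}^{\mathcal{J}}$ is exactly $\bigcup_{i=1}^{\ell}\vec{W_i}$, which by Lemma~\ref{lem:uncrossing_walks_vortex}(2) is strongly connected, hence connected. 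Also note that the outcome of shortcutting at a fixed edge set does not depend on the order in which the edges are processed: the members of the resulting collection are in bijection with the connected components of the graph on $\mathcal{W}$ whose edges are the pairs of parents of the chosen edges of $\mathcal{I}$. So it suffices to produce an edge order realizing a single walk.

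Run the following greedy process. Start with $\mathcal{F}=\emptyset$ and the collection $\mathcal{W}^{\mathcal{F}}=\mathcal{W}$, and keep the invariant that every edge already in $\mathcal{F}$ joins two shadow walks that are sub-walks of the same member of $\mathcal{W}^{\mathcal{F}}$ (equivalently, the components of the graph $(\mathcal{W}',\mathcal{F})$ refine the partition of $\mathcal{W}'$ by which member of $\mathcal{W}^{\mathcal{F}}$ contains a given shadow walk). While $|\mathcal{W}^{\mathcal{F}}|\ge 2$, I claim there is an edge $\{U,U'\}\in E(\mathcal{I})$ with $U,U'$ sub-walks of two \emph{distinct} members of $\mathcal{W}^{\mathcal{F}}$; add it to $\mathcal{F}$ and perform the corresponding shortcut. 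By the invariant $U,U'$ lie in distinct components of $(\mathcal{W}',\mathcal{F})$ beforehand, so the addition closes no cycle and $\mathcal{F}$ stays a forest; the shortcut merges the two members containing $U$ and $U'$ into one walk (edges of $\mathcal{F}$ internal to either remain internal to the merger), so $|\mathcal{W}^{\mathcal{F}}|$ drops by one and the invariant persists. After $\ell-1$ steps $\mathcal{W}^{\mathcal{F}}$ is a single walk, and $\mathcal{F}$ is the required forest.

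It remains to prove the claim. By the paragraph above, while $|\mathcal{W}^{\mathcal{F}}|\ge 2$ the union of the members of $\mathcal{W}^{\mathcal{F}}$ equals the connected subgraph $\bigcup_i\vec{W_i}$, so it is not the disjoint union of the vertex sets of those members, and two distinct members $R_a,R_b$ share a vertex $v$. If $v$ lies on a shadow walk $U$ of $R_a$ and on a shadow walk $U'$ of $R_b$ then $\{U,U'\}\in E(\mathcal{I})$ and we are done. Otherwise some edge of $R_a$ or of $R_b$ at $v$ belongs to $\vH$ and not to $\vG'$, i.e.\ $v$ is interior to a vortex excursion of one of the two walks; here I would use that $\vG$ is facially and cross normalized, that the single vortex is attached to one face whose boundary $\vF$ is a simple cycle, that every excursion of any walk into the vortex enters and leaves along $V(\vF)$, and that $\mathcal{W}$ comes from the single closed walk $\vW_{\OPT}$ by swaps of $\vG'$-edges only (so no member of $\mathcal{W}$ is confined to $\vH$), in order to relocate the witnessing intersection onto the shadows --- either by producing another common vertex of $R_a$ and $R_b$ on shadow walks of both, or by following the common vortex excursion out to a vertex of $V(\vF)$ that lies on shadow walks of both.

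I expect this last relocation step to be the main obstacle: turning a topological intersection of two current walks that may occur deep inside the vortex, where neither walk contributes anything to $\mathcal{W}'$, into an honest intersection of their shadows, using only that the vortex lies inside a single face-disk bounded by the cycle $\vF$ and that every passage between $\vG'$ and the vortex interior goes through $V(\vF)$. The other ingredients --- the edge-set identity for shortcutting, the refinement invariant, the acyclicity of the greedily grown $\mathcal{F}$, and the termination count --- are routine.
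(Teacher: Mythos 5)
Your greedy construction of ${\cal F}$ (repeatedly add an ${\cal I}$-edge joining shadow walks of two \emph{distinct} current members, shortcut, and iterate) is a legitimate alternative to the paper's choice of ${\cal F}$ as a union of spanning trees of the components of ${\cal I}$; your acyclicity invariant and termination count are fine, and both routes reduce to the same claim. But that claim is exactly where you stop: you must show that whenever the current collection has at least two members, two of them contain shadow walks sharing a vertex of $\vG'$, and you only sketch how you ``would'' handle an intersection lying in $V(\vH)\setminus V(\vF)$. This is a genuine gap, and the relocation you propose does not close it. If $R_a$ and $R_b$ meet only at a vertex $v$ interior to the vortex, then $R_a$ may enter and leave the vortex through face vertices $f_1,f_2$ while $R_b$ does so through $f_3,f_4$, with all four distinct; following the common excursion out to $V(\vF)$ then produces no common shadow vertex. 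A shared vortex-interior vertex simply does not force the two shadows to intersect, so the step ``two distinct members share a vertex, hence (after relocation) an ${\cal I}$-edge exists'' can fail as stated.

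The paper sidesteps this by never starting from an arbitrary shared vertex. It takes ${\cal F}$ to be a spanning tree in each component of ${\cal I}$ and invokes condition (2) of Lemma \ref{lem:uncrossing_walks_vortex} to produce, for any $W,W'\in{\cal W}$, a chain $W=W_1,\dots,W_t=W'$ in which consecutive walks $W_i,W_{i+1}$ admit shadow walks $A_i,B_{i+1}$ with $\{A_i,B_{i+1}\}\in E({\cal I})$; since $A_i$ and $B_{i+1}$ then lie in a common tree of ${\cal F}$, an induction along the chain shows all of ${\cal W}$ collapses to one walk. The intersections used there are guaranteed to live in $\vG'$ (ultimately because ${\cal W}$ arises from the single walk $\vW_{\OPT}$ by swaps at degree-$4$ vertices of $\vG'$, so the connectivity of the union is witnessed through such vertices), rather than being arbitrary shared vertices that might sit deep inside $\vH$. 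To repair your argument you would need to replace ``two distinct members share a vertex'' by ``two distinct members contain shadow walks in the same component of ${\cal I}$,'' which is precisely the statement your proposal leaves unproved.
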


\begin{proof}
Let ${\cal F}$ be a forest obtained by taking a spanning subtree in each connected component of ${\cal I}$.
Let $W,W'\in {\cal W}$.
Let ${\cal W}^{\cal F}$ be obtained by shortcutting ${\cal W}$ at ${\cal F}$.
It suffices to show that $W$ and $W'$ become parts of the same walk in ${\cal W}^{\cal F}$.
By condition (2) of Lemma \ref{lem:uncrossing_walks_vortex} we have that there exists a sequence of walks $W_1,\ldots,W_t\in {\cal W}$, with $W_1=W$, $W_t=W'$, and such that for any $i\in \{1,\ldots,t-1\}$, there exists some walk $A_i\in W_i\cap \vG'$, and some walk $B_{i+1}\in W_{i+1}\cap \vG'$ such that $\{A_i,B_{i+1}\}\in E({\cal I})$.
Therefore $A_i$ and $B_{i+1}$ are in the same connected component of ${\cal I}$.
Thus there exits some tree ${\cal T}_i$ in ${\cal F}$ such that $A_i,B_{i+1}\in V({\cal T}_i)$.
It follows that after shortcutting ${\cal W}$ at ${\cal F}$, the walks $A_i$ and $B_{i+1}$ become parts of the same walk.
By induction on $i\in \{1,\ldots,t-1\}$, it follows that $A_1$ and $B_{t}$ become parts of the same walk in ${\cal W}^{\cal F}$, and thus so do $W$ and $W'$, concluding the proof.
\end{proof}

For the remainder let ${\cal F}$ be the forest given by Lemma \ref{lem:forest}.

\begin{lemma}\label{lem:leaves_of_F_face}
All leaves of ${\cal F}$ intersect $F$.
\end{lemma}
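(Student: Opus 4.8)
The plan is to argue by contradiction using the minimality built into the choice of $\vW_\OPT$. Suppose some leaf $W$ of ${\cal F}$ does not intersect $F$. First I would note that $W$ cannot be an open walk of the shadow ${\cal W}'$, since by definition every open walk of ${\cal W}'$ has both endpoints in $V(\vF)=V(F)$. Hence $W$ is a closed walk of ${\cal W}'$, a sub-walk of some $\vW_j\in{\cal W}$. Since $a=0$ and $\vG$ is cross normalized, the only edges of $\vG$ outside $E(\vG')$ are interior edges of $\vH$, and whenever $\vW_j$ leaves $\vG'$ it does so at a vertex of $V(\vG')\cap V(\vH)=V(\vF)$. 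Therefore a closed sub-walk of $\vW_j$ in the shadow is either based at a vertex of $V(\vF)$ — and then $W$ meets $F$, contrary to assumption — or it equals all of $\vW_j$, with $\vW_j$ contained in $\vG'$ and avoiding $V(\vF)$. So $W=\vW_j$ is a closed walk in $\vG'-V(\vF)$; in particular $V(\vW_j)\cap V(\vH)=\emptyset$, so $\vW_j$ visits no vertex the tour is required to cover, and $|E(\vW_j)|\ge 1$.

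Next I would dispose of the possibility that $\vW_j$ is an isolated vertex of ${\cal F}$: since ${\cal F}$ spans each connected component of ${\cal I}$, this would force $V(\vW_j)$ to be disjoint from $V(\vW_i)$ for all $i\neq j$ (using $V(\vW_j)\cap V(\vH)=\emptyset$, so any shared vertex would already appear in the shadow), and then either $\ell=1$ and $V(\vW_j)\not\supseteq V(\vH)$ contradicts condition (3) of Lemma \ref{lem:uncrossing_walks_vortex}, or $\ell\ge 2$ and the union $V(\vW_1)\cup\cdots\cup V(\vW_\ell)$ is disconnected, contradicting condition (2). Hence $\vW_j$ has a unique neighbour $W'$ in ${\cal F}$, which is a shadow walk of some $\vW_k$ with $k\neq j$, and $\vW_j$ crosses $W'$ at a degree-$4$ vertex $v$; by the definition of a crossing, $\vW_j$ and $W'$ together use all four edges incident to $v$, and (by condition (1)) each visits $v$ exactly once.

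The core of the argument is to consider the single closed walk $\vW^*$ obtained by shortcutting ${\cal W}$ at ${\cal F}$ (Lemma \ref{lem:forest}) and to show that the edges and vertices of $\vW_j$ occur as one contiguous excursion inside it, $\vW^*=\dots,v,[\text{all of }\vW_j],v,\dots$. Since $\vW_j$ is incident to exactly one edge of ${\cal F}$, it takes part in exactly one shortcutting step, which is performed at $v$ and inserts a full traversal of $\vW_j$ between two consecutive occurrences of $v$. No later step can touch a vertex of $\vW_j$: each shortcutting step is performed at a degree-$4$ vertex $u$ where two shadow walks cross, hence where all four incident edges are used; if $u\in V(\vW_j)$ then $\vW_j$ would also use an edge at $u$, violating condition (1) of Lemma \ref{lem:uncrossing_walks_vortex} that each edge of $E(\vG')\setminus E(\vF)$ is used at most once in total. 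Likewise no further step occurs at $v$, since all four edges at $v$ are already used by $\vW_j$ and $W'$. I would then excise the excursion, replacing $\dots,v,[\vW_j],v,\dots$ by $\dots,v,\dots$, obtaining a closed walk $\vW^{**}$ that still visits every vertex of $V(\vH)$ (only vertices of $V(\vW_j)\setminus\{v\}\subseteq V(\vG')\setminus V(\vF)$ are dropped). Because uncrossing (which merely re-glues edges at crossing vertices) and shortcutting both preserve the total cost and the total number of edges, $\cost_{\vG}(\vW^*)=\sum_i\cost_{\vG}(\vW_i)=\OPT_{\vG}(V(\vH))$ and $|E(\vW^*)|=|E(\vW_\OPT)|$; hence $\cost_{\vG}(\vW^{**})=\OPT_{\vG}(V(\vH))-\cost_{\vG}(\vW_j)\le\OPT_{\vG}(V(\vH))$ while $|E(\vW^{**})|<|E(\vW^*)|=|E(\vW_\OPT)|$. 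Thus $\vW^{**}$ is an optimal walk traversing $\vH$ with strictly fewer edges than $\vW_\OPT$, contradicting the choice of $\vW_\OPT$.

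The step I expect to be the main obstacle is making the contiguous-excursion claim fully rigorous: one must check that throughout the entire (arbitrarily ordered) sequence of shortcutting operations producing $\vW^*$, no operation ever reroutes through or inside the block contributed by $\vW_j$, nor re-enters $v$. This is precisely where the cross-normalization (every off-face vertex has degree $4$) and the ``each edge used at most once'' property of Lemma \ref{lem:uncrossing_walks_vortex} are essential; everything else is bookkeeping on costs and edge counts.
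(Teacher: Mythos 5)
Your proof is correct and rests on the same idea as the paper's: the offending leaf is an entire closed walk $\vW_j$ disjoint from $V(\vH)$, and deleting it yields an equally cheap spanning-the-vortex walk with strictly fewer edges, contradicting the edge-minimal choice of $\vW_{\OPT}$. One misstatement should be fixed: you assert that the leaf \emph{crosses} its unique ${\cal F}$-neighbour at a degree-$4$ vertex, but adjacency in ${\cal I}$ is defined by sharing a vertex, and condition (5) of Lemma \ref{lem:uncrossing_walks_vortex} explicitly makes the shadow walks pairwise non-crossing, so no such crossing exists. Fortunately the property you actually use survives for a different reason: every vertex of $\vW_j$ lies off $F$, hence has degree at most $4$ by cross normalization, and every shadow walk visiting it must consume two distinct edges there; combined with condition (1) this caps the number of shadow walks through any vertex of $\vW_j$ at two, which is exactly what your contiguity and ``no later step re-enters'' claims need. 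The paper avoids this bookkeeping entirely by deleting the leaf from ${\cal W}$ \emph{before} shortcutting — a leaf of the spanning forest is not a cut vertex of ${\cal I}$, so the remaining walks still have strongly connected union and can be shortcut into a single walk — whereas you shortcut first and then excise a contiguous block; both routes work, yours just requires the extra degree-counting argument you supply.
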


\begin{proof}
Suppose that there exists some leaf $\vW$ of ${\cal F}$ with $V(\vW)\cap V(F) = \emptyset$.
Then simply removing $\vW$ from ${\cal W}$ leaves a new collection of walks that visits all vertices in $V(\vH)$ and such that the union of all walks is a strongly-connected subgraph of $\vG$.
Thus after shortcutting all these walks we may obtain a new single walk $\vec{R}$ that visits all vertices in $V(\vH)$ with $\cost_{\vG}(\vec{R}) \leq \cost_{\vG}(\vW_{\OPT})$ and with fewer edges than $\vW_{\OPT}$, contradicting the choice of $\vW_{\OPT}$.
\end{proof}

\section{The dynamic program for traversing a vortex in a planar graph} \label{sec:vortex_planar}

For the remainder of this section let $\vG$ be a $n$-vertex $(0, 0, 1, p)$-nearly embeddable graph (that is, planar with a single vortex).
Let $\vH$ be the vortex in $\vG$ and suppose it is attached on some face $\vF$. 
Fix an optimal solution $\vW_{\OPT}$, that is a closed walk in $\vG$ that visits all vertices in $\vH$ minimizing $\cost_{\vG}(\vW)$;
if there are multiple such walks pick consistently one with a minimum number of edges.
Let $F$ be the symmetrization of $\vF$.
We present an algorithm for computing a walk traversing all vertices in $V(\vH)$ based on dynamic programming.
By Lemma \ref{lem:cross_normalization} we may assume w.l.o.g.~that $\vG$ is facially normalized and cross normalized.

Fix a path-decomposition $\{B_v\}_{v\in V(F)}$ of $\vH$ of width $p$.

Let ${\cal S}$ be a collection of walks in $\vG$.
For any $v\in V(\vG)$ we denote by $\indeg_{\cal S}(v)$ the number of times that the walks in ${\cal S}$ enter $v$; similarly, we denote by $\outdeg_{\cal S}(v)$ the number of times that the walks in ${\cal S}$ exit $v$.
We define $\vG[{\cal S}]$ to be the graph with 
$V(\vG[{\cal S}]) = \bigcup_{W \in {\cal S}} V(W)$ and $E(\vG[{\cal S}]) = \bigcup_{W \in {\cal S}} E(W)$.

\subsection{The dynamic program}
Let ${\cal P}$ be the set of all subpaths of $F$, where we allow allow for simplicity in notation that a path be closed.
Let $u$, $v$ be the endpoints of $P$.
Let 
\[
\vH_P = \vH\left[\bigcup_{x\in V(P)} B_x\right].
\]
Let ${\cal C}_P$ be the set of all possible partitions of $B_u\cup B_v$.
Let ${\cal D}^{\myin}_P = \{0,\ldots,n\}^{B_u\cup B_v}$, ${\cal D}^{\myout}_P = \{0,\ldots,n\}^{B_u\cup B_v}$, that is, every element of ${\cal D}^{\myin}_P\cup {\cal D}^{\myout}_P$ is a function $f:B_u\cup B_v \to \{0,\ldots,n\}$.
Let ${\cal A} = V(F)^2$.

\subsubsection{The dynamic programming table}
The dynamic programming table is indexed by all pairs $(P, \phi)$ where $P\in {\cal P}$ and
\[
\phi = (C, f^{\myin}, f^{\myout}, a,l,r,p) \in {\cal I}_P
\]
where
\[
{\cal I}_P = {\cal C}_P \times {\cal D}_{P}^{\myin} \times {\cal D}_{P}^{\myout} \times ( {\cal A} \cup ({\cal A} \times {\cal A}) \cup \nil) \times (V(\vG) \cup \nil) \times (V(\vG) \cup \nil) \times (V(\vG) \cup \nil).
\]
A \emph{partial solution} is a collection of walks in $\vG$.

We say that a partial solution ${\cal S}$  is \emph{compatible} with $(P,\phi)$ if the following conditions are satisfied:
\begin{description}
\item{(T1)}
For every $x\in V(\vH_P)$ there exists some walk in ${\cal S}$ that visits $x$.
That is $V(\vH_P) \subseteq \bigcup_{Q\in {\cal S}} V(Q)$.

\item{(T2)}
If $a\neq \nil$ and $a \in {\cal A}$, let $a=(u',v')$.
Let $Q_1$ be the shortest path from $u'$ to $l$ in $\vG$.
Let $Q_2$ be the shortest path from $l$ to $r$ in $\vG$.
Let $Q_3$ be the shortest path from $r$ to $v'$ in $\vG$.
Let $Q^*_1$ be the walk from $u'$ to $v'$ obtained by the concatenation of $Q_1$, $Q_2$ and $Q_3$.
Then $Q^*_1$ is a sub-walk of some walk in ${\cal S}$.
We refer to $Q^*$ as the \emph{grip} of ${\cal S}$, and in this case we say that it is an \emph{unbroken grip}.
If $a\in V(P)^2$ then we say that the unbroken grip is \emph{closed} and otherwise we say that it is \emph{open}.
Otherwise, if $a \in ({\cal A} \times {\cal A})$, let $a = ((u'_1,v'_1),(u'_2,v'_2))$.
Let $Q'_1$ be the shortest path from $u'_1$ to $l$ in $\vG$.
Let $Q'_2$ be the shortest path from $l$ to $v'_1$ in $\vG$.
Let $Q^*_2$ be the path from $u'_1$ to $v'_1$ obtained by the concatenation of $Q'_1$ and $Q'_2$.
Let $Q''_1$ be the shortest path from $u'_2$ to $r$ in $\vG$.
Let $Q''_2$ be the shortest path from $r$ to $v'_2$ in $\vG$.
Let $Q^*_3$ be the path from $u'_2$ to $v'_2$ obtained by the concatenation of $Q''_1$ and $Q''_2$.
Then $Q^*_2$ and $Q^*_3$ are sub-walks of some walks in ${\cal S}$.
We refer to $(Q^*_2,Q^*_3)$ as the \emph{broken grip} of ${\cal S}$.


\item{(T3)}
If $a=\nil$ or $a \in {\cal A}$, then every open walk in ${\cal S}$ has both endpoints in $B_u \cup B_v$, except possibly for one walk $W \in {\cal S}$ that contains the grip as a sub-walk.
If $a =(Q_1,Q_2) \in ({\cal A} \times {\cal A})$, then every open walk in ${\cal S}$ has both endpoints in $B_u \cup B_v$, except possibly for at most two walks $W_1,W_2 \in {\cal S}$ that contain $Q_1$ and $Q_2$ as sub-walks (note that $Q_1$ and $Q_2$ might be sub-walks of the same walk in ${\cal S}$).

\item{(T4)}
For all $x\in B_u\cup B_v$ we have
\[
f^{\myin}(x) = \indeg_{{\cal S}}(x) \text{~~~~ and ~~~~} f^{\myout}(x) = \outdeg_{{\cal S}}(x).
\]

\item{(T5)}
For any $x,y\in B_u\cup B_v$ we have that if $x$ and $y$ are in the same set of the partition $C$ then they are in the same weakly-connected component of $\bigcup_{W\in {\cal S}} W$.
Moreover for any $z\in V(\vH_P)$ there exists $z'\in B_u\cap B_v$ such that $z$ and $z'$ are in the same weakly-connected component of $\bigcup_{W\in {\cal S}} W$.
\end{description}

\subsubsection{Merging partial solutions}
We compute the values of the dynamic programming table inductively as follows.
Let $P,P_1,P_2\in {\cal P}$
such that $E(P_1)\neq \emptyset$, $E(P_2)\neq \emptyset$, $E(P_1)\cap E(P_2)=\emptyset$, and $P=P_1\cup P_2$.
Let $u\in V(P_1)$, $w\in V(P_1)\cap V(P_2)$, $v\in V(P_2)$ such that $u,w$ are the endpoints of $P_1$ and $w,v$ are the endpoints of $P_2$.
Let
\[
\phi = (C, f^{\myin}, f^{\myout}, a,l,r,p) \in {\cal I}_P,
\]
\[
\phi_1 = (C_1, f_1^{\myin}, f_1^{\myout}, a_1,l_1,r_1,p_1) \in {\cal I}_{P_1},
\]
\[
\phi_2 = (C_2, f_2^{\myin}, f_2^{\myout}, a_2,l_2,r_2,p_2) \in {\cal I}_{P_2}.
\]

For any $i\in \{1,2\}$ let ${\cal S}_i$ be a partial solution that is compatible with $(P_i,\phi_i)$.
We proceed to compute a collection of walks ${\cal S}$ that is compatible with $(P,\phi)$.
This is done in phases, as follows:

\begin{description}
\item{\textbf{Merging phase 1: Joining the walks.}}
We check that for all $x\in B_w$ we have
$f_1^{\myin}(x) = f_2^{\myout}(x)$ and $f_2^{\myin}(x) = f_1^{\myout}(x)$.
If not then the merging procedure return $\nil$.
For any $x\in B_w$ and for any $i\in \{1,2\}$ let $E_i(x)^{\myin}$ (resp.~$E_i(x)^{\myout}$) be the multiset of all edges in all walks in ${\cal S}_i$ that are incoming to (resp.~outgoing from) $x$ counted with multiplicities.
Since
$f_1^{\myout}(x)=f_2^{\myin}(x)$
and
$f_2^{\myout}(x)=f_1^{\myin}(x)$
it follows that
$|E_1^{\myin}(x)\cup E_2^{\myin}(x)|=|E_1^{\myout}(x) \cup E_2^{\myout}(x)|$.
Pick an arbitrary bijection
$\sigma_x:E_1^{\myin}(x) \cup E_2^{\myin} \to E_1^{\myout}(x) \cup E_2^{\myout}(x)$.
We initially set ${\cal S}={\cal S}_1\cup {\cal S}_2$.
For each $x\in B_w$ we proceed as follows.
For each $e\in E_1^{\myin(x)}\cup E_2^{\myin}(x)$ we modify the walk traversing $e$ so that immediately after traversing $e$ it continues with the walk traversing $\sigma_x(e) \in E_1^{\myout}\cup E_2^{\myout}$.

\item{\textbf{Merging phase 2: Updating the grip.}}
We check that at least one of the following conditions is satisfied:
\begin{description}
\item{(1)}
Suppose that
$a= l = r = p = a_1 = l_1 = r_1 = p_1 =a_2 = l_2 = r_2 = p_2=\nil$.
Then there is nothing to do.

\item{(2)}
Suppose that 
$a_1= l_1 = r_1 = p_1 = \nil$, $l=l_2$, $r=r_2$, $p=p_2$ and $a=a_2=(u_2^*,v_2^*) \in {\cal A}$ with $\{u_2^*,v_2^*\}\cap V(P_1) \subseteq \{u,w\}$, or $a_2=l_2 = r_2 = p_2 = \nil$, $l=l_1$, $r=r_1$, $p=p_1$ and $a=a_1=(u_1^*,v_1^*) \in {\cal A}$ with $\{u_1^*,v_1^*\}\cap V(P_2) \subseteq \{w,v\}$.
Then there is nothing to do.

\item{(3)}
Suppose that 
$a_1\neq \nil$, $a_2\neq \nil$, and $a\neq \nil$.
Suppose $a_1=a_2=a=(u^*, v^*) \in {\cal A}$, 
with $a\in V(P_1)\times V(P_2)$ or $a\in V(P_2)\times V(P_1)$.
Suppose $l_1=l_2=l$, $r_1=r_2=r$ and $p_1=p_2=p$.
Then we proceed as follows to ensure that (T2) holds.
We may assume w.l.o.g.~that $a\in V(P_1)\times V(P_2)$ since the remaining case can be handled in a similar way.
Let $Q^*$ be the grip between $u^*$ and $v^*$ in $\vG$.
It follows by (T2) that for any $i\in \{1,2\}$ there exists a walk $\vW_i\in {\cal S}_i$ that contains $Q^*$ as a sub-walk.
It follows by the definition of the merging phase 1 that for any $i\in \{1,2\}$ there exists $\vW_i'\in {\cal S}$ that contains $Q^*$ as a sub-walk.
We will modify ${\cal S}$ in order to ensure that (T2) holds.
We remove $Q^*$ from $\vW_2'$ and we merge $\vW_1'$ with $\vW_2'\setminus Q^*$ (via concatenation).

\begin{center}
\scalebox{0.95}{\includegraphics{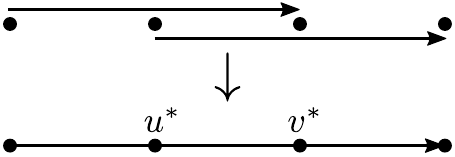}}
\end{center}

\item{(4)}
Suppose that 
$a_1, a_2, a \in {\cal A}$.
Suppose that $a_1=(u_1^*, v_1^*)$, 
$a_2=(u_2^*, v_2^*)$,
$a=(u^*, v^*)$,
with $u^*\in \{u_1^*, u_2^*\}$,
and $v^*\in \{v_1^*, v_2^*\}$.
Suppose that $l_1 = l$, $l_2=r_2=r$ and $p_1=p_2=p$, or $l_2 = l$, $l_1=r_1=r$ and $p_1=p_2=p$, or $l=r=p_1=p_2$ and $l_2=r_2$, or $l=r=p_1=p_2$ and $l_1=r_1$.
Then we proceed as follows to ensure that (T2) holds.
We may assume w.l.o.g.~that $a=(u_2^*,v_1^*)$, $l_1 = l$, $l_2=r_2=r$ and $p_1=p_2=p$ since the other cases can be handled in a similar way.
For any $i\in \{1,2\}$ 
let $Q_i^*$ be the grip of ${\cal S}_i$.
It follows by (T2) that for any $i\in \{1,2\}$ there exists a walk $\vW_i\in {\cal S}_i$ that contains $Q_i^*$ as a sub-walk.
It follows that for any $i\in \{1,2\}$ there exists $\vW_i'\in {\cal S}$ that contains $Q_i^*$ as a sub-walk.
We will modify ${\cal S}$ in order to ensure that (T2) holds.
If $a=a_1$ or $a=a_2$ then there is nothing left to do.
Otherwise, let $R_1'$ be the shortest path in $\vG$ from $u_1^*$ to $r_1$.
Let $R_1''$ be the shortest path in $\vG$ from $r_1$ to $l_2$.
Let $R_1'''$ be the shortest path in $\vG$ from $l_2$ to $v_2^*$.
Let $R_1^*$ be the path in $\vG$ from $u_1^*$ to $v_2^*$ obtained by concatenation of $R_1'$, $R_1''$ and $R_1'''$.
Let $R_2'$ be the shortest path in $\vG$ from $u_2^*$ to $p$.
Let $R_2''$ be the shortest path in $\vG$ from $p$ to $l_1$.
Let $R_2'''$ be the shortest path in $\vG$ from $l_1$ to $v_1^*$.
Let $R_2^*$ be the path in $\vG$ from $u_2^*$ to $v_1^*$ obtained by concatenation of $R_2'$, $R_2''$ and $R_2'''$.
We remove $Q_1^*$ and $Q_2^*$ from $\vW_1'$ and $\vW_2'$ and replace them by $R_1^*$ and $R_2^*$.

\begin{center}
\scalebox{0.65}{\includegraphics{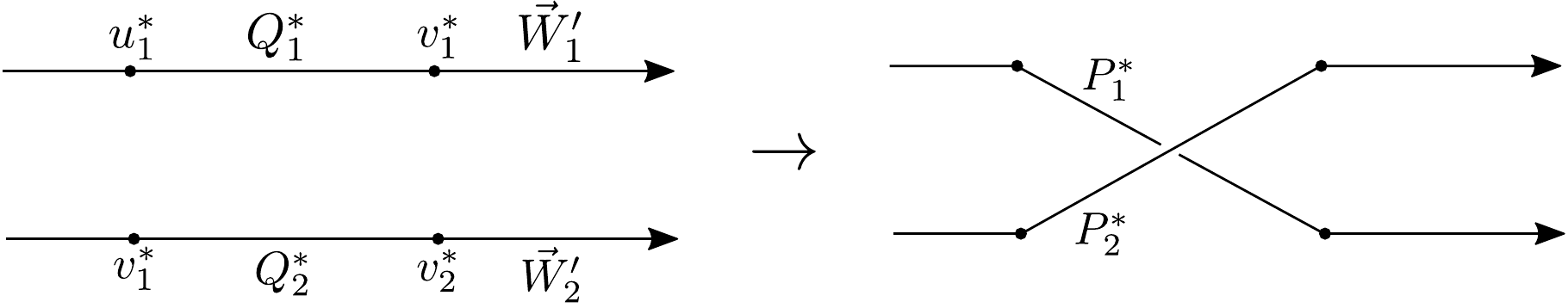}}
\end{center}

\item{(5)}
Suppose that 
$a_1=\nil$ and $a=a_2 \in ({\cal A} \times {\cal A})$, or $a_2=\nil$ and $a=a_1 \in ({\cal A} \times {\cal A})$.
Then there is nothing to do.

\item{(6)}
Suppose that
$a_1 \in {\cal A}$, $a_2 \in {\cal A}$ and $a \in ({\cal A} \times {\cal A})$.
Suppose $a_1 = (u_1,v_1)$, $a_2 = (u_2,v_2)$ and $a=((u',v'),(u'',v''))$, with $v' \in\{v_1,v_2\}$ and $u'' \in\{u_1,u_2\}$. We may assume w.l.o.g~that $a=((u',v_1),(u_2,v''))$.
Suppose that $l=l_1=r_1$, $r=l_2=r_2$ and $p=p_1=p_2$.
For any $i \in \{1,2\}$, let $Q_i$ be the grip of ${\cal S}_i$.
It follows by (T2) that for any $i\in \{1,2\}$ there exists a walk $\vW_i\in {\cal S}_i$ that contains $Q_i$ as a sub-walk.
It follows that for any $i\in \{1,2\}$ there exists $\vW_i'\in {\cal S}$ that contains $Q_i$ as a sub-walk.
Let $Q'_1$ be the shortest path in $\vG$ from $u_1$ to $l_1$.
Let $Q'_2$ be the shortest path in $\vG$ from $l_1$ to $l_2$.
Let $Q'_3$ be the shortest path in $\vG$ from $l_2$ to $v_2$.
Let $Q^*_1$ be the path in $\vG$ from $u_1$ to $v_2$ obtained by concatenation of $Q'_1$, $Q'_2$ and $Q'_3$.
Let $Q''_1$ be the shortest path in $\vG$ from $u'$ to $l_1$.
Let $Q''_2$ be the shortest path in $\vG$ from $l_1$ to $v_1$.
Let $Q^*_2$ be the path in $\vG$ from $u'$ to $v_1$ obtained by concatenation of $Q''_1$ and $Q''_2$.
Let $Q'''_1$ be the shortest path in $\vG$ from $u_2$ to $l_2$.
Let $Q'''_2$ be the shortest path in $\vG$ from $l_2$ to $v''$.
Let $Q^*_3$ be the path in $\vG$ from $u_2$ to $v''$ obtained by concatenation of $Q'''_1$ and $Q'''_2$.
Then we remove $Q_1$ and $Q_2$ from $\vW_1'$ and $\vW_2'$, and replace them by $Q^*_1$, $Q^*_2$ and $Q^*_3$.

\begin{center}
\scalebox{0.7}{\includegraphics{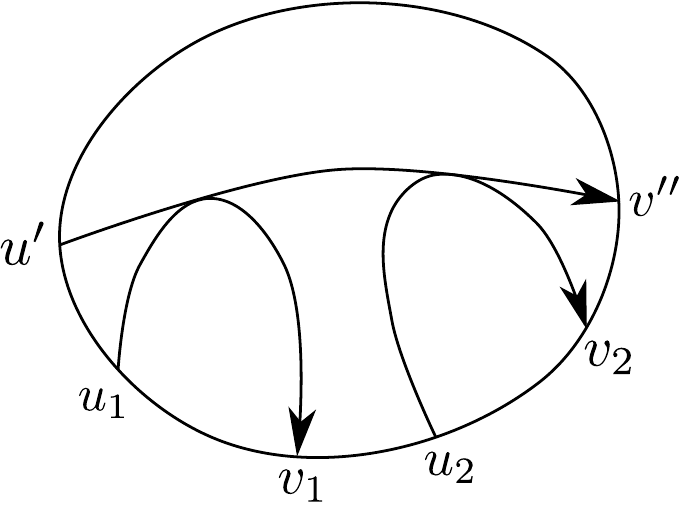}}
\end{center}

\item{(7)}
Suppose that
$a_1 \in ({\cal A} \times {\cal A})$, $a_2 \in {\cal A}$ and $a \in ({\cal A} \times {\cal A})$, or $a_1 \in {\cal A}$, $a_2 \in ({\cal A} \times {\cal A})$ and $a \in ({\cal A} \times {\cal A})$. We may assume w.l.o.g~that $a_1 \in ({\cal A} \times {\cal A})$, $a_2 \in {\cal A}$ and $a \in ({\cal A} \times {\cal A})$.
Suppose that $a_1 = ((u_1,v_1),(u'_1,v'_1))$, $a_2 = (u_2,v_2)$ and $a = ((u,v),(u',v'))$, with $(u,v) = (u_1,v_1)$,
$u' \in \{u_1,u_2\}$, and $v' \in \{v_1,v_2\}$, or $(u,v) = (u_2,v_2)$,
$u' \in \{u_1,u_2\}$, and $v' \in \{v_1,v_2\}$. We may assume w.l.o.g~that $(u,v) = (u_1,v_1)$ and $(u',v') = (u_2,v'_1)$.
Suppose that $l = l_1$, $r=l_2=r_2$ and $p=p_1=p_2$. Let $(Q_1,Q'_1)$ be the grip of ${\cal S}_1$ and let $Q_2$ be the grip of ${\cal S}_2$.
Let $R_1$ be the shortest path in $\vG$ from $u'_1$ to $r_1$.
Let $R_2$ be the shortest path in $\vG$ from $r_1$ to $r_2$.
Let $R_3$ be the shortest path in $\vG$ from $r_2$ to $v_2$.
Let $R'$ be the path in $\vG$ from $u'_1$ to $v_2$ obtained by concatenation of $R_1$, $R_2$ and $R_3$.
Let $R'_1$ be the shortest path in $\vG$ from $u_2$ to $r_2$.
Let $R'_2$ be the shortest path in $\vG$ from $r_2$ to $v'_1$.
Let $R''$ be the path in $\vG$ from $u_2$ to $v'_1$ obtained by concatenation of $R'_1$ and $R'_2$.
Then we remove $Q'_1$ and $Q_2$, and replace them by $R'$ and $R''$.


\end{description}
If none of the above holds then the merging procedure returns $\nil$.

\item{\textbf{Merging phase 3: Checking connectivity.}}
We check that condition (T5) holds for ${\cal S}$ and we return $\nil$ if it does not.
\end{description}

\begin{lemma}
If the merging procedure outputs some partial solution ${\cal S}$ then ${\cal S}$ is compatible with $\phi$.
\end{lemma}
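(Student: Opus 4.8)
The plan is to verify each of the five compatibility conditions (T1)--(T5) for the collection ${\cal S}$ returned by the procedure, using as hypotheses the compatibility of ${\cal S}_i$ with $(P_i,\phi_i)$ for $i\in\{1,2\}$ together with the fact that the procedure did \emph{not} return $\nil$ --- that is, the degree checks of Merging phase~1 passed, exactly one of the seven cases of Merging phase~2 applied, and the connectivity check of Merging phase~3 passed. The proof is then an accounting of how the three phases transform the partial solution, and $(P,\phi)$ is compatible with ${\cal S}$ precisely because each phase either preserves the relevant invariant or, in the case of the grip, installs the one dictated by $\phi$.

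First I would dispatch the conditions that are essentially bookkeeping. Condition (T5) is checked verbatim by Merging phase~3, so it holds for ${\cal S}$ by assumption. For (T1), since $V(\vH_P)=V(\vH_{P_1})\cup V(\vH_{P_2})$, every vertex of $V(\vH_P)$ is visited by some walk of ${\cal S}_1\cup{\cal S}_2$ by (T1) for the two pieces; Merging phase~1 only reroutes walk-segments at vertices of $B_w$ (an Eulerian-style reconnection via the bijections $\sigma_x$) and the grip surgeries of Merging phase~2 delete a grip sub-walk from one walk only when an identical sub-walk already survives in another, so no vertex visit is lost. For (T4), note that outside $B_w$ no walk of ${\cal S}_2$ (resp.\ ${\cal S}_1$) reaches $B_u$ (resp.\ $B_v$) except possibly through a grip, and the phase~1 balance checks $f_1^{\myin}=f_2^{\myout}$, $f_2^{\myin}=f_1^{\myout}$ on $B_w$ together with the consistency of $\phi$ with $\phi_1$ and $\phi_2$ along $B_w$ yield the correct in- and out-degrees on $B_u\cup B_v$; the grip surgeries of phase~2 are arranged to remove exactly the duplicated grip copies created when both pieces carry the same grip, so they restore rather than disturb these degrees. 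Finally, the phase~1 balance at $B_w$ also forces $\indeg_{{\cal S}_1\cup{\cal S}_2}(x)=\outdeg_{{\cal S}_1\cup{\cal S}_2}(x)$ for every $x\in B_w$, so after the $\sigma_x$ are applied no open walk has an endpoint in $B_w$; hence every open walk of the collection produced by phase~1 either has both endpoints in $B_u\cup B_v$ or carries a grip piece, which is the first half of (T3).

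The heart of the argument is (T2), and with it the bound on the number of exceptional grip-carrying walks in (T3). Here I would go through the seven cases of Merging phase~2, which form precisely an enumeration of the ways the grips of ${\cal S}_1$ and ${\cal S}_2$ --- each being absent, a single $u^*\!\to v^*$ path, or a pair of paths, with prescribed turning vertices $l_i,r_i,p_i$ --- can combine into the grip prescribed by $(a,l,r,p)$. In the trivial cases~(1), (2), (5) the grip of the surviving piece already equals the grip indexed by $a$ and phase~1 does not touch it, so (T2) is immediate. In cases~(3), (4), (6), (7) one uses (T2) for ${\cal S}_1$ and ${\cal S}_2$ to locate the walks $\vW_1',\vW_2'$ of the post-phase-1 collection that contain the old grip pieces, and then checks that the explicitly constructed paths $Q^*$, $R_j^*$, $Q_j^*$, $R'$, $R''$ --- each a concatenation of shortest paths through the prescribed vertices $l,r,p$ --- are exactly the grip, or the broken grip, described by condition (T2) for the index $a$, and that after removing the old pieces and splicing in the new ones the collection still consists of walks with the correct number (one if $a\in{\cal A}\cup\{\nil\}$, at most two if $a\in{\cal A}\times{\cal A}$) of open walks carrying a grip piece.

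I expect this last paragraph to be the main obstacle: the delicate point is tracking, after the arbitrary reconnection performed in phase~1, which walk of the merged collection carries which old grip piece, and verifying in each of cases~(3), (4), (6), (7) that the prescribed cut-and-splice operations attach the new shortest-path concatenations at the right endpoints, produce no spurious open endpoints, and leave exactly the grip structure demanded by $\phi$. The figures accompanying phase~2 (grip\_cross\_open, grip\_cross, grip1-tasos) essentially encode this verification; everything else follows routinely from the definitions and from the checks the procedure performs before it returns ${\cal S}$.
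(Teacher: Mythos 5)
Your proposal is correct and follows the same route as the paper: the paper's entire proof is the one-line assertion that the claim ``follows immediately by the definition of compatibility,'' and your argument is simply a fleshed-out version of that direct verification of (T1)--(T5), with the case analysis of Merging phase~2 that the paper leaves implicit.
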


\begin{proof}
It follows immediately by the definition of compatibility.
\end{proof}

\subsubsection{Initializing the dynamic programming table}
For all $P\in {\cal P}$ containing at most one edge and for all $\phi \in {\cal I}_P$ with $\phi=(C, f^{\myin}, f^{\myout}, a,l,r,p)$ we proceed as follows.
We enumerate all partial solutions ${\cal S}$ that are compatible with $(P,\phi)$ and have minimum cost.
Any walk in any such partial solution can intersect $\vG\setminus \vH$ only on the at most two oppositely-directed edges in $E(P)$.
Moreover there are at most $O(n^7)$ possibilities for $a$, $l$, $r$ and $p$.
Thus the enumeration can clearly be done in time $n^{O(1)}$ by ensuring that for all walks $W\in {\cal S}$, their sub-walks that do not intersect $E(P)$ are shortest paths between vertices in $B_u\cup B_v$.
The total running time of this initialization step is therefore $n^{O(p)}$.

\subsubsection{Updating the dynamic programming table}
For all $P \in {\cal P}$ containing $m > 1$ edges, and for all $P_1,P_2 \in {\cal P}$ with $E(P_1) \neq \emptyset$, $E(P_2) \neq \emptyset$, $E(P_1) \cap E(P_2) = \emptyset$ and $P_1 \cup P_2 = P$, and for all $\phi_1 \in {\cal I}_{P_1}$ and $\phi_2 \in {\cal I}_{P_2}$ we proceed as follows. Suppose that for all paths $P'$ containing $m' < m$ edges and all $\phi' \in {\cal I}_{P'}$ we have computed the partial solutions in the dynamic programming table at $(P',\phi')$. If there exist partial solutions ${\cal S}_1$ and ${\cal S}_2$ at $(P_1,\phi_1)$ and $(P_2,\phi_2)$ respectively, we call the merging process to merge ${\cal S}_1$ and ${\cal S}_2$. Suppose that the merging process returns a partial solution ${\cal S}$ at $(P,\phi)$ for some $\phi \in {\cal I}_P$.
If there is no partial solution stored currently at $(P,\phi)$ then we store ${\cal S}$ at that location.
Otherwise if there there exists a partial solution ${\cal S}'$ stored at $(P,\phi)$ and the cost of ${\cal S}$ is smaller than the cost of ${\cal S}'$ then we replace ${\cal S}'$ with ${\cal S}$.

\subsection{Analysis}

Let ${\cal W}$ be the collection of walks given by Lemma \ref{lem:uncrossing_walks_vortex}.
Let ${\cal W}'$ be the shadow of ${\cal W}$.
Let ${\cal F}$ be the forest obtained by Lemma \ref{lem:forest}.
For every connected component ${\cal T}$ of ${\cal F}$ pick some $v_{\cal T}\in V({\cal T})$ and consider ${\cal T}$ to be rooted at $v_{\cal T}$.

Let $\vG'$ be the planar piece of $\vG$, that is $\vG' = \vG\setminus (V(\vH)\setminus (\vF))$.
Fix some planar drawing $\psi$ of $G'$.
Let ${\cal D}$ be the disk with $\partial {\cal D} = \psi(F)$ with $\psi(\vG) \subset {\cal D}$.

Let $P\in {\cal P}$ with endpoints $u$, $v$, and let ${\cal T}$ be a subtree of some tree in ${\cal F}$.
We say that \emph{$P$ covers ${\cal T}$} if for all $D\in V({\cal T})$ we have $V(D)\cap V(F)\subseteq V(P)$.
We say that \emph{$P$ avoids ${\cal T}$} if for all $D\in V({\cal T})$ we have $V(D)\cap V(P)\subseteq \{u,v\}$.

\begin{definition}[Basic path]\label{defn:basic}
Let $P\in {\cal P}$.
Let $u,v \in V(P)$ be the endpoints of $P$.
We say that $P$ is \emph{basic} (w.r.t.~${\cal W}$)
if either $P \setminus \{u,v\}$ does not intersect any of the walks in ${\cal W}$ (in this case we call $P$ \emph{empty basic}) or the following holds.
There exists some tree ${\cal T}$ in ${\cal F}$ and some $D\in V({\cal T})$, 
with children $D_1, \ldots, D_k$, intersecting $D$ in this order along a traversal of $D$,
such that the following conditions are satisfied (see Figure \ref{fig:basic_path}):
\begin{description}
\item{(1)}
For any $i \in \{1, \ldots, k\}$, let ${\cal T}_{D_i}$ be the subtree of ${\cal T}$ rooted at ${D_i}$ and let ${\cal T}_D$ be the subtree of $\cal T$ rooted at $D$. Then at least one of the following two conditions is satisfied:

\begin{description}

\item{(1-1)}
$P$ covers ${\cal T}_D$ and avoids ${\cal T}\setminus {\cal T}_D$.

\item{(1-2)}
There exists $j \in \{1, \ldots, k\}$ such that for all $l \leq j$,
$P$ covers ${\cal T}_{D_l}$
and $P$ avoids ${\cal T}_D \setminus \bigcup_{m=1}^j {\cal T}_{D_m}$.
\end{description}

\item{(2)}
Let ${\cal T}'$ be a tree in ${\cal F}$ with ${\cal T}'\neq {\cal T}$.
Then either $P$ covers ${\cal T}'$ or $P$ avoids ${\cal T}'$.
\end{description}
\end{definition}

\begin{figure}
\begin{center}
\scalebox{0.8}{\includegraphics{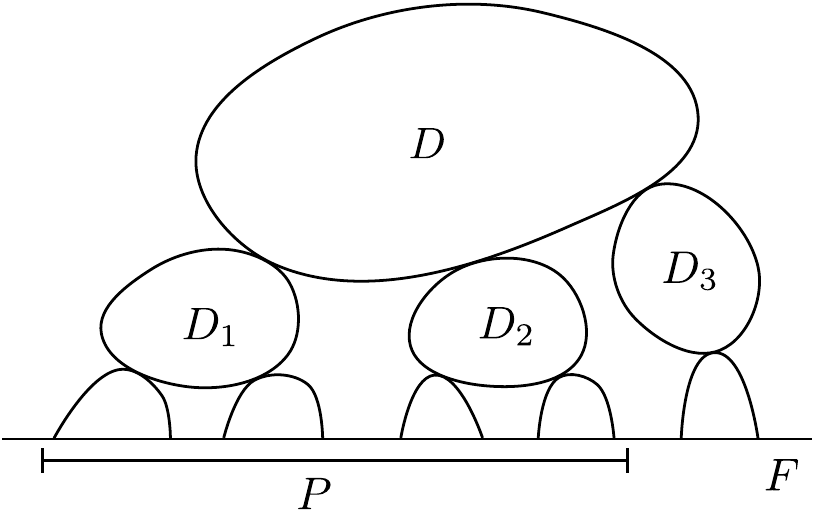}}
\end{center}
\caption{Example of a basic path.\label{fig:basic_path}}
\end{figure}

\begin{definition}[Facial restriction]\label{defn:facial_restriction}
Let $P\in {\cal P}$ be basic.
Let ${\cal W}'$ be the collection of walks obtained by restricting every $W \in {\cal W}$ on $H_P$.
If $P$ is empty, we say that ${\cal W}'$ is the \emph{$P$-facial restriction of $\cal W$}.
Suppose that $P$ is not empty.
Let $\cal T$, $D$, ${\cal T}_D$, $k$ and $j$ be as in Definition \ref{defn:basic}.
Let ${\cal F}' = \{{\cal T}' \in {\cal F} : P \text{ covers } {\cal T}'\}$ and let ${\cal R} = \bigcup_{{\cal T}' \in {\cal F}'} V({\cal T}')$.
If $P$ covers ${\cal T}_D$ and avoids ${\cal T} \setminus {\cal T}_D$, we say that ${\cal R} \cup {\cal W}' \cup V({\cal T}_D)$ is the \emph{$P$-facial restriction of $\cal W$}.
Otherwise, we say that ${\cal R} \cup {\cal W}' \cup \{D\} \cup \bigcup_{i=1}^{j} V({\cal T}_{D_i})$ is the \emph{$P$-facial restriction of $\cal W$}.
Figure \ref{fig:P-facial-restriction} depicts an example of a $P$-facial restriction.
\end{definition}

\begin{figure}
\begin{center}
\scalebox{1}{\includegraphics{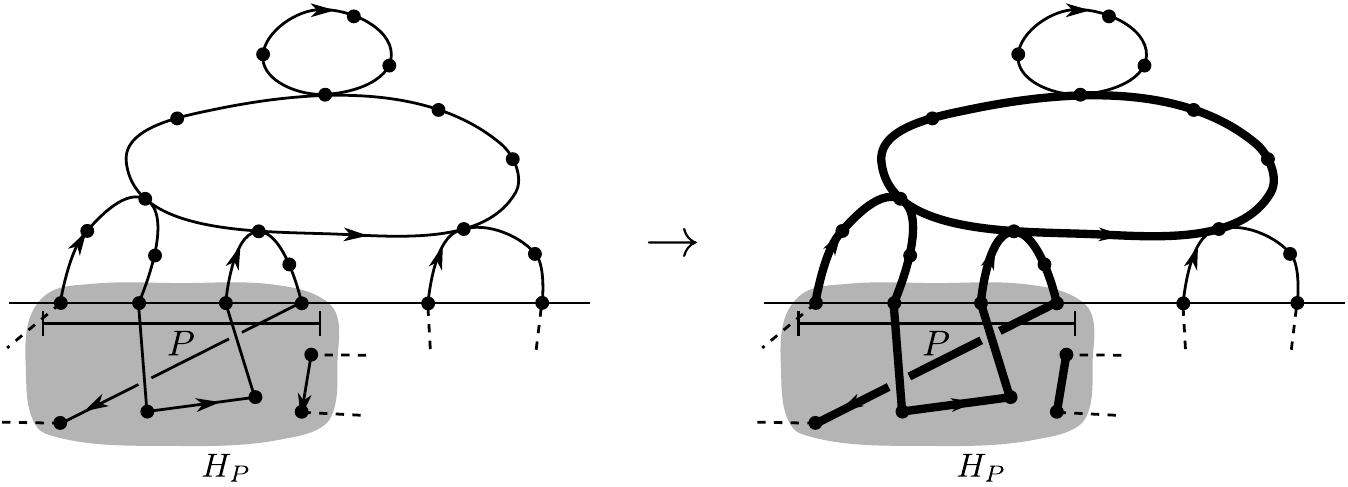}}
\caption{Part of the collection of walks in ${\cal W}$ (left) and the corresponding $P$-facial restriction of ${\cal W}$ depicted in bold (right).}
\end{center}
\end{figure}

\begin{definition}[Important walk]\label{defn:important_path}
Let $P\in {\cal P}$ be a basic path.
Let ${\cal W}'$ be the $P$-facial restriction of $\cal W$.
Let ${\cal W}''$ be the shadow of ${\cal W}'$.
Let $Q$ be a walk in $G[{\cal W}'']$.
We say that $Q$ is \emph{$P$-important} (w.r.to.~${\cal W}$) if the following conditions hold:
\begin{description}
\item{(1)}
Both endpoints of $Q$ are in $V(P)$.
\item{(2)}
$Q$ is the concatenation of walks $Q_1,\ldots,Q_{\ell}$ such that for each $i\in \{1,\ldots,\ell\}$ there exists some $W_i\in {\cal W}$ such that $Q_i$ is a sub-walk of $W_i$, and for each $j\in \{1,\ldots,\ell-1\}$ we have $\{W_j,W_{j+1}\}\in E({\cal F})$.
\end{description}
\end{definition}


\begin{proposition}\label{prop:important_walk}
For any $u,v\in V(P)$, there exists at most one $P$-important walk from $u$ to $v$.
\end{proposition}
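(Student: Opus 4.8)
The plan is to argue by contradiction. Suppose $u,v\in V(P)$ admit two distinct $P$-important walks $Q\neq Q'$. Fix the planar drawing $\psi$ of $\vG'$ together with the disk ${\cal D}$ with $\partial{\cal D}=\psi(F)$, so that all walks in the $P$-facial restriction of ${\cal W}$ (and in particular $Q$, $Q'$) are drawn inside ${\cal D}$, and, by Lemma~\ref{lem:uncrossing_walks_vortex}(5), are non-self-crossing and pairwise non-crossing. Both endpoints $u,v$ lie on $\partial{\cal D}$. Write the decompositions $Q=Q_1\cdots Q_\ell$ along forest-adjacent shadow walks $W_1,\dots,W_\ell$ and $Q'=Q'_1\cdots Q'_{\ell'}$ along $W'_1,\dots,W'_{\ell'}$ as in Definition~\ref{defn:important_path}, so that $W_1,\dots,W_\ell$ (resp.\ $W'_1,\dots,W'_{\ell'}$) trace a walk in the forest ${\cal F}$.

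Next I would pin down the start of both walks. Since $u\in V(P)\subseteq V(F)$ and $\vG$ is facially normalized, $u$ has at most one incident edge outside $E(F)$; and by Lemma~\ref{lem:uncrossing_walks_vortex}(1) every edge of $E(\vG')\setminus E(\vF)$ is traversed in total at most once by the walks in ${\cal W}$, hence lies in at most one walk of the shadow. I would use this (together with the fact that a $P$-important walk cannot start by running along $\partial{\cal D}$, which follows from the definition of the $P$-facial restriction) to conclude that $Q$ and $Q'$ leave $u$ along the same first edge and that the shadow walk carrying that edge is the same, so after reindexing $W_1=W'_1$ and $Q_1,Q'_1$ begin identically. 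Let $x$ be the last vertex of the maximal common prefix of $Q$ and $Q'$. If $x=v$, then there are no further edges and $Q=Q'$, a contradiction; hence $x\neq v$, and $Q$, $Q'$ leave $x$ along distinct edges $e\neq e'$.

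Since both $Q$ and $Q'$ terminate at $v$, there is a first vertex $y$ (after $x$) lying on both; set $R=Q[x,y]$ and $R'=Q'[x,y]$. By minimality of $y$ these two walks have vertex-disjoint interiors, so (after the usual small perturbations handling self-touchings, permissible because $R,R'$ are non-self-crossing and mutually non-crossing) $R\cup R'$ is a simple closed curve bounding a disk $\Delta\subseteq{\cal D}$. Along $R$ one can read off which shadow walk of ${\cal W}'$ is ``active'' at each step, giving a walk in ${\cal F}$ from the node active just after $x$ to the node active just before $y$; similarly for $R'$. If these two walks in ${\cal F}$ are distinct, then together they contain a cycle of ${\cal F}$, contradicting the fact that ${\cal F}$ is a forest (Lemma~\ref{lem:forest}). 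If they coincide, then $R$ and $R'$ are distinct sub-walks of the same sequence of shadow walks with the same endpoints lying inside $\Delta$, and I would derive a contradiction from planarity: the vertices in the interior of $\Delta$ must be visited only by walks of ${\cal W}'$ that are trapped inside $\Delta$, and following such a trapped walk to the boundary $\partial\Delta=R\cup R'$ forces a self-crossing of $R$ or $R'$, or a crossing between a walk in ${\cal W}'$ and $R$ or $R'$, again contradicting Lemma~\ref{lem:uncrossing_walks_vortex}(5). (This whole argument can equivalently be packaged as an induction on the trees of ${\cal F}$.)

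The main obstacle is that the transition vertices between consecutive pieces $Q_i,Q_{i+1}$ are only required to be \emph{common vertices} — not common edges — of the forest-adjacent walks $W_i,W_{i+1}$, and the shadow walks, although non-self-crossing, may revisit vertices; so carefully defining the ``active shadow walk'' bookkeeping along $R$ and $R'$, showing that a genuine divergence of $Q$ and $Q'$ forces either a cycle in ${\cal F}$ or a crossing in $\Delta$, and dispatching the degenerate cases ($u=v$, pieces of length zero, and the precise effect of the $P$-facial restriction on which walks may appear) is where the real work lies; the planar ``bigon'' step itself is routine once the bookkeeping is set up.
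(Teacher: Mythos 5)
The paper disposes of this proposition in one line: a $P$-important walk is by Definition~\ref{defn:important_path}(2) a concatenation $Q_1,\ldots,Q_\ell$ whose underlying shadow walks $W_1,\ldots,W_\ell$ trace a walk in ${\cal F}$, and since ${\cal F}$ is a forest there is at most one such path between any two of its nodes; the claim is taken to follow immediately. Your proposal replaces this combinatorial observation with a planar ``bigon'' argument in the disk ${\cal D}$, invoking the forest property only in one sub-case and relying on the non-crossing guarantees of Lemma~\ref{lem:uncrossing_walks_vortex}(5) for the other. That is a genuinely different (and much heavier) route, and it is not wrong in spirit, but as written it does not constitute a proof.

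Two concrete gaps. First, your claim that $Q$ and $Q'$ must leave $u$ along the same first edge is not justified: facial normalization gives $u$ at most one incident edge outside $E(F)$, and Lemma~\ref{lem:uncrossing_walks_vortex}(1) controls only edges in $E(\vG')\setminus E(\vF)$, so nothing prevents the two walks from departing $u$ along different edges of $F$ itself (or one along an $F$-edge and one along the non-$F$-edge); the entire ``maximal common prefix'' setup therefore does not get off the ground. Second, and more seriously, you explicitly defer the central step --- the ``active shadow walk'' bookkeeping that would convert a geometric divergence of $R$ and $R'$ into either a cycle in ${\cal F}$ or a forbidden crossing --- and acknowledge that this is where the real work lies. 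Since the degenerate cases you list (transitions at shared vertices rather than shared edges, revisited vertices, zero-length pieces) are exactly the ones that could defeat the bigon argument, the proposal as it stands is a plan rather than a proof. The intended argument is far shorter: the endpoints and the forest structure already determine the sequence $W_1,\ldots,W_\ell$, so no topology is needed.
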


\begin{proof}
It follows immediately by the fact that ${\cal F}$ is a forest.
\end{proof}

\begin{lemma}\label{lem:vortex_DP_induction}
Let $P\in {\cal P}$ be basic w.r.t.~${\cal W}$ with endpoints $u,v\in V(F)$, where $u=v$ if $P$ is closed.
Let ${\cal W}_P$ be the $P$-facial restriction of $\cal W$.
Let $\Gamma=\bigcup_{\vW\in {\cal W}_P} \vW$.
Let $C$ be the partition of $B_u\cup B_v$ that corresponds to the weakly-connected components of $\Gamma$.
For any $x\in B_u\cup B_v$ let 
$f^{\myin}(x) = \indeg_{{\cal W}_P}(x)$
and 
$f^{\myout}(x) = \outdeg_{{\cal W}_P}(x)$.
Then there exists some $a\in {\cal A} \cup ({\cal A} \times {\cal A}) \cup \nil$ and $l,r,p \in (V(\vG) \cup \nil)$ such that the dynamic programming table contains some partial solution ${\cal S}$ at location $(P,(C,f^\myin, f^\myout, a,l,r,p))$, 
such that $\cost_{\vG}({\cal S}) \leq \cost_{\vG}({\cal W}_P)$.
\end{lemma}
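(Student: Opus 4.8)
The plan is to prove the statement by induction on the number of edges of $P$, mirroring the way the dynamic programming table is constructed. For the base case we take $P$ to contain at most one edge; here the initialization step of the dynamic program enumerates, among other things, all partial solutions compatible with $(P,\phi)$ of minimum cost, so it suffices to check that the $P$-facial restriction ${\cal W}_P$ itself is compatible with some index $\phi=(C,f^\myin,f^\myout,a,l,r,p)$. Since $P$ is basic with at most one edge, any walk of ${\cal W}_P$ meets $\vG'$ only on the (at most two oppositely-directed) edges of $E(P)$, and after replacing each maximal sub-walk of ${\cal W}_P$ lying outside $E(P)$ by a shortest path between its endpoints in $B_u\cup B_v$ — which does not increase the cost, by our standing assumption that every edge has length equal to the shortest-path distance between its endpoints — conditions (T1), (T3), (T4), (T5) are immediate from the definitions. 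For (T2) we read off $a,l,r,p$ from whichever ($P$-important) walk of the shadow of ${\cal W}_P$ currently plays the role of the grip; if there is none we set all four to $\nil$. Since the dynamic program stores a minimum-cost partial solution at $(P,\phi)$, the stored ${\cal S}$ satisfies $\cost_{\vG}({\cal S})\le \cost_{\vG}({\cal W}_P)$.

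For the inductive step, let $P$ have $m>1$ edges. The key geometric fact to establish is that a basic path $P$ can be \emph{split} into two basic paths $P_1,P_2$ with $E(P_1)\cup E(P_2)=E(P)$, $E(P_1)\cap E(P_2)=\emptyset$, $E(P_1),E(P_2)\neq\emptyset$, sharing a common endpoint $w$, and such that the $P$-facial restriction ${\cal W}_P$ decomposes ``consistently'' as the union of partial solutions ${\cal S}_1,{\cal S}_2$ compatible with $(P_1,\phi_1)$ and $(P_2,\phi_2)$ for appropriate indices $\phi_1,\phi_2$, where the merging of $\phi_1,\phi_2$ returns $\phi$. The choice of $w$ is dictated by Definition \ref{defn:basic}: if $P$ covers ${\cal T}_D$ (case (1-1)) then we split $P$ at an endpoint of one of the subtrees ${\cal T}_{D_i}$ so that $P_1$ covers ${\cal T}_{D_1}\cup\cdots\cup{\cal T}_{D_{i-1}}$ together with the portion of $D$ between them and $P_2$ handles the rest; if we are in case (1-2) with parameter $j$, we split so that $P_1$ covers ${\cal T}_{D_1}\cup\cdots\cup{\cal T}_{D_{j-1}}$ and $P_2$ covers ${\cal T}_{D_j}$ and the rest of $D$. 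In every case one checks, using that the walks in the shadow ${\cal W}'$ are non-self-crossing and pairwise non-crossing (Lemma \ref{lem:uncrossing_walks_vortex}(5)) and drawn inside the disk ${\cal D}$, that $P_1$ and $P_2$ are again basic, that the tree ${\cal T}$ (and any other tree of ${\cal F}$) is covered/avoided in the required way, and that the $P_i$-facial restriction of ${\cal W}$ equals the restriction of ${\cal W}_P$ to $\vH_{P_i}$.

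Granting the split, let ${\cal W}_{P_i}$ be the $P_i$-facial restriction of $\cal W$; by induction the table contains a partial solution ${\cal S}_i$ at $(P_i,\phi_i)$ with $\cost_{\vG}({\cal S}_i)\le\cost_{\vG}({\cal W}_{P_i})$, where $C_i$, $f_i^\myin$, $f_i^\myout$ are the combinatorial data of ${\cal W}_{P_i}$ and $a_i,l_i,r_i,p_i$ describe its grip. One then verifies that the merging procedure applied to $\phi_1$ and $\phi_2$ does not return $\nil$: merging phase 1 succeeds because the in/out degrees at $B_w$ agree ($f_1^\myin(x)=f_2^\myout(x)$ and symmetrically, since both count traversals of the same bag by the \emph{same} collection ${\cal W}$, split only according to whether a traversal lies in $\vH_{P_1}$ or $\vH_{P_2}$); merging phase 2 succeeds because the way the $P$-important grip walk of ${\cal W}_P$ restricts to the two sides is exactly one of the seven enumerated cases (here is where the $l,r,p$ breakpoints are used: they record the vertices of $F$ at which the grip crosses between $P_1$-territory and $P_2$-territory, and the broken-grip cases handle a grip that crosses twice); merging phase 3 succeeds because ${\cal W}_P$ satisfies (T5). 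Moreover the cost of the merged solution is at most $\cost_{\vG}({\cal S}_1)+\cost_{\vG}({\cal S}_2)$ possibly minus some redundant grip sub-walks and plus the (shortest-path) cost of the re-routed grip segments, which by the triangle inequality is at most $\cost_{\vG}({\cal W}_{P_1})+\cost_{\vG}({\cal W}_{P_2})=\cost_{\vG}({\cal W}_P)$. Finally, when the update step processes $(P_1,\phi_1),(P_2,\phi_2)$ it obtains a partial solution at $(P,\phi)$ of cost no larger than this merged cost, and since it keeps the minimum, the stored ${\cal S}$ satisfies $\cost_{\vG}({\cal S})\le\cost_{\vG}({\cal W}_P)$, as required.

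The main obstacle I anticipate is the split lemma — showing that a basic path always admits a decomposition into two basic paths whose facial restrictions are compatible, and in particular that the grip of ${\cal W}_P$ interacts with the split in one of exactly the seven tabulated ways. This requires a careful case analysis driven by Definition \ref{defn:basic} together with planarity of $\psi$ and the non-crossing property, and is the technical heart of the argument; the degree-counting and cost-accounting in the merging phases are comparatively routine once the split is in place.
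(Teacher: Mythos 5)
Your overall strategy (induct, split $P$, invoke the merging procedure, account for costs) matches the paper's in spirit, but the proposal defers to a ``split lemma'' that is essentially the entire content of the lemma, and in the form you state it — a basic path always splits at a single vertex $w$ into \emph{two} basic subpaths whose facial restrictions are compatible — it is not what the paper proves and is not obviously true. Two concrete problems. First, a subpath of a basic path need not be basic: if some tree ${\cal T}'\in{\cal F}$ that $P$ covers has attachment vertices on both sides of $w$, then neither $P_1$ nor $P_2$ covers or avoids ${\cal T}'$, violating condition (2) of Definition \ref{defn:basic}. The paper avoids this by not inducting on $|E(P)|$ at all: it first handles a single tree by structural induction on ${\cal F}$ (base case a leaf $D$, decomposed into up to \emph{five} pieces $P_1,\dots,P_5$ with the two single-edge pieces carrying the grip; non-leaf case by a nested induction on the index $j$, decomposing into $P_1,P_2,P_3$ and merging twice), and then handles multiple trees by introducing levels $L({\cal T})$ and outer-most trees and inducting on those. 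Your proposal does not engage with the multi-tree case at all, and that ordering machinery is precisely what guarantees that every intermediate path in the decomposition is again basic.

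Second, the cost bound is not ``the triangle inequality.'' When the merge re-routes grips (cases (3), (4), (6), (7) of merging phase 2), the new grip segments $R_1^*,R_2^*$ are shortest-path concatenations that can individually be \emph{longer} than the removed segments $Q_1^*,Q_2^*$; the paper charges them against the unique $P$-important walks (Proposition \ref{prop:important_walk}) and uses the exchange identity $\cost_{\vG}(R'_1)+\cost_{\vG}(R'_2)=\cost_{\vG}(Q'_2)+\cost_{\vG}(Q'_4)$, which holds because the crossing important walks share their middle sub-walks. Without this identity the inequality $\cost_{\vG}({\cal S})\le\cost_{\vG}({\cal W}_P)$ does not follow. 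So the proposal correctly identifies where the difficulty sits but leaves both the decomposition and the cost accounting unproven at exactly the points where the argument could fail.
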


\begin{proof}
Let us first assume that $\cal F$ contains only one tree. We will deal with the more general case later on. First suppose that $P$ is an empty basic path. Let $P = x_1,x_2, \ldots, x_m$, where $x_1 = u$ and $x_m = v$. We will prove the assertion by induction on $m$. For the base case, suppose that $m=2$, and thus $P$ contains only one edge. Let $a = l = r = p =  \nil$. In this case, a partial solution $\cal S$ at location $(P,(C,f^\myin, f^\myout, a,l,r,p))$ is computed in the initialization step of the dynamic programming table and clearly we have $\cost_{\vG}({\cal S}) \leq \cost_{\vG}({\cal W}_P)$ and we are done.
Now suppose that $m > 2$ and we have proved the assertion for all $m' < m$. We first decompose $P$ into two edge-disjoint paths $P_1$ and $P_2$, such that $V(P_1) \cap V(P_2) = w$ for some $1 < j < m$ and $w = x_j$.
For $i \in \{1,2\}$ let ${\cal W}_{P_i}$ be the $P_i$-facial restriction of $\vW_{\OPT}$ and let $\Gamma_i=\bigcup_{\vW\in {\cal W}_{P_i}} \vW$.
Let $C_1$ be the partition of $B_u\cup B_w$ that corresponds to the weakly-connected components of $\Gamma_1$ and let $C_2$ be the partition of $B_w\cup B_v$ that corresponds to the weakly-connected components of $\Gamma_2$.
For any $x\in B_u\cup B_w$ let 
$f_1^{\myin}(x) = \indeg_{{\cal W}_{P_1}}(x)$
and 
$f_1^{\myout}(x) = \outdeg_{{\cal W}_{P_1}}(x)$. Also for any $x\in B_w\cup B_v$ let 
$f_2^{\myin}(x) = \indeg_{{\cal W}_{P_2}}(x)$
and 
$f_2^{\myout}(x) = \outdeg_{{\cal W}_{P_2}}(x)$.
By the induction hypothesis, there exists partial solutions ${\cal S}_1$ and ${\cal S}_2$ that are compatible with $(P_1,(C_1,f_1^\myin, f_1^\myout, \nil,\nil,\nil,\nil))$ and $(P_2,(C_2,f_2^\myin, f^\myout, \nil,\nil,\nil,\nil))$ respectively, and we have $\cost_{\vG}({\cal S}_1) \leq \cost_{\vG}({\cal W}_{P_1})$, $\cost_{\vG}({\cal S}_2) \leq \cost_{\vG}({\cal W}_{P_2})$.
The dynamic program will merge ${\cal S}_1$ and ${\cal S}_2$ to get the desired $\cal S$. Note that by the construction, for every $x \in B_w$ we have $f_1^{\myin}(x) = f_2^{\myout}(x)$ and $f_2^{\myin}(x) = f_1^{\myout}(x)$. Therefore, by the first merging phase, we can merge walks in ${\cal S}_1$ and ${\cal S}_2$. Also, we let $a=l=r=p=\nil$ and we proceed the second phase of merging. Finally, by the construction and definition of $P$-facial restriction, (T5) holds and the merging process returns a partial solution $\cal S$ compatible with $(P,(C,f^\myin, f^\myout, a,l,r,p))$ with $\cost_{\vG}({\cal S}) \leq \cost_{\vG}({\cal W}_P)$, as desired.

Now suppose that $P$ is not empty basic. Let ${\cal T}$, $D$, ${\cal T}_D$, $k$ and $j$ be as in Definition \ref{defn:basic}.
We will prove the assertion by induction on $\cal T$. For the base case, suppose that $D$ is a leaf of ${\cal T}$.
Suppose that $P = x_1,x_2, \ldots, x_m$ for some $m > 0$, where $x_1 = u$ and $x_m = v$, and $D$ is a (possibly closed) walk from $x_i\in V(P)$ to $x_j\in V(P)$. We may assume w.l.o.g.~that $i \leq j$. There are some possible cases here based on $i$ and $j$. First suppose that $i > 1$, $j < m$ and $j-i \geq 3$. In this case, let $P_1 = x_1,\ldots,x_i$, $P_2 = x_i,x_{i+1}$, $P_3 = x_{i+1}, \ldots, x_{j-1}$, $P_4 = x_{j-1},x_j$ and $P_5=x_j,\ldots,x_m$. Let $P_6 = P_1 \cup P_2$, $P_7 = P_6 \cup P_3$ and $P_8 = P_7 \cup P_4$. For $i \in \{1,2,3,4,5,6,7,8\}$ let ${\cal W}_{P_i}$ be the $P_i$-facial restriction of $\vW_{\OPT}$ and let $\Gamma_i=\bigcup_{\vW\in {\cal W}_{P_i}} \vW$. We also define $C_i$, $f_i^{\myin}$ and $f_i^{\myout}$ as in the previous case.
Note that $P_1$, $P_3$ and $P_5$ are empty basic paths. We let $a_1 = a_3 = a_5 =\nil$, $l_1 = l_3 = l_5 =\nil$, $r_1 = r_3 = r_5 =\nil$, $p_1 = p_3 = p_5 =\nil$ and thus we can find partial solutions ${\cal S}_1$, ${\cal S}_3$ and ${\cal S}_5$ compatible with $(P_1,(C_1,f_1^\myin, f_1^\myout, a_1,l_1,r_1,p_1))$, $(P_3,(C_3,f_3^\myin, f_3^\myout, a_3,l_3,r_3,p_3))$ and $(P_5,(C_5,f_5^\myin, f_5^\myout, a_5,l_5,r_5,p_5))$ respectively. We also have $\cost_{\vG}({\cal S}_1) \leq \cost_{\vG}({\cal W}_{P_1})$, $\cost_{\vG}({\cal S}_3) \leq \cost_{\vG}({\cal W}_{P_3})$ and $\cost_{\vG}({\cal S}_5) \leq \cost_{\vG}({\cal W}_{P_5})$.
Let $a_2 = a_4 = a_6 = a_7 = a_8 =  (x_i,x_j)$. If $D$ does not have a parent in $\cal T$, then we let $l_2=l_4 = l_6=l_7=l_8 = r_2=r_4 = r_6=r_7=r_8 = p_2=p_4 = p_6=p_7=p_8 \in V(D)$ to be an arbitrary vertex of $D$. Otherwise, suppose that $D$ has a parent $D'$ in $\cal T$. If $D'$ does not have a parent in $\cal T$, then we let $l_2=l_4 = l_6=l_7=l_8 = r_2=r_4 = r_6=r_7=r_8 = p_2=p_4 = p_6=p_7=p_8 \in V(D) \cap V(D')$. Otherwise, suppose that $D'$ has a parent $D''$ in $\cal T$. In this case, we let $l_2=l_4 = l_6=l_7=l_8 = r_2=r_4 = r_6=r_7=r_8 \in V(D) \cap V(D')$ and $p_2=p_4 = p_6=p_7=p_8 \in V(D') \cap V(D'')$.
Therefore, by computing the initialization step, we can find a partial solution ${\cal S}_2$ compatible with $(P_2,(C_2,f_2^\myin, f_2^\myout, a_2,l_2,r_2,p_2))$ and a partial solution ${\cal S}_4$ compatible with $(P_4,(C_4,f_4^\myin, f_4^\myout, a_4,l_4,r_4,p_4))$, and we have $\cost_{\vG}({\cal S}_2) \leq \cost_{\vG}({\cal W}_{P_2})$ and $\cost_{\vG}({\cal S}_4) \leq \cost_{\vG}({\cal W}_{P_4})$. Now by merging ${\cal S}_1$ and ${\cal S}_2$, we get a partial solution ${\cal S}_6$ compatible with $(P_6,(C_6,f_6^\myin, f_6^\myout, a_6,l_6,r_6,p_6))$. By merging ${\cal S}_6$ and ${\cal S}_3$, we get a partial solution ${\cal S}_7$ compatible with $(P_7,(C_7,f_7^\myin, f_7^\myout, a_7,l_7,r_7,p_7))$. By merging ${\cal S}_7$ and ${\cal S}_4$, we get a partial solution ${\cal S}_8$ compatible with $(P_8,(C_8,f_8^\myin, f_8^\myout, a_8,l_8,r_8,p_8))$, and finally by merging ${\cal S}_8$ and ${\cal S}_5$, we get the desired partial solution ${\cal S}$ compatible with $(P,(C,f^\myin, f^\myout, a,l,r,p))$ with $\cost_{\vG}({\cal S}) \leq \cost_{\vG}({\cal W}_P)$. If $i = 1$ or $j=m$, we will follow a similar approach. The only different is that instead of dividing $P$ into five paths, we divide it into four paths. Finally, the last case is when $j-i < 3$. In this case, if $i \neq j$, we define the same subpaths $P_1$, $P_2$, $P_4$ and $P_5$, and we follow a similar approach. Otherwise, suppose that $i = j$. In this case, we let $P_1 = x_1, \ldots, x_{i}$ and $P_2 = x_i, \ldots, x_m$ and by following the same approach by mering two partial solutions, we get the desired $\cal S$.

Now suppose that $D \in V({\cal T})$ is non-leaf. In this case, we prove the assertion by induction on $j$, where $j$ comes from Definition \ref{defn:basic}. Note that we perform a second induction inside the first induction. For the base case, suppose that $j=1$. In this case, $D_1$ is a child of $D$ and $P$ covers ${\cal T}_{D_1}$ and avoids ${\cal T}_D \setminus {\cal T}_{D_1}$. Therefore, by using the first induction hypothesis on $D_1$, there exists $a \in {\cal A} \cup ({\cal A} \times {\cal A}) \cup \nil$ and $l,r,p \in V(\vG) \cup \nil$ such that the dynamic programming table contains some partial solution $\cal S$ at location $(P,(C,f^\myin, f^\myout, a,l,r,p))$ with $\cost_{\vG}({\cal S}) \leq \cost_{\vG}({\cal W}_P)$. Now for the same $a$, $l$, $r$, $p$ and $\cal S$, we have that $\cal S$ is compatible with $(P,(C,f^\myin, f^\myout, a,l,r,p))$, as desired.
Now suppose that we have proved the assertion for all $ 1 \leq j' < j$. By Definition \ref{defn:basic}, there exists a basic path $P_1 \subseteq P$, where $u$ is the first vertex of $P_1$, such that for all $l \leq j-1$, $P_1$ covers ${\cal T}_{D_l}$ and avoids ${\cal T}_D \setminus (\bigcup_{m=1}^{j-1} {\cal T}_{D_m})$. Also there exists a basic path $P_2 \subseteq P$, where $v$ is the last vertex of $P_2$, such that $P_2$ covers ${\cal T}_{D_j}$ and avoids ${\cal T} \setminus {\cal T}_{D_j}$.
Let $u' \in V(F)$ and $v' \in V(F)$ be the other endpoints of $P_1$ and $P_2$ respectively. Let $P_3 \in {\cal P}$ be the path between $u'$ and $v'$ that does not contain $v$ and let $P_4 = P_1 \cup P_3$. By the construction, $P_3$ and $P_4$ are basic. For $i \in \{1,2,3,4\}$, let ${\cal W}_{P_i}$ be the $P_i$-facial restriction of $\vW_{OPT}$ and let $\Gamma_i=\bigcup_{\vW\in {\cal W}_{P_i}} \vW$. Let $C_1, C_2, C_3$ and $C_4$ be the partitions of $B_u\cup B_{u'}$, $B_{v'}\cup B_v$, $B_{u'}\cup B_{v'}$ and $B_u \cup B_{v'}$ that corresponds to the weakly connected components of $\Gamma_1$, $\Gamma_2$, $\Gamma_3$ and $\Gamma_4$ respectively.
For any $x\in B_u\cup B_{u'}$ let 
$f_1^{\myin}(x) = \indeg_{{\cal W}_{P_1}}(x)$
and 
$f_1^{\myout}(x) = \outdeg_{{\cal W}_{P_1}}(x)$,
for any $x\in B_{v'}\cup B_v$ let 
$f_2^{\myin}(x) = \indeg_{{\cal W}_{P_2}}(x)$
and 
$f_2^{\myout}(x) = \outdeg_{{\cal W}_{P_2}}(x)$,
for any $x \in B_{u'} \cup B_{v'}$, let $f_3^{\myin}(x) = \indeg_{{\cal W}_{P_3}}(x)$ and $f_3^{\myout}(x) = \outdeg_{{\cal W}_{P_3}}(x)$,
and for any $x\in B_{u}\cup B_{v'}$ let 
$f_4^{\myin}(x) = \indeg_{{\cal W}_{P_4}}(x)$
and 
$f_4^{\myout}(x) = \outdeg_{{\cal W}_{P_4}}(x)$.
By the second induction hypothesis, there exists some $a_1\in {\cal A} \cup ({\cal A} \times {\cal A}) \cup \nil$ and $l_1,r_1,p_1 \in V(\vG) \cup \nil$, such that the dynamic programming table contains some partial solution ${\cal S}_1$ at location $(P_1,(C_1,f_1^\myin, f^\myout, a_1,l_1,r_1,p_1))$, with
$\cost_{\vG}({\cal S}_1) \leq \cost_{\vG}({\cal W}_{P_1})$.
Also by the first induction hypothesis, there exists some $a_2\in {\cal A} \cup ({\cal A} \times {\cal A}) \cup \nil$ and $l_2,r_2,p_2 \in V(\vG) \cup \nil$, such that the dynamic programming table contains some partial solution ${\cal S}_2$ at location $(P_2,(C_2,f_2^\myin, f^\myout, a_2,l_2,r_2,p_2))$, with
$\cost_{\vG}({\cal S}_2) \leq \cost_{\vG}({\cal W}_{P_2})$.
Let $a_3 = l_3 = r_3 = p_3= \nil$. Let $a_4 = a_1$, $l_4 = l_1$, $r_4 = r_1$ and $p_4 = p_1$. Since $P_3$ is basic, there exists a partial solution ${\cal S}_3$ compatible with $(P_3,(C_3,f_3^\myin, f_3^\myout, a_3,l_3,r_3,p_3))$. Now we merge ${\cal S}_1$ and ${\cal S}_3$ to get a partial solution ${\cal S}_4$ compatible with $(P_4, (C_4, f_4^\myin, f_4^\myout, a_4,l_4,r_4,p_4))$. Note that for every $x \in B_{u'}$, we have $f_3^{\myin}(x) = f_1^{\myout}(x)$ and $f_3^{\myout}(x) = f_1^{\myin}(x)$. Therefore, we can apply the first merging phase. Also we have $a_4 = a_1$, $l_4 = l_1$, $r_4 = r_1$ and $p_4 = p_1$, and thus we can apply the second merging phase. Finally, by the construction (T5) holds and we get a partial solution ${\cal S}_4$ compatible with $(P_4, (C_4, f_4^\myin, f_4^\myout, a_4,l_4,r_4,p_4))$.
Now, we merge two partial solutions ${\cal S}_4$ and ${\cal S}_2$ to get the desired ${\cal S}$. Clearly, for every $x \in B_{v'}$ we have $f_4^\myin(x) = f_2^\myout(x)$ and $f_4^\myout(x) = f_2^\myin(x)$. Therefore, we can apply the first phase of merging. 
If $a_2 = l_2 = r_2 = p_2 = a_4 = l_4 = r_4 = p_4 = \nil$, then we let $a = l = r =p = \nil$.
Otherwise, if $a_4 = l_4 = r_4 = p_4 = \nil$ and $a_2 = (u_2^*,v_2^*) \in {\cal A}$ with $\{u_2^*,v_2^*\}\cap V(P_4) \subseteq \{u,v'\}$, then we let $a = a_2$, $l=l_2$, $r=r_2$ and $p=p_2$.
If $a_2 = l_2 = r_2 = p_2 = \nil$ and $a_4 = (u_4^*,v_4^*)$ with $\{u_4^*,v_4^*\}\cap V(P_2) \subseteq \{v',v\}$, then we let $a = a_4$, $l=l_4$, $r=r_4$ and $p=p_4$.
Otherwise, if $a_2 \neq \nil$, $a_4 \neq \nil$, $a_4 = a_2$, $l_4 = l_2$, $r_4 = r_2$ and $p_4 = p_2$, then we let $a = a_4$, $l=l_4$, $r=r_4$ and $p=p_4$.
Otherwise, if $a_2 = (u_2^*,v_2^*) \in {\cal A}$, $a_4 = (u_4^*,v_4^*) \in {\cal A}$, $l_4=r_4$ and $p_1 = p_2$,
then we let $a = (u^*,v^*)$, where $u^* \in \{u_2^*, u_4^*\}$ and $v^* \in \{v_2^*,v_4^*\}$, $l=l_2$, $r=r_4$ and $p=p_2$.
Otherwise, if $a_2 = \nil$ and $a_4 \in ({\cal A} \times {\cal A})$, then we let $a = a_4$, $l=l_4$, $r=r_4$ and $p=p_4$.
Otherwise, if $a_4 = \nil$ and $a_2 \in ({\cal A} \times {\cal A})$, then we let $a = a_2$, $l=l_2$, $r=r_2$ and $p=p_2$.
Otherwise, if $a_2 = (u_2^*,v_2^*) \in {\cal A}$, $a_4 = (u_4^*,v_4^*) \in {\cal A}$, $l_2=r_2$, $l_4=r_4$ and $p_2=p_4$, then we let $a=((u',v'),(u'',v''))$ where $v' \in \{v_2,v_4\}$ and $u'' \in \{u_2,u_4\}$, $l=l_2$, $r=r_2$ and $p=p_2$.
Otherwise, if $a_4 = ((u_4,v_4),(u'_4,v'_4)) \in ({\cal A} \times {\cal A})$, $a_2=(u_2,v_2) \in {\cal A}$, $l_2=r_2$ and $p_2=p_4$, then we let $a=((u_4,v_4),(u_2,v'_4))$, $l=l_4$, $r=r_2$ and $p=p_2$.
Therefore, after applying the second merging phase, we get a partial solution $\cal S$ compatible with $(P,(C,f^\myin, f^\myout, a,l,r,p))$.

Now we have to show that $\cost_{\vG}({\cal S}) \leq \cost_{\vG}({\cal W}_P)$. Let us first suppose that $D$ is a closed walk. We will deal with the case where $D$ is an open walk later on. If $a_2 = l_2 = r_2 = p_2 = \nil$ or $a_4 =l_4=r_4=p_4=\nil$, then this is immediate.
 If $a_2 = a_4$, $l_2=l_4$, $r_2=r_4$ and $p_2=p_4$, then we have $a = a_2$, $l=l_2$, $r=r_2$ and $p=p_2$, and thus this case is also immediate. Suppose that $a_2 = (u_2^*,v_2^*) \neq \nil$, $a_4 = (u_4^*,v_4^*) \neq \nil$, $l_4=r_4$ and $p_1 = p_2$, and
 $a = (u^*,v^*)$, where $u^* \in \{u_2^*, u_4^*\}$ and $v^* \in \{v_2^*,v_4^*\}$. We may assume w.l.o.g~that $a = (u_2^*, v_4^*)$. We have that $l=l_2$, $r=r_4$ and $p=p_2$.
 For $i\in \{2,4\}$ let $Q_i^*$ be the grip of ${\cal S}_i$, and let $\vW_i\in {\cal S}_i$ that contains $Q_i^*$ as a sub-walk.
Let $Y_1$ be the shortest-path in $\vG$ from $u_2^*$ to $l_2$.
Let $Y_2$ be the shortest-path in $\vG$ from $l_2$ to $l_4$. 
Let $Y_3$ be the shortest-path in $\vG$ from $l_4$ to $v_4^*$.
Let $R_1^*$ be the path in $\vG$ from $u_2^*$ to $v_4^*$ obtained by concatenation of $Y_1$, $Y_2$ and $Y_3$.
Let $Y'_1$ be the shortest-path in $\vG$ from $u_4^*$ to $r_4$.
Let $Y'_2$ be the shortest-path in $\vG$ from $r_4$ to $l_2$.
Let $Y'_3$ be the shortest-path in $\vG$ from $l_2$ to $v_2^*$.
Let $R_2^*$ be the path in $\vG$ from $u_4^*$ to $v_2^*$ obtained by concatenation of $Y'_1$, $Y'_2$ and $Y'_3$. Now by the construction, we have that $\cost_{\vG}({\cal S}) = \cost_{\vG}({\cal S}_2) + \cost_{\vG}({\cal S}_4) + \cost_{\vG}(R_1^*) + \cost_{\vG}(R_2^*) - \cost_{\vG}(Q_2^*) - \cost_{\vG}(Q_4^*)$. 
Note that by the construction, there exists a $P$-important walk from $u_2^*$ to $v_4^*$ (w.r.t.~$\cal W$) and a $P$-important walk from $u_4^*$ to $v_2^*$ (w.r.t.~$\cal W$). By Proposition \ref{prop:important_walk} these walks are unique. Let $R'_1$ be the $P$-important walk from $u_2^*$ to $v_4^*$ (w.r.t.~$\cal W$) and let $R'_2$ be the $P$-important walk from $u_4^*$ to $v_2^*$ (w.r.t.~$\cal W$). 
Also, there exists a $P_2$-important walk from $u_2^*$ to $v_2^*$ (w.r.t.~$\cal W$) and a $P_4$-important walk from $u_4^*$ to $v_4^*$ (w.r.t.~$\cal W$), and thus by Proposition \ref{prop:important_walk} these walks are unique.
Let $Q'_2$ be the $P_2$-important walk from $u_2^*$ to $v_2^*$ (w.r.t.~$\cal W$) and let $Q'_4$ be the $P_4$-important walk from $u_4^*$ to $v_4^*$ (w.r.t.~$\cal W$). By the definition of important walks, we have that $\cost_{\vG}(R_1^*) \leq \cost_{\vG}(R'_1)$, $\cost_{\vG}(R_2^*) \leq \cost_{\vG}(R'_2)$, $\cost_{\vG}(Q_2^*) \leq \cost_{\vG}(Q'_2)$ and $\cost_{\vG}(Q_4^*) \leq \cost_{\vG}(Q'_4)$. Also by the construction, we have that $\cost_{\vG}(R'_1) + \cost_{\vG}(R'_2) = \cost_{\vG}(Q'_2) + \cost_{\vG}(Q'_4)$.
Therefore, we have

\begin{align*}
\cost_{\vG}({\cal S}) &= \cost_{\vG}({\cal S}_2) + \cost_{\vG}({\cal S}_4) + \cost_{\vG}(R_1^*) + \cost_{\vG}(R_2^*) - \cost_{\vG}(Q_2^*) - \cost_{\vG}(Q_4^*) \\
 &\leq \cost_{\vG}({\cal S}_2) + \cost_{\vG}({\cal S}_4) + \cost_{\vG}(R'_1) + \cost_{\vG}(R'_2) - \cost_{\vG}(Q_2^*) - \cost_{\vG}(Q_4^*) \\
 &= (\cost_{\vG}({\cal S}_2) - \cost_{\vG}(Q_4^*)) + (\cost_{\vG}({\cal S}_4) - \cost_{\vG}(Q_2^*) ) + \cost_{\vG}(R'_1) + \cost_{\vG}(R'_2) \\
 &\leq \cost_{\vG}({\cal W}_{P_1}) + \cost_{\vG}({\cal W}_{P_2}) \\
 &= \cost_{\vG}({\cal W}_P).
\end{align*}

\begin{center}
\scalebox{1}{\includegraphics{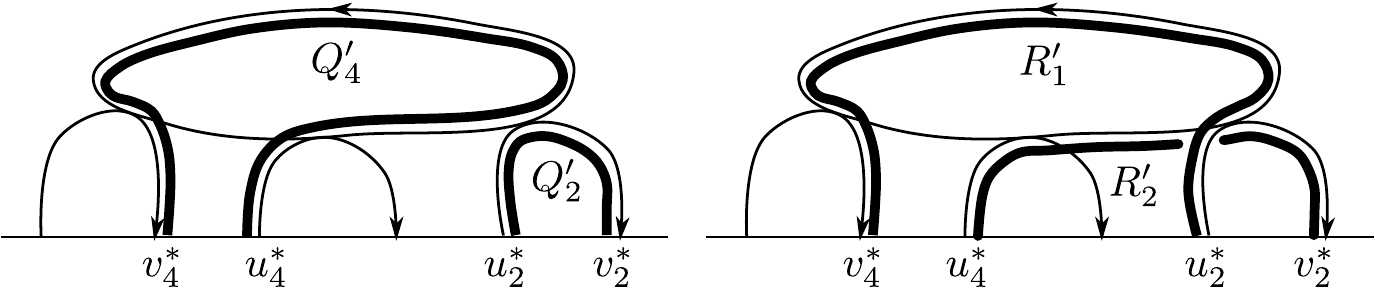}}
\end{center}

Now suppose that $D$ is an open walk. If $a_2 = \nil$ and $a_4 \in ({\cal A} \times {\cal A})$, or $a_4 = \nil$ and $a_2 \in ({\cal A} \times {\cal A})$, then this is immediate. 
If $a_2 = (u_2^*,v_2^*) \in {\cal A}$, $a_4 = (u_4^*,v_4^*) \in {\cal A}$, $l_2=r_2$, $l_4=r_4$ and $p_2=p_4$, then we have $a=((u',v'),(u'',v''))$ where $v' \in \{v_2^*,v_4^*\}$ and $u'' \in \{u_2^*,u_4^*\}$. We may assume w.l.o.g~that $v' = v_2^*$ and $u'' = u_4^*$. Then we have $l=l_2$, $r=r_2$ and $p=p_2$.
Let $R_1$ be the shortest-path in $\vG$ from $u'$ to $l_2$.
Let $R_2$ be the shortest-path in $\vG$ from $l_1$ to $v'$.
Let $R'$ be the path from $u'$ to $v'$ obtained by the concatenation of $R_1$ and $R_2$.
Let $Y_1$ be the shortest path in $\vG$ from $u''$ to $l_4$.
Let $Y_2$ be the shortest path in $\vG$ from $l_4$ to $v''$.
Let $Y'$ be the path from $u''$ to $v''$ obtained by the concatenation of $Y_1$ and $Y_2$.
Let $Z_1$ be the shortest path in $\vG$ from $u_2^*$ to $l_2$.
Let $Z_2$ be the shortest path in $\vG$ from $l_2$ to $l_4$.
Let $Z_3$ be the shortest path in $\vG$ from $l_4$ to $v_4^*$.
Let $Z'$ be the path from $u_2^*$ to $v_4^*$ obtained by the concatenation of $Z_1$, $Z_2$ and $Z_3$.
Let $Q_2^*$ be the grip of ${\cal S}_2$ and let $Q_4^*$ be the grip of ${\cal S}_4$.
By the construction, we have that $\cost_{\vG}({\cal S}) = \cost_{\vG}({\cal S}_2) + \cost_{\vG}({\cal S}_4) + \cost_{\vG}(R') + \cost_{\vG}(Y') + \cost_{\vG}(Z') - \cost_{\vG}(Q_2^*) - \cost_{\vG}(Q_4^*)$.
By the construction, there exists a $P$-important walk from $u_2^*$ to $v_4^*$, a $P$-important walk from $u'$ to $v_2^*$, and a $P$-important walk from $u_4^*$ to $v''$. Therefore by Proposition \ref{prop:important_walk}, these important walks are unique. Let $R''$, $Y''$ and $Z''$ be the important walks from $u'$ to $v'$, $u''$ to $v''$, and $u_2^*$ to $v_4^*$ respectively. Therefore, we have that

\begin{align*}
\cost_{\vG}({\cal S}) &= \cost_{\vG}({\cal S}_2) + \cost_{\vG}({\cal S}_4) + \cost_{\vG}(R') + \cost_{\vG}(Y') + \cost_{\vG}(Z') - \cost_{\vG}(Q_2^*) - \cost_{\vG}(Q_4^*) \\
 &\leq \cost_{\vG}({\cal S}_2) + \cost_{\vG}({\cal S}_4) + \cost_{\vG}(R'') + \cost_{\vG}(Y'') + \cost_{\vG}(Z'') - \cost_{\vG}(Q_2^*) - \cost_{\vG}(Q_4^*) \\
 &\leq \cost_{\vG}({\cal W}_{P_1}) + \cost_{\vG}({\cal W}_{P_2}) \\
 &= \cost_{\vG}({\cal W}_P).
\end{align*}


If $a_4 = ((u_4,v_4),(u'_4,v'_4)) \in ({\cal A} \times {\cal A})$, $a_2=(u_2,v_2) \in {\cal A}$, $l_2=r_2$ and $p_2=p_4$, then we have $a=((u_4,v_4),(u_2,v'_4))$, $l=l_4$, $r=r_2$ and $p=p_2$.
Let $(Q_4^*,Q_4^{**})$ be the grip of ${\cal S}_4$, and let $Q_2^*$ be the grip of ${\cal S}_2$.
We may assume w.l.o.g~that $Q_4^*$ is a path from $u_4$ to $v_4$, and $Q_4^{**}$ is a path from $u'_4$ to $v'_4$.
Let $Y_1$ be the shortest path in $\vG$ from $u_2$ to $l_2$.
Let $Y_2$ be the shortest path in $\vG$ from $l_2$ to $v'_4$.
Let $Y'$ be the path from $u_2$ to $v'_4$ obtained by the concatenation of $Y_1$ and $Y_2$.
Let $Z_1$ be the shortest path in $\vG$ from $u'_4$ to $r_4$.
Let $Z_2$ be the shortest path in $\vG$ from $r_4$ to $l_2$.
Let $Z_3$ be the shortest path in $\vG$ from $l_2$ to $v_2$.
Let $Z'$ be the path from $u'_4$ to $v_2$ obtained by the concatenation of $Z_1$, $Z_2$ and $Z_3$. By the construction, we have that $\cost_{\vG}({\cal S}) = \cost_{\vG}({\cal S}_2) + \cost_{\vG}({\cal S}_4) + \cost_{\vG}(Y') + \cost_{\vG}(Z') - \cost_{\vG}(Q_2^*)  - cost_{\vG}(Q_4^{**})$.
By the construction, there exists a $P$-important walk from $u'_4$ to $v_2$, and a $P$-important walk from $u_2$ to $v'_4$. Therefore by Proposition \ref{prop:important_walk}, these important walks are unique. Let $Y''$ and $Z''$ be the important walks from $u_2$ to $v'_4$, and $u'_4$ to $v_2$ respectively. Therefore we have

\begin{align*}
\cost_{\vG}({\cal S}) &= \cost_{\vG}({\cal S}_2) + \cost_{\vG}({\cal S}_4) + \cost_{\vG}(Y') + \cost_{\vG}(Z') - \cost_{\vG}(Q_2^*)  - cost_{\vG}(Q_4^{**}) \\
 &\leq \cost_{\vG}({\cal S}) = \cost_{\vG}({\cal S}_2) + \cost_{\vG}({\cal S}_4) + \cost_{\vG}(Y'') + \cost_{\vG}(Z'') - \cost_{\vG}(Q_2^*)  - cost_{\vG}(Q_4^{**}) \\
 &\leq \cost_{\vG}({\cal W}_{P_1}) + \cost_{\vG}({\cal W}_{P_2}) \\
 &= \cost_{\vG}({\cal W}_P).
\end{align*}


Now suppose that $\cal F$ contains more than one tree. Let $A = \{{\cal T}_1, \ldots, {\cal T}_m\}$ be the set of all trees in $\cal F$. For every ${\cal T} \in A$ we define the \emph{level} of ${\cal T}$, $L({\cal T})$, as follows. Let $D$ be the root of ${\cal T}$. We set $L({\cal T}) = 0$, if there exists a basic path $P'$ 
that covers ${\cal T}$ and avoids all ${\cal T}'\in {\cal F}\setminus \{{\cal T}\}$.
We call a minimal such path, a \emph{corresponding} basic path for $\cal T$ and we denote it by $P_{\cal T}$; it is immediate that there is a unique such minimal path.
Let ${\cal F}_0 = \{{\cal T} \in {\cal F}: L({\cal T}) = 0\}$.
Now for $i \geq 0$, suppose that we have defined trees of level $i$ and ${\cal F}_i$.
Suppose that $L({\cal T}) \notin \{0, \ldots, i\}$.
We set $L({\cal T}) = i+1$ if there exists a basic path $P'$ 
that covers ${\cal T}$
such that for all ${\cal T}'\in \bigcup_{j=0}^i {\cal F}_j$, $P'$ either avoids or covers ${\cal T}'$,
and for all ${\cal T}'' \in ({\cal F} \setminus \{{\cal T}\}) \setminus \bigcup_{j=0}^i {\cal F}_j$, $P'$ avoids ${\cal T}''$.
We also call a minimal such path corresponding basic for $\cal T$ and we denote it by $P_{\cal T}$.
Let ${\cal F}_{i+1} = \{{\cal T} \in {\cal F}: L({\cal T}) = i+1\}$.


We say that some ${\cal T}\in {\cal F}$ is \emph{outer-most} if there is no ${\cal T}' \in {\cal F}$ such that $L({\cal T}') > L({\cal T})$ and $P_{\cal T} \subset P_{{\cal T}'}$. 

Let us first suppose that there exists only one outer-most tree ${\cal T} \in {\cal F}$, such that $P_{\cal T} \subseteq P$. We will deal with the more general case later. Also, suppose that ${\cal T} \in {\cal F}_m$ for some $m \geq 0$. We will prove the assertion by induction on $m$. We also prove that for this case, we have $a = l = r = p = \nil$. For the base case, if $m = 0$, then by the construction, $\cal T$ is the only tree with a corresponding basic path $P_{\cal T} \subseteq P$. For this case, we have already established the assertion and we are done.
Now suppose that we have proved the assertion for all $m' < m$. Let ${\cal F}' = {\cal F} \setminus \{{\cal T}\}$. Let ${\cal T}_1, \ldots, {\cal T}_t \in {\cal F}'$ be all outer-most trees in ${\cal F}'$, such that for $j \in \{1, \ldots, t\}$ we have $P_{{\cal T}_j} \subseteq P$, and they intersect $F$ in this order. By the construction, for every $j \in \{1, \ldots, t\}$ we have $L({\cal T}_j) < m$. For every $j \in \{1, \ldots, t\}$ let $P_{{\cal T}_j}$ be a corresponding basic path for ${\cal T}_j$. By the induction hypothesis, for every $j \in \{1, \ldots, t\}$ there exists a partial solution ${\cal S}_j$ for $P_{{\cal T}_j}$. Now, we can apply the same argument when we had only one tree ${\cal T} \in {\cal F}$ for $\cal T$. Note that for each $j \in \{1, \ldots, t\}$, we have $a_j = l_j=r_j=p_j= \nil$. The only difference is that the intermediate basic paths here are not necessarily empty basic paths, and each ${\cal T}_j$ appears as an intermediate basic path, with a partial solution ${\cal S}_j$. Therefore, by following a similar approach to the previous cases and merging appropriately we get a partial solution $\cal S$, as desired.

\begin{center}
\scalebox{0.80}{\includegraphics{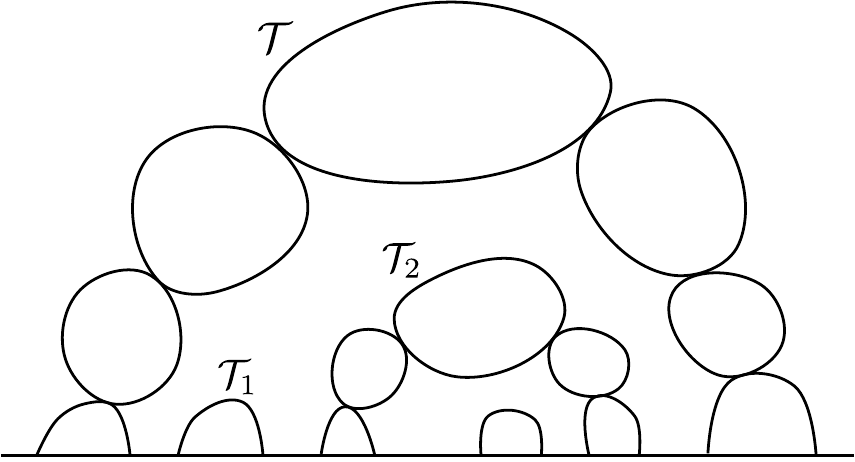}}
\end{center}

Now, suppose that there exist more than one outer-most trees ${\cal T} \in {\cal F}$, where $P_{\cal T} \subseteq P$. Let $B = \{{\cal T}_1, \ldots, {\cal T}_t\}$ be the set of all such trees, where they intersect consecutive subpaths of $P$ in this order. In this case, we prove the assertion by induction on $t$. For the base case where $t =1$, we have already proved the assertion.
Now suppose that we have proved the assertion for all $t' < t$. Let $B' = \{{\cal T}_1, \ldots, {\cal T}_{t-1}\}$. Let $P' \subseteq P$ be the subpath of $P$ such that $B'$ is the set of all outer-most trees, with $P_{{\cal T}_{j}} \subseteq P'$ for all $1 \leq j \leq t-1$. By the induction hypothesis, there exists a partial solution ${\cal S}'$ for $P'$. Let $P'' = P \subseteq P'$. By the construction, $P''$ is a corresponding basic path for ${\cal T}_t$, and thus by the induction hypothesis, there exists a partial solution $S''$ for $P''$. Therefore, by merging ${\cal S}'$ and ${\cal S}''$ we get a partial solution $\cal S$ for $P$, as desired.
\end{proof}

\begin{theorem}\label{thm:tour_visiting_a_single_vortex}
Let $\vG$ be an $n$-vertex $(0, 0, 1, p)$-nearly embeddable graph (that is, planar with a single vortex) and let $\vH$ be the vortex of $\vG$.
Then there exists an algorithm which computes a walk $\vW$ visiting all vertices in $V(\vH)$ of total length $\OPT_{\vG}(V(\vH))$ in time $n^{O(p)}$.
\end{theorem}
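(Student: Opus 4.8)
The plan is to run the dynamic program of this section on $\vG$ and to read the answer off from the table entry indexed by the whole boundary cycle $F$, regarded as a closed path $P^{\ast}\in{\cal P}$.

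First I would apply Lemma \ref{lem:cross_normalization} to replace $\vG$ by a facially and cross normalized graph; this preserves $\OPT_{\vG}(V(\vH))$, and any closed walk of the normalized graph visiting the (normalized) vortex lifts in polynomial time to one of $\vG$ visiting $V(\vH)$ of the same cost. Since normalization blows up the vertex count only polynomially, a bound of the form (new size)$^{O(p)}$ is still $n^{O(p)}$, so I may assume from now on that $\vG$ is facially and cross normalized, and I fix an optimal walk $\vW_{\OPT}$ as in Section \ref{sec:uncrossing} (minimum cost, then fewest edges). Applying Lemma \ref{lem:uncrossing_walks_vortex} to $\vW_{\OPT}$ yields a collection ${\cal W}$ of closed walks with $\cost_{\vG}({\cal W})\le\OPT_{\vG}(V(\vH))$, strongly connected union containing $V(\vH)$, and pairwise non-crossing, non-self-crossing shadow; let ${\cal F}$ be the forest of Lemma \ref{lem:forest}.

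The correctness of the algorithm then reduces to Lemma \ref{lem:vortex_DP_induction} applied to $P=P^{\ast}$. The point is that $P^{\ast}=F$, with both endpoints taken to be an arbitrary fixed $u\in V(F)$, is \emph{basic} with respect to ${\cal W}$: since $V(P^{\ast})=V(F)$ it covers every tree of ${\cal F}$ and avoids none, so Definition \ref{defn:basic} holds. Hence the dynamic programming table stores, at $(P^{\ast},(C,f^{\myin},f^{\myout},a,l,r,p))$ for the tuple determined by the $P^{\ast}$-facial restriction ${\cal W}_{P^{\ast}}$ of ${\cal W}$ (Definition \ref{defn:facial_restriction}), a partial solution ${\cal S}$ with $\cost_{\vG}({\cal S})\le\cost_{\vG}({\cal W}_{P^{\ast}})$. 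Because $P^{\ast}$ covers all of ${\cal F}$, the pieces making up ${\cal W}_{P^{\ast}}$ reassemble into the walks of ${\cal W}$, so their total cost is $\cost_{\vG}({\cal W})\le\OPT_{\vG}(V(\vH))$; moreover, since $\bigcup{\cal W}$ is strongly connected and in/out-balanced, the witnessing tuple has $a=\nil$, has $f^{\myin}=f^{\myout}$ on $B_u$, and has $C$ placing all of $B_u$ in a single block.

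It remains to convert ${\cal S}$ into a single closed walk and to bound the running time. Condition (T1) says every vertex of $V(\vH)=V(\vH_{P^{\ast}})$ is visited by ${\cal S}$; condition (T3) with $a=\nil$ forces every open walk of ${\cal S}$ to have both endpoints in $B_u$, so by (T4) (using $f^{\myin}=f^{\myout}$ on $B_u$) the multiset of arcs traversed by the walks in ${\cal S}$ has equal in- and out-degree at every vertex; and (T5) together with the block structure of $C$ makes the corresponding directed multigraph weakly connected. A weakly connected, in/out-balanced directed multigraph admits an Eulerian closed walk $\vW$, and $\cost_{\vG}(\vW)=\cost_{\vG}({\cal S})\le\OPT_{\vG}(V(\vH))$; since $\vW$ visits every vertex of $V(\vH)$ the reverse inequality is trivial, so $\cost_{\vG}(\vW)=\OPT_{\vG}(V(\vH))$, and I lift $\vW$ back to the original graph via Lemma \ref{lem:cross_normalization}. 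For the running time, there are $O(n^{2})$ choices of $P$ and, for each, $|{\cal I}_P|=n^{O(p)}$ (the bags have $O(p)$ vertices, so ${\cal C}_P$ contributes $2^{O(p\log p)}$, ${\cal D}_P^{\myin}\times{\cal D}_P^{\myout}$ contributes $(n+1)^{O(p)}$, and $a,l,r,p$ range over $O(n^{4})$ possibilities); the initialization takes $n^{O(p)}$ time as shown in the excerpt, and each of the $n^{O(p)}$ merge calls runs in polynomial time, for a total of $n^{O(p)}$. The only genuinely delicate part of this plan is the bookkeeping in the previous paragraph — checking that the stored partial solution at $(P^{\ast},\cdot)$ closes up into a single closed walk of exactly the same cost — but the substantive work has already been carried out in the proof of Lemma \ref{lem:vortex_DP_induction}.
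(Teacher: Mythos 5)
Your proposal is correct and follows essentially the same route as the paper's proof: both apply Lemma \ref{lem:vortex_DP_induction} to the closed path $F$ (which is basic since it covers every tree of ${\cal F}$), use (T4) to close the walks up into closed walks, use (T5) with the trivial partition $C$ to get a single strongly connected component that is then shortcut into one closed walk, and bound the running time by the table size $|{\cal P}|\cdot\max_P|{\cal I}_P|=n^{O(p)}$.
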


\begin{proof}
We have $F\in {\cal P}$ and it is immediate to check that $F$ is basic.
Since $F$ is a cycle, both the endpoitns of $F$ are some vertex $v^\circ$.
It follows by Lemma \ref{lem:vortex_DP_induction} that there exists some $\phi\in {\cal I}_F$ such that the dynamic programming table contains a partial solution ${\cal S}$ at location $(F,\phi)$.
Let $\phi=(C,f^{\myin},f^{\myout},a,l,r,p)$.
It follows by condition (T4) that for all $x\in B_{v^\circ}$ we have $\indeg_{\cal S}(x) = \outdeg_{\cal S}(x)$.
Thus by repeatedly merging pairs of walks that respectively terminate to and start from the same vertex,
we obtain a collection ${\cal S}'$ of closed walks with $\cost_{\vG}({\cal S}') = \cost_{\vG}({\cal S})$ that visit all vertices in $V(\vH)$.
By Lemma \ref{lem:forest} we can assume that $C$ is the trivial partition containing only one cluster, that is $C=\{\{B_{v^\circ}\}\}$.
Since $C$ is trivial, it follows by (T5) that all vertices in $V(\vH)$ are in the same weakly-connected component of $\bigcup_{W\in {\cal S}} W$.
Thus all vertices in $V(\vH)$ are in the same strongly-connected component of $\bigcup_{W\in {\cal S}'} W$.
Since all walks in ${\cal S}'$ are closed, by repeatedly shortcutting pairs of intersecting walks, we obtain a walk $\vW$ with that visits all vertices in $V(\vH)$ with 
$\cost_{\vG}(\vW) = \cost_{\vG}({\cal S}') = \cost_{\vG}({\cal S}) \leq \cost_{\vG}(\vW_{\OPT}) = \OPT_{\vG}$.

The running time is polynomial in the size of the dynamic programming table, which is at most $|{\cal P}| \cdot \max_{P\in {\cal P}} |{\cal I}_P| = O(n^2) \cdot p^{O(p)} \cdot n^{O(p)} \cdot O(n^2) = n^{O(p)}$.
\end{proof}

\section{The dynamic program for traversing a vortex in a bounded genus graph}
\label{sec:vortex_genus}

For the remainder of this section let $\vG$ be a $n$-vertex $(0, g, 1, p)$-nearly embeddable graph.
Let $\vH$ be the vortex in $\vG$, attached to some face $\vF$.
Let $\vG' = \vG\setminus (V(\vH)\setminus (\vF))$ and fix some embedding $\psi$ of $\vG'$ on a surface $S$ of genus $g$.
Let $F$ be the symmetrization of $\vF$.
Let $\vW_{\OPT}$ be a closed walk in $\vG$ that visits all vertices in $\vH$ with minimum $\cost_{\vG}(\vW)$.
Fix a path-decomposition $\{B_v\}_{v\in V(F)}$ of $\vH$ of width $p$.
We present a similar algorithm as in Section \ref{sec:vortex_planar} for computing a walk traversing all vertices in $V(\vH)$ based on dynamic programming.
By Lemma \ref{lem:cross_normalization} we may assume w.l.o.g.~that $\vG$ is facially normalized and cross normalized.

\subsection{The dynamic program}
Let ${\cal Q}$ be the set of all subpaths of $F$, where we allow a path to be closed. For every integer $m$, let 
\[
{\cal P}_{m} = \{ A \subseteq {\cal Q} : |A| \leq m, \text{ for every } Q,Q' \in A \text{ we have } V(Q) \cap V(Q') = \emptyset\}.
\]
And let
\[
{\cal P}_{\infty} = \{A \subseteq {\cal Q} : \text{ For every } Q,Q' \in A \text{ we have } V(Q) \cap V(Q') = \emptyset\}.
\]
 For every $P =\{Q_1, \ldots, Q_m\} \in {\cal P}_{\infty}$, let $E(P) = \bigcup_{i=1}^{m} E(Q_i)$ and let $V(P) = \bigcup_{i=1}^{m} V(Q_i)$.
For each $i \in \{1, \ldots, m\}$, let $u_i$ and $v_i$ be the endpoints of $Q_i$.
Similar to the planar case, let 
\[
\vH_P = \vH\left[\bigcup_{x\in V(P)} B_x\right].
\]
Let ${\cal B} = \bigcup_{i=1}^{m} (B_{u_i}\cup B_{v_i})$. Let ${\cal C}_P$ be the set of all possible partitions of ${\cal B}$.
Let ${\cal D}^{\myin}_P = \{0,\ldots,n\}^{{\cal B}}$, ${\cal D}^{\myout}_P = \{0,\ldots,n\}^{\cal B}$, that is, every element of ${\cal D}^{\myin}_P\cup {\cal D}^{\myout}_P$ is a function $f:{\cal B} \to \{0,\ldots,n\}$.

\subsubsection{The dynamic programming table.}
Let ${\cal P}={\cal P}_{324000 g^4}$.
With these definitions, the dynamic programming table is indexed the exact same way as in the planar case. Also,
a \emph{partial solution} is again a collection of walks in $\vG$.

We say that a partial solution ${\cal S}$  is \emph{compatible} with $(P,\phi)$ if the same conditions (T1)-(T5) as in the planar case are satisfied. The only difference here is that instead of $B_u \cup B_v$, we have $\cal B$.

\paragraph{Merging partial solutions.}

We follow a similar approach as in the planar case. Let $P = \{Q_1, \ldots, Q_m\},P_1=\{Q'_1, \ldots, Q'_{m'}\},P_2=\{Q''_1, \ldots, Q''_{m''}\} \in {\cal P}$ such that $E(P_1)\neq \emptyset$, $E(P_2)\neq \emptyset$, $E(P_1)\cap E(P_2)=\emptyset$, and $E(P) = E(P_1) \cup E(P_2)$.
Let $\phi = (C, f^{\myin}, f^{\myout}, a,l,r,p) \in {\cal I}_P$, $\phi_1 = (C_1, f_1^{\myin}, f_1^{\myout}, a_1,l_1,r_1,p_1) \in {\cal I}_{P_1}$, $\phi_2 = (C_2, f_2^{\myin}, f_2^{\myout}, a_2,l_2,r_2,p_2) \in {\cal I}_{P_2}$. 

Let ${\cal S}_1$ and ${\cal S}_2$ be partial solutions compatible with $(P_1, \phi_1)$ and $(P_2, \phi_2)$ respectively. Similar to the planar case, we compute a partial solution $\cal S$ compatible with $(P,\phi)$ as follows.

\begin{description}
\item{\textbf{Merging phase 1: Joining the walks.}}
For every $w \in V(P_1) \cap V(P_2)$, we check that for all $x \in B_w$ we have $f_1^{\myin}(x) = f_2^{\myout}(x)$ and $f_2^{\myin}(x) = f_1^{\myout}(x)$.
If not then the merging procedure returns $\nil$. Otherwise, for every $w \in V(P_1) \cap V(P_2)$ and every $x \in B_w$, we follow the exact same approach as in the planar case.

\item{\textbf{Merging phase 2: Updating the grip.}}
This phase is exactly the same as in the planar case.

\item{\textbf{Merging phase 3: Checking connectivity.}}
Similar to the planar case, we check that condition (T5) holds for ${\cal S}$ and we return $\nil$ if it does not.

\end{description}

\subsubsection{Initializing the dynamic programming table.}

For all $P \in {\cal P}_1$ with $|E(P)| \leq 1$, we follow the same approach as in the planar case.

\subsubsection{Updating the dynamic programming table.}
For all $P\in {\cal P}$ with $|E(P)| > 1$, and for all $P_1,P_2 \in {\cal P}$ with $E(P_1) \neq \emptyset$, $E(P_2) \neq \emptyset$, $E(P_1) \cap E(P_2) = \emptyset$ and $E(P_1) \cup E(P_2) = E(P)$, and for all $\phi_1 \in {\cal I}_{P_1}$ and $\phi_2 \in {\cal I}_{P_2}$ we proceed as follows. Suppose that for all $P' \in {\cal P}$ with $|E(P')| < |E(P)|$ and all $\phi' \in {\cal I}_{P'}$, we have computed the partial solutions in the dynamic programming table at $(P',\phi')$. Now similar to the planar case, if there exists partial solutions ${\cal S}_1$ and ${\cal S}_2$ at $(P_1,\phi_1)$ and $(P_2,\phi_2)$ respectively, we call the merging process to (possibly) get a partial solution ${\cal S}$ at $(P,\phi)$ for some $\phi \in {\cal I}_P$. Now similar to the planar case, if there is no partial solution at $(P,\phi)$ then we store ${\cal S}$ at $(P,\phi)$.
Otherwise if there there exists a partial solution ${\cal S}'$ stored at $(P,\phi)$ and $\cost_{\vG} ({\cal S}) < \cost_{\vG} ({\cal S}')$ then we replace ${\cal S}'$ with ${\cal S}$.

\subsection{Analysis}
Let ${\cal W}$ be the collection of walks given by Lemma \ref{lem:uncrossing_walks_vortex}.
Let ${\cal F}$ be the forest given by Lemma \ref{lem:forest}.
Let $\cal T$ be a subtree of $\cal F$. We say that $\cal T$ is \emph{trivial} if 
$\psi(F) \cup \bigcup_{D\in V({\cal T})}\psi(D)$ is contractible.
Otherwise, we say that ${\cal T}$ is \emph{non-trivial}.


Let $Q\in {\cal Q}$, and let ${\cal T}$ be a subtree of some tree in ${\cal F}$.
We define the terms \emph{$Q$ covers ${\cal T}$} and \emph{$Q$ avoids ${\cal T}$} the exact same way as in the planar case.
Let $P = \{Q_1, \ldots, Q_m\}\in {\cal P}_{\infty}$. We say that \emph{$P$ covers $\cal T$} if for all $D \in V({\cal T})$ we have $V(D) \cap V(F) \subseteq V(Q_1\cup \ldots\cup Q_m)$. We say that \emph{$P$ avoids $\cal T$} if for all $Q_i \in P$ we have that $Q_i$ avoids $\cal T$.

\begin{definition}[Basic family of paths]\label{dfn:basic_family_paths}
Let $P = \{Q_1, \ldots, Q_m\} \in {\cal P}_{\infty}$. For each $i \in \{1,\ldots, m\}$, let $u_i$ and $v_i$ be the endpoints of $Q_i$. We say that $P$ is \emph{basic} (w.r.t.~$\cal W$) if either $V(P) \setminus (\bigcup_{i=1}^{m} \{u_i, v_i\} )$ does not intersect any of the walks in $\cal W$ (in which case we call it \emph{empty basic}) or there exists ${\cal T} \in {\cal F}$ and $D \in V({\cal T})$, with children $D_1, \ldots, D_k$, intersecting $D$ in this order along a traversal of $D$, such that the exact same conditions (1) \& (2) as in Definition \ref{defn:basic} hold, and $|P|$ is minimal subject to the following:
\begin{description}
\item{(3)}
If $P$ covers ${\cal T}_D$ and avoids ${\cal T} \setminus {\cal T}_D$, let ${\cal T}[P] = {\cal T}_D$, and otherwise let ${\cal T}[P] = \bigcup_{i=1}^{j} {\cal T}_{D_i}$. For every two disjoint subtrees ${\cal T}_1$ and ${\cal T}_2$ of ${\cal T}[P]$, the following holds.
If there exists $Q_i \in P$ such that $Q_i$ covers ${\cal T}_1 \cup {\cal T}_2$ and avoids ${\cal T} \setminus ({\cal T}_1 \cup {\cal T}_2)$, then there exist edge-disjoint subpaths $Q'_i, Q''_i$ of $Q_i$ such that $Q_i = Q'_i \cup Q''_i$, $Q'_i$ covers ${\cal T}_1$ and avoids ${\cal T} \setminus {\cal T}_1$, and $Q''_i$ covers ${\cal T}_2$ and avoids ${\cal T} \setminus {\cal T}_2$.
\end{description}
Moreover if for each non-trivial tree ${\cal T}'$ in ${\cal F}$ with ${\cal T}' \neq {\cal T}$, we have that $P$ avoids ${\cal T}'$, then we say that $P$ is $\emph{elementary}$. Furthermore, if for each non-trivial tree ${\cal T}'$ in ${\cal F}$, we have that $P$ avoids ${\cal T}'$, then we say that $P$ is \emph{trivial}.
\end{definition}

\begin{center}
\scalebox{0.9}{\includegraphics{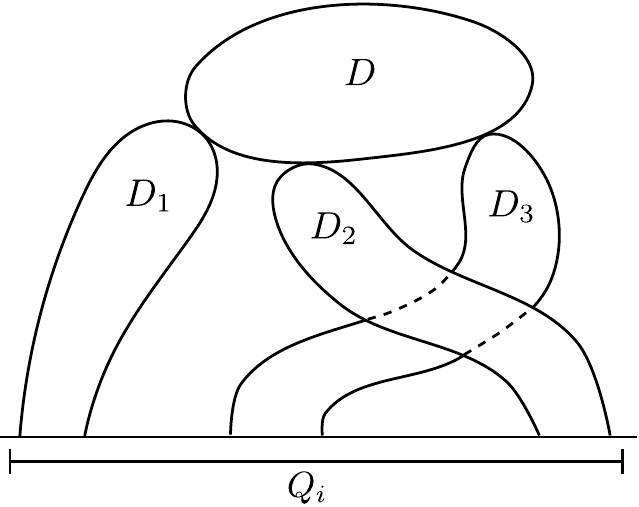}}
\end{center}

\begin{definition}[Twins]
Let $P,P'\in {\cal P}_{\infty}$.
We say that $P$ and $P'$ are \emph{twins} if for each subtree ${\cal T}$ of ${\cal F}$,
$P$ covers ${\cal T}$ if and only if $P'$ covers ${\cal T}$,
and 
$P$ avoids ${\cal T}$ if and only if $P'$ avoids ${\cal T}$.
\end{definition}

\begin{definition}[Succinctness]\label{def:succ}
Let $P = \{Q_1,\ldots, Q_m\} \in {\cal P}_{\infty}$ be basic.
For each $i \in \{1,\ldots, m\}$, let $u_i$ and $v_i$ be the endpoints of $Q_i$, let $Q'_i = Q_i \setminus \{u_i\}$ and let $Q''_i = Q_i \setminus \{v_i\}$.
We say that $P$ is \emph{succinct} if for all $i \in \{1,\ldots, m\}$, $P$ and $\{Q_1,\ldots,Q_{i-1},Q'_i,Q_{i+1},\ldots,Q_m\}$ are not twins, and moreover $P$ and $\{Q_1,\ldots,Q_{i-1},Q''_i,Q_{i+1},\ldots,Q_m\}$ are not twins.
\end{definition}

\begin{lemma}\label{lem:unique_succinct_basic}
Let $P \in {\cal P}_{\infty}$ be non-empty basic. Then there exists some succinct and basic $P' \in {\cal P}_{\infty}$ such that $P$ and $P'$ are twins with $E(P')\subseteq E(P)$.
\end{lemma}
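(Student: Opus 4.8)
The plan is to prove the lemma by induction on the total size $\mu(P) := \sum_{Q\in P}|V(Q)|$, driven by the definition of succinctness: if $P$ is already succinct we are done, and otherwise we delete a single endpoint of a single path to obtain a strictly smaller basic family that is still a twin of $P$, and recurse.

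So suppose $P=\{Q_1,\ldots,Q_m\}$ is not succinct. By Definition~\ref{def:succ} there is an index $i$ and an endpoint of $Q_i$ — say $u_i$, the case of the other endpoint $v_i$ being symmetric — such that $P$ and $\hat P := \{Q_1,\ldots,Q_{i-1},\,Q_i\setminus\{u_i\},\,Q_{i+1},\ldots,Q_m\}$ are twins (if $Q_i\setminus\{u_i\}$ is empty we drop it from the family; this case cannot in fact occur for a basic $P$, since then $\hat P$ would be a twin of $P$ with $|\hat P|<|P|$ still satisfying condition~(3), contradicting the minimality clause of Definition~\ref{dfn:basic_family_paths}). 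Clearly $\hat P\in{\cal P}_{\infty}$: deleting an endpoint of a subpath of $F$ leaves a subpath of $F$, and pairwise vertex-disjointness is preserved. Also $E(\hat P)\subseteq E(P)$ and $\mu(\hat P)<\mu(P)$, and the twin relation is symmetric and transitive. Hence it remains only to show that $\hat P$ is basic; granting this, the inductive hypothesis applied to $\hat P$ yields a succinct basic $P'$ with $E(P')\subseteq E(\hat P)\subseteq E(P)$ that is a twin of $\hat P$, hence of $P$, as required.

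To show $\hat P$ is basic, note first that conditions~(1) and~(2) of Definitions~\ref{defn:basic} and~\ref{dfn:basic_family_paths} are phrased purely in terms of which subtrees of ${\cal F}$ the whole family covers and which it avoids; since twins agree on this data, conditions~(1)--(2) hold for $\hat P$ with the same witness tree ${\cal T}$, node $D$, and index $j$, and in particular ${\cal T}[\hat P]={\cal T}[P]$. For the minimality clause, $|\hat P|\le|P|$, and any family of strictly fewer paths that is a twin of $\hat P$ and satisfies~(3) would also be a twin of $P$ satisfying~(3), contradicting that $P$ is basic; so $|\hat P|$ is minimal. For condition~(3), the only path changed in passing from $P$ to $\hat P$ is $Q_i$, replaced by $\hat Q_i:=Q_i\setminus\{u_i\}$, so~(3) for the other paths is inherited from $P$; and if ${\cal T}_1,{\cal T}_2$ are disjoint subtrees of ${\cal T}[\hat P]$ with $\hat Q_i$ covering ${\cal T}_1\cup{\cal T}_2$ and avoiding ${\cal T}\setminus({\cal T}_1\cup{\cal T}_2)$, then — since covering is monotone under passing to superpaths, and deleting the endpoint $u_i\in V(F)$ from $Q_i$ cannot create a new interior intersection of $Q_i$ with any disk $D'$ — the path $Q_i$ also covers ${\cal T}_1\cup{\cal T}_2$ and avoids ${\cal T}\setminus({\cal T}_1\cup{\cal T}_2)$. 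Condition~(3) for $P$ then supplies a splitting $Q_i=Q_i'\cup Q_i''$ with $Q_i'$ covering ${\cal T}_1$ and avoiding ${\cal T}\setminus{\cal T}_1$, and $Q_i''$ covering ${\cal T}_2$ and avoiding ${\cal T}\setminus{\cal T}_2$; deleting $u_i$ from whichever of $Q_i',Q_i''$ contains it restricts this to the required splitting of $\hat Q_i$. Hence $\hat P$ is basic.

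The step I expect to be the main obstacle is the verification of condition~(3) for $\hat P$: it is the only part of being basic that is not manifestly twin-invariant, since it constrains individual paths rather than the family as a whole, so one must carefully argue that removing a single boundary vertex from one path does not destroy the splitting guaranteed by~(3) — this relies on monotonicity of the covering relation and on the deleted vertex being an endpoint of that path. A lesser point requiring care is the precise scope of the minimality clause in Definition~\ref{dfn:basic_family_paths}, which must be read relative to the covering/avoiding data of the family so that it transfers between twins.
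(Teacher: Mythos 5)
The paper disposes of this lemma with a single sentence (``It is immediate by Definition~\ref{def:succ}.''), so your induction on $\mu(P)$ --- repeatedly deleting a redundant endpoint while checking that twinhood and basicness are preserved --- is the natural elaboration of what the authors leave implicit, and most of it is sound: conditions (1)--(2) and the minimality clause are indeed twin-invariant under the reading you give, and the termination and transitivity-of-twins bookkeeping is correct.

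There is, however, one step where your argument as written does not go through: the claim that if $\hat Q_i$ covers ${\cal T}_1\cup{\cal T}_2$ and avoids ${\cal T}\setminus({\cal T}_1\cup{\cal T}_2)$, then so does $Q_i$. Covering is monotone under enlarging the path, but \emph{avoiding} is anti-monotone: passing from $\hat Q_i$ back to $Q_i$ turns the vertex $u_i'$ (the neighbour of $u_i$, an endpoint of $\hat Q_i$) into an \emph{interior} vertex of $Q_i$. If $u_i'$ lies in $V(D')$ for some $D'$ in ${\cal T}\setminus({\cal T}_1\cup{\cal T}_2)$, then $Q_i$ fails to avoid that subtree even though $\hat Q_i$ avoids it, and you cannot invoke condition~(3) for $P$ with the same pair $({\cal T}_1,{\cal T}_2)$. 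Your parenthetical justification (``deleting the endpoint $u_i$ from $Q_i$ cannot create a new interior intersection'') addresses the direction you do not need; the direction you do need --- adding $u_i$ back --- is exactly where a new interior intersection can appear. Note also that twinhood of $P$ and $\hat P$ does not rescue you here, since it constrains covering/avoiding at the level of the whole family, whereas condition~(3) quantifies over individual paths. To close this you would need either to handle the case $u_i'\in V(D')$ separately (e.g.\ by enlarging ${\cal T}_1$ or ${\cal T}_2$ to absorb the offending node, or by splitting $\hat Q_i$ directly using the splitting of $Q_i$ for a suitably adjusted pair of subtrees), or to argue that this configuration is incompatible with $P$ and $\hat P$ being twins. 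The rest of the proof stands.
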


\begin{proof}
It is immediate by Definition \ref{def:succ}.
\end{proof}

\begin{lemma}\label{lem:decomposing}
Let $P \in {\cal P}_{\infty}$ be non-empty basic elementary and succinct. Let $\cal T$, $D$, $j$, $D_1, \ldots, D_j$ be as in Definition \ref{dfn:basic_family_paths}. 
Then there exist $P_1,P_2\in {\cal P}_{\infty}$ satisfying the following conditions:
\begin{description}
\item{(1)}
$P_1$ and $P_2$ are non-empty basic elementary and succinct.
\item{(2)}
$P_1$ covers $\bigcup_{i=1}^{j-1} {\cal T}_{D_i}$ and avoids ${\cal T} \setminus (\bigcup_{i=1}^{j-1} {\cal T}_{D_i})$.
\item{(3)}
$P_2$ covers ${\cal T}_{D_j}$ and avoids ${\cal T} \setminus {\cal T}_{D_j}$.
\item{(4)}
$E(P_1) \subseteq E(P)$, $E(P_2) \subseteq E(P)$, and $E(P_1) \cap E(P_2) = \emptyset$.
\end{description}
\end{lemma}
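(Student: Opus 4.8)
The plan is to extract $P_1$ and $P_2$ directly from $P$ by partitioning the edges of each path $Q_i\in P$ according to which part of the subtree structure it serves. First, recall that since $P$ is basic and not empty, Definition \ref{dfn:basic_family_paths} gives us the tree $\cal T$, the node $D$ with children $D_1,\dots,D_k$ ordered along a traversal of $D$, and the index $j$ with the covering/avoiding properties of condition (1-2); moreover condition (3) of Definition \ref{dfn:basic_family_paths} is available, which is exactly the tool that lets us split an individual $Q_i$ that straddles two disjoint subtrees. Write ${\cal T}' = \bigcup_{i=1}^{j-1}{\cal T}_{D_i}$ and ${\cal T}'' = {\cal T}_{D_j}$; these are disjoint subtrees of ${\cal T}[P]$. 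For each $Q_i\in P$ there are, up to the covering/avoiding dichotomy, three cases: $Q_i$ covers ${\cal T}'$ and avoids everything else in ${\cal T}$; $Q_i$ covers ${\cal T}''$ and avoids everything else; or $Q_i$ covers ${\cal T}'\cup{\cal T}''$ (and avoids ${\cal T}\setminus({\cal T}'\cup{\cal T}'')$). In the first two cases we assign $Q_i$ wholesale to $P_1$ resp.\ $P_2$. In the third case we invoke condition (3) to split $Q_i = Q_i'\cup Q_i''$ into edge-disjoint subpaths with $Q_i'$ covering ${\cal T}'$ and avoiding the rest, and $Q_i''$ covering ${\cal T}''$ and avoiding the rest; we put $Q_i'$ into $P_1$ and $Q_i''$ into $P_2$. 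Since the children $D_1,\dots,D_j$ intersect $D$ in order along its traversal, one checks that the pieces going into $P_1$ are pairwise vertex-disjoint as subpaths of $F$ (they occupy disjoint consecutive stretches of the boundary), so $P_1\in {\cal P}_\infty$, and likewise $P_2\in {\cal P}_\infty$. Condition (4) is immediate: every edge of $P_1$ and of $P_2$ came from some $Q_i$, the split in the third case is edge-disjoint, and a given $Q_i$'s edges go entirely to $P_1$, entirely to $P_2$, or are partitioned between them.

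Next I would verify conditions (2) and (3) of the lemma. For (2): $P$ covers ${\cal T}_D$ by (1-2), in particular $P$ covers ${\cal T}'$; since the only paths of $P$ meeting ${\cal T}'$ are the ones whose ${\cal T}'$-portions we routed into $P_1$, and those portions collectively still meet every $V(D')\cap V(F)$ for $D'\in V({\cal T}')$, we get that $P_1$ covers ${\cal T}'$. That $P_1$ avoids ${\cal T}\setminus{\cal T}'$ follows because each piece placed in $P_1$ was chosen (directly or via condition (3)) to avoid ${\cal T}\setminus{\cal T}'$, and $P$ being elementary means $P$ avoids every non-trivial tree other than $\cal T$, hence so does $P_1$. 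The verification of (3) for $P_2$ is symmetric. For condition (1) of the lemma — that $P_1$ and $P_2$ are basic, elementary, and succinct — I would argue as follows: basicness of $P_1$ holds with the same $\cal T$, $D$, but now with index $j-1$ in place of $j$ (so the relevant subtree ${\cal T}[P_1]$ is $\bigcup_{i=1}^{j-1}{\cal T}_{D_i}$), and conditions (1), (2), (3) of Definition \ref{dfn:basic_family_paths} are inherited: (1) holds by the covering/avoiding relations just established, (2) is inherited since $P_1$ avoids any tree ${\cal T}'\neq {\cal T}$ that $P$ avoids and covers the rest by restriction, and (3) is inherited because any $Q_i'\in P_1$ covering the union of two disjoint subtrees of ${\cal T}[P_1]\subseteq {\cal T}[P]$ arose from some $Q_i\in P$ that covers the same union (since our splitting only removed boundary stretches serving ${\cal T}''$), so we can apply condition (3) for $P$ and then restrict the resulting split. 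Elementariness of $P_1$ is immediate from elementariness of $P$. For succinctness, if $P_1$ is not succinct, Lemma \ref{lem:unique_succinct_basic} yields a succinct basic twin $P_1^\ast$ of $P_1$ with $E(P_1^\ast)\subseteq E(P_1)$, and we simply replace $P_1$ by $P_1^\ast$; this preserves (2) and (4) since twins cover/avoid exactly the same subtrees and $E(P_1^\ast)\subseteq E(P_1)\subseteq E(P)$. The case $P_1$ empty is handled separately: if ${\cal T}'$ is empty (i.e.\ $j=1$) there is nothing to cover and we can take $P_1$ to be a trivial single closed path — but the statement posits $P_1$ non-empty basic, so I would treat $j=1$ by noting ${\cal T}[P_1]=\emptyset$ forces us to interpret $P_1$ accordingly; in the intended regime $j\geq 2$ this issue does not arise. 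Apply the same clean-up (via Lemma \ref{lem:unique_succinct_basic}) to $P_2$.

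I expect the main obstacle to be the bookkeeping that shows the pieces routed into $P_1$ (respectively $P_2$) are genuinely vertex-disjoint subpaths of $F$, so that $P_1, P_2\in {\cal P}_\infty$. This rests on the fact, baked into Definition \ref{dfn:basic_family_paths}, that the children $D_1,\dots,D_j$ of $D$ meet $\partial D$ in this cyclic order and that the paths of $P$ already partition the relevant boundary segments; one has to track how condition (3)'s split of a straddling $Q_i$ interacts with this order — the split must cut $Q_i$ at the boundary point separating the ${\cal T}'$-stretch from the ${\cal T}''$-stretch, and then the ${\cal T}'$-pieces from different $Q_i$'s occupy disjoint arcs. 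A secondary technical point is making sure that, after possibly replacing $P_1$ and $P_2$ by succinct twins via Lemma \ref{lem:unique_succinct_basic}, we still have $E(P_1)\cap E(P_2)=\emptyset$ — this is automatic because $E(P_1^\ast)\subseteq E(P_1)$ and $E(P_2^\ast)\subseteq E(P_2)$ and $E(P_1)\cap E(P_2)=\emptyset$ by construction. Both obstacles are essentially careful casework on the planar/facial structure of $F$ and the tree $\cal T$, not deep arguments, so I would organize the proof around the three-case assignment of each $Q_i$ and then dispatch the five required properties in the order (4), (2), (3), basicness, succinctness.
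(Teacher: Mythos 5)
Your plan is correct and follows essentially the same route as the paper: restrict $P$ to the two pieces $\bigcup_{i=1}^{j-1}{\cal T}_{D_i}$ and ${\cal T}_{D_j}$, split any straddling path using condition (3) of Definition \ref{dfn:basic_family_paths}, verify that basicness/elementariness are inherited, and finish by passing to succinct twins via Lemma \ref{lem:unique_succinct_basic} (which, as you note, preserves edge-disjointness). The only cosmetic difference is that the paper performs the split by deleting the vertices lying in the ``avoided'' part and taking minimal subpaths rather than invoking condition (3) on the single pair $({\cal T}',{\cal T}'')$ — which is slightly more robust since a straddling $Q_i$ may cover only some of the subtrees ${\cal T}_{D_1},\dots,{\cal T}_{D_j}$ rather than all of ${\cal T}'\cup{\cal T}''$ — but this does not change the substance of the argument.
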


\begin{proof}
We begin by defining auxiliary $Z_1,Z_2\in {\cal P}_{\infty}$.
The desired $P_1$ and $P_2$ will be succinct twins of $Z_1$ and $Z_2$.
First we define $Z_1$.
Initially, we set $Z_1=P$ and we inductively modify $Z_1$ until it covers $C_1=\bigcup_{i=1}^{j-1} {\cal T}_{D_i}$ and avoids $A_1={\cal T} \setminus (\bigcup_{i=1}^{j-1} {\cal T}_{D_i})$ as follows:
If $Z_1$ contains a path $Q$ that does not intersect $C_1$ then we remove $Q$ from $Z_1$. 
If $Z_1$ contains a path $Q$ that intersects both $C_1$ and $A_1$ then we proceed as follows: let $R$ be the collection of paths obtained from $Q$ by deleting all vertices in $A_1\cap Q$; let $R'$ be the collection obtained from $R$ by removing all paths that do not intersect $C_1$; let $R''$ be the collection obtained from $R'$ by replacing each $Q'\in R'$ be the minimal subpath $Q''\subseteq Q'$ with $V(Q'')\cap C_1=V(Q')\cap C_1$.
We repeat the above process until the resulting $Z_1$ covers $C_1=\bigcup_{i=1}^{j-1} {\cal T}_{D_i}$ and avoids $A_1={\cal T} \setminus (\bigcup_{i=1}^{j-1} {\cal T}_{D_i})$.
In a similar fashion we define $Z_2$ that covers $C_2={\cal T}_{D_j}$ and avoids $A_2={\cal T} \setminus {\cal T}_{D_j}$.
It is immediate by construction that $E(Z_1)\subset E(P)$, $E(Z_2)\subset E(P)$ and $E(Z_1)\cap E(Z_2)=\emptyset$.

Next we argue that $Z_1$ is basic.
It is immediate that conditions (1) \& (2) of Definition \ref{dfn:basic_family_paths} are satisfied.
It remains to establish condition (3) of Definition \ref{dfn:basic_family_paths}.
Let $Q\in P_1$.
By construction there exists $Q'\in P$ such that $Q\subseteq Q'$.
Let ${\cal T}[Z_1]$ and ${\cal T}[P]$ be as in Definition \ref{dfn:basic_family_paths}.
Let ${\cal T}_1$ and ${\cal T}_2$ be disjoint subtrees of ${\cal T}[Z_1]$ such that $Q$ covers ${\cal T}_1\cup {\cal T}_2$ and avoids ${\cal T} \setminus ({\cal T}_1\cup {\cal T}_2)$.
Let $\cT_{Q'}$ be the minimal subtree of $\cT[P]$ that contains all the nodes of $\cT[P]$ that are covered by $Q'$.
Let $\cT'=\cT_{Q'}\cup \cT_1 \cup \cT_2$.
By definition we have that $\cT'$ is a subtree of $\cT[P]$.
Therefore there exist disjoint subtrees $\cT_1'$ and $\cT_2'$ of $\cT'$ such that $\cT_1\subseteq \cT_1'$, $\cT_2\subseteq \cT_2'$, and $\cT'=\cT_1'\cup\cT_2'$.
Since $P$ is basic, it follows by condition (3) of Definition \ref{dfn:basic_family_paths} that there exist edge-disjoint subpaths $Q_1',Q_2'$ of $Q'$ such that $Q_1'$ covers $\cT_1'$ and avoids $\cT\setminus \cT_1'$ and $Q_2'$ covers $\cT_2'$ and avoids $\cT\setminus \cT_2'$.
Let $Q_1=Q\cap Q_1'$ and $Q_2=Q\cap Q_2'$.
It now follows that $Q_1$ covers $\cT_1$ and avoids $\cT\setminus \cT_1$ and $Q_2$ covers $\cT_2$ and avoids $\cT\setminus \cT_2$, establishing Condition (3) of Definition \ref{dfn:basic_family_paths}.

It remains to show that $|P_1|$ is minimal subject to condition (3) of Definition \ref{dfn:basic_family_paths}.
Suppose not. Then there are $Q,Q'\in P_1$ that are consecutive in $F$ such that we can replace $Q$ and $Q'$ in $P_1$ by some subpath $Q''$ that contains $Q$ and $Q'$.
If $Q$ and $Q'$ are subpaths of the same path in $P$ then this is a contradiction because by construction there must exist a vertex in $V(Q'')\setminus (V(Q)\cup V(Q'))$ that $P_1$ must avoid.
Therefore there must exist distinct $R,R'\in P$ such that $Q\subseteq R$ and $Q'\subseteq R'$.
However this means that we can replace $R$ and $R'$ in $P$ by some subpath containing both $R$ and $R'$, contradicting the fact that $P$ is basic.
This establishes that $|P_1|$ is minimal subject to condition (3) of Definition \ref{dfn:basic_family_paths}.
We have thus obtained that $Z_1$ is basic.
It is also immediate by construction that since $P$ is elementary, $Z_1$ is also elementary.
By the exact same argument it follows that $Z_2$ is also basic and elementary.

For any $i\in\{1,2\}$ let $P_i$ be a succinct twin of $Z_i$ obtained by Lemma \ref{lem:unique_succinct_basic}.
Since $Z_i$ is basic and elementary, it follows that $P_i$ is also basic and elementary, and thus condition (1) is satisfied.
Since $P_i$ is a twin of $Z_i$, it follows that conditions (2) \& (3) are satisfied.
Since $E(Z_i)\subset E(P)$ and $E(P_i)\subseteq E(Z_i)$, we get $E(P_i)\subset E(P)$.
Finally, since $E(Z_1)\cap E(Z_2)=\emptyset$, we have $E(P_1)\cap E(P_2)=\emptyset$, and thus condition (4) is satisfied, which concludes the proof.
\end{proof}

\begin{definition}[$P$-facial restriction]
Let $P = \{Q_1, \ldots, Q_m\} \in {\cal P}_{\infty}$ be basic. We define the \emph{$P$-facial restriction of $\cal W$} the exact same way as in Definition \ref{defn:facial_restriction}. 
We remark that $P$ is now a family of paths, while in the planar case $P$ is a single path; the definition remains the same by replacing the notion of basic path given in Definition \ref{defn:basic} by the notion of basic family of paths given in Definition \ref{dfn:basic_family_paths}.
\end{definition}

\begin{lemma}[Malni\v{c} and Mohar \cite{malnic1992generating}]\label{lem:mohar_nonseparating}
Let $S$ be an either orientable or non-orientable surface of Euler genus $g$, and let $x\in S$.
Let ${\cal X}$ be a collection of noncontractible curves.
Suppose that at least one of the following holds: 
(i) The curves in ${\cal X}$ are disjoint and (freely) nonhomotopic.
(ii) There exist $x\in S$ such that for every $C,C'\in {\cal X}$, we have $C\cap C' = x$, and the curves in ${\cal X}$ are nonhomotopic (in $\pi_1(S,x)$).
Then, 
\[
|{\cal X}|\leq \left\{\begin{array}{ll}
0 & \text{ if } $S$ \text{ is the 2-sphere}\\
1 & \text{ if } $S$ \text{ is the torus or the projective plane}\\
3(g-1) & \text{ otherwise}
\end{array}
\right.
\]
\end{lemma}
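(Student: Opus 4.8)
The plan is to reduce both hypotheses to one clean statement about pairwise disjoint simple closed curves and then argue by cutting $S$ and accounting Euler characteristics. First I would dispose of the degenerate surfaces directly: if $S$ is the $2$-sphere then $\pi_1(S)$ is trivial, there are no noncontractible curves, and $|{\cal X}|=0$; if $S$ is the torus then any two disjoint essential simple closed curves are isotopic, and if $S$ is the projective plane there is a unique essential simple closed curve up to isotopy, so in both cases a short direct check (including the configurations of curves meeting at a common point) gives the bound claimed in the statement. Hence from now on assume the Euler genus satisfies $g\ge 2$, i.e. $\chi(S)=2-g\le 0$.

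Second, I would make the curves simple. Under hypothesis (ii) I would excise a small open disk $D$ around $x$; each $C\in{\cal X}$ then becomes an arc $\alpha_C$ in the bordered surface $S'=S\setminus D$ with both endpoints on the single boundary circle $\partial S'$, and since $C\cap C'=\{x\}$ the arcs $\{\alpha_C\}$ are pairwise disjoint, pairwise non-isotopic rel $\partial S'$, and essential (not $\partial$-parallel), because the $C$'s are noncontractible and pairwise $\pi_1(S,x)$-nonhomotopic. Under hypothesis (i) I would replace the disjoint family by a disjoint family of simple closed curves representing the same (distinct) free homotopy classes; this is legitimate in the present application, since the curves of interest are shadows of non-self-crossing, pairwise non-crossing walks (Lemma~\ref{lem:uncrossing_walks_vortex}) and are therefore already realized by pairwise disjoint simple closed curves, and more generally any family with pairwise zero geometric intersection number can be so realized without merging classes.

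Third, I would run the standard cut-and-count. For the simple-closed-curve statement, extend ${\cal C}=\{C_1,\dots,C_n\}$ to a maximal family of pairwise disjoint, pairwise non-isotopic, essential simple closed curves (this only increases $n$), take a regular neighbourhood $N$ of their union (a disjoint union of annuli and M\"obius bands), and let $P_1,\dots,P_m$ be the components of $S\setminus\mathrm{int}(N)$. By maximality and essentiality no $P_j$ is a disk, a sphere, an annulus, or a M\"obius band, so $\chi(P_j)\le-1$, whence $m\le-\chi(S)=g-2$. In the orientable case every curve is two-sided and the maximal family is a pants decomposition: each $P_j$ has exactly three boundary circles and each $C_i$ borders at most two of the $P_j$, so $3m=2n$ and $n=\tfrac{3}{2}(g-2)\le 3(g-1)$; in the non-orientable case the same Euler-characteristic budget, with one-sided curves contributing a single boundary circle to a single piece, again yields $n\le 3(g-1)$. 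For the arc statement one argues identically inside $S'$ with disks in place of pants: cutting $S'$ along a maximal disjoint family of essential non-isotopic arcs raises $\chi$ by one per arc and terminates at a disjoint union of disks, and counting the sides of those disks against the arcs gives the same bound.

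The main obstacle is the bookkeeping in the non-orientable setting: pinning down exactly which complementary pieces with $\chi\ge 0$ (annuli, M\"obius bands) and which non-pants pieces ($\chi\le-2$, possibly carrying crosscaps) can occur, ruling them out via maximality, and checking that the curve/piece incidence count is not distorted by one-sided curves or by separating versus nonseparating behaviour; the analogous count for arc systems in a bordered surface requires the same care. I would either carry out this case analysis for orientable and non-orientable $S$ separately, or---as is cleanest---invoke the bound of Malni\v{c} and Mohar \cite{malnic1992generating} for the simple-closed-curve and arc statements and supply only the reductions described above.
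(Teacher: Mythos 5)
The paper offers no proof of this lemma at all: it is imported verbatim as a result of Malni\v{c} and Mohar \cite{malnic1992generating} and used as a black box in Lemmas \ref{lem:succinct_and_elementary}, \ref{lem:bad_trees}, and \ref{lem:vortex_DP_solution_for_F}. So your self-contained cut-and-count argument is doing something the paper deliberately avoids. The strategy itself is the standard and correct one, and it does produce the stated constants: for pairwise disjoint simple closed curves the pants-type decomposition gives $\tfrac{3}{2}(g-2)\leq 3(g-1)$, and for the based case the arc system in $S\setminus D$ is bounded by the ideal-triangulation count $-3\chi(S\setminus D)=3(g-1)$, which is exactly where the constant in the lemma comes from.

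Two points need attention before this could replace the citation. First, your reduction in case (i) to disjoint \emph{simple} representatives is not valid for arbitrary noncontractible curves: a free homotopy class with no embedded representative (e.g.\ a proper power, or a class with positive self-intersection) cannot be replaced by a simple closed curve, so your argument really only covers the situation arising in this paper, where the curves are shadows of non-crossing walks and are already embedded; you say this, but it means you are proving a restriction of the stated lemma, not the lemma. Second, in case (ii) the statement as written is missing a ``non-crossing at $x$'' hypothesis, and your proposal inherits the problem: the meridian and longitude of the torus meet at a single point and are nonhomotopic, so the claimed bound of $1$ for the torus fails under hypothesis (ii) as literally stated, and your ``short direct check'' for the torus cannot succeed without adding that the curves do not cross at $x$; likewise your arc reduction gives disjoint arcs even when the original loops cross at $x$, and then only yields the bound $3$ for the once-punctured torus, not $1$. (The paper's applications always supply interior-disjoint, non-crossing curves, so this is an imprecision in the lemma statement rather than in its use.) Together with the non-orientable bookkeeping you explicitly defer, the sketch is not yet a complete proof; your fallback of simply invoking \cite{malnic1992generating} for the simple-curve and arc statements is precisely what the paper does.
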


We recall the following result on the genus of the complete bipartite graph \cite{harary1994graph}.

\begin{lemma}\label{lem:K33}
For any $n,m\geq 1$, the Euler genus of $K_{m,n}$ is $\left\lceil(m-2)(n-2)/4\right\rceil$.
\end{lemma}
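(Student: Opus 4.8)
The plan is to prove matching lower and upper bounds on the Euler genus of $K_{m,n}$, the lower bound being a direct consequence of Euler's formula and bipartiteness, and the upper bound coming from an explicit embedding.

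For the lower bound I would argue as follows. Suppose $K_{m,n}$ is embedded in a surface $S$, and let $v=m+n$, $e=mn$, and $f$ be the numbers of vertices, edges, and faces of the embedding. Since $K_{m,n}$ is bipartite it has no cycle of length $3$, so in a cellular embedding every face is bounded by a closed walk of length at least $4$; double-counting edge--face incidences gives $4f\le 2e$, i.e.\ $f\le e/2$. Plugging this into Euler's formula $v-e+f=2-\eg(S)$ yields $\eg(S)\ge 2-v+e/2 = 2-(m+n)+mn/2 = (m-2)(n-2)/2$ in the general (possibly non-orientable) setting, and the analogous computation $v-e+f=2-2\gamma$ in the orientable case gives $\gamma\ge (m-2)(n-2)/4$; since the relevant genus parameter is an integer, we may round up to obtain the stated bound $\lceil (m-2)(n-2)/4\rceil$. (If an embedding is not cellular, one passes to a subsurface, which only decreases the genus, so the inequality is unaffected.)

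For the upper bound I would exhibit an embedding of $K_{m,n}$ achieving the claimed value. This is the classical construction of Ringel: when $(m-2)(n-2)\equiv 0\pmod 4$ one builds, via a suitable rotation system (most cleanly presented using current-graph / voltage-graph machinery), a \emph{quadrilateral embedding} of $K_{m,n}$, in which every face is a $4$-cycle; then $f=e/2$ and Euler's formula is tight, matching the lower bound. For the remaining residues of $m$ and $n$ one starts from such a tight embedding of a smaller complete bipartite subgraph and adds the leftover vertices one at a time, each time routing the new edges through existing faces so that the genus increases by exactly the amount forced by the lower bound; this needs a short case analysis on $m$ and $n$ modulo small integers.

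The main obstacle is the upper-bound direction: producing the explicit quadrilateral (or near-quadrilateral) embeddings and checking that their face structure is as claimed is the genuinely technical part, while the lower bound is an immediate Euler-formula count. Since this is a classical theorem, within the present paper it suffices to invoke the corresponding statement in \cite{harary1994graph}; the proposal above records why the formula holds.
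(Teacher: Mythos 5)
The paper does not prove this lemma at all: it is stated as a recalled classical fact with a citation to \cite{harary1994graph} (Ringel's theorem), so your closing remark that invoking the reference suffices is exactly what the paper does, and your sketch is the standard textbook argument (girth-$4$ Euler-formula lower bound plus Ringel's quadrilateral embeddings for the upper bound). One wrinkle your own computation exposes, though: the Euler-formula count for Euler genus gives $\eg(S)\geq (m-2)(n-2)/2$, while $\left\lceil(m-2)(n-2)/4\right\rceil$ is Ringel's formula for the \emph{orientable} genus; the lemma as stated conflates the two, and your step ``round up to obtain the stated bound'' silently switches from the Euler-genus inequality to the orientable one. This does not affect the paper (the $/4$ quantity is still a valid lower bound on Euler genus, which is all that is used in Lemma \ref{lem:succinct_and_elementary}), but a clean write-up should either prove the orientable-genus statement or replace $4$ by $2$ in the Euler-genus statement.
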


\begin{lemma}\label{lem:succinct_and_elementary}
Let $P \in {\cal P}_{\infty}$ be elementary and succinct.
Then $|P| \leq 18000 g^3$.
\end{lemma}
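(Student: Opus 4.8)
The plan is to show that a basic elementary family cannot be ``wide'' because each of its paths must be pinned down by a genuine topological feature of the genus‑$g$ embedding $\psi$ of $\vG'$, and the surface can only support $O(g^{3})$ such features; the two quantitative inputs are Lemma~\ref{lem:mohar_nonseparating} (at most $3(g-1)$ pairwise non‑homotopic noncontractible curves through a common point, or disjoint) and Lemma~\ref{lem:K33} (the Euler genus of $K_{m,n}$ is $\lceil(m-2)(n-2)/4\rceil$, so $K_{3,4g+3}$ does not embed in $S$).

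First I would record the rigid structure coming from $P$ being basic and elementary: by Definition~\ref{dfn:basic_family_paths} there is a tree ${\cal T}\in{\cal F}$, a node $D\in V({\cal T})$ with children $D_{1},\dots,D_{k}$ and an index $j$, with $P$ covering ${\cal T}[P]$ and avoiding ${\cal T}\setminus{\cal T}[P]$; by elementarity ${\cal T}$ is the unique non‑trivial tree that $P$ does not avoid, so every tree $P$ covers other than (a subtree of) ${\cal T}$ is trivial, and every tree $P$ avoids is either trivial or a piece of ${\cal T}$. Now consider the $\le 2|P|$ endpoints of the paths $Q_{1},\dots,Q_{|P|}$ in cyclic order on the cycle $F$; between two cyclically consecutive paths there is a \emph{gap} arc of $F$ whose interior vertices avoid $V(P)$. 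Because $|P|$ is minimal subject to clauses (1)--(3) of Definition~\ref{dfn:basic_family_paths}, two consecutive paths cannot be merged across their gap, and — since every $F$‑endpoint of a \emph{covered} tree already lies in $V(P)$ — the only possible obstruction to such a merge is an $F$‑endpoint of a walk belonging to an \emph{avoided} tree lying strictly inside the gap. Thus I would fix, for each gap, one such avoided‑tree walk, obtaining $|P|$ walks of $\vG'$, each an arc with both ends on $F$, each belonging either to ${\cal T}\setminus{\cal T}[P]$ or to a trivial tree; succinctness, via Definition~\ref{def:succ} and Lemma~\ref{lem:unique_succinct_basic}, would be used to make these choices canonical and to rule out degenerate repetitions.

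Next I would bound how many distinct such arcs the surface allows. Collapsing the face‑disk bounded by $\psi(F)$ to a single point $x$ (a homeomorphism of $S$) turns every walk into a loop at $x$, turns a trivial tree into a bouquet of nullhomotopic loops, and makes each piece of ${\cal T}$ contribute at least one noncontractible loop; by Lemma~\ref{lem:mohar_nonseparating} the noncontractible loops fall into at most $3(g-1)$ homotopy classes. Arcs from trivial trees, being nullhomotopic after the collapse, bound sub‑disks, and using the forest structure of ${\cal F}$ (so the walk joining any two nodes is unique, cf.\ Proposition~\ref{prop:important_walk}) together with minimality clause (3) of Definition~\ref{dfn:basic_family_paths} I would argue that only $O(g^{2})$ paths can have their gaps witnessed by trivial trees without creating a forbidden crossing or an avoidable merge. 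For the arcs in a single homotopy class of ${\cal T}$, I would build the bipartite incidence graph between the paths whose gap‑witness lies in that class and the distinct $F$‑endpoints of those witnesses, and show that more than $c\,g^{2}$ such paths would force $K_{3,\Theta(g^{2})}$ to embed in $\psi(\vG')$, contradicting Lemma~\ref{lem:K33}. Summing the $O(g^{2})$ per‑class contribution over the $\le 3(g-1)$ classes and adding the trivial‑tree contribution gives $O(g^{3})$; chasing the constants $3(g-1)$ from Lemma~\ref{lem:mohar_nonseparating} and the threshold $4g+2$ from Lemma~\ref{lem:K33} yields the claimed $|P|\le 18000\,g^{3}$.

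The main obstacle is the middle step: turning the purely combinatorial statements ``$|P|$ is minimal'' and ``$P$ is succinct'' into honest, pairwise‑distinct embedded topological obstructions. One must be careful that the gap‑witness walks chosen for different gaps are really distinct nodes of ${\cal F}$ (a single avoided tree could in principle donate walk‑endpoints to many gaps), that capping the arcs along $F$ is compatible with the non‑self‑crossing property of the shadow ${\cal W}'$ so that the homotopy classification is well defined and so that an abstract $K_{3,n}$ is actually \emph{embedded} rather than merely a minor, and that the trivial trees — which carry no genus on their own — really contribute only $O(g^{2})$ paths. This last point, where the surface gives no direct leverage and one must instead exploit the forest structure of ${\cal F}$ and the minimality in Definition~\ref{dfn:basic_family_paths}, I expect to be the most delicate part of the argument.
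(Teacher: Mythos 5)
Your proposal correctly identifies the two quantitative inputs (Lemma~\ref{lem:mohar_nonseparating} and Lemma~\ref{lem:K33}) and the general strategy of charging each path of $P$ to a topological feature of the surface, but it misidentifies the obstruction that keeps consecutive paths of $P$ separate, and this is a genuine gap. You assert that, since every $F$-endpoint of a covered tree already lies in $V(P)$, ``the only possible obstruction to such a merge is an $F$-endpoint of a walk belonging to an avoided tree lying strictly inside the gap.'' This is false: the minimality of $|P|$ in Definition~\ref{dfn:basic_family_paths} is taken subject to condition~(3), the requirement that any path covering two disjoint subtrees ${\cal T}_1,{\cal T}_2$ of ${\cal T}[P]$ split into edge-disjoint subpaths covering each separately. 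Two consecutive paths can therefore be unmergeable even when their gap contains no vertex of any avoided tree, namely when the merged path would cover two disjoint subtrees of ${\cal T}[P]$ whose attachments to $F$ interleave in a way that only a positive-genus surface permits. The paper's proof is built around exactly this: it first groups the paths into at most $200g^2$ blocks $Z_1,\dots,Z_{m'}$ separated by non-trivial avoided trees (bounding the number of such trees by $20g$ via Lemma~\ref{lem:mohar_nonseparating} and the number of gaps any one tree can witness by $10g$ via a $K_{3,10g}$ embedding and Lemma~\ref{lem:K33}), and then spends its second half showing $|Z_i|\le 45(g+1)$ by constructing loops $\gamma'_j$ through the \emph{covered} tree ${\cal T}$ and the contracted face, and proving that three consecutive homotopic loops bound a disk which makes the merge compatible with condition~(3), contradicting minimality. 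Your accounting has no counterpart to this second level.

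Relatedly, your claim that ``only $O(g^2)$ paths can have their gaps witnessed by trivial trees'' is an unsupported assertion at precisely the point where the surface provides no leverage (a trivial tree is contractible and carries no genus), and you acknowledge as much. In the paper this issue never arises in that form: trivial trees do not define block boundaries at all, and the within-block bound does not charge to avoided trees but to homotopy classes of loops through ${\cal T}$ on the surface obtained by contracting $\psi(F)$. Without the block decomposition and the condition-(3) homotopy argument, your charging scheme cannot be closed, so the proposal as written does not yield the bound.
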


\begin{proof}
Let $\cT[P]$ be as in Definition \ref{dfn:basic_family_paths}.
If ${\cal T}[P]$ is trivial, we can find $Q \in {\cal Q}$ such that $Q$ covers ${\cal T}[P]$ and avoids ${\cal T} \setminus {\cal T}[P]$, and thus $|P| = 1$ and we are done. 
Now suppose
that ${\cal T}[P]$ is non-trivial.
Let $m = |P|$ and suppose that $P = \{Q_1, \ldots, Q_m\}$ such that $Q_1, \ldots, Q_m$ are subpaths of $F$ in this order along a traversal of $F$. 
Let $Z_1,\ldots,Z_{m'}$ be a sequence of disjoint subsets of $P$ such that $P=\bigcup_{i=1}^{m'} Z_i$, and each $Z_i$ is maximal subject to the following condition: Let $Z_i'$ be the minimal subpath of $F$ that contains all the paths in $Z_i$ and does not intersect any other paths in $P$;
then $Z_i'$ avoids all non-trivial tress in ${\cal F}\setminus \cT$.
For every $j \in \{1, \ldots, m'-1\}$, let $P_j$ be the subpath between $Z_j'$ and $Z_{j+1}'$ and let ${\cal P}' = \{P_1,\ldots, P_{m-1}\}$. Note that since $P$ is basic, for every two consecutive  $Z_j'$ and $Z_{j+1}'$, there exists a non-trivial tree ${\cal T}_j$ such that ${\cal T}_j$ intersects $P_j$.

We construct a set $R$ of non-trivial trees as follows.
We initially set $R =\{{\cal T}_1\}$.
In each step $i > 1$, if there exists ${\cal T'} \in R$ such that ${\cal T'}$ intersects $P_i$, we continue to the next step. Otherwise, we let ${\cal T}'_i$ be some non-trivial tree that intersects $P_i$ and we add ${\cal T}'_i$ to $R$ and we continue to the next step. We argue that $|R| \leq 20g$. Suppose not.
For each ${\cal T}' \in R$ where ${\cal T}'$ intersects some $P' \in {\cal P}'$ at some $x' \in V(P')$, we construct a path $\gamma_{\cal T'}$ in the surface, with both endpoints on $\psi(F)$, and such that after contracting $\psi(F)$ into single point, the loop resulting from $\gamma_{\cT'}$ is non-contractible, as follows.
First suppose that $\psi(\cT')$ contains some non-contractible loop $\gamma$.
Then let $\zeta$ be a path in $\psi(\cT')$ between $x'$ and some point in $\gamma$.
We set $\gamma_{\cT'}$ to be the path starting at $x'$, traversing $\zeta$, followed by $\gamma$, followed by the reversal of $\zeta$, and terminating at $x'$.
Otherwise, suppose that $\psi(cT')$ does not contain any non-contractible loops.
Since $\cT'$ is non-trivial, it follows that $\psi(\cT')$ contains some path $\xi$ with both endpoints in $\psi(F)$ such that after contracting $\psi(F)$ into a single point, the loop obtained from $\xi$ is non-contractible.
Let $\xi'$ be some path in $\psi(\cT')$ between $x'$ and some point in $\xi$.
By Thomassen's 3-path condition \cite{thomassen1990embeddings} it follows that $\xi\cup \xi'$ contains some path $\xi''$ with both endpoints in $\psi(F)$, such that one of these endpoints is $x'$, and such that after contracting $\psi(F)$ into a single point, the loop resulting from $\xi''$ is non-contractible.
We set $\gamma_{\cT'}$ to be $\xi''$.

Let ${\cal L}$ be the set of all loops obtained from the paths $\gamma_{\cT'}$ as follows.
Pick a point $x$ in the interior of the disk bounded by $\psi(F)$.
Connect $x$ to both endpoints of each $\gamma_{\cT'}$ by paths such that all chosen paths are interior disjoint.
During this process each path $\gamma_{\cT'}$ gives rise to a non-contractible loop in ${\cal L}$ such that any two loops in ${\cal L}$ intersect only at $x$.
Let $L',L'',L'' \in {\cal L}$ be distinct.
We show that $L'$, $L''$ and $L'''$ can not be all homotopic. Suppose not.
Since they are interior-disjoint, by removing them from the surface we obtain three connected components.
Since $\psi(\cT)$ does not intersect any of $L'$, $L''$ and $L'''$, it has to be inside one of the three connected components completely. We may assume w.l.o.g~that ${\cal T}$ is inside the component which is bounded by $L'$ and $L''$. Therefore, there is no path from $T$ to $L'''$ without crossing $L'\cup L''$, which is a contradiction.


Let ${\cal L}' \subseteq {\cal L}$ be a maximal subset such that for all $L', L'' \in {\cal L}'$ we have that $L'$ and $L''$ are non-homotopic. Since $|{\cal L}| > 20g$ and for every three loops in ${\cal L'}$ we know that at most two of them are homotopic, we have that $|{\cal L}'| > 10g$, which contradicts Lemma \ref{lem:mohar_nonseparating}. Therefore, we have that $|R| \leq 20g$ and thus there exists ${\cal T}_0 \in R$ such that ${\cal T}_0$ intersects at least $10g = 200g^2/20g$ elements of ${\cal P}'$. Let $x_1$ be a point inside the face. Let $x_2$ be a point in the root of ${\cal T}_0$ and let $x_3$ be a point in $\psi(D)$. There exists $P'_1, \ldots, P'_{10g} \in {\cal P}'$ such that for every $i \in \{1,\ldots,10g\}$, ${\cal T}_0$ intersects $P'_i$. For every $i \in \{1,\ldots,10g\}$, let $y_i$ be a point on $P'_i$. By the construction, for every $y_i$ we can find non-crossing paths to $x_1$, $x_2$ and $x_3$. Therefore, we get an embedding of $K_{3,10g}$ in a surface of genus $g$ which contradicts Lemma \ref{lem:K33}.
This establishes that $m'\leq 200 g^2$.


Let $i\in \{1,\ldots,m'\}$.
We next we bound $|Z_i|$.
Suppose $Z_i=\{Q_a,Q_{a+1},\ldots,Q_{a+\ell}\}$.
Let $x$ be an arbitrary point in $\psi(D)$.
For each $j\in \{0,\ldots,\lfloor (\ell-1)/2\rfloor\}$ pick an arbitrary point $x_j \in \psi(Q_{a+2j})\cap \psi(\cT)$; we define a path in the surface between $x_j$ and $x$ as follows: we start from the vertex $D'$ of $\cT$ containing $x_j$; if $D'$ is a closed walk then we traverse $D'$ clockwise until we reach the point that connects $D'$ to its parent; otherwise we traverse the unique path between $x_j$ and the point that connects $D'$ to its parent. We continue in this fashion until we reach $D$, and we finally traverse $D$ clockwise until we reach $x$.
This completes the definition of the path $\gamma_j$.
It is immediate that for all $j\neq j'$, the paths $\gamma_j$ and $\gamma_{j'}$ are non-crossing.

Let $S'$ be the surface obtained by contracting $\psi(F)$ into a single point $y$, and identifying $x$ with $y$ (note that $S'$ has Euler genus at most $g+2$).
For each $j\in \{0,\ldots,\lfloor (\ell-1)/2\rfloor\}$ let $\gamma_j'$ be the loop in $S'$ obtained from $\gamma_j$ after the above contraction and identification.
We argue that there can be at most 4 loops $\gamma_j'$ that are pairwise homotopic.
Suppose for the sake of contradiction that there are at least 5 such loops.
It follows that there must exist $t\in \{0,\ldots,\lfloor (\ell-1)/2\rfloor\}$ such that $\gamma_t'$, $\gamma_{t+1}'$, and $\gamma_{t+2}'$ are all pairwise homotopic.
We are going to obtain a contradiction by arguing that the paths $Q_{a+2t}$ and $Q_{a+2t+1}$ violate the fact that $P$ is basic.
To that end, let $Q'$ be the minimal subpath of $F$ that contains both $Q_{a+2t}$ and $Q_{a+2t+1}$ and does not intersect any other paths in $P$.
We will show that $(P \setminus \{Q_{a+2t},Q_{a+2t+1}\}) \cup \{Q'\}$ is basic, thus violating the fact that $|P|$ is minimal subject to condition (3) of definition \ref{dfn:basic_family_paths}.
Let $\cT_1$, $\cT_2$ be disjoint subtrees of $\cT[P]$ such that $Q'$ covers $\cT_1\cup \cT_2$ and avoids $\cT\setminus (\cT_1\cup \cT_2)$.
We need to show that there exist edge-disjoint subpaths $Q_1'$ and $Q_2'$ of $Q'$ such that $Q'=Q_1'\cup Q_2'$, $Q_1'$ covers $\cT_1$ and avoids $\cT\setminus \cT_1$, and $Q_2'$ covers $\cT_2$ and avoids $\cT\setminus \cT_2$.
Let $\lambda$ be the subpath of $\psi(F)$ in $S$ between $x_t$ and $x_{t+1}$ that contains $x_{t+1}$.
Since $\gamma_t'$, $\gamma_{t+1}'$, and $\gamma_{t+2}'$ are homotopic, it follows that $\gamma_t \cup \lambda \cup \gamma_{t+2}$ bounds a disk $\Psi$ in $S$.
Each $x_s$ is contained in the image of a unique vertex $D_s'$ of $\cT[P]$.
Let $B$ be the path in $\cT[P]$ between $D_{t}'$ and $D_{t+2}'$.
For each $r\in \{1,2\}$, ${\cal T}_r$ intersects $B$ into some possibly empty subpath $B_r$.
It follows that there exist disjoint disks $\Psi_1,\Psi_2\subset \Psi$ such that for each $r\in \{1,2\}$, $\psi(\cT_r)\cap \Psi\subset \Psi_r$.
Therefore there exist edge-disjoint subpaths $Q_1'$ and $Q_2'$ of $Q'$
with 
$Q'=Q_1'\cup Q_2'$
such that
$\psi(Q')\cap \Psi_1 \subseteq \psi(Q_1')$ and 
$\psi(Q')\cap \Psi_2 \subseteq \psi(Q_2')$.
It follows that 
$Q_1'$ covers $\cT_1$ and avoids $\cT\setminus \cT_1$,
and
$Q_2'$ covers $\cT_2$ and avoids $\cT\setminus \cT_2$.
This contradicts the fact that $P$ is basic, and concludes the proof that there are at most four loop $\gamma_t'$ that are pairwise homotopic.

\begin{center}
\scalebox{0.7}{\includegraphics{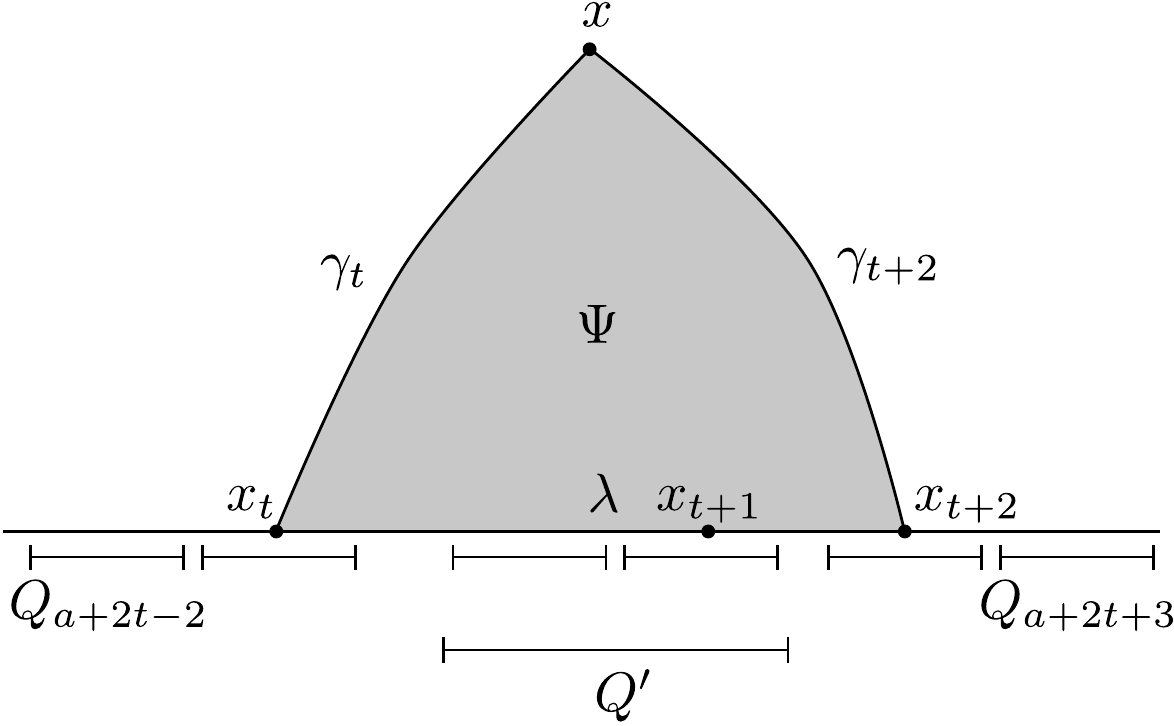}}
\end{center}

Pick $I\subseteq \{0,\ldots,\lfloor (\ell-1)/2\rfloor\}$, with $|I|\geq \lfloor (\ell-1)/10\rfloor$ such that for all $t\neq t'\in I$, we have that $\gamma'_t$ and $\gamma'_{t'}$ are non-homotopic.
Since the paths $\gamma_t$ are non-crossing, we may assume w.l.o.g.~that the paths $\gamma'_t$ are interior disjoint after an infinitesimal perturbation.
By Lemma \ref{lem:mohar_nonseparating} on $S'$ it follows that $|I|\leq 3(g+1)$.
Since $|I|\geq \ell/15$, we get $\ell \leq 45 (g+1)$.
Therefore $|Z_i|\leq 45 (g+1)$.

We conclude that $m\leq \sum_{i=1}^{m'} |Z_i| \leq m' \cdot \max_{i\in \{1,\ldots,m'\}} |Z_i| \leq 9000 g^2 (g+1) \leq 18000 g^3$. 
\end{proof}

\begin{lemma}\label{lem:partial_solution_twins}
Let $P_1 = \{Q_1, \ldots, Q_m\} \in {\cal P}$ be succinct. Let $P_2 = \{Q'_1, \ldots, Q'_l\} \in {\cal P}$ such that $P_1$ and $P_2$ are twins, $V(P_1) \subseteq V(P_2)$ and $E(P_1) \subseteq E(P_2)$.
For every $i \in \{1,\ldots, m\}$, let $u_i$ and $v_i$ be the endpoints of $Q_i$.
Let ${\cal B}_1 = \bigcup_{i=1}^{m} (B_{u_i} \cup B_{v_i})$.
Let ${\cal W}_{P_1}$ be the $P_1$-facial restriction of $\cal W$.
Let $\Gamma_1=\bigcup_{\vW\in {\cal W}_{P_1}} \vW$.
Let $C_1$ be the partition of ${\cal B}_1$ that corresponds to the weakly-connected components of $\Gamma_1$.
For any $x\in {\cal B}_1$ let 
$f_1^{\myin}(x) = \indeg_{{\cal W}_{P_1}}(x)$
and 
$f_1^{\myout}(x) = \outdeg_{{\cal W}_{P_1}}(x)$.
We similarly define $C_2,f_2^\myin, f_2^\myout$ and ${\cal W}_{P_2}$ for $P_2$.
Suppose that there exists some $a_1\in {\cal A} \cup ({\cal A} \times {\cal A}) \cup \nil$ and $l_1,r_1,p_1 \in (V(\vG) \cup \nil)$ such that the dynamic programming table contains some partial solution ${\cal S}_1$ at location $(P_1,(C_1,f_1^\myin, f_1^\myout, a_1,l_1,r_1,p_1))$, with
$\cost_{\vG}({\cal S}_1) \leq \cost_{\vG}({\cal W}_{P_1})$.
Then there exists some $a_2\in {\cal A} \cup ({\cal A} \times {\cal A}) \cup \nil$ and $l_2,r_2,p_2 \in (V(\vG) \cup \nil)$ such that the dynamic programming table contains some partial solution ${\cal S}_2$ at location $(P_2,(C_2,f_2^\myin, f_2^\myout, a_2,l_2,r_2,p_2))$, with
$\cost_{\vG}({\cal S}_2) \leq \cost_{\vG}({\cal W}_{P_2})$.
\end{lemma}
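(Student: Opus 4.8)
The plan is to argue by induction on the nonnegative integer
$\mu(P_1,P_2) := \bigl(|E(P_2)|-|E(P_1)|\bigr) + \bigl(|V(P_2)|-|V(P_1)|\bigr)$,
which measures how much larger $P_2$ is than $P_1$. In the base case $\mu(P_1,P_2)=0$ we have $E(P_1)=E(P_2)$ and $V(P_1)=V(P_2)$; since a family of pairwise vertex-disjoint subpaths of the cycle $F$ is determined by its edge set together with its vertex set, this forces $P_1=P_2$. Then ${\cal B}_1={\cal B}_2$, ${\cal W}_{P_1}={\cal W}_{P_2}$, $C_1=C_2$, $f_1^{\myin}=f_2^{\myin}$ and $f_1^{\myout}=f_2^{\myout}$, so we may take ${\cal S}_2={\cal S}_1$ and $(a_2,l_2,r_2,p_2)=(a_1,l_1,r_1,p_1)$.

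For the inductive step, suppose $\mu(P_1,P_2)>0$. The combinatorial core of the argument is the following reduction: because $P_1$ is succinct, $P_1\subseteq P_2$, and $P_1,P_2$ are twins, there is a vertex $x\in V(P_2)\setminus V(P_1)$ that is an endpoint of some path $Q\in P_2$ such that the family $P_2'$ obtained from $P_2$ by deleting $x$ (and the edge of $Q$ incident to $x$, when $Q$ has positive length) satisfies $P_1\subseteq P_2'\subseteq P_2$, $P_2'\in{\cal P}$, and $P_2'$ is again a twin of $P_1$. The existence of such an $x$ is seen as in Definition \ref{def:succ} and in the proof of Lemma \ref{lem:succinct_and_elementary}: if every path of $P_2$ either coincided with a path of $P_1$ or had no peelable boundary vertex, then $P_1$ itself could be shortened, contradicting succinctness; and deleting $x$ cannot change which subtrees of ${\cal F}$ are covered or avoided, since covering is monotone under vertex-inclusion and $P_2$ is a twin of $P_1$. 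Write $R$ for the corresponding one-edge (length $1$) or one-vertex (length $0$) family, so that $R\in{\cal P}_1$, $E(P_2)=E(P_2')\cup E(R)$ with $E(P_2')\cap E(R)=\emptyset$, $V(P_2)=V(P_2')\cup V(R)$, and $R$ is (vacuously) empty basic because both of its vertices are endpoints; note $\mu(P_1,P_2')<\mu(P_1,P_2)$.

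Applying the inductive hypothesis to $(P_1,P_2')$ yields a grip tuple $(a_2',l_2',r_2',p_2')$ and a partial solution ${\cal S}_2'$ stored by the dynamic program at the index of $P_2'$ determined by the $P_2'$-facial restriction, with $\cost_{\vG}({\cal S}_2')\le\cost_{\vG}({\cal W}_{P_2'})$. Since $R$ is empty basic, its facial restriction ${\cal W}_R$ is simply the restriction of ${\cal W}$ to $\vH_R$, which is compatible with the index $\phi_R\in{\cal I}_R$ having the trivial grip data $a_R=l_R=r_R=p_R=\nil$ together with the obvious $C_R,f_R^{\myin},f_R^{\myout}$; as $|E(R)|\le 1$, the initialization step of the dynamic program stores at $(R,\phi_R)$ a minimum-cost compatible solution ${\cal S}_R$ with $\cost_{\vG}({\cal S}_R)\le\cost_{\vG}({\cal W}_R)$. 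We then feed ${\cal S}_2'$ and ${\cal S}_R$ to the merging procedure. Merging phase $1$ does not abort, because the in- and out-degrees imposed at the at most one shared vertex $x'\in V(P_2')\cap V(R)$ agree: both equal the corresponding degrees of the $P_2$-facial restriction split at $x'$, which is the same additivity used in the proof of Lemma \ref{lem:vortex_DP_induction}. Since $R$ carries no grip, Merging phase $2$ falls into one of the cases (1), (2), (5) in which nothing is done, and the merged solution inherits $(a_2',l_2',r_2',p_2')$; this is the tuple we take for $(a_2,l_2,r_2,p_2)$. Merging phase $3$ (condition (T5)) holds because the weakly-connected components of $\bigcup{\cal W}_{P_2}$ are obtained from those of $\bigcup{\cal W}_{P_2'}$ and $\bigcup{\cal W}_R$ by identification at $x'$, and every vertex of $\vH_{P_2}\setminus\vH_{P_2'}$ reaches ${\cal B}_2$ through ${\cal W}_R$. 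Hence the merge returns a partial solution ${\cal S}_2$ at the index $(P_2,(C_2,f_2^{\myin},f_2^{\myout},a_2,l_2,r_2,p_2))$ of the statement, and $\cost_{\vG}({\cal S}_2)=\cost_{\vG}({\cal S}_2')+\cost_{\vG}({\cal S}_R)\le\cost_{\vG}({\cal W}_{P_2'})+\cost_{\vG}({\cal W}_R)\le\cost_{\vG}({\cal W}_{P_2})$, the last inequality being the same facial-restriction decomposition (edges of ${\cal W}_{P_2}$ split between $\vH_{P_2'}$ and $\vH_R$) invoked in the proof of Lemma \ref{lem:vortex_DP_induction}.

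The main obstacle is the peeling step: establishing that one can always remove a single boundary vertex of $P_2$ while staying inside ${\cal P}$, retaining $P_1$ as a subfamily, and preserving every cover/avoid relation with the trees of ${\cal F}$ — and then verifying that the resulting decomposition of the $P_2$-facial restriction into the $P_2'$-facial restriction and the empty piece $R$ is compatible with the merging procedure (matching partitions, matching degree sequences, unchanged grip), so that the merge neither aborts nor alters the index beyond its grip coordinates. The cost accounting for the overlap of the two facial restrictions in the shared bag $B_{x'}$ is the delicate point, and it is handled exactly as in the proof of Lemma \ref{lem:vortex_DP_induction}.
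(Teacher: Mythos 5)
Your overall strategy is the same as the paper's: account for the difference between $P_2$ and $P_1$ by unit-sized families whose partial solutions come from the initialization step, and merge them into ${\cal S}_1$. The paper does this bottom-up in three lines: it sets $E_1=E(P_2)\setminus E(P_1)$, observes that each single-edge family $P_e=\{Q_e\}$ for $e\in E_1$ is empty basic with $|E(P_e)|=1$ and hence has a stored partial solution ${\cal S}_e$, and merges ${\cal S}_1$ with all the ${\cal S}_e$ sequentially in an arbitrary order. Your top-down induction is meant to be the mirror image of this, but the specific peeling step you rely on does not hold.

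The gap is the claim that whenever $\mu(P_1,P_2)>0$ there exists $x\in V(P_2)\setminus V(P_1)$ that is an \emph{endpoint} of some path of $P_2$ and whose removal leaves a twin of $P_1$ containing $P_1$. This can fail outright: take $P_1=\{[a,b],[c,d]\}$ with $a,b,c,d$ in this order along $F$, and $P_2=\{[a,d]\}$ the single path obtained by filling in the gap between $b$ and $c$ (with ${\cal F}$ arranged so that no tree meets the interior of that gap, so the two families are twins and $P_1$ is succinct). Then every vertex of $V(P_2)\setminus V(P_1)$ is interior to the unique path of $P_2$, and both endpoints $a,d$ of that path lie in $V(P_1)$, so neither can be deleted without violating $V(P_1)\subseteq V(P_2')$. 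Your justification via succinctness of $P_1$ does not apply here — succinctness constrains shortening $P_1$, not $P_2$ — so the induction gets stuck. The fix is exactly the paper's move: do not insist that the intermediate families be twins of $P_1$ or be obtained by deleting endpoints of $P_2$; instead peel (or add) a single \emph{edge} of $E(P_2)\setminus E(P_1)$ at a time, allowing a path of $P_2$ to split into two subpaths in the intermediate family, and let the merging procedure assemble the pieces. With that replacement the rest of your argument (the degree bookkeeping at the shared bag, the trivial grip cases in merging phase 2, and the cost decomposition across $\vH_{P_2'}$ and $\vH_R$) goes through as you describe.
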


\begin{proof}
Let $E_1 = E(P_2) \setminus E(P_1)$.
For every $e \in E_1$, let $Q_e \in {\cal Q}$ be the path containing a single edge $e$, and let $P_e = \{Q_e\}$. Note that $P_e$ is an empty basic family of paths and $|E(P_e)|=1$. Therefore for every $e \in E_1$, the initialization step of the dynamic programming, finds a partial solution ${\cal S}_e$ for $P_e$. By merging ${\cal S}_1$ with all these partial solutions sequentially in an arbitrary order, we get a partial solution ${\cal S}_2$ for $P_2$, as desired.
\end{proof}

\begin{lemma}\label{lem:partial_solution_for_trivial}
Let $P = \{Q_1 \ldots, Q_m\} \in {\cal P}$ be basic and trivial (w.r.t.~${\cal W}$).
For every $i \in \{1,\ldots, m\}$, let $u_i$ and $v_i$ be the endpoints of $Q_i$.
Let ${\cal B} = \bigcup_{i=1}^{m} (B_{u_i} \cup B_{v_i})$.
Let ${\cal W}_{P}$ be the $P$-facial restriction of $\cal W$.
Let $\Gamma=\bigcup_{\vW\in {\cal W}_P} \vW$.
Let $C$ be the partition of $\cal B$ that corresponds to the weakly-connected components of $\Gamma$.
For any $x\in {\cal B}$ let 
$f^{\myin}(x) = \indeg_{{\cal W}_P}(x)$
and 
$f^{\myout}(x) = \outdeg_{{\cal W}_P}(x)$.
Then there exists some $a\in {\cal A} \cup ({\cal A} \times {\cal A}) \cup \nil$ and $l,r,p \in (V(\vG) \cup \nil)$ such that the dynamic programming table contains some partial solution ${\cal S}$ at location $(P,(C,f^\myin, f^\myout, a,l,r,p))$, with
$\cost_{\vG}({\cal S}) \leq \cost_{\vG}({\cal W}_P)$.
\end{lemma}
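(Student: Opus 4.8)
The plan is to mirror the proof of Lemma~\ref{lem:vortex_DP_induction}, lifting it from single subpaths of $F$ to families of subpaths, and using the decomposition machinery of Lemmas~\ref{lem:unique_succinct_basic}--\ref{lem:succinct_and_elementary} to keep the families small enough to index the dynamic programming table. First I would reduce to the case where $P$ is succinct: by Lemma~\ref{lem:unique_succinct_basic} there is a succinct basic family $P'\in{\cal P}_{\infty}$ that is a twin of $P$ with $E(P')\subseteq E(P)$, and since twins have combinatorially identical facial restrictions, once a partial solution for $P'$ of cost at most $\cost_{\vG}({\cal W}_{P'})$ is produced, Lemma~\ref{lem:partial_solution_twins} turns it into one for $P$ by merging in the one-edge families corresponding to $E(P)\setminus E(P')$, without increasing the cost. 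So we may assume $P$ is succinct; being trivial it is in particular elementary, so $|P|\le 18000 g^3$ by Lemma~\ref{lem:succinct_and_elementary}, hence $P\in{\cal P}={\cal P}_{324000 g^4}$ and it genuinely indexes the table. The same bound will apply to every intermediate family produced below, because the splitting and the decomposition of Lemma~\ref{lem:decomposing} preserve basicness and elementariness (and succinctness can always be restored by Lemma~\ref{lem:unique_succinct_basic}), so the recursion never leaves ${\cal P}$.

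Next I would run the nested induction of Lemma~\ref{lem:vortex_DP_induction}. If $P$ is empty basic, then ${\cal W}_P$ is just the restriction of ${\cal W}$ to $\vH_P$; the base case $|E(P)|=1$, $|P|=1$ is exactly what the initialization step computes with all grip markers $\nil$, and for $|E(P)|>1$ we split $P$ either by cutting one of its paths at an interior vertex or by deleting one of its paths, apply the induction hypothesis to the two pieces with grip markers $\nil$, and call the merging procedure: Merging phase~1 succeeds because the $\indeg$ and $\outdeg$ data agree at the shared boundary vertices by the definition of the facial restrictions, Merging phase~2 is case~(1), and Merging phase~3 is inherited from the weakly-connected structure of ${\cal W}_P$. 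If $P$ is non-empty basic, let ${\cal T}$, $D$, $D_1,\ldots,D_k$, $j$ be as in Definition~\ref{dfn:basic_family_paths}; since $P$ is trivial, ${\cal T}$ and the subtree of it covered by $P$ are trivial, so $\psi(F)$ together with the images of the covered trees lies in a disk, and all the topological reasoning about grips reduces to the planar reasoning of Section~\ref{sec:vortex_planar}. When $D$ is a leaf we split the family around the walk $D$ into empty pieces and the pieces carrying $D$, record the endpoints of $D$ and the attachment points of its parent and grandparent in ${\cal T}$ in the markers $a,l,r,p$, and finish as in the leaf case of Lemma~\ref{lem:vortex_DP_induction}; the subcases according to the positions of the endpoints of $D$ on the boundary are handled verbatim, with families in place of single paths.

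When $D$ is a non-leaf, I would induct on $j$: Lemma~\ref{lem:decomposing} yields basic elementary families $P_1$ covering $\bigcup_{i=1}^{j-1}{\cal T}_{D_i}$ and $P_2$ covering ${\cal T}_{D_j}$, with $E(P_1),E(P_2)\subseteq E(P)$ disjoint; inserting an intermediate family joining them, exactly as $P_3$ and $P_4$ in the planar proof, and applying the $(j-1)$-case to $P_1$, the leaf/smaller-tree case to $P_2$, and then merging through the appropriate case among (2)--(7) of Merging phase~2 (dictated by which grip markers $P_1$ and $P_2$ return) produces a partial solution for $P$. Its cost is bounded by the uniqueness of important walks (the analog of Proposition~\ref{prop:important_walk}) together with the shortcutting identities for grips, both taken over verbatim from the planar analysis, where contractibility of the covered region is exactly what guarantees that the referenced important walks exist and are consistent. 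Finally, the case where $P$ covers several (necessarily trivial) trees is handled, as in the multi-tree part of Lemma~\ref{lem:vortex_DP_induction}, by ordering them along $F$ and merging the corresponding solutions one at a time.

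I expect the main difficulty to be bookkeeping rather than a new idea: checking that every family produced by the splitting and decomposition steps is still basic, elementary and (after Lemma~\ref{lem:unique_succinct_basic}) succinct, so that Lemma~\ref{lem:succinct_and_elementary} keeps it in ${\cal P}$; and checking that the grip markers $a,l,r,p$ of the two pieces being merged can always be matched so that one of the seven cases of Merging phase~2 applies, with the grip endpoints being vertices of the appropriate subpaths. Both are local verifications that mirror the planar proof, but there are many of them; the use of ``trivial''—hence the contractibility of the covered trees—is precisely what lets each of them be reduced to a situation already treated in Section~\ref{sec:vortex_planar}.
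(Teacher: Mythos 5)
Your proposal would go through, but it takes a much heavier route than the paper, and you miss the one observation that makes this lemma essentially immediate. The paper's proof is a single line: since $P$ is trivial, every tree it covers sits (together with $\psi(F)$) inside a disk, so a single subpath of $F$ can cover ${\cal T}[P]$ while avoiding everything else; minimality of $|P|$ in Definition~\ref{dfn:basic_family_paths} then forces $P\in{\cal P}_1$, i.e.\ $P$ is literally one path, and the planar argument of Lemma~\ref{lem:vortex_DP_induction} applies \emph{verbatim} --- no family-of-paths merging, no Lemma~\ref{lem:decomposing}, no Lemma~\ref{lem:succinct_and_elementary}. You actually have this fact in hand (you note that contractibility puts the covered trees in a disk so that ``all the topological reasoning reduces to the planar reasoning''), but you stop short of drawing the conclusion $|P|=1$ and instead treat $P$ as a genuine multi-path family, reducing to a succinct twin, bounding $|P|$ by $18000g^3$, and rerunning the nested induction with Lemma~\ref{lem:decomposing}. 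What you end up writing is, in effect, the proof of Lemma~\ref{lem:partial_solution_for_succinct_elementary} (trivial implies elementary, so that lemma subsumes this one), which is legitimate and non-circular --- that lemma's proof does not rely on the present one --- but it duplicates work the paper does elsewhere and forces you to re-verify all the merging and grip bookkeeping in a setting where none of it is needed. The trade-off: your route proves a more general statement in one pass; the paper's route isolates the trivial case precisely so that it can serve as the cheap planar base case inside the later lemmas.
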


\begin{proof}
Since $P$ is trivial, we have that $P \in {\cal P}_1$, and thus the exact same argument as in Lemma \ref{lem:vortex_DP_induction} applies here.
\end{proof}

\begin{lemma}\label{lem:partial_solution_for_succinct_elementary}
Let $P = \{Q_1 \ldots, Q_m\} \in {\cal P}$ be succinct and elementary (w.r.t.~${\cal W}$).
For every $i \in \{1,\ldots, m\}$, let $u_i$ and $v_i$ be the endpoints of $Q_i$.
Let ${\cal B} = \bigcup_{i=1}^{m} (B_{u_i} \cup B_{v_i})$.
Let ${\cal W}_{P}$ be the $P$-facial restriction of $\cal W$.
Let $\Gamma=\bigcup_{\vW\in {\cal W}_P} \vW$.
Let $C$ be the partition of $\cal B$ that corresponds to the weakly-connected components of $\Gamma$.
For any $x\in {\cal B}$ let 
$f^{\myin}(x) = \indeg_{{\cal W}_P}(x)$
and 
$f^{\myout}(x) = \outdeg_{{\cal W}_P}(x)$.
Then there exists some $a\in {\cal A} \cup ({\cal A} \times {\cal A}) \cup \nil$ and $l,r,p \in (V(\vG) \cup \nil)$ such that the dynamic programming table contains some partial solution ${\cal S}$ at location $(P,(C,f^\myin, f^\myout, a,l,r,p))$, with
$\cost_{\vG}({\cal S}) \leq \cost_{\vG}({\cal W}_P)$.
\end{lemma}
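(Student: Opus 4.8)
```latex
\begin{proof}[Proof sketch of Lemma \ref{lem:partial_solution_for_succinct_elementary}]
The plan is to reduce the elementary case to the trivial case (Lemma \ref{lem:partial_solution_for_trivial}) by peeling off the children of the critical node $D$ one at a time, using the decomposition of a succinct elementary family provided by Lemma \ref{lem:decomposing} and the merging machinery of the dynamic program. Let ${\cal T}$, $D$, $j$, $D_1,\ldots,D_k$ be as in Definition \ref{dfn:basic_family_paths}. We induct on $j$ (the number of children of $D$ covered by $P$), and within that, on the structure of the subtrees ${\cal T}_{D_i}$, exactly as in the second (inner) induction of the proof of Lemma \ref{lem:vortex_DP_induction} in the planar case; the present statement plays the role of that lemma's ``non-leaf $D$'' case.

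The key steps are as follows. First, if ${\cal T}[P]$ is trivial (for instance $j=1$ and ${\cal T}_{D_1}$ is trivial, or $P$ covers ${\cal T}_D$ with ${\cal T}_D$ trivial), then $P\in{\cal P}_1$ and the claim follows directly from Lemma \ref{lem:partial_solution_for_trivial}; if $P$ is merely twin to a succinct family with fewer edges we first invoke Lemma \ref{lem:partial_solution_twins} to pad back to $P$. For the inductive step, apply Lemma \ref{lem:decomposing} to obtain succinct elementary families $P_1,P_2\in{\cal P}_\infty$ with $E(P_1),E(P_2)\subseteq E(P)$ disjoint, where $P_1$ covers $\bigcup_{i=1}^{j-1}{\cal T}_{D_i}$ and avoids the rest of ${\cal T}$, and $P_2$ covers ${\cal T}_{D_j}$ and avoids the rest. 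By Lemma \ref{lem:succinct_and_elementary} we have $|P_1|,|P_2|\leq 18000g^3$, so in fact $P_1,P_2\in{\cal P}={\cal P}_{324000g^4}$, which is exactly why the table index set is taken this large; this ensures the inductive hypothesis applies to both. By induction there are compatible partial solutions ${\cal S}_1$ at $(P_1,(C_1,f_1^\myin,f_1^\myout,a_1,l_1,r_1,p_1))$ and ${\cal S}_2$ at $(P_2,(C_2,f_2^\myin,f_2^\myout,a_2,l_2,r_2,p_2))$ of cost at most $\cost_{\vG}({\cal W}_{P_1})$ and $\cost_{\vG}({\cal W}_{P_2})$ respectively (here $P_2$ covers a single child-subtree and so falls under a lower-$j$ instance, while $P_1$ covers $j-1$ children). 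Then run the merging procedure on ${\cal S}_1$ and ${\cal S}_2$: merging phase~1 applies because $P_1$ and $P_2$ meet only in the boundary bags $B_{u}\cup B_{v}$ of shared endpoints, where the in/out degree bookkeeping agrees by construction of the facial restrictions; merging phase~2 updates the grip according to the same case analysis used in Lemma \ref{lem:vortex_DP_induction} (selecting $a,l,r,p$ from $a_1,l_1,r_1,p_1,a_2,l_2,r_2,p_2$); and merging phase~3 (connectivity, i.e.\ (T5)) holds because $D$ connects the pieces covered by $P_1$ and $P_2$ through the vortex. Finally one argues $\cost_{\vG}({\cal S})\leq\cost_{\vG}({\cal W}_P)$ using Proposition \ref{prop:important_walk} (uniqueness of $P$-important walks) to compare the shortest-path grips inserted by merging phase~2 against the corresponding important walks in ${\cal W}$, and the identity relating the total cost of crossed versus uncrossed important walks, verbatim as in the planar case.

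The main obstacle is bookkeeping rather than a new idea: one must check that the grip types $(a,l,r,p)$ produced by Lemma \ref{lem:decomposing}'s two pieces are precisely among the cases enumerated in merging phase~2, and that the cost-charging argument (splitting $\cost_{\vG}({\cal S})$ into the contributions of ${\cal S}_1$, ${\cal S}_2$, and the newly inserted shortest paths, then bounding each inserted path by a $P$-important walk) goes through with a family of paths in place of a single path. The subtlety specific to positive genus is that the family $P$ may genuinely need more than one path; Lemma \ref{lem:succinct_and_elementary} is what keeps $|P|$ bounded so that the inductively produced $P_1,P_2$ stay inside ${\cal P}$, and Lemma \ref{lem:decomposing}(4) (edge-disjointness) is what lets merging phase~1 glue them without double-counting. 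Once these are in hand the argument is a direct transcription of the planar non-leaf case.
\end{proof}
```
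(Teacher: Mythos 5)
Your proposal matches the paper's proof in all essentials: the same double induction (on $\cal T$ and then on $j$), the same use of Lemma \ref{lem:decomposing} to split off the first $j-1$ child subtrees from ${\cal T}_{D_j}$, the same appeal to Lemma \ref{lem:succinct_and_elementary} to keep the pieces inside ${\cal P}$, and the same merge-and-charge argument transplanted from the non-leaf case of the planar Lemma \ref{lem:vortex_DP_induction}. The one detail the paper makes explicit that you leave implicit in the inductive step is that, since the succinct pieces $P'_1,P'_2$ may satisfy $E(P'_1)\cup E(P'_2)\subsetneq E(P)$, one must first pad $P'_1$ to a twin family $P''_1$ with $E(P''_1)\cup E(P'_2)=E(P)$ and lift its partial solution via Lemma \ref{lem:partial_solution_twins} before merging, so that the merged solution lands at index $P$ rather than at a proper sub-family; you cite the right lemma for this but only in your base-case discussion.
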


\begin{proof}
Let $\cal T$, $D$, $k$, $D_1, \ldots, D_k$ and $j$ be as in Definition \ref{dfn:basic_family_paths}. We prove the assertion by induction on $\cal T$. For the base case, where $D$ is a leaf of $\cal T$, the same argument as in Lemma \ref{lem:vortex_DP_induction} applies here. Suppose that $D$ is non-leaf. In this case, we prove the assertion by another induction on $j$. For the base case, where $j=1$, the same argument as in Lemma \ref{lem:vortex_DP_induction} applies here. Now suppose that we have proved the assertion for all $j' < j$. Let $P_1 \in {\cal P}_{\infty}$ such that $V(P_1) \subseteq V(P)$, $E(P_1) \subseteq E(P)$, $P_1$ covers $\bigcup_{i=1}^{j-1} {\cal T}_{D_i}$ and avoids ${\cal T} \setminus (\bigcup_{i=1}^{j-1} {\cal T}_{D_i})$. Let $P_2 \in {\cal P}_{\infty}$ such that $E(P_2) = E(P) \setminus E(P_1)$. By the construction, $P_1$ and $P_2$ are elementary. Therefore, by Lemma \ref{lem:unique_succinct_basic}, there exists elementary and succinct $P'_1\in {\cal P}_{\infty}$ and $P'_2 \in {\cal P}_{\infty}$ for $P_1$ and $P_2$ respectively. Moreover, by Lemma \ref{lem:decomposing}, we have that $E(P'_1) \subseteq E(P)$, $E(P'_2) \subseteq E(P)$ and $E(P'_1) \cap E(P'_2) = \emptyset$
. By Lemma \ref{lem:succinct_and_elementary} we have that $P, P'_1, P'_2 \in {\cal P}_{18000g^3}$.
We next define some $P''_1\in \cP_{\infty}$.
Let $\Gamma$ be the graph obtained as the union of all paths in $P$, that is $\Gamma=\bigcup_{Q\in P} Q$.
Similarly, let $\Gamma'=\bigcup_{Q\in P'_2} Q$,
let $\Gamma''=\Gamma\setminus E(\Gamma')$,
and let $\Gamma'''$ be the graph obtained from $\Gamma''$ by deleting all isolated vertices.
We define $P''_1$ to be the set of connected components in $\Gamma'''$.
Note that $\Gamma'''$ has at most $36000 g^3$ connected components, and each such component is a path.
Thus $P''_1 \in {\cal P}_{36000g^3}$, and $E(P'_1) \subseteq E(P''_1)$. By the induction hypothesis, there exists a partial solution ${\cal S}'_1$ for $P'_1$, and thus by Lemma \ref{lem:partial_solution_twins}, there exists a partial solution ${\cal S}''_1$ for $P''_1$. Also, by the induction hypothesis, there exists a partial solution ${\cal S}'_2$ for $P'_2$. By merging ${\cal S}''_1$ and ${\cal S}'_2$, we get a partial solution $\cal S$ for $P$, as desired.
\end{proof}

We define the following three types of non-trivial trees in $\cal F$: 
\begin{description}
\item{(1)}
We say that a non-trivial tree ${\cal T}$ is of the \emph{first type}, if there exists a non-leaf $D \in V({\cal T})$, such that $D$ is a closed walk, with $V(D)\cap V(F)=\emptyset$, such that $\psi(D)$ is non-contractible.

\item{(2)}
We say that a non-trivial tree $\cal T$ is of the \emph{second type}, if it is not of the first type and there exists a leaf $D \in V({\cal T})$ such that $\psi(D)\cup \psi(F)$ is non-contractible.

\item{(3)}
We say that a non-trivial tree $\cal T$ is of the \emph{third type}, if it is not of the first type nor of the second type.
\end{description}

We say that a non-trivial tree $\cal T$ is \emph{good}, if at least one of the following conditions holds:
\begin{description}
\item{(1)}
$\cal T$ is of the second type, and for every $D_1,D_2 \in V({\cal T})$ where $\psi(D_1)\cup \psi(F)$ and $\psi(D_1)\cup \psi(F)$ are non-contractible, we have that the loops obtained from $\psi(D_1)$ and $\psi(D_1)$ by contracting $\psi(F)$ into a single poit $x$ are homotopic in $\pi_1(S,x)$.
Let $D \in V({\cal T})$ such that $\psi(D)\cup \psi(F)$ is non-contractible. We let $\beta({\cal T})$ to be the homotopy class of the loop $\psi(D)$ in the surface obtained after contracting $\psi(F)$ into a single point.

\item{(2)}
$\cal T$ is of the third type and the following holds.
Let $X=\psi(F) \cup \bigcup_{D\in V({\cal T})}\psi(D)$ and let $X'$ be the image of $X$ after contracting $\psi(F)$ into a single point $x$.
Then and all non-contractible loops in $X'$ are homotopic in $\pi_1(S,x)$. We let $\beta({\cal T})$ be the homotopy class in $\pi_1(S,x)$ of all non-contractible loops in $X'$; note that we may always take a non-contractible loop in $X'$ that contains the basepoint $x$ since $X$ is connected.
\end{description}
Otherwise, we say that $\cal T$ is a \emph{bad} tree.

Let ${\cal T}_0$ and ${\cal T}_1$ be non-trivial good trees in $\cal F$. We say that ${\cal T}_0$ is a \emph{friend} of ${\cal T}_1$ if $\beta({\cal T}_0) = \beta({\cal T}_1)$ (see Figure \ref{fig:friend} for an example).

\begin{figure}
\begin{center}
\scalebox{0.8}{\includegraphics{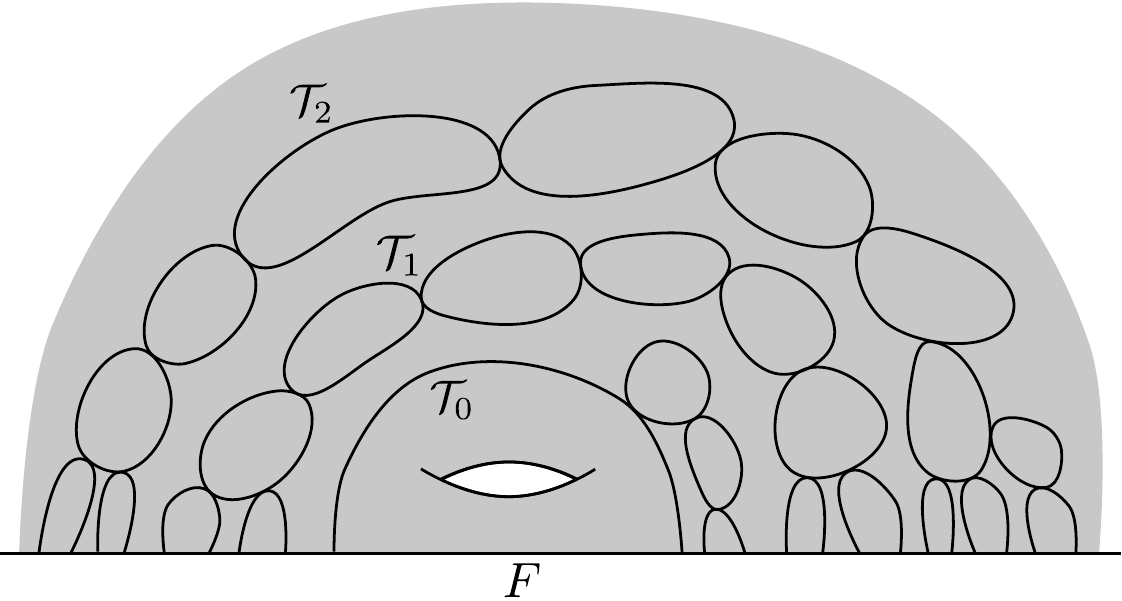}}
\caption{Example of good non-trivial trees ${\cal T}_0$, ${\cal T}_1$, and ${\cal T}_2$ that are pairwise friends; note that ${\cal T}_0$ is of the second type while ${\cal T}_1$ and ${\cal T}_2$ are of the third type.}\label{fig:friend}
\end{center}
\end{figure}

\begin{definition}[Friendly]
Let $P \in {\cal P}_{\infty}$.
We say that $P$ is \emph{friendly} if the following holds.
\begin{description}
\item{(1)}
$P$ avoids all non-trivial bad trees.
\item{(3)}
For any two non-trivial good trees ${\cal T}_0$ and ${\cal T}_1$ in $\cal F$ that $P$ covers, we have that $\beta({\cal T}_0) = \beta({\cal T}_1)$.
\item{(3)}
If $P$ covers a non-trivial good tree ${\cal T}_0$, then $P$ covers all non-trivial good trees ${\cal T}_1$ with $\beta({\cal T}_0) = \beta({\cal T}_1)$.
\end{description}
\end{definition}

\begin{lemma}\label{lem:bad_trees}
There exists at most $12g$ bad trees in $\cal F$.
\end{lemma}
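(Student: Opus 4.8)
The statement claims that $\mathcal F$ contains at most $12g$ bad trees, so the proof should produce, from each bad tree, a topological "certificate" and then invoke the surface-genus bounds already available (Lemma~\ref{lem:mohar_nonseparating} and the analysis of Lemma~\ref{lem:succinct_and_elementary}). First I would recall the classification: a bad tree is a non-trivial tree that is either (a) of the first type — it has a non-leaf vertex $D$ which is a closed walk disjoint from $V(F)$ with $\psi(D)$ non-contractible — or (b) of the second or third type but failing the homotopy-coherence condition, i.e.\ it contains two "obstruction loops" (obtained by contracting $\psi(F)$ to a point $x$) that are \emph{non-homotopic} in $\pi_1(S,x)$. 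In either case a bad tree $\mathcal T$ gives rise, after contracting $\psi(F)$ to the basepoint $x$, to \emph{two} non-contractible loops through $x$ that are non-homotopic: for the first type one uses $\psi(D)$ itself together with a path inside the disk bounded by $\psi(F)$ to reach $x$ (and, since $D$ is a non-leaf, a second loop coming from a path through $\mathcal T$ to $\psi(F)$, arguing as in the construction of $\gamma_{\mathcal T'}$ in Lemma~\ref{lem:succinct_and_elementary} via Thomassen's 3-path condition); for types two and three badness directly supplies two non-homotopic loops.

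The key combinatorial step is then a disjointness/non-homotopy count. I would, for each bad tree $\mathcal T_i$, select a pair of loops $\ell_i, \ell_i'$ as above, chosen so that all of them pass through the single basepoint $x$ and are otherwise interior-disjoint after an infinitesimal perturbation — this is possible because the trees of $\mathcal F$ have pairwise disjoint images away from $\psi(F)$ and each $\psi(\mathcal T_i)$ meets $\psi(F)$, so the connecting arcs to $x$ inside the disk bounded by $\psi(F)$ can be routed without crossings. For a bad tree of the first type, $\ell_i$ and $\ell_i'$ are non-homotopic because $D$ is a non-leaf — exactly the reasoning in Lemma~\ref{lem:succinct_and_elementary} that three pairwise-homotopic loops would force a disk containing the relevant subtree and hence contradict non-triviality of the enclosed structure. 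For the second and third types, non-homotopy of $\ell_i,\ell_i'$ is the definition of badness. Now collect the whole family $\{\ell_i,\ell_i' : \mathcal T_i \text{ bad}\}$ of noncontractible loops through $x$; by the same "three pairwise-homotopic loops are impossible" argument used in Lemma~\ref{lem:succinct_and_elementary}, at most two of them can lie in any single homotopy class, while within each bad tree the two loops are already non-homotopic. Passing to a subfamily with one loop per homotopy class still retains at least (number of bad trees) loops (since each bad tree contributes two non-homotopic loops, at least one survives per tree even after the at-most-two-per-class reduction), and these are pairwise non-homotopic noncontractible loops through $x$; Lemma~\ref{lem:mohar_nonseparating}(ii) caps their number by $3(g-1)$ on the surface $S$ of Euler genus $g$. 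A slightly more careful accounting — or a small constant slack to absorb the contraction of $\psi(F)$ possibly raising the genus by a bounded amount, as in the passage to $S'$ in Lemma~\ref{lem:succinct_and_elementary} — yields the clean bound $12g$.

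The main obstacle I expect is the bookkeeping that guarantees each bad tree \emph{survives} as a distinct non-homotopic loop after the at-most-two-per-homotopy-class culling: naively, two bad trees could contribute all four of their loops to just two homotopy classes. The fix is to be careful about which loop of the pair one keeps — for each bad tree keep the loop that is non-homotopic to every loop already retained, which is always possible precisely because the \emph{pair} from one tree spans two classes, so if both its loops collided with retained ones those retained ones would already be non-homotopic to each other, and one could re-attribute. Making this selection argument airtight (essentially a Hall-type / greedy matching between bad trees and homotopy classes, using that each tree touches two classes) is the delicate point; everything else reduces to the loop-construction recipe of Lemma~\ref{lem:succinct_and_elementary} and a direct application of Lemma~\ref{lem:mohar_nonseparating}. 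I would also need to double-check the genus constant: contracting $\psi(F)$ to a point keeps the surface of genus at most $g$ (it is a quotient by a disk boundary after filling, not a handle addition), so $3(g-1)$ is the right ceiling and multiplying by the factor $2$ for the pairs plus rounding comfortably gives $12g$.
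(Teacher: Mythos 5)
Your overall strategy for the second and third types is the same as the paper's: contract $\psi(F)$ to a basepoint, extract from each such bad tree two non-homotopic noncontractible loops through the basepoint (via the $\gamma_{{\cal T}'}$ construction and Thomassen's 3-path condition), observe that no three of the resulting loops can be pairwise homotopic (the middle one would trap its tree in an annulus), and apply Lemma~\ref{lem:mohar_nonseparating}. Note that the ``culling'' you worry about is handled by plain double counting: with $2m$ loops and at most two loops per homotopy class you get at least $m$ distinct classes, so no Hall-type selection is needed. The paper bounds each of these two families by $3g$.

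The genuine gap is your treatment of bad trees of the \emph{first} type. You assert that such a tree also yields \emph{two non-homotopic} loops --- $\psi(D)$ itself plus a second loop through a path from ${\cal T}$ to $\psi(F)$ --- and that these are non-homotopic ``because $D$ is a non-leaf.'' Nothing in the definitions supports this: a first-type tree is bad \emph{unconditionally} (the goodness conditions only apply to trees of the second and third types), so it may perfectly well have all of its noncontractible loops in a single homotopy class, and then your pair $\ell_i,\ell_i'$ collapses and the per-tree contribution to the count of distinct classes can be zero under your scheme. The paper instead handles first-type trees with a one-loop-per-tree argument: it takes for each such tree the non-contractible closed walk $\psi(D)$ itself, uses that these curves are pairwise \emph{disjoint} (the walks in ${\cal W}$ are non-crossing and use each edge of $\vG'\setminus F$ at most once), shows that three of them cannot be pairwise freely homotopic (the middle one would be separated from $\psi(F)$), and invokes part (i) of Lemma~\ref{lem:mohar_nonseparating} to get at most $2\cdot 3(g-1)\le 6g$ such trees. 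Adding $6g+3g+3g$ gives the stated $12g$. To repair your proof you would need to either adopt this separate disjoint-curves argument for the first type, or prove (which you cannot in general) that first-type bad trees carry two non-homotopic loops.
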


\begin{proof}
We partition all bad trees in $\cal F$ into three sets:
\begin{description}
\item{(1)}
$F_1 = \{{\cal T} \in {\cal F}: {\cal T} \text{ is a bad tree of the first type}\}$.
\item{(2)}
$F_2 = \{{\cal T} \in {\cal F}: {\cal T} \text{ is a bad tree of the second type}\}$.
\item{(3)}
$F_3 = \{{\cal T} \in {\cal F}: {\cal T} \text{ is a bad tree of the third type}\}$.
\end{description}

We first bound $|F_1|$. Let ${\cal T} \in F_1$. We let $\beta({\cal T})$ to be the homotopy class of some non-leaf $D \in V(T)$, such that $D$ is a closed walk, with $V(D)\cap V(F)=\emptyset$, and $\psi(D)$ is non-contractible. Note that by the definition of trees of the first type, such a non-leaf $D \in V({\cal T})$ exists. Let $\cT_1, \cT_2, \cT_3 \in F_1$ be distinct.
We show that $\beta(\cT_1) = \beta(\cT_2) = \beta(\cT_3)$ cannot happen. Suppose not, and we have that $\beta(\cT_1) = \beta(\cT_2) = \beta(\cT_3)$. For any $i\in\{1,2,3\}$, let $D_i \in V(\cT_i)$ such that $\beta(\cT_i)$ is the homotopy class of $D_i$. By removing $\psi(D_1)$, $\psi(D_2)$ and $\psi(D_3)$ from the surface we obtain three connected components. We may assume w.l.o.g~that $\psi(D_2)$ is inside an annulus bounded by $\psi(D_1)$ and $\psi(D_3)$. Therefore, there is no path in the surface from $\psi(D_2)$ to $\psi(F)$, that does not intersect $\psi(D_1\cup D_3)$, which is a contradiction. Therefore by Lemma \ref{lem:mohar_nonseparating} we have that $|F_1| \leq 6g$.

Next we bound $|F_2|$ and $|F_3|$. Let ${\cal T} \in F_2$. By the construction, there exist $D_1,D_2 \in V({\cal T})$ where $\psi(D_1)\cup \psi(F)$ and $\psi(D_2)\cup \psi(F)$ are non-contractible, and the loops obtained from $\psi(D_1)$ and $\psi(D_2)$ after contracting $\psi(F)$ into a single point $y$ are non-homotopic in $\pi_1(S,y)$. Let $x$ be a point in the interior of the disk bounded by $\psi(F)$. For every $\cT \in F_2$, similarly to Lemma \ref{lem:succinct_and_elementary}, we construct two non-homotopic loops $\gamma_{\cT'}$ and $\gamma'_{\cT'}$ in the surface, corresponding to $D_1$ and $D_2$ respectively, such that they only intersect at $x$. Let $\cal L$ be the set of all these loops. Let $L', L'', L''' \in {\cal L}$ be distinct. Similarly to Lemma \ref{lem:succinct_and_elementary}, we show that $L'$, $L''$ and $L'''$ can not be all homotopic. Suppose not. We may assume w.l.o.g.~that $L''$ is inside the disk bounded by $L'$ and $L'''$. Let $\cT'' \in F_2$ be the tree corresponding to $L''$.
Now note that $L''$ is inside the disk bounded by $L'$ and $L'''$, and thus all non-contractible loops in $\psi(F) \cup \bigcup_{D\in V({\cal T})}\psi(D)$ are in the same homotopy class as $L''$. This contradicts the fact that $\psi(D_1)\cup \psi(F)$ and $\psi(D_2)\cup \psi(F)$ are non-homotopic. Therefore, by Lemma \ref{lem:mohar_nonseparating} we have that $|F_2| \leq 3g$. Using a similar argument, we can show that $|F_3| \leq 3g$.

Therefore, we have that $|F_1| + |F_2|+ |F_3| \leq 12g$, completing the proof.
\end{proof}

\begin{lemma}\label{lem:vortex_DP_friendy_succinct_basic}
Let $P = \{Q_1 \ldots, Q_m\} \in {\cal P}_{\infty}$ be friendly, succinct and basic (w.r.t.~${\cal W}$).
For every $i \in \{1,\ldots, m\}$, let $u_i$ and $v_i$ be the endpoints of $Q_i$.
Let ${\cal B} = \bigcup_{i=1}^{m} (B_{u_i} \cup B_{v_i})$.
Let ${\cal W}_{P}$ be the $P$-facial restriction of $\cal W$.
Let $\Gamma=\bigcup_{\vW\in {\cal W}_P} \vW$.
Let $C$ be the partition of $\cal B$ that corresponds to the weakly-connected components of $\Gamma$.
For any $x\in {\cal B}$ let 
$f^{\myin}(x) = \indeg_{{\cal W}_P}(x)$
and 
$f^{\myout}(x) = \outdeg_{{\cal W}_P}(x)$.
Then there exists some $a\in {\cal A} \cup ({\cal A} \times {\cal A}) \cup \nil$ and $l,r,p \in (V(\vG) \cup \nil)$ such that the dynamic programming table contains some partial solution ${\cal S}$ at location $(P,(C,f^\myin, f^\myout, a,l,r,p))$, with
$\cost_{\vG}({\cal S}) \leq \cost_{\vG}({\cal W}_P)$.
\end{lemma}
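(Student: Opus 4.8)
The plan is to reduce the statement to the two cases already settled, namely the trivial case (Lemma~\ref{lem:partial_solution_for_trivial}) and the succinct elementary case (Lemma~\ref{lem:partial_solution_for_succinct_elementary}), by peeling off one non-trivial good tree at a time and merging partial solutions, mirroring the final part of the proof of Lemma~\ref{lem:vortex_DP_induction}. First I would record the structure forced by friendliness: every non-trivial tree that $P$ covers is good; with the possible exception of the designated tree ${\cal T}$ of Definition~\ref{dfn:basic_family_paths}, all of them share a single homotopy class $\beta$; and $P$ covers \emph{every} good tree of class $\beta$. I would also check that $P$ is a valid index of the dynamic programming table, i.e.\ $P \in {\cal P} = {\cal P}_{324000 g^4}$: between two consecutive good trees of class $\beta$ covered by $P$, in the natural parallel ordering of a fixed homotopy class, $P$ restricts to a family that is again basic, succinct and elementary and hence has at most $18000 g^3$ paths by Lemma~\ref{lem:succinct_and_elementary}, while the number of such blocks is $O(g)$ by the parallel-curve counting of Lemma~\ref{lem:mohar_nonseparating} together with Lemma~\ref{lem:bad_trees}.

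The induction is on the number $k$ of good trees of class $\beta$ covered by $P$ that are distinct from the designated tree ${\cal T}$. If $k=0$, then $P$ avoids every non-trivial tree except ${\cal T}$, hence $P$ is elementary; if moreover ${\cal T}$ is trivial then $P$ is trivial, so the claim is exactly Lemma~\ref{lem:partial_solution_for_trivial} or Lemma~\ref{lem:partial_solution_for_succinct_elementary}. If $k\ge 1$, I would pick an extremal tree ${\cal T}_k$ among the good trees of class $\beta$ covered by $P$, so that after contracting $\psi(F)$ to a point there is a disk-like region separating ${\cal T}_k$ from the remaining good trees covered by $P$. Using this region I would split $P$ into edge-disjoint families $P',P'' \in {\cal P}_\infty$ with $E(P')\cup E(P'')=E(P)$, where $P''$ is basic, succinct and elementary and covers (essentially) only ${\cal T}_k$, while $P'$ is friendly, succinct and basic and covers ${\cal T}_1,\dots,{\cal T}_{k-1}$ together with the same trivial trees and the same portion of ${\cal T}$ as $P$; here I would invoke Lemmas~\ref{lem:unique_succinct_basic} and~\ref{lem:decomposing} to guarantee that the two pieces can be taken succinct and basic with the required edge partition and that condition~(3) of Definition~\ref{dfn:basic_family_paths} is inherited. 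By the induction hypothesis there is a partial solution ${\cal S}'$ for $P'$ of cost at most $\cost_{\vG}({\cal W}_{P'})$, and by Lemma~\ref{lem:partial_solution_for_succinct_elementary} a partial solution ${\cal S}''$ for $P''$ of cost at most $\cost_{\vG}({\cal W}_{P''})$. I would then run the merging procedure of Section~\ref{sec:vortex_planar} on ${\cal S}'$ and ${\cal S}''$: the degree and connectivity checks pass because the vertices in $V(P')\cap V(P'')$ and the ${\cal W}_P$-degrees match by construction, and the output grip $(a,l,r,p)$ is chosen by going through the seven cases of the grip-update phase exactly as in the corresponding part of the proof of Lemma~\ref{lem:vortex_DP_induction}, including the "broken grip" cases in ${\cal A}\times{\cal A}$, which arise when $\beta$ is two-sided and the optimal walk crosses the separating region twice. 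Finally, the cost bound $\cost_{\vG}({\cal S})\le\cost_{\vG}({\cal W}_P)$ follows by the same accounting as in the planar case: the grip paths removed and the shortest paths that replace them are bounded by the corresponding $P$-important and $P'$-, $P''$-important walks — unique since ${\cal F}$ is a forest, by the analog of Proposition~\ref{prop:important_walk} — and the total lengths of these important walks before and after the split coincide.

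The main obstacle is the geometric step of the inductive case: making rigorous that the good trees of a fixed homotopy class $\beta$ are parallel and that an extremal one is cut off by a disk-like region, and that slicing $P$ along this region yields two families that are still basic and succinct and — crucially for $P'$ — still friendly, with the designated-tree data of Definition~\ref{dfn:basic_family_paths} correctly redistributed between the pieces. A secondary difficulty is the grip-update case analysis, in particular ruling out that the solution crosses the separating region three or more times (which would need a grip with more than two components); this is precisely where Lemma~\ref{lem:mohar_nonseparating} and the non-crossing property of the shadow (Lemma~\ref{lem:uncrossing_walks_vortex}(5)) enter. Once the decomposition is in place, verifying (T1)--(T5) for the merged solution and the cost inequality is routine and parallels the planar argument.
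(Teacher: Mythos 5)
Your proposal follows essentially the same route as the paper's proof: enumerate the good trees sharing the homotopy class $\beta$ in their order along $F$, handle each one via the succinct-elementary case (Lemma~\ref{lem:partial_solution_for_succinct_elementary}), handle the remaining portions via the trivial case, and induct by merging one tree at a time (the paper builds the prefix families $P'_i$ up from $i=0$, whereas you peel off the last tree, but the decomposition and the merging are the same). Your added discussion of the table-size bound and the grip/cost bookkeeping fills in details the paper leaves implicit, but introduces no new idea and no gap.
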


\begin{proof}
If $P$ does not cover any non-trivial trees in $\cal F$, then $P$ is trivial and thus we have that $P \in {\cal P}_1$ and by Lemma \ref{lem:partial_solution_for_trivial} there exists a partial solution $\cal S$ for $P$. Otherwise suppose that $P$ covers a non-trivial good tree ${\cal T}$ in $\cal F$. By the definition of friendly, for every non-trivial good tree ${\cal T}'$ in $\cal F$ with $\beta({\cal T}) = \beta({\cal T}')$, we have that $P$ covers ${\cal T}'$. Let $A = \{{\cal T}' \in {\cal F} : \beta({\cal T}) = \beta({\cal T}')\}$. Suppose that $A=\{{\cal T}_0, \ldots, {\cal T}_m\}$ such that they intersect $F$ in this order along a traversal of $F$. For any $i \in \{0,\ldots,m\}$, let $P_i \in {\cal P}_{\infty}$ be elementary and succinct such that $P_i$ covers ${\cal T}_i$ and avoids any other non-trivial trees in $\cal F$. By Lemma \ref{lem:succinct_and_elementary} we have that $P_i \in {\cal P}_{18000g^3}$. Moreover, by Lemma \ref{lem:partial_solution_for_succinct_elementary} we have that there exists a partial solution ${\cal S}_i$ for $P_i$.
For any $i\in\{0,\ldots,m\}$, let $P'_i \in {\cal P}_{\infty}$ be succinct and basic, such that $P'_i$ covers $\bigcup_{j=0}^{i} {\cal T}_j$ and avoids any other non-trivial trees in $\cal F$. By the construction, each $P'_i$ can be obtained by merging $P'_{i-1}$ with $P_i$ and some trivial and empty basic family of paths, and thus we have that $P'_{i} \in {\cal P}_{36000g^3}$. Therefore, by induction and Lemma \ref{lem:partial_solution_for_trivial}, there exists a partial solution ${\cal S}'_i$ for each $P'_i$, and thus there exists a partial for solution ${\cal S}'_m$ for $P'_m$. Note that $P'_m = P$. Therefore, we get a partial solution $\cal S$ for $P$, as desired.
\end{proof}

\begin{lemma}\label{lem:vortex_DP_solution_for_F}
Let $P = \{F\} $.
Let $v^\circ\in V(F)$.
Let ${\cal W}_{P}$ be the $P$-facial restriction of $\cal W$.
Let $\Gamma=\bigcup_{\vW\in {\cal W}_P} \vW$.
Let $C$ be the partition of $B_{v^\circ}$ that corresponds to the weakly-connected components of $\Gamma$.
For any $x\in B_{v^\circ}$ let 
$f^{\myin}(x) = \indeg_{{\cal W}_P}(x)$
and 
$f^{\myout}(x) = \outdeg_{{\cal W}_P}(x)$.
Then there exists some $a\in {\cal A} \cup ({\cal A} \times {\cal A}) \cup \nil$ and $l,r,p \in (V(\vG) \cup \nil)$ such that the dynamic programming table contains some partial solution ${\cal S}$ at location $(P,(C,f^\myin, f^\myout, a,l,r,p))$, with
$\cost_{\vG}({\cal S}) \leq \cost_{\vG}({\cal W}_P)$.
\end{lemma}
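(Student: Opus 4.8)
The plan is to obtain the desired partial solution for $\{F\}$ by stitching together, along $F$, the partial solutions already produced by Lemmas \ref{lem:partial_solution_for_trivial}, \ref{lem:partial_solution_for_succinct_elementary}, and \ref{lem:vortex_DP_friendy_succinct_basic}. The point is that $\{F\}$ covers \emph{all} trees of ${\cal F}$, so it is not itself friendly (it need not avoid bad trees, and it may cover good trees of several homotopy classes); this lemma is therefore the ``concatenation'' step that assembles a global partial solution from local pieces while keeping every intermediate family of paths inside ${\cal P}={\cal P}_{324000 g^4}$.

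First I would organize the trees of ${\cal F}$ into pieces. Every trivial tree, and every maximal arc of $F$ meeting no non-trivial tree, is covered by a single path $Q\in{\cal Q}$ with $\{Q\}$ basic and trivial, so Lemma \ref{lem:partial_solution_for_trivial} supplies a partial solution for it. By Lemma \ref{lem:bad_trees} there are at most $12g$ non-trivial bad trees; for each such ${\cal T}$ I would pick an elementary succinct basic family $P_{\cal T}$ covering ${\cal T}$ and avoiding all other non-trivial trees, so that $|P_{\cal T}|\le 18000 g^3$ by Lemma \ref{lem:succinct_and_elementary}, and $P_{\cal T}$ carries a partial solution by Lemma \ref{lem:partial_solution_for_succinct_elementary}. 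The non-trivial good trees split into homotopy classes according to the value $\beta(\cdot)$, and by Lemma \ref{lem:mohar_nonseparating} there are at most $3g$ such classes; for each class $\beta$ the friendly succinct basic family $P_\beta$ covering exactly the good trees with that value (and avoiding all bad trees and all good trees of other classes) has $|P_\beta|\le 36000 g^3$ and carries a partial solution, by the proof of Lemma \ref{lem:vortex_DP_friendy_succinct_basic}. Summing over all pieces, the total number of paths is at most $12g\cdot 18000 g^3+3g\cdot 36000 g^3=324000 g^4$, which is precisely why ${\cal P}$ is defined with this constant.

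Next I would list all the pieces in the cyclic order in which they meet $F$ and merge their partial solutions one at a time, inserting the single-path solutions for the intervening trivial trees and empty arcs as glue. At every stage the accumulated object is (a succinct representative of) a basic family covering a contiguous union of pieces, and its number of paths is at most the sum of the sizes of the pieces seen so far plus one per gap, hence never more than $324000 g^4$; so every intermediate family is a legitimate index of the dynamic programming table and every merge is a valid step of the DP. When all pieces and all glue have been incorporated, the union of the paths in the accumulated family is all of $F$, that is, the family is $\{F\}$ itself, which lies in ${\cal P}_1\subseteq{\cal P}$; this produces a partial solution ${\cal S}$ at some location $(\{F\},(C,f^{\myin},f^{\myout},a,l,r,p))$. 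Because each merge in the process matches partial solutions whose partition and in/out-degree data are the ones induced by the corresponding facial restrictions, the induced data at the end are exactly the $C$, $f^{\myin}$, $f^{\myout}$ in the statement, and the cost bound $\cost_{\vG}({\cal S})\le\cost_{\vG}({\cal W}_P)$ follows from the fact that each individual merge is cost non-increasing relative to the facial restrictions of the two pieces --- this is exactly the grip book-keeping of the second merging phase together with the uniqueness of important walks, Proposition \ref{prop:important_walk}.

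The hard part will be the quantitative bookkeeping: one must combine the $O(g)$ bound on the number of bad trees and of homotopy classes with the $O(g^3)$ bound on the size of each succinct family, \emph{and} verify that merging succinct families covering disjoint stretches of the face does not push the path count past the sum of the parts, so that every intermediate family genuinely stays inside ${\cal P}_{324000 g^4}$. A secondary subtlety is grip compatibility when merging the pieces --- the solutions ${\cal S}_{\cal T}$ and ${\cal S}_\beta$ may each carry their own (possibly broken or open) grip, so one must invoke the appropriate case of the second merging phase at each step; this is routine given the case analysis already in place, but it is where most of the verification lies.
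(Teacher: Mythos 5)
Your proposal is correct and follows essentially the same route as the paper's proof: decompose $F$ into the at most $12g$ elementary succinct pieces for bad trees (each of size at most $18000g^3$), the at most $3g$ friendly succinct pieces for the homotopy classes of good trees (each of size at most $36000g^3$), and the trivial/empty single-path glue, then merge along $F$ while noting that every intermediate family stays within ${\cal P}_{324000g^4}$. The only cosmetic difference is that the paper first aggregates all bad-tree pieces into one family $P'_1$ and all friendly pieces into $P'_2$ before adding the glue, whereas you interleave in cyclic order; both orderings work for the same reason.
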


\begin{proof}
We first partition $F$ to the sets of basic families of paths as follows:
\begin{description}
\item{(1)}
$F_1=\{P' \in {\cP}_{\infty} : P' \text{ is elementary and succinct}, P' \text{ covers some bad tree } {\cal T}\}$.

\item{(2)}
$F_2= \{P' \in {\cP}_{\infty} : P' \text{ is basic, succinct and friendly}\}$.

\item{(3)}
$F_3=\{P' \in {\cP}_{\infty} : P' \text{ is basic, succinct and trivial}\}$.

\item{(4)}
$F_4=\{P' \in {\cP}_{\infty} : P' \text{ is basic, succinct and empty}\}$.
\end{description}

For every bad tree $\cal T$ in $\cal F$, let $P_1 \in F_1$ such that $P_1$ covers $\cal T$ and avoids any other non-trivial tree in $\cal F$.
Since $P_1$ is elementary and succinct, by Lemma \ref{lem:succinct_and_elementary} we have that $|P_1| \leq 18000g^3$. Moreover by Lemma \ref{lem:partial_solution_for_succinct_elementary} there exists a partial solution ${\cal S}_1$ for $P_1$. Also by Lemma \ref{lem:bad_trees}, there exist at most $12g$ bad trees in $\cal F$, and thus $|F_1| \leq 12g$.
Therefore there exists $P'_1 \in {\cal P}_{216000g^4}$, such that for any ${\cal T}' \in {\cal F}$, $P'_1$ covers (resp.~avoids) ${\cal T}'$ if and only if there exists $P_1 \in F_1$ such that $P_1$ covers (resp.~avoids) ${\cal T}'$. Moreover the dynamic programming table contains a partial solution ${\cal S}'_1$ for $P'_1$.
Now we will show that $|F_2| \leq 3g$.
For every $P_2 \in F_2$, let $\cT_2$ be some non-trivial good tree that $P_2$ covers. Let $x$ be a point in the interior of the disk bounded by $\psi(F)$. Similar to Lemma \ref{lem:succinct_and_elementary} we construct a non-contractible loop $\gamma_{P_2}$ that contains $x$, corresponding to $\cT_2$. Let $L$ be the set of all these interior-disjoint loops. For every $P',P'' \in F_2$, since $P'$ and $P''$ are succinct and friendly, $\gamma_{P'}$ and $\gamma_{P''}$ are not homotopic. Therefore by Lemma \ref{lem:mohar_nonseparating} we have that $|L| \leq 3g$, and thus $|F_2| \leq 3g$. Furthermore for every $P_2 \in F_2$, by Lemma \ref{lem:vortex_DP_friendy_succinct_basic}, we have that $P_2 \in \cP_{36000g^3}$, and there exists a partial solution ${\cal S}_2$ for $P_2$. Therefore there exists $P'_2 \in {\cal P}_{108000g^4}$, such that for any ${\cal T}' \in {\cal F}$, $P'_2$ covers (resp.~avoids) ${\cal T}'$ if and only if there exists $P_2 \in F_2$ such that $P_2$ covers (resp.~avoids) ${\cal T}'$. Moreover the dynamic programming table contains a partial solution ${\cal S}'_2$ for $P'_2$.
Let $P'_{1,2} \in \cP_{324000g^4}$ such that for any ${\cal T}' \in {\cal F}$, $P'_{1,2}$ covers (resp.~avoids) ${\cal T}'$ if and only if there exists $P' \in F_1 \cup F_2$ such that $P'$ covers (resp.~avoids) ${\cal T}'$. By merging ${\cal S}'_1$ and ${\cal S}'_2$ we get a partial solution ${\cal S}'_{1,2}$ for $P'_{1,2}$.

Let 
\[
F_5 = \{\{Q\} \in F_3 \cup F_4 : Q \text{ is maximal subject to } E(Q) \cap E(P'_{1,2}) = \emptyset \}.
\]
For every $P_3 \in F_3$, by Lemma \ref{lem:partial_solution_for_trivial} we have that $P_3 \in {\cal P}_1$ and there exists a partial solution ${\cal S}_3$ for $P_3$.
Finally, for every $P_4 \in F_4$, we have that $P_4 \in {\cal P}_1$ and also the dynamic programming table contains a partial solution ${\cal S}_4$ for $P_4$. Therefore for every $\{Q\} \in F_5$, the dynamic programming table contains a partial solution for $\{Q\}$. By merging these partial solutions with $P'_{1,2}$ in an arbitrary order, we get a partial solution ${\cal S}$, as desired. We remark that after merging the current partial solution with the partial solution for some $\{Q\}\in F_5$, we obtain a new partial solution for some $P\in P_{\infty}$ with fewer paths. Therefore for all intermediate $P\in P_{\infty}$ we have $P\in P_{324000 g^4}$, concluding the proof.
\end{proof}

\begin{theorem}\label{thm:vortex_genus}
Let $\vG$ be an $n$-vertex $(0, g, 1, p)$-nearly embeddable graph and let $\vH$ be the vortex of $\vG$.
Then there exists an algorithm which computes a walk $\vW$ visiting all vertices in $V(\vH)$ of total length $\OPT_{\vG}(V(\vH))$ in time $n^{O(pg^4)}$.
\end{theorem}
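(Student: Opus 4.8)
The plan is to mimic the proof of Theorem~\ref{thm:tour_visiting_a_single_vortex} (the planar case $g=0$), with the single-path dynamic program replaced by the family-of-paths dynamic program of Section~\ref{sec:vortex_genus}. First, by Lemma~\ref{lem:cross_normalization} we may assume w.l.o.g.\ that $\vG$ is facially normalized and cross normalized: any closed walk visiting the vortex of the normalized graph can be pulled back, in polynomial time and without changing the cost, to a closed walk of $\vG$ visiting $V(\vH)$, so it suffices to produce an optimal vortex-traversing walk in the normalized instance. Fix the collection $\mathcal{W}=\{\vW_1,\ldots,\vW_\ell\}$ of closed walks given by Lemma~\ref{lem:uncrossing_walks_vortex} and the forest $\mathcal{F}$ from Lemma~\ref{lem:forest}; thus the shadow of $\mathcal{W}$ is non-self-crossing and pairwise non-crossing, $\bigcup_i V(\vW_i)\supseteq V(\vH)$ is strongly connected, and $\sum_i\cost_{\vG}(\vW_i)\le\OPT_{\vG}(V(\vH))$.

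I would then run the dynamic program at the ``root'' family $P=\{F\}$. Writing $v^\circ$ for the basepoint of the cycle $F$, the family $\{F\}$ lies in $\mathcal{P}={\cal P}_{324000g^4}$ and is basic w.r.t.\ $\mathcal{W}$, and its $\{F\}$-facial restriction $\mathcal{W}_{\{F\}}$ recovers all of $\mathcal{W}$, so $\cost_{\vG}(\mathcal{W}_{\{F\}})\le\sum_i\cost_{\vG}(\vW_i)\le\OPT_{\vG}(V(\vH))$. By Lemma~\ref{lem:vortex_DP_solution_for_F} the dynamic programming table therefore contains, at some location $(\{F\},\phi)$ with $\phi=(C,f^{\myin},f^{\myout},a,l,r,p)\in{\cal I}_{\{F\}}$, a partial solution $\mathcal{S}$ with $\cost_{\vG}(\mathcal{S})\le\cost_{\vG}(\mathcal{W}_{\{F\}})\le\OPT_{\vG}(V(\vH))$; and, exactly as in the planar proof, Lemma~\ref{lem:forest} lets us take $C$ to be the trivial partition of $B_{v^\circ}$. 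It remains to post-process $\mathcal{S}$ into a single closed walk: condition~(T4) gives $\indeg_{\mathcal{S}}(x)=\outdeg_{\mathcal{S}}(x)$ for all $x\in B_{v^\circ}$, so repeatedly splicing a walk ending at a vertex to a walk starting at the same vertex converts $\mathcal{S}$ into a collection $\mathcal{S}'$ of closed walks of the same cost still visiting $V(\vH)$; triviality of $C$ together with condition~(T5) forces all of $V(\vH)$ into one weakly-connected component of $\bigcup_{W\in\mathcal{S}}W$, hence into one strongly-connected component of $\bigcup_{W\in\mathcal{S}'}W$; and since the walks of $\mathcal{S}'$ are closed, repeatedly shortcutting pairs of intersecting walks yields a single closed walk $\vW$ visiting $V(\vH)$ with $\cost_{\vG}(\vW)=\cost_{\vG}(\mathcal{S})\le\OPT_{\vG}(V(\vH))$. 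Pulling $\vW$ back through Lemma~\ref{lem:cross_normalization} finishes correctness.

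For the running time I would bound the table size and the per-entry cost. The index set $\mathcal{P}={\cal P}_{324000g^4}$ consists of families of at most $O(g^4)$ pairwise vertex-disjoint subpaths of $F$; as $F$ has $O(n^2)$ subpaths, $|\mathcal{P}|=n^{O(g^4)}$. For each $P\in\mathcal{P}$ the associated set of bag vertices $\mathcal{B}$ has size $O(pg^4)$, so the number of partitions of $\mathcal{B}$ is $(pg^4)^{O(pg^4)}$, the number of in-/out-degree functions $\{0,\ldots,n\}^{\mathcal{B}}$ is $n^{O(pg^4)}$, and the grip data $(a,l,r,p)$ ranges over $O(n^7)$ possibilities, whence $|{\cal I}_P|=n^{O(pg^4)}$. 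The initialization step (over the $O(n^2)$ single-edge base families) costs $n^{O(p)}$, and filling each table entry amounts to iterating over the $n^{O(pg^4)}$ quadruples $(P_1,\phi_1,P_2,\phi_2)$ and invoking the polynomial-time merging procedure; hence the whole table is filled in time polynomial in $|\mathcal{P}|^2\cdot\max_P|{\cal I}_P|^2=n^{O(pg^4)}$, which is the claimed bound.

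The genuinely hard content lies not in this assembly but in the lemmas it invokes, above all Lemma~\ref{lem:vortex_DP_solution_for_F}, which depends on the topological counting estimates that an elementary succinct family has $O(g^3)$ paths (Lemma~\ref{lem:succinct_and_elementary}) and that $\mathcal{F}$ contains $O(g)$ bad trees (Lemma~\ref{lem:bad_trees}); together these keep the root family inside $\mathcal{P}={\cal P}_{324000g^4}$. At the level of the theorem the only points needing care are the verification that $\cost_{\vG}(\mathcal{W}_{\{F\}})\le\OPT_{\vG}(V(\vH))$ (so that Lemma~\ref{lem:uncrossing_walks_vortex}(4) applies) and that each post-processing step --- splicing at equal-degree vertices, and shortcutting intersecting closed walks --- leaves the cost unchanged; both are direct adaptations of the corresponding arguments in the proof of Theorem~\ref{thm:tour_visiting_a_single_vortex}.
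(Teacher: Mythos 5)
Your proposal is correct and follows essentially the same route as the paper: the paper's proof of Theorem~\ref{thm:vortex_genus} is exactly ``set $P=\{F\}$, invoke Lemma~\ref{lem:vortex_DP_solution_for_F}, and repeat the post-processing and running-time argument of Theorem~\ref{thm:tour_visiting_a_single_vortex}.'' Your write-up merely spells out those two steps (the splicing/shortcutting of the partial solution into one closed walk, and the $n^{O(pg^4)}$ table-size bound) in more detail than the paper does.
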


\begin{proof}
Let $P=\{F\}$. By Lemma \ref{lem:vortex_DP_solution_for_F}, there exists a partial solution for $P$. Now the exact same argument as in Theorem \ref{thm:tour_visiting_a_single_vortex} applies here.
\end{proof}

\section{The algorithm for traversing a vortex in a nearly-embeddable graph}
\label{sec:vortex_nearly}

In this Section we obtain an exact algorithm for computing a closed walk that traverses all the vertices in the single vortex of a nearly-embeddable graph.

\begin{proof}[Proof of Theorem \ref{thm:vortex_nearly}]
The algorithm is as follows.
Let $A\subset V(\vG)$ be the set of apices and let $\vF$ be the face on which the vortex is attached.
For each $A'\subseteq A$ we construct a $(0, g, 1, p+a)$-nearly embeddable graph $\vG_{A'}$ as follows:
Initially we set $\vG_{A'}=\vG \setminus A$.
We then add $A'$ to $V(\vG_{A'})$ and for every $u\in A'$ and every $v\in V(\vF)$ we add edges $(u,v)$ and $(v,u)$ of length $d_{\vG}(u,v)$ and $d_{\vG}(v,u)$ respectively; let $E_{A'}$ be the set of all these new edges.
We consider $A'$ as being part of the vortex and 
we modify the path decomposition of the vortex by adding $A'$ to each one of its bubbles; it is immediate that the result is a path decomposition of width at most $a+p$.
We then run the algorithm of Theorem \ref{thm:vortex_genus} on $\vG_{A'}$ and find an optimal closed walk visiting all vertices in $V(\vH) \cup A'$.
Let $\vW_{A'}$ be the resulting walk in $\vG_{A'}$.
We obtain a walk $\vW_{A'}'$ in $\vG$ visiting all vertices in $V(\vH)$ with $\cost_{\vG}(\vW'_{A'}) = \cost_{\vG_{A'}}(\vW_{A'})$ by replacing every edge in $(u,v)\in E_{A'}$ by a shortest path from $u$ to $v$ in $\vG$.
After considering all subsets $A'\subseteq A$, we output the walk $\vW'_{A'}$ of minimum cost that we find.
This completes the description of the algorithm.

It is immediate that the running time is $O(2^a n^{(a+p)g^4}) = O(n^{(a+p)g^4})$.
It remains to show that the algorithm computes an optimum walk.
Let $\vR$ be the walk computed by the algorithm.
Let $\vW_{\OPT}$ be a walk of minimum cost in $\vG$ visiting all vertices in $V(\vH)$.
Let $A^*$ be the set of apices visited by $\vW_{\OPT}$, that is $A^* = A\cap V(\vW_{\OPT})$.
Let $\vG'$ be the genus-$g$ piece of $\vG$.
We can construct a walk $\vZ$ in $\vG_{A^*}$ that visits all the vertices in $V(\vH)\cup A^*$ of cost $\cost_{\vG}(\vW_{\OPT})$ as follows: we replace every sub-walk of $\vW_{\OPT}$ that is contained in $\vG'$, has endpoints $u,v\in V(\vF)$, and visits some apex $w\in A^*$, by the path $u,w,v$ in $\vG_{A^*}$.
It is immediate that the resulting walk does not traverse any apices and therefore it is contained in $\vG_{A^*}$.
Thus we have
\begin{align*}
\cost_{\vG}(\vR) &\leq \cost_{\vG}(\vW'_{A^*}) = \cost_{\vG_{A^*}}(\vW_{A^*})\leq \cost_{\vG}(\vW_{\OPT}) \leq \OPT_{\vG},
\end{align*}
which concludes the proof.
\end{proof}

\section{The lower bound for graphs of bounded pathwidth}\label{sec:lower}

\makeatletter
\newcommand{\xRightarrow}[2][]{\ext@arrow 0359\Rightarrowfill@{#1}{#2}}
\makeatother

In this section we present the proof of Theorem \ref{thm:lower}.
This is done via the following chain of reductions:

\newcommand{\mybox}[1]
{\fbox{\parbox[][1.8cm][c]{3cm}{\begin{center}\small #1\end{center}}}}

\vspace{3mm}
\noindent
\resizebox{\linewidth}{!}{
{\fbox{\parbox[][1cm][c]{1.3cm}{\begin{center}\small\probclique\end{center}}}}
$\xRightarrow{\textup{folklore}}$
{\fbox{\parbox[][1cm][c]{2.5cm}{\begin{center}\small\probmcb\end{center}}}}
$\xRightarrow{\textup{Lemma~\ref{lem:balancedhard}}}$
{\fbox{\parbox[][1cm][c]{2cm}{\begin{center}\small\probbalanced\end{center}}}}
$\xRightarrow{\textup{Lemma~\ref{lem:nonrep-tw}}}$
{\fbox{\parbox[][1cm][c]{2.5cm}{\begin{center}\small\probnonrep\end{center}}}}
$\xRightarrow{\textup{Lemma~\ref{lem:walktoATSP}}}$
{\fbox{\parbox[][1cm][c]{1.3cm}{\begin{center}\small\probATSP\end{center}}}}
}
\vspace{3mm}

\subsection{\probbalanced}

Let $D$ be a directed graph and let $\chi:E(D)\to \mathbb{Z}^+$ be an
assignment of integers to the edges. We can extend $\chi$ to a set
$F\subseteq E(D)$ of edges in the obvious way by defining
$\chi(F)=\sum_{e\in F}\chi(e)$. Let $\delta_D^+(v)$ and $\delta_D^-(v)$ be
the set of outgoing and incoming edges of $v$, respectively.  We say that
$\chi$ is {\em balanced} at $v\in V(D)$ if
$\sum_{e\in \delta_D^+(v)}\chi(e)= \sum_{e\in \delta_D^-(v)}\chi(e)$
holds and we say that $\chi$ is {\em balanced} if it is balanced at
every $v\in V(D)$.

\begin{center}
\fbox{\parbox{0.9\linewidth}{
\probbalanced: Given a directed graph $D$ and a set $X_e$ of positive integers for every edge $e\in E(D)$, the task is to select a $\chi(e)\in X_e$ for every edge $e\in E(D)$ such that $\chi$ is balanced.}}
\end{center}

An easy counting argument shows that it is sufficient to require that $\chi$ balanced at all but one vertex:
\begin{proposition}\label{prop:balanced0}
If $\chi$ is balanced at every vertex of $V(D)\setminus \{v\}$, then it is also balanced at $v$.
\end{proposition}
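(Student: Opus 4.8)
The statement to prove is Proposition~\ref{prop:balanced0}: if an edge-weighting $\chi$ of a directed graph $D$ is balanced at every vertex of $V(D)\setminus\{v\}$, then it is balanced at $v$ as well.

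The plan is to use a global counting argument summing the balance defect over all vertices. For each vertex $w\in V(D)$, define the defect $\mathrm{def}(w) = \sum_{e\in\delta_D^+(w)}\chi(e) - \sum_{e\in\delta_D^-(w)}\chi(e)$. The key observation is that $\sum_{w\in V(D)}\mathrm{def}(w) = 0$: each edge $e=(a,b)$ contributes $\chi(e)$ to the outgoing sum at $a$ (hence $+\chi(e)$ to $\mathrm{def}(a)$) and $\chi(e)$ to the incoming sum at $b$ (hence $-\chi(e)$ to $\mathrm{def}(b)$), so its total contribution across all vertices cancels. One should handle the trivial corner case of loops $e=(a,a)$ separately — such an edge contributes $+\chi(e)$ and $-\chi(e)$ both at $a$, so it cancels there too — but in any case the total is zero.

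Given this, the proof is immediate: by hypothesis $\mathrm{def}(w)=0$ for every $w\neq v$, so $0 = \sum_{w\in V(D)}\mathrm{def}(w) = \mathrm{def}(v)$, which means $\chi$ is balanced at $v$.

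There is essentially no obstacle here; the only thing requiring a modicum of care is the bookkeeping that each edge's contribution to the grand total cancels, which is a one-line double-counting argument. I would write it as follows.

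\begin{proof}
For a vertex $w\in V(D)$ write $\mathrm{def}(w) = \sum_{e\in \delta_D^+(w)}\chi(e) - \sum_{e\in \delta_D^-(w)}\chi(e)$, so that $\chi$ is balanced at $w$ exactly when $\mathrm{def}(w)=0$. We claim that $\sum_{w\in V(D)}\mathrm{def}(w)=0$. Indeed, consider the contribution of a fixed edge $e=(a,b)\in E(D)$ to this sum. The term $\chi(e)$ appears with a plus sign in $\mathrm{def}(a)$, since $e\in\delta_D^+(a)$, and with a minus sign in $\mathrm{def}(b)$, since $e\in\delta_D^-(b)$ (if $a=b$, both contributions occur at the same vertex and still cancel). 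Hence each edge contributes $0$ in total, and summing over all edges gives $\sum_{w\in V(D)}\mathrm{def}(w)=0$.

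By assumption $\mathrm{def}(w)=0$ for every $w\in V(D)\setminus\{v\}$, so
\[
\mathrm{def}(v) = \sum_{w\in V(D)}\mathrm{def}(w) - \sum_{w\in V(D)\setminus\{v\}}\mathrm{def}(w) = 0 - 0 = 0,
\]
which means that $\chi$ is balanced at $v$ as well.
\end{proof}
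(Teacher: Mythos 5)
Your proof is correct and is precisely the ``easy counting argument'' the paper alludes to (the paper states the proposition without an explicit proof): summing the balance defect over all vertices, each edge contributes $+\chi(e)$ at its tail and $-\chi(e)$ at its head, so the total vanishes and the defect at $v$ must be zero. Nothing is missing; the loop remark is a harmless extra precaution.
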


We give a lower bound for \probbalanced\ by a parameterized reduction
from \probmcb.

\begin{center}
\fbox{\parbox{0.9\linewidth}{
\probmcb: Given a graph $G$ with a partition $(V_1,\dots,V_{2k})$ of
vertices, find a vertex $v_i\in V_i$ for every $1\le i \le 2k$ such that $v_{i_1}$ and $v_{i_2}$ are adjacent for every $1\le i_1 \le k$ and $k+1\le i_2 \le 2k$. }}
\end{center}

 There is a very simple folklore reduction from
\probclique\ to \probmcb\ (see, e.g., \cite{dell2012kernelization}) where the output parameter equals the input one,
hence the lower bound of Chen et
al.~\cite{chen2006strong} for \probclique\ can be transferred to \probmcb.

\begin{theorem}\label{th:mcb}
Assuming ETH, \probmcb\ cannot be solved in time $f(k)n^{o(k)}$ for any computable function $f$.
\end{theorem}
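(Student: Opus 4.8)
The plan is to obtain Theorem~\ref{th:mcb} directly from the ETH lower bound of Chen et al.~\cite{chen2006strong} for \probclique, which states that under ETH no algorithm decides in time $f(k)n^{o(k)}$ whether an $n$-vertex graph has a clique on $k$ vertices. It suffices to give a polynomial-time reduction from \probclique\ to \probmcb\ whose output parameter is $\Theta(k)$ and whose output instance has $n^{O(1)}$ vertices: since $o(\Theta(k))=o(k)$ and $\bigl(n^{O(1)}\bigr)^{o(k)}=n^{o(k)}$, an $f(k)n^{o(k)}$ algorithm for \probmcb\ would then yield an $f'(k)n^{o(k)}$ algorithm for \probclique, contradicting ETH.

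First I would describe the (folklore) reduction. Given a \probclique\ instance $(G,k)$ with $G$ simple (in particular loopless), build a graph $G'$ with a partition $V_1,\dots,V_{2k}$ as follows: each class $V_i$ is a fresh copy of $V(G)$, and $v^{(i)}$ denotes the copy of $v\in V(G)$ in $V_i$. For $1\le i\le k$ add the \emph{equality} edges $\{v^{(i)},v^{(k+i)}\}$ for all $v\in V(G)$ (a perfect matching between $V_i$ and $V_{k+i}$). For $1\le i\le k$ and $k+1\le j\le 2k$ with $j\ne k+i$, add the edge $\{u^{(i)},w^{(j)}\}$ whenever $\{u,w\}\in E(G)$. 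Add no other edges; in particular there are no edges inside any class. This is computable in polynomial time, $G'$ has $2kn$ vertices and $O((kn)^2)$ edges, and the output parameter is $2k$.

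Next I would verify correctness. If $v_1,\dots,v_k$ are distinct pairwise-adjacent vertices of $G$, then picking $v_i^{(i)}\in V_i$ and $v_i^{(k+i)}\in V_{k+i}$ gives a multicolored biclique: the pairs $(V_i,V_{k+i})$ are handled by the equality edges, and for $i\ne j$ the pair $(v_i^{(i)},v_j^{(k+j)})$ is an edge since $\{v_i,v_j\}\in E(G)$. Conversely, given a multicolored biclique selecting $v_i\in V_i$ with $v_{i_1}$ adjacent to $v_{i_2}$ for all $i_1\le k<i_2$, the adjacency between $v_i\in V_i$ and $v_{k+i}\in V_{k+i}$ forces, through the equality gadget, that both are copies of one vertex $x_i\in V(G)$; and for $i\ne j$ the adjacency between $v_i\in V_i$ and $v_{k+j}\in V_{k+j}$ (a copy of $x_j$) forces $\{x_i,x_j\}\in E(G)$, which also yields $x_i\ne x_j$ because $G$ is loopless. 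Hence $\{x_1,\dots,x_k\}$ is a $k$-clique in $G$.

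Finally, composing the reduction with Chen et al.'s bound finishes the proof: an $f(k)n^{o(k)}$-time algorithm for \probmcb\ applied to $G'$ (parameter $2k$, at most $2n^2$ vertices) runs in time $f'(k)n^{o(k)}$ on $(G,k)$, contradicting ETH. I do not expect a genuine obstacle here; the reduction is routine, and the only points deserving a line of care are (i) checking that the equality gadget cannot be circumvented, i.e.\ that the biclique condition really pins $v_i\in V_i$ and $v_{k+i}\in V_{k+i}$ to the same underlying vertex, and (ii) the bookkeeping that the polynomial blow-up in the number of vertices keeps the running time of the form $f'(k)n^{o(k)}$.
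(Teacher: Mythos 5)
Your proposal is correct and matches the paper's approach: the paper does not spell out a proof but simply invokes the folklore \probclique-to-\probmcb\ reduction (citing \cite{dell2012kernelization}) together with the Chen et al.\ lower bound, and your writeup is exactly that reduction, carried out correctly (including the two points of care you flag). The only cosmetic discrepancy is that with $2k$ classes the \probmcb\ parameter as defined in the paper is $k$, not $2k$, but this does not affect the $f(k)n^{o(k)}$ conclusion.
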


The reduction from \probmcb\ to \probbalanced\ uses the technique of
$k$-non-averaging sets to encode vertices
\cite{fomin2014almost,jansen2013bin}. We
say that a set $X$ of positive integers is {\em $k$-non-averaging} if for every
choice of $k$ (not necessarily distinct) integers
$x_1,\dots,x_k\in X$, their average is in $X$ only if
$x_1=x_2=\dots=x_k$. For example, $X=\{(k+1)^i\mid 1\le i \le n\}$ is
certainly a $k$-non-averaging set of size $n$. However, somewhat
surprisingly, it is possible to construct much denser $k$-non-averaging
sets where each integer is polynomially bounded in $k$ and the size $n$ of the set.

\begin{lemma}[Jansen \etal~\cite{jansen2013bin}]\label{lem:nonaveraging}
  For every $k$ and $n$ there exists a $k$-non-averaging set $X$ of
  $n$ positive integers such that the largest element of $X$ has value at most
  $32p^2n^2$. Furthermore, $X$ can be constructed in time $O(p^2n^3)$
  time.
\end{lemma}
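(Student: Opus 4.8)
The plan is to give an explicit construction, following the classical Behrend-type idea of appending a quadratic ``checksum'' and exploiting strict convexity of $t\mapsto t^2$, adapted here to $k$-fold averages; this is in essence the construction of Jansen \etal~\cite{jansen2013bin}. First I would set $\beta = kn^2+1$ and define
\[
x_j \;=\; j^{2} + \beta\, j \qquad (j=1,\dots,n), \qquad X = \{x_1,\dots,x_n\}.
\]
The map $j\mapsto x_j$ is strictly increasing, so the $x_j$ are $n$ distinct positive integers with $\max X = n^2 + \beta n = O(kn^3)$, and $X$ is trivially constructible in time polynomial in $n$ and $k$.

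To verify the $k$-non-averaging property, suppose $x_{j_1}+\dots+x_{j_k} = k\,x_m$ for some $j_1,\dots,j_k,m\in\{1,\dots,n\}$. Expanding and rearranging gives $\beta\bigl(\sum_\ell j_\ell - km\bigr) = km^2 - \sum_\ell j_\ell^2$. The right-hand side lies strictly between $-\beta$ and $\beta$ (it is at most $kn^2-k$ and at least $k-kn^2$, both of absolute value $<\beta$), so the integer $\sum_\ell j_\ell - km$ must vanish; that is, $m = \tfrac1k\sum_\ell j_\ell$. Substituting back yields $\sum_\ell j_\ell^2 = km^2 = \tfrac1k\bigl(\sum_\ell j_\ell\bigr)^2$, i.e.\ equality in the Cauchy--Schwarz bound $\bigl(\sum_\ell j_\ell\bigr)^2 \le k\sum_\ell j_\ell^2$. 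Equality forces $j_1=\dots=j_k$ (hence all equal to $m$), which is exactly what $k$-non-averaging demands.

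To obtain the sharper polynomial bound asserted in the lemma ($O(k^2n^2)$, rather than the $O(kn^3)$ above), I would replace the single quadratic checksum with a low-dimensional grid encoding: write $j-1$ in base $b\approx\sqrt{n}$ as a pair $(a,c)$, and encode $(a,c)$ together with the checksum $a^2+c^2$ as the digits of a single integer in a base $M=\Theta(kn)$ chosen so that summing $k$ such integers causes no carries. The non-averaging argument is then identical in spirit: no carries means the digits of the sum are the coordinatewise sums, and applying convexity of $t\mapsto t^2$ to each coordinate forces all $k$ encodings to coincide; the choice of $M$ and $b$ controls the element size and the $O(k^2n^3)$-time construction. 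I expect the only genuinely delicate point to be this last bookkeeping — the convexity core is robust, but pinning down the exact base, dimension, and carry bounds needed to hit the stated constants is the technical part, and for that I would carry out (or simply cite) the analysis of \cite{jansen2013bin}.
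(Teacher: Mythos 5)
The paper does not prove this lemma at all --- it is quoted verbatim from Jansen et al.\ \cite{jansen2013bin} (note in passing that the ``$p$'' in the stated bounds is a transcription artifact and should read $k$), so there is no in-paper proof to compare against. Judged on its own, your first construction is correct and complete: with $\beta=kn^2+1$ and $x_j=j^2+\beta j$, the identity $\beta\bigl(\sum_\ell j_\ell-km\bigr)=km^2-\sum_\ell j_\ell^2$ together with the bound $|km^2-\sum_\ell j_\ell^2|\le kn^2-k<\beta$ forces $\sum_\ell j_\ell=km$ and $\sum_\ell j_\ell^2=km^2$, and equality in Cauchy--Schwarz then forces $j_1=\dots=j_k=m$. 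This yields a $k$-non-averaging set with maximum element $O(kn^3)$, constructible in polynomial time.

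What you have not actually proven is the literal bound $32k^2n^2$; your second paragraph only sketches the two-dimensional Behrend-style refinement and defers the bookkeeping to the citation. That is a genuine (if small) gap relative to the statement as written. However, it is immaterial to this paper: the only place the lemma is invoked is in the proof of Lemma~\ref{lem:balancedhard}, where the maximum element $M$ enters solely through $B=2kM$ and the requirement that all integers in the \probbalanced{} instance be polynomially bounded in $k$ and $n$. Your $M=O(kn^3)$ serves that purpose just as well as $O(k^2n^2)$, so your simpler, fully self-contained construction could be substituted for the citation without affecting any downstream argument. If you want to claim the lemma exactly as stated, you must either carry out the carry-free base-$\Theta(kn)$ encoding in detail or simply cite \cite{jansen2013bin}, as the paper does.
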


\begin{lemma}\label{lem:balancedhard}
Assuming ETH, \probbalanced\ has no $f(k)n^{o(k)}$ time algorithm for any computable function $f$, where $k$ is the number of vertices of $D$.
\end{lemma}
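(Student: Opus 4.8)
The plan is to prove Lemma~\ref{lem:balancedhard} by a polynomial-time parameterized reduction from \probmcb, which admits no $f(k)n^{o(k)}$-time algorithm under ETH by Theorem~\ref{th:mcb}. Given a \probmcb\ instance with color classes $V_1,\dots,V_k$ (the ``left'' side) and $V_{k+1},\dots,V_{2k}$ (the ``right'' side), let $n=\max_i|V_i|$ and, using Lemma~\ref{lem:nonaveraging}, fix a $k$-non-averaging set $X=\{x_1<\dots<x_n\}$ of positive integers bounded by a polynomial in $n$ and $k$; assign to the $a$-th vertex of a class the ``name'' $x_a$, writing $\nu(v)$ for the name of $v$, and pick a base $N>k\max X$. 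The digraph $D$ will have only the $2k+1$ vertices $s,\ell_1,\dots,\ell_k,r_1,\dots,r_k$, so that its vertex count is $O(k)$ --- keeping the parameter linear is the crucial point, and it is why the ``checking'' of an edge of $G$ must be squeezed onto a single arc rather than given its own vertex. The arcs of $D$ are $(s,\ell_i)$ for each $i$, $(r_j,s)$ for each $j$, and $(\ell_i,r_j)$ for each pair $(i,j)\in[k]^2$, with allowed sets
\[
X_{(\ell_i,r_j)}=\{\nu(v)+N\nu(w): v\in V_i,\ w\in V_{k+j},\ vw\in E(G)\},
\]
\[
X_{(s,\ell_i)}=\{k\nu(v)+N\sigma: v\in V_i,\ \sigma\in\{k,\dots,k\max X\}\},\qquad X_{(r_j,s)}=\{\tau+Nk\nu(w): w\in V_{k+j},\ \tau\in\{k,\dots,k\max X\}\};
\]
all entries are positive and polynomially bounded, and the instance is built in polynomial time.

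Then I would verify the two directions. For soundness, suppose $\chi$ is balanced. Since $N>k\max X$, every sum of at most $k$ names stays below $N$, so the base-$N$ ``digits'' of each side of a balance equation can be read off without carries. Balance at $\ell_i$ yields $k\nu(v_i)=\sum_j(\text{low digit of }\chi(\ell_i,r_j))$, where $v_i\in V_i$ is the vertex named in $\chi(s,\ell_i)$; since every low digit lies in $X$, this exhibits $\nu(v_i)$ as the average of $k$ elements of $X$, so by the $k$-non-averaging property all $k$ arcs leaving $\ell_i$ carry $\nu(v_i)$ in their low digit. Symmetrically, balance at $r_j$ forces all arcs entering $r_j$ to share the same high digit $\nu(w_j)$ for a single $w_j\in V_{k+j}$. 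Hence $\chi(\ell_i,r_j)=\nu(v_i)+N\nu(w_j)$, and membership in $X_{(\ell_i,r_j)}$ gives $v_iw_j\in E(G)$ for all $i,j$, so $\{v_1,\dots,v_k,w_1,\dots,w_k\}$ is a multicolored biclique; balance at $s$ is automatic by Proposition~\ref{prop:balanced0}. For completeness, given a multicolored biclique $\{v_i\}\cup\{w_j\}$, set $\chi(\ell_i,r_j)=\nu(v_i)+N\nu(w_j)$, $\chi(s,\ell_i)=k\nu(v_i)+N\sum_{j}\nu(w_j)$, and $\chi(r_j,s)=\sum_i\nu(v_i)+Nk\nu(w_j)$; each $X_e$ is respected (the biclique condition guarantees $v_iw_j\in E(G)$, and $\sum_j\nu(w_j),\sum_i\nu(v_i)\in\{k,\dots,k\max X\}$), and a direct computation shows $\chi$ is balanced at every $\ell_i$ and $r_j$, hence everywhere.

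Finally I would assemble the contradiction: the reduction sends a \probmcb\ instance on $n$ vertices with $k$ color classes to a \probbalanced\ instance of size polynomial in $n$ whose digraph has $2k+1$ vertices, so an $f(k)n^{o(k)}$-time algorithm for \probbalanced\ (in its own parameter $k=|V(D)|$) would, pre-composed with this reduction, solve \probmcb\ in time $f'(k)n^{o(k)}$, contradicting ETH via Theorem~\ref{th:mcb}. I expect the only delicate part to be the arithmetic bookkeeping around the base-$N$ packing: one must fix $N$ so that no carries occur between the ``left name'', ``right name'' and ``slack'' blocks of any side of a balance equation, and then argue that the non-averaging property can be applied digit-by-digit. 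Letting $\sigma$ and $\tau$ range over the whole interval $\{k,\dots,k\max X\}$ --- rather than trying to commit $(s,\ell_i)$ and $(r_j,s)$ to the exact value $\sum_j\nu(w_j)$ or $\sum_i\nu(v_i)$ --- is precisely what lets the source/sink gadgets coexist with the pair-arcs without introducing any extra vertices; both directions of correctness are then routine.
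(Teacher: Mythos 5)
Your proposal is correct and is essentially the same reduction as the paper's: a $(2k+1)$-vertex digraph with a hub vertex ($s$ versus the paper's $w$), pair-arcs carrying $\nu(v)+N\nu(w)$ for edges of $G$, hub-arcs carrying $k\nu(v)$ plus a slack block in the high digit (and symmetrically), and correctness argued by reading off base-$N$ digits without carries and invoking the $k$-non-averaging property at each $\ell_i$ and $r_j$. The only differences are cosmetic (vertex names, the exact range of the slack interval), so there is nothing further to reconcile.
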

\begin{proof}
  The proof is by reduction from \probmcb. Let $G$ be an undirected
  graph with a partition $V_1$, $\dots$, $V_{2k}$ of the vertices in
  $V(G)$. By padding the instance with isolated vertices, we may assume without loss of generality that each $V_i$ has
  the same number $n$ of vertices; let $v_{i,j}$ be the $j$-th vertex
  of $V_i$.  We construct an instance of \probbalanced\ on a directed graph $D$ having $2k+1$ vertices $w$, $w_1$,
  $w_2$, $\dots$, $w_{2k}$. Let us use Lemma~\ref{lem:nonaveraging} to construct a
  $k$-non-averaging set $X=\{x_1,\dots,x_n\}$ of $n$ positive integers such
  that the maximum value in $X$ is $M=O(k^2n^2)$. Let $B=2kM$.
\begin{figure}
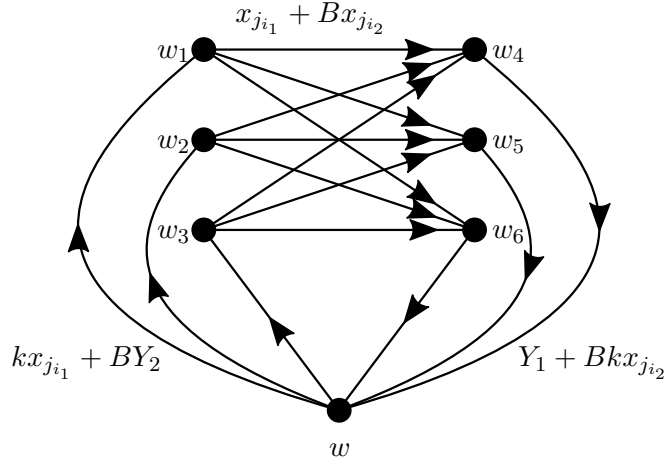

\begin{center}
{\svg{0.6\linewidth}{biclique}}
\caption{The instance of \probbalanced\ constructed in the proof of Lemma~\ref{lem:balancedhard}. The values on the edges indicate the value of $\chi$ corresponding to a solution $(v_{i,j_i})_{i=1,\dots,2k}$ of the \probmcb\ instance (we have $Y_1=\sum_{1\le i\le k}x_{j_i}$ and $Y_2=\sum_{k+1\le i\le 2k}x_{j_i}$).}\label{fig:biclique}
 \end{center}
\end{figure}
  The edges of $D$ and the sets of integers on them are constructed
  the following way (see Figure~\ref{fig:biclique}). For every $1\le i_1 \le k$ and $k+1\le i_2\le 2k$, we
  introduce the edge $(w_{i_1}, w_{i_2})$ into $D$ and we
  define the set $X_{(w_{i_1}, w_{i_2})}$ the following way: for every edge
  $v_{i_1,j_1}v_{i_2,j_2}$ between $V_{i_1}$ and $V_{i_2}$, let us
  introduce the positive integer $x_{j_1}+Bx_{j_2}$ into
  $X_{(w_{i_1}, w_{i_2})}$. Next, for every $1\le i \le k$, we introduce the
  edge $(w,w_{i})$ and let
  $X_{(w,w_i)}=\{ kx+By \mid x\in X, 1\le y \le kM\}$. Finally, for
  every $k+1 \le i \le 2k$, we introduce the edge $(w_{i},w)$ and let
  $X_{(w_i,w)}=\{ y+Bkx \mid x\in X, 1\le y \le kM\}$. This completes
  the description of the reduction.

\textbf{Biclique $\Rightarrow$ balanced assignment $\chi$.}
  Suppose that $v_{i,j_i}\in V_i$ for $1\le i \le 2k$ form a solution
  of \probmcb. We define $\chi:E(D)\to \mathbb{Z}^+$ the following way. 
For every $1\le i_1 \le k$ and $k+1\le i_2 \le 2k$, we set (see the values on the edges in Figure~\ref{fig:biclique})
\begin{itemize}
\item $\chi((w_{i_1},w_{i_2}))=x_{j_{i_1}}+Bx_{j_{i_2}}$,
\item $\chi((w,w_{i_1}))=kx_{j_{i_1}}+BY_2$, where $Y_2=\sum_{k+1\le i\le 2k}x_{j_i}$, and
\item $\chi((w_{i_1},w))=Y_1+Bkx_{j_{i_2}}$, where $Y_1=\sum_{1\le i\le k}x_{j_i}$.
\end{itemize}
Note that $\chi(e)\in X_e$ holds for every edge $e\in E(D)$.  Let us
verify that $\chi$ is balanced. For any $1\le i_1 \le k$, we have
\[ \chi(\delta_D^+(w_{i_1}))=\sum_{k+1\le i_2 \le 2k}(
x_{j_{i_1}}+Bx_{j_{i_2}})=kx_{j_{i_1}}+BY_2=\chi((w,w_{i_1}))=\chi(\delta_D^-(w_{i_1})),
\]
as required. Similarly, for 
for any $k+1\le i_2 \le 2k$, we have
\[ \chi(\delta_D^-(w_{i_2}))=\sum_{1\le i_1 \le k}(
x_{j_{i_1}}+Bx_{j_{i_2}})=Y_1+Bkx_{j_{i_2}}=\chi((w_{i_2},w))=\chi(\delta_D^+(w_{i_2})).
\]
Thus we have shown that $\chi$ is balanced at $w_1$, $\dots$, $w_{2k}$ and it follows by Proposition~\ref{prop:balanced0} that $\chi$ is balanced also at $w$.

\textbf{Balanced assignment $\chi$ $\Rightarrow$ biclique.}
For the reverse direction of the equivalence, suppose that
$\chi:E(D)\to \mathbb{Z}^+$ is a balanced assignment with $\chi(e)\in X_e$ for
every $e\in E(D)$. For every $1\le i_1 \le k$, the definition of
$X_{(w,w_{i_1})}$ implies that $\chi((w,w_{i_1}))$ is of the form
$kx_{j_{i_1}}+By_{i_1}$ where $x_{j_{i_1}}\in X$ for some $1\le j_{i_1}\le n$ and $y_{i_1}$ is a positive integer. As $kx_{j_{i_1}}\le kM < B$ follows from
$x_{j_{i_1}}\in X$, the value of $\chi((w,w_{i_1}))$ uniquely
determines $j_{i_1}$ and $y_{i_1}$. Similarly, for every
$k+1\le i_2 \le 2k$, we have that $\chi((w_{i_2},w))$ is of the form
$y_{i_2}+Bkx_{j_{i_2}}$ for uniquely determined positive integers
$1\le j_{i_2} \le n$ and $y_{i_2}$.

Having defined the values $j_1$, $\dots$, $j_{2k}$, we show that $\chi((w_{i_1},w_{i_2}))=x_{j_{i_1}}+Bx_{j_{i_2}}$ for every
$1\le i_1\le k$ and $k+1\le i_2\le 2k$. If this is true, then the
vertices $v_{i,j_{i}}\in V_i$ for $1\le i \le 2k$ form a solution of
\probmcb: the fact that $x_{j_{i_1}}+Bx_{j_{i_2}}$ was introduced into
$X_{(w_{i_1},w_{i_2})}$ implies that there is an edge between
$v_{i_1,j_{i_1}}\in V_{i_1}$ and $v_{i_2,j_{i_2}}\in V_{i_2}$.

As the balance requirement holds at vertex $w_{i_1}$, it also holds if
we count modulo $B$. We have that $\chi((w,w_{i_1}))$ modulo $B$ is
exactly $kx_{j_{i_1}}<B$ (here we use that $kM<B$). For every
$k+1 \le i_2 \le 2k$, the value $\chi((w_{i_1},w_{i_2}))$ modulo $B$
is an integer from $X$ and the value $\chi(\delta_D^+(w_{i_1}))$ modulo
$B$ is exactly the sum of these $k$ integers from $X$ (as again by $kM<B$, this
sum cannot reach $B$). Therefore, we have that the sum of $k$ integers
from $X$ is exactly $kx_{j_{i_1}}$. Since $X$ is a $k$-non-averaging
set, this is only possible if these $k$ integers are all equal to
$x_{j_{i_1}}$. Thus we have shown that
$\chi((w_{i_1},w_{i_2}))=x_{j_{i_1}}$ modulo $B$ for every
$1\le i_1\le k$ and $k+1\le i_2 \le 2k$.

Let us consider now a vertex $w_{i_2}$ for $k+1\le i_2 \le 2k$. The
balance requirement in particular implies that
$\lfloor \chi(\delta_D^+(w_{i_2}))/B\rfloor= \lfloor
\chi(\delta_D^-(w_{i_2}))/B\rfloor$.
First, we have
$\lfloor \chi(\delta_D^+(w_{i_2}))/B\rfloor=\lfloor
\chi((w_{i_2},w))/B\rfloor=kx_{j_{i_2}}$. Therefore, the fact that $\chi$ is balanced at $w_{i_2}$ implies 
\[
kx_{j_{i_2}}=\lfloor \chi(\delta_D^-(w_{i_2}))/B\rfloor=\left\lfloor\sum_{1\le i_1 \le k}\chi((w_{i_1},w_{i_2}))/B\right\rfloor=
\sum_{1\le i_1 \le k}\lfloor \chi((w_{i_1},w_{i_2}))/B\rfloor,
\]
where the third equality holds because we have $\chi((w_{i_1},w_{i_2}))/B-\lfloor \chi((w_{i_1},w_{i_2}))/B\rfloor\le  M/B<1/k$.
The definition of $X_{(w_{i_1},w_{i_2})}$ implies that
$\lfloor \chi((w_{i_1},w_{i_2}))/B\rfloor$ is an integer from
$X$. Therefore, the equation above states the the sum of $k$ integers
from $X$ is exactly $kx_{j_{i_2}}$. Since $X$ is a $k$-non-averaging
set, this is only possible if these $k$ integers are all equal to
$x_{j_{i_2}}$. Therefore, we have shown that $\chi((w_{i_1},w_{i_2}))$
modulo $B$ is exactly $x_{i_{j_1}}$ and
$\lfloor \chi((w_{i_1},w_{i_2}))/B\rfloor$ is exactly $x_{j_{i_2}}$,
proving that $\chi((w_{i_1},w_{i_2}))=x_{j_{i_1}}+Bx_{j_{i_2}}$ indeed
holds.
\end{proof}

\subsection{\probnonrep\ and \probATSP}

To make the hardness proof for \probATSP\ cleaner, we first prove hardness
for the variant of the problem, where instead of optimizing the length of
the tour, the only constraint is that certain vertices cannot be
visited more than once.

\begin{center}
\fbox{\parbox{0.9\linewidth}{
\probnonrep: Given an unweighted directed graph $G$ and set $U\subseteq V(G)$ of vertices, find a closed walk (of any length) that visits each vertex at least once and visits each vertex in $U$ exactly once.}}
\end{center}


There is a simple reduction from \probnonrep\ to \probATSP\ that preserves treewidth.

\begin{lemma}\label{lem:walktoATSP}
  An instance of \probnonrep\ on an unweighted directed graph $D$ can be reduced in
  polynomial time to an instance of \probATSP\ with polynomially
  bounded positive integer weights on an edge-weighted version $D^*$ of $D$.
\end{lemma}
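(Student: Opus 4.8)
The plan is to obtain $D^*$ from $D$ by a vertex-splitting gadget applied to the vertices of $U$, together with a two-valued weighting, and to show that $(D,U)$ is a yes-instance of \probnonrep\ if and only if $D^*$ admits an ATSP tour of cost at most a polynomial threshold $B$. Concretely: for each $u\in U$ replace $u$ by two vertices $u^{\myin},u^{\myout}$ joined by a new arc $(u^{\myin},u^{\myout})$, redirecting every arc entering $u$ to enter $u^{\myin}$ and every arc leaving $u$ to leave $u^{\myout}$ (an arc between two vertices of $U$ thus runs from an out-copy to an in-copy). Give weight $1$ to every arc inherited from $D$ and weight $M:=n^3+1$ to each gadget arc $(u^{\myin},u^{\myout})$, where $n=|V(D)|$, and set $B:=|U|\cdot M+n^3$. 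All weights are positive integers bounded by a polynomial in $|V(D)|$, and the construction is polynomial time. I would also record that pathwidth is essentially preserved — from a path decomposition of $D$ one gets one of $D^*$ by replacing each $u\in U$ in every bag by the pair $\{u^{\myin},u^{\myout}\}$, at most doubling the width — since this is the property that lets the ETH lower bound pass through to \probATSP.

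The first ingredient is a structural claim that I would isolate: if $(D,U)$ is a yes-instance, then it has a solution walk of length at most $n^3$. To prove it, fix any solution $W$; it visits the vertices of $U$ in some cyclic order $u_1,\dots,u_t$, each exactly once, and the portion $P_i$ of $W$ between $u_i$ and $u_{i+1}$ is a walk from $u_i$ to $u_{i+1}$ whose internal vertices all lie in $V(D)\setminus U$. Working inside the auxiliary digraph $D_i$ consisting of $D[V(D)\setminus U]$ together with the arcs out of $u_i$, the arcs into $u_{i+1}$, and the arc $(u_i,u_{i+1})$ (so $u_i$ has only out-arcs and $u_{i+1}$ only in-arcs in $D_i$), replace the sub-walk of $P_i$ between consecutive \emph{first} occurrences of vertices by shortest paths in $D_i$; this yields a walk $P_i'$ of length at most $n^2$ that still visits every vertex $P_i$ visited and still keeps all internal vertices outside $U$. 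Replacing each $P_i$ by $P_i'$ gives a solution of length at most $tn^2\le n^3$.

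For the easy direction, take such a short solution $W$ and lift it to $D^*$ by expanding each visit of a vertex $u\in U$ into the pair $u^{\myin},u^{\myout}$ (traversing the gadget arc); the result is a closed walk in $D^*$ that visits every vertex of $D^*$, uses each gadget arc exactly once, and uses at most $n^3$ inherited arcs, hence has cost at most $|U|M+n^3=B$. For the converse, let $T$ be an ATSP tour of $D^*$ of cost at most $B$. Since the only arc entering $u^{\myout}$ is $(u^{\myin},u^{\myout})$, the tour $T$ traverses every gadget arc at least once; if it traversed some gadget arc at least twice, its cost would be at least $(|U|+1)M=|U|M+M>|U|M+n^3=B$, a contradiction. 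So $T$ uses each gadget arc exactly once, and contracting every gadget arc back to $u$ turns $T$ into a closed walk in $D$ that visits every vertex at least once and every vertex of $U$ exactly once, i.e.\ a solution of $(D,U)$. The degenerate cases ($U=\emptyset$, or $D^*$ not strongly connected) are immediate.

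I expect the only genuinely non-routine point to be the polynomial bound on the length of a minimal \probnonrep\ solution; the rest is just verifying that the gadget and the choice $M>n^3$ do what they should. In the write-up I would therefore state the length bound as a separate claim and carry out the shortcutting argument with care, making sure every replacement path stays inside $V(D)\setminus U$ so that the ``exactly once'' requirement on $U$ is not disturbed, and so that the endpoints $u_i,u_{i+1}$ are not revisited.
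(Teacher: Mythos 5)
Your reduction is correct, but it takes a genuinely different and heavier route than the paper's. The paper does not modify the graph at all: it assigns weight $1$ to every arc $(u,v)$ with $v\in U$ and weight $0$ to every other arc, observes that any tour must enter each vertex of $U$ at least once (so its cost counts the total number of entries into $U$, which is at least $|U|$ with equality iff each vertex of $U$ is entered exactly once), and then perturbs the zero weights to $1/(2n^2)$ and rescales to get positive, polynomially bounded integers; the fact that a minimum ATSP tour has at most $n^2$ edges makes the perturbation harmless. Your construction instead splits each $u\in U$ into $u^{\myin},u^{\myout}$ with an expensive mandatory arc, which forces you to prove the separate (and slightly delicate) claim that a minimal \probnonrep{} solution has polynomially many edges \emph{while respecting the exactly-once constraint on $U$} -- your shortcutting inside the auxiliary digraphs $D_i$ handles this correctly, but the paper's weighting scheme avoids needing it: there the length bound is only invoked for the minimum tour of the weighted instance, where taking shortest paths between consecutive first visits is unconstrained. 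Two further remarks. First, your $D^*$ is not an ``edge-weighted version of $D$'' as the lemma literally asserts, since you change the vertex set; this weakens the statement, though it is harmless for Theorem~\ref{thm:lower} because, as you note, pathwidth is preserved up to a constant factor (in fact one should say a word about why the split preserves the bags' structure, which is routine). Second, your trade-off buys nothing here -- the weights in both constructions are polynomially bounded -- so the paper's version is preferable for its brevity; but your argument is self-contained and sound.
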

\begin{proof}
It is easy to see that if we assign weight 1 to every edge $(u,v)$ with $v\in U$ and weight 0 to every other edge, 
then the \probnonrep\ instance has a solution if and only if the resulting weighted graph has closed walk of length at most $|U|$ (or, equiavelently, less than $|U|+1$) visiting every vertex.
To ensure that every weight is positive, let us replace every weight 0 with weight $\epsilon:=1/(2n^2)$. As a minimum solution of \probATSP\ contains at most $n^2$ edges, this modification increases the minimum cost by at most $1/2$. Thus it remains true that the \probnonrep\ instance has a solution if and only if there is a closed walk of length less than $|U|+1$ visiting every vertex. Finally, to ensure that every cost is integer, we multiply each of them by $2n^2$.
\end{proof}

The rest of the section is devoted to giving a lower bound for
\probnonrep.  The lower bound proof uses certain gadgets in the construction of the
instances. Formally, we define a {\em gadget} to be a graph with a set
of distinguished vertices called {\em external vertices;} every other
vertex is {\em internal.} To avoid degenerate situations, we always
require that the external vertices of a gadget are independent and
each external vertex has either indegree 0 or outdegree 0 in the
gadget; in particular, this implies that a path between two external
vertices contains no other external vertex. Also, this implies that
there is no closed walk containing an external vertex.

We say that a set $\cP$ of paths of the gadget {\em satisfies} a
gadget if (1) both endpoints of each path are external vertices and (2) every
internal vertex of the gadge tis visited by exactly one path in $\cP$. If a path
$P\in\cP$ connects two external vertices of a gadget, then we define
the {\em type} of $P$ to be the (ordered) pair of its endpoints.  If
$\cP$ satisfies the gadget, then we define the {\em type} of $\cP$ to
be the multiset of the types of the paths in $\cP$. For brevity, we
use notation such as $a\times (v_1,v_2)+ b\times (v_3,v_4)$ to denote
the type that contains $a$ times the pair $(v_1,v_2)$ and $b$ times
the pair $(v_3,v_4)$.  For a gadget $H$, we let the set $\typeset(H)$
contain every possible type of a set $\cP$ of paths satisfying $H$.

We construct gadgets where we can exactly tell the type of the
collections of paths that can satisfy the gadget, that is, the set
$\typeset(H)$ is of a certain form.  In the first gadget, we have a
simple choice between one path or a specified number of paths.

\begin{lemma}\label{lem:gadget1}
For every $s\ge 1$, we can construct in time polynomial in $n$ a gadget $H_s$ with the following properties:
\begin{enumerate}
\item $H_s$ has four external vertices $a_\textup{in}$, $a_\textup{out}$, $b_\textup{in}$, and $b_\textup{out}$.
\item $H_s$ minus its external vertices has constant pathwidth.
\item $\typeset(H_s)$ contains exactly two types: the type $(b_\textup{in},b_\textup{out})$ and the type  $s\times (a_\textup{in},a_\textup{out})$ (in other words, the gadget can be satisfied by a path from $b_\textup{in}$ to $b_\textup{out})$, can be satisfied by a collection of $s$ paths from $a_\textup{in}$ to $a_\textup{out})$, but cannot be satisfied by any other type of collection of paths).
\end{enumerate}
\end{lemma}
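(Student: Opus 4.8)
The plan is to construct $H_s$ explicitly as a small "switch" gadget that forces a choice between a single $b$-path and exactly $s$ many $a$-paths. First I would build the core of the gadget as a directed cycle-like structure: take internal vertices $c_1,\ldots,c_s$ together with a single internal "hub" vertex $h$, arrange $h$ so that the only way to cover $h$ exactly once (it is internal, so it must be covered exactly once by any satisfying collection of paths) is either to route the $b_\textup{in}\to b_\textup{out}$ path through $h$, or to route exactly one $a$-path through $h$. The design principle is that $a_\textup{in}$ has outdegree $s$ and $a_\textup{out}$ has indegree $s$ inside the gadget, with $s$ internally vertex-disjoint "threads" from $a_\textup{in}$ to $a_\textup{out}$, and a single thread from $b_\textup{in}$ to $b_\textup{out}$; these two sets of threads share internal vertices in a controlled way so that covering every internal vertex exactly once forces an all-or-nothing choice. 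Concretely I would make the $s$ $a$-threads pass through a shared chain of vertices in such a way that using fewer than $s$ of them leaves some internal vertex uncovered, while using the $b$-thread simultaneously with any $a$-thread would force some internal vertex to be covered twice.

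The cleanest realization: let the internal vertices be $x_1,\dots,x_s$ (with $x_i$ on the $i$-th $a$-thread) plus a chain $y_1\to y_2\to\cdots\to y_s$ that the $b$-thread traverses, plus crossing edges ensuring that if the $b$-thread is present then each $y_j$ is already used, hence no $a$-thread may also touch a $y_j$, hence no $a$-thread may be used at all — but then the $x_i$ are uncovered, a contradiction unless there are no $a$-threads and we need the $x_i$ covered by the $b$-thread too, which we arrange by putting the $x_i$ on the $b$-thread as well. Symmetrically, if any $a$-thread is used, the vertex it shares with the $b$-thread's chain forbids the $b$-thread, and then each remaining $x_{i'}$ must be covered, which is only possible by using thread $i'$; propagating, all $s$ $a$-threads must be used. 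I would verify the two claimed types $(b_\textup{in},b_\textup{out})$ and $s\times(a_\textup{in},a_\textup{out})$ are both achievable by exhibiting the obvious satisfying collections, and verify no other type occurs by the covering-parity argument just sketched, using the fact (from the gadget axioms) that external vertices have all-in or all-out degree, so any path between external vertices has its endpoints as its only external vertices and internal vertices are covered exactly once.

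For the pathwidth bound (property 2), I would note that after deleting the four external vertices the remaining graph is essentially a bounded-width "ladder": the threads $x_i$ and the chain $y_j$ can be linearly ordered so that each vertex interacts only with $O(1)$ neighbors in this order; a path decomposition with bags of constant size following the chain index $j$ (each bag containing $x_j, x_{j+1}, y_j, y_{j+1}$ and a constant number of auxiliary vertices) witnesses constant pathwidth. The key is to keep the "crossing" edges between the $a$-threads and the $b$-chain local, i.e. $x_j$ only connects to $y_j$ and $y_{j+1}$, so that no single bag must hold more than a constant number of vertices regardless of $s$.

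The main obstacle I anticipate is getting the combinatorial forcing exactly right while simultaneously keeping the deleted-external-vertices graph of constant pathwidth: these two goals are in tension, since a naive "all-or-nothing" gadget wants a high-degree hub vertex $h$ that would blow up the pathwidth if all $s$ threads pass through it. The resolution is to replace the single hub by a linear propagation chain (so the forcing is transmitted locally from thread to thread rather than through one bottleneck vertex), which preserves constant pathwidth but makes the correctness proof a short induction on the chain index rather than a one-line argument. I would therefore spend most of the effort writing the lemma that "in any satisfying collection, thread $i$ is used iff thread $i+1$ is used," from which the dichotomy in $\typeset(H_s)$ follows immediately, together with a careful check that the $b$-thread, which runs through all the $x_i$ and $y_j$, is compatible with exactly the case of zero $a$-threads.
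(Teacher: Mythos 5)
Your plan follows essentially the same approach as the paper: a linear chain of $s$ blocks, each shared between the single $b$-path and exactly one of the $s$ $a$-paths, with the all-or-nothing dichotomy propagated locally from block to block rather than through a high-degree hub (exactly the resolution the paper uses to keep the pathwidth of the internal part constant). The one place where your sketch stops short of a proof is the concrete edge set: you describe the \emph{desired} behavior of the ``crossing edges'' rather than exhibiting edges that provably achieve it, and the crux of the correctness argument is precisely ruling out hybrid paths that enter on one $a$-thread, slide along the $b$-chain, and exit on another thread, or a $b_\textup{in}$-to-$a_\textup{out}$ path. The paper handles this by making each block consist of six vertices $v^1_j,\dots,v^6_j$ wired so that $v^2_j$ (resp.\ $v^5_j$) can only be visited by a path that also visits $v^1_j$ and $v^3_j$ (resp.\ $v^4_j$ and $v^6_j$); this forces each triple onto a single path and leaves exactly two ways to traverse a block, from which your ``thread $i$ is used iff thread $i+1$ is used'' induction follows mechanically. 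So the architecture and the induction you propose are right, but you would still need to pin down a block gadget with this two-traversals-only property before the argument is complete.
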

\begin{proof}
The gadget $H_s$ has $6s$ internal vertices $v^i_j$ ($1\le i \le 6$, $1\le j \le s$) connected as shown in Figure~\ref{fig:gadget1}(a). 
Additionally, we introduce the edges $(b_\textup{in},v^1_1)$, $(v^6_s,b_\textup{out}$, and for every $1\le j \le s$, the edges $(a_\textup{in},v^3_j)$ and $(v^4_j,a_\textup{out})$. It is clear that statement (2) holds: $H_s$ minus its vertices is a graph with constant pathwidth.
\begin{figure}
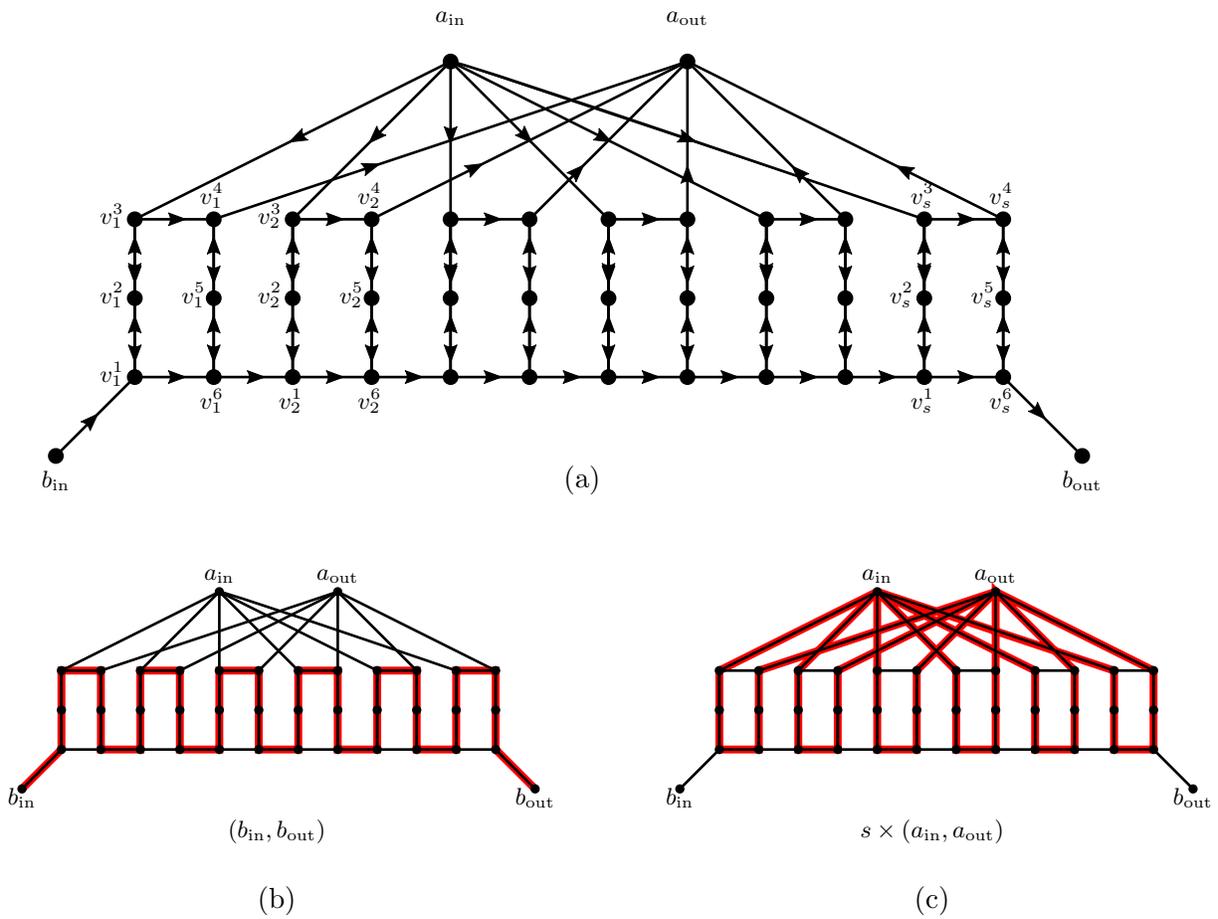

\begin{center}
{\footnotesize \svg{\linewidth}{gadget1}}
\caption{The gadget of Lemma~\ref{lem:gadget1} with two collections of paths satisfying it. }\label{fig:gadget1}
\end{center}
\end{figure}

This gadget can be satisfied by a path from $b_\textup{in}$ to
$b_\textup{out}$ (see Figure~\ref{fig:gadget1}(b)) and also by a collection of $s$ paths where the
$j$-th path is $a_\textup{in}$, $v^3_j$, $v^2_j$, $v^1_j$, $v^6_j$,
$v^5_j$, $v^4_j$, $a_\textup{out}$ (see Figure~\ref{fig:gadget1}(c)). To complete the proof of statement
(3), we need to show that if $\cP$ satisfies $H_s$, then $\cP$ is one of
these two types. The basic observation is that if a path in $\cP$
contains $v^2_j$, then it has to contain $v^1_j$ and $v^3_j$ as well
(as each internal vertex is visited exactly once), hence the three
vertices $v^1_j$, $v^2_j$, $v^3_j$ have to appear on the same
path of $\cP$. The same is true for the vertices $v^4_j$, $v^5_j$, $v^6_j$. 
 Suppose that $\cP$ contains a path $P$ starting at
$b_\textup{in}$. Then its next vertex is $v^1_1$, which should be
followed by $v^2_1$ and $v^3_1$ by the argument above. The next vertex is $v^4_1$ (the only outneighbor of $v^3_j$ not yet visited), which is followed by $v^5_1$ and $v^6_1$. Now the next vertex is $v^1_2$, the only outneighbor of $v^6_1$ not yet visited. 
With similar arguments, we
can show that $P$ is exactly of the form shown in Figure~\ref{fig:gadget1}(b), hence $\cP$
contains only this path, and $\cP$ is of type
$\{(b_\textup{in},b_\textup{out})\}$. 

Suppose now that $\cP$ does not contain a path starting at
$b_\textup{in}$. Then the only way to reach vertex $v^1_1$ is with a
path starting as $a_\textup{in}$, $v^3_1$, $v^2_1$, $v^1_1$. This has
to be followed by the unique outneighbor $v^6_1$ of $v^1_1$ that was
not yet visited. This means that the path contains also $v^5_1$ and
$v^4_1$, which has to be followed by $a_\textup{out}$. Then with
similar arguments, we can show for every $j\ge 2$ that $v^1_j$ is
visited by the path $a_\textup{in}$, $v^3_j$, $v^2_j$, $v^1_j$,
$v^6_j$, $v^5_j$, $v^4_j$, $a_\textup{out}$. This means that $|\cP|=s$
and the type of $\cP$ is $s\times (a_\textup{in},a_\textup{out})$.
\end{proof}

\begin{lemma}\label{lem:gadget-vertex}
  Let $X$ be a set of $n$ positive integers, each at most $M$ and let $S=\sum_{x\in X}x$. In time
  polynomial in $n$ and $M$, we can construct a gadget $H_X$ with the
  following properties:
\begin{enumerate}
\item $H_X$ has four external vertices $a_\textup{in}$, $a_\textup{out}$, $c_\textup{in}$, and $c_\textup{out}$.
\item $H_X$ minus its external vertices has constant pathwidth.
\item $\typeset(H_s)$ contains exactly $|X|$ types: for every
  $x\in X$, it contains the type
  $(c_\textup{in},c_\textup{out})+ (S-x)\times
  (a_\textup{in},a_\textup{out})$.
\end{enumerate}
\end{lemma}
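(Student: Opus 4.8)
The plan is to assemble $H_X$ from $|X|$ copies of the switch gadget of Lemma~\ref{lem:gadget1}, glued so that they share their $a$- and $b$-channels, and to add a single ``arbiter'' vertex $q$ that forces exactly one copy to be used in its ``$b$-mode''. Write $X=\{x_1,\dots,x_n\}$ with the $x_i$ pairwise distinct. For each $i$ take a fresh copy $H^{(i)}$ of $H_{x_i}$ and form $H_X$ as follows: identify all vertices named $a_\textup{in}$ into a single vertex $a_\textup{in}$, likewise the $a_\textup{out}$'s, map all $b_\textup{in}$'s to one vertex $c_\textup{in}$ and all $b_\textup{out}$'s to one vertex $c_\textup{out}$; then add a new vertex $q$, delete in each $H^{(i)}$ the unique edge leaving $b_\textup{in}$ (in the notation of the proof of Lemma~\ref{lem:gadget1} this is the edge $(b_\textup{in},v^1_1)$, which is a $b$-\emph{private} edge, i.e.\ never used by a collection of type $x_i\times(a_\textup{in},a_\textup{out})$), add a single edge $(c_\textup{in},q)$, and add edges from $q$ to the former heads of the deleted edges. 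The external vertices of $H_X$ are exactly $a_\textup{in},a_\textup{out},c_\textup{in},c_\textup{out}$, so $q$ is internal. This $H_X$ has $O(\sum_i x_i)=O(nM)$ vertices and is constructed in time polynomial in $n$ and $M$.

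For correctness I would first note that in any satisfying collection $\cP$ every path has both endpoints in $\{a_\textup{in},a_\textup{out},c_\textup{in},c_\textup{out}\}$, and that $q$, being internal, lies on exactly one such path $P^{*}$; since the only edge into $q$ is $(c_\textup{in},q)$, the path $P^{*}$ starts at $c_\textup{in}$, enters $q$, and leaves $q$ into the former head $v^{1}_{1}$ of some copy $H^{(i^{*})}$. Re-running the forcing analysis of Lemma~\ref{lem:gadget1} inside $H_X$, $P^{*}$ is then forced to traverse the entire $b$-path of $H^{(i^{*})}$ and terminate at $c_\textup{out}$; hence $P^{*}$ has type $(c_\textup{in},c_\textup{out})$ and covers all internal vertices of $H^{(i^{*})}$, which therefore contributes no $a$-path. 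Any other copy $H^{(i)}$, $i\neq i^{*}$, cannot be in $b$-mode, because that would require routing a path through $q$, which is already used; so by Lemma~\ref{lem:gadget1} the collection must restrict on $H^{(i)}$ to exactly $x_i$ paths of type $(a_\textup{in},a_\textup{out})$, and these use neither $q$ nor the deleted edge, hence are legal in $H_X$. Summing, the type of $\cP$ is $(c_\textup{in},c_\textup{out})+\bigl(\sum_{i\neq i^{*}}x_i\bigr)\times(a_\textup{in},a_\textup{out})=(c_\textup{in},c_\textup{out})+(S-x_{i^{*}})\times(a_\textup{in},a_\textup{out})$. Conversely, for every choice of $i^{*}$ putting $H^{(i^{*})}$ in $b$-mode (through $q$) and all other copies in $a$-mode yields a satisfying collection of this type; since the $x_i$ are distinct, so are the $n$ values $S-x_i$, and thus $\typeset(H_X)$ is exactly the set claimed. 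The key observation pinning the number of $(c_\textup{in},c_\textup{out})$-paths to one is that $q$ must be covered, and can be covered at most once.

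For the pathwidth bound: after deleting the four external vertices, $H_X$ becomes the disjoint union of the graphs $H^{(i)}$ minus their external vertices --- each of constant pathwidth by Lemma~\ref{lem:gadget1}, independent of $x_i$ --- together with the single vertex $q$ and the edges joining $q$ to the various $v^{1}_{1}$'s. Concatenating the constant-width path decompositions of the components in any order and inserting $q$ into every bag gives a path decomposition of $H_X$ minus its external vertices of width $O(1)+1=O(1)$.

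The step I expect to require the most care is the middle one: showing that the shared vertices $a_\textup{in},a_\textup{out},c_\textup{in},c_\textup{out}$ and the arbiter $q$ do not allow a satisfying collection to ``leak'' between two copies $H^{(i)}$ and $H^{(j)}$ and thereby realize a spurious type. Concretely one must redo, in the glued graph, the rigidity arguments from the proof of Lemma~\ref{lem:gadget1} (``the three vertices $v^{1}_{j},v^{2}_{j},v^{3}_{j}$ lie on a common path'', and so on), verifying in each branch that any path leaving one copy through a shared vertex and entering another would strand some internal vertex of a third sub-structure, making the collection invalid. Once this local rigidity is established, the type count, the two directions of the equivalence, and the pathwidth and running-time claims are all routine.
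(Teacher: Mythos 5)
Your construction is the same as the paper's: your arbiter vertex $q$ is exactly the paper's internal vertex $v$, which is identified with the $b_\textup{in}$ of every copy of $H_{x}$ and reached only via the single edge $(c_\textup{in},v)$, while the $b_\textup{out}$'s, $a_\textup{in}$'s and $a_\textup{out}$'s are identified with $c_\textup{out}$, $a_\textup{in}$ and $a_\textup{out}$ respectively. The correctness argument (the unique path through $q$ selects the one copy in $b$-mode, forcing the others into $a$-mode) and the pathwidth argument (delete the externals and $q$ to get a disjoint union of constant-pathwidth pieces) also match the paper's proof.
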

\begin{proof}
  The gadget $H_X$ is constructed the following way (see Figure~\ref{fig:gadget2}).  Let us introduce
  an internal vertex $v$ and the edge $(c_\textup{in},v)$. For every
  $x\in X$, let us introduce a copy of $H_x$ defined by
  Lemma~\ref{lem:gadget1} where $a_\textup{in}$, $a_\textup{out}$,
  $v$, $c_\textup{out}$ of $H_X$ play the role of $a_\textup{in}$,
  $a_\textup{in}$, $b_\textup{in}$, $b_\textup{out}$, respectively. If
  we remove the four external vertices of $H_X$, then we get a graph
  with constant pathwidth: if we remove one more vertex, $v$, then we
  get the disjoint union of internal vertices of the gadgets $H_x$'s,
  which have constant pathwidth by Lemma~\ref{lem:gadget1}.
\begin{figure}
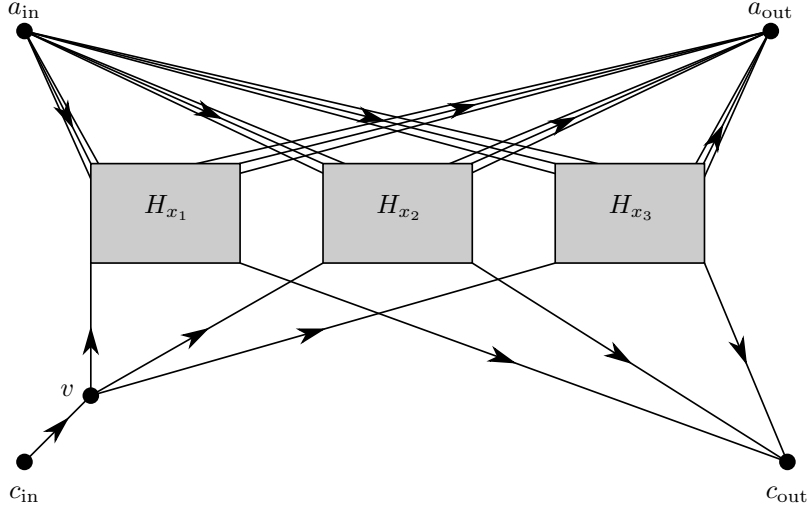

\begin{center}
{\small \svg{0.7\linewidth}{gadget2}}
\caption{The gadget $H_X$ of Lemma~\ref{lem:gadget-vertex} for a set $X=\{x_1,x_2,x_3\}$ of three integers. The gray rectangles represent the {\em internal} vertices of the three gadgets $H_{x_1}$, $H_{x_2}$, and $H_{x_3}$.}\label{fig:gadget2} 
\end{center}
\end{figure}

  For every $x\in X$, the gadget can be satisfied by the following
  collection of paths. The copy of $H_x$ in $H_X$ can be satisfied by
  a path from $v$ to $c_\textup{out}$, which can be extended with the
  edge $(c_\textup{in}, v)$ to a path from $c_\textup{in}$ to
  $c_\textup{out}$. For every $x'\in X$, $x'\neq x$, we can satisfy
  the copy of $H_{x'}$ in $H_X$ by a collection of $x'$ paths from
  $a_\textup{in}$ to $a_\textup{out}$. This way, we constructed a
  collection $\cP$ of paths satisfying $H_X$ that consists of a single
  path of type $(c_\textup{in},c_\textup{out})$ and exactly
  $\sum_{x'\in X\setminus \{x\}} x'=S-x$ paths of type
  $(a_\textup{in},a_\textup{out})$.

  To complete the proof of statement (3), consider a collection $\cP$
  of paths satisfying $H_X$. Let $P$ be the unique path of $\cP$
  visiting vertex $v$. The vertex of $P$ after $v$ is has to be an
  internal vertex of the copy of $H_x$ for some $x\in X$ (here we use
  that the external vertices of the gadget $H_x$ are independent,
  hence $v$ cannot be followed by any of $a_\textup{in}$,
  $a_\textup{out}$, and $c_\textup{out}$). As $v$ was identified with
  vertex $b_\textup{in}$ of $H_x$, Lemma~\ref{lem:gadget1} implies
  that $P$ visits every internal vertex of this copy of $H_x$ and
  leaves $H_x$ at its vertex $b_\textup{out}$, which was identified
  with $c_\textup{out}$. Consider now some $x'\in X$ with $x'\neq
  x$.
  Vertex $b_\textup{in}$ of $H_{x'}$ was identified with $v$, path $P$
  is the only path of $\cP$ visiting $v$, and $P$ does not visit any
  internal vertex of $H_{x'}$. Therefore, by Lemma~\ref{lem:gadget1},
  the internal vertices of $H_{x'}$ are visited by exactly $x'$ paths
  of type $(a_\textup{in},a_\textup{out})$. Thus $\cP$ contains one
  path of type $(c_\textup{in},c_\textup{out})$ and exactly
  $\sum_{x'\in X\setminus \{x\}} x'=S-x$ paths of type
  $(a_\textup{in},a_\textup{out})$.
\end{proof}

\begin{lemma}\label{lem:nonrep-tw}
Assuming ETH, there is no $f(p)n^{o(p)}$ time algorithm for \probnonrep\ on graphs of pathwidth at most $p$ for any computable function $f$.
\end{lemma}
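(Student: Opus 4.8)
The plan is to give a polynomial-time parameterized reduction from \probbalanced\ to \probnonrep\ that turns a directed graph $D$ on $k$ vertices into an instance $(G,U)$ of \probnonrep\ of pathwidth $O(k)$ and with $|V(G)|$ polynomial in the size of the \probbalanced\ instance. Given such a reduction, an $f(p)\,N^{o(p)}$ algorithm for \probnonrep\ on graphs of pathwidth at most $p$ (with $N=|V(G)|$) would solve \probbalanced\ in time $f(O(k))\,N^{o(k)}=g(k)\cdot(\text{instance size})^{o(k)}$, since $p=O(k)$ and $N$ is polynomial in the input, contradicting Lemma~\ref{lem:balancedhard}. The core idea is to use the selection gadget $H_X$ of Lemma~\ref{lem:gadget-vertex} to encode the choice $\chi(e)\in X_e$ on each edge, and to wire the gadgets together so that the Eulerian-circulation structure forced on a closed walk in $G$ is exactly the balance condition of \probbalanced.

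Concretely, I would first preprocess the \probbalanced\ instance so that $D$ is connected and every vertex carries a self-loop with value set $\{1,2\}$ (both harmless, since a self-loop contributes the same amount to $\delta_D^+$ and $\delta_D^-$ and hence does not affect balance). For each vertex $d_i$ of $D$ create two vertices $o_i,q_i$ and the edge $(q_i,o_i)$; for each edge $e=(d_i,d_j)$ of $D$ insert a copy of the gadget $H_{X_e}$ of Lemma~\ref{lem:gadget-vertex}, identifying its external vertices $a_\textup{in},a_\textup{out}$ with $o_i$ and $q_j$. By Lemma~\ref{lem:gadget-vertex}, any family of internally disjoint paths covering the internal vertices of this copy consists of exactly one control path from $c_\textup{in}$ to $c_\textup{out}$ together with $S_e-x_e$ parallel paths from $o_i$ to $q_j$, where $S_e=\sum_{x\in X_e}x$ and $x_e\in X_e$; thus the gadget forces a choice $\chi(e):=x_e\in X_e$ and sends $S_e-\chi(e)$ ``tokens'' along the edge $e$ of $D$. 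The control terminals of all gadgets are chained into one directed cycle (also routed through $o_1$, to keep the graph connected), which any solution must traverse exactly once. Finally, to repair the additive offset, let $\Delta_i=\sum_{e\in\delta_D^+(d_i)}S_e-\sum_{e\in\delta_D^-(d_i)}S_e$ (so $\sum_i\Delta_i=0$), add one new vertex $z$, and for each $i$ add $\max(0,-\Delta_i)$ two-edge paths from $o_i$ to $z$ and $\max(0,\Delta_i)$ two-edge paths from $z$ to $q_i$, each through a fresh middle vertex. Let $U$ consist of all internal vertices of the gadgets $H_{X_e}$ together with all these fresh middle vertices.

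For correctness, a closed walk visiting every vertex and visiting every vertex of $U$ exactly once is the same thing as a connected spanning sub-multigraph that is Eulerian and in which each vertex of $U$ has in-degree and out-degree $1$. Restricting such a walk to a gadget $H_{X_e}$ yields a path family satisfying it (its internal vertices lie in $U$ and only connect inside the gadget), so the walk routes $S_e-x_e$ tokens along $e$ for some $x_e\in X_e$; imposing Eulerian balance at $o_i$ and $q_i$, and using that the correction paths inject exactly $\Delta_i$ extra tokens of the correct sign, one gets $\sum_{e\in\delta_D^+(d_i)}x_e=\sum_{e\in\delta_D^-(d_i)}x_e$, i.e.\ $\chi=(x_e)$ solves \probbalanced; conversely any balanced $\chi$ yields such a walk. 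The control cycle and the self-loop gadgets guarantee that $G$ is connected and that every $o_i$, $q_i$, and $z$ is actually visited. For pathwidth, build a path decomposition all of whose bags contain the $2k+1$ skeleton vertices $\{o_i,q_i\}_i\cup\{z\}$; for each gadget $H_{X_e}$ take a constant-width path decomposition of its internal part (which exists by Lemma~\ref{lem:gadget-vertex}), add to every bag the skeleton set and that gadget's two control terminals, and concatenate these blocks along the control cycle, inserting a small ``bridge'' bag between consecutive blocks; each correction path contributes one extra bag holding the skeleton set and its middle vertex. The width is $2k+O(1)=O(k)$.

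Since the gadget machinery is already provided by Lemmas~\ref{lem:gadget1} and~\ref{lem:gadget-vertex}, the remaining work is the global assembly, and I expect the delicate points to be: (i) routing the mandatory control paths and the correction paths so that the Eulerian constraint at the ports $o_i,q_i$ becomes \emph{exactly} the balance equation rather than a shifted version of it; (ii) ensuring that $G$ is connected and that no vertex (in particular the ports and $z$) is accidentally left unvisited; and (iii) keeping the ``global'' part of every bag of size $O(k)$ rather than $O(|E(D)|)$ — this last point is what makes the parameter blow-up linear, and hence what lets the ETH lower bound for \probbalanced\ transfer to a tight $n^{\Omega(p)}$-type bound for \probnonrep.
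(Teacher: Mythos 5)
Your reduction is correct and follows essentially the same route as the paper's proof: a reduction from \probbalanced\ that uses the gadgets $H_{X_e}$ of Lemma~\ref{lem:gadget-vertex} to encode the choice of $\chi(e)$ as $S_e-\chi(e)$ parallel token paths, additive padding so that Eulerian balance at the ports becomes exactly the balance equation for $\chi$, and an $O(k)$-vertex skeleton kept in every bag to bound the pathwidth. The paper wires things slightly differently --- it keeps a single port $w_i$ per vertex of $D$ rather than splitting into $o_i,q_i$, identifies all control terminals with two global vertices $c_\textup{in},c_\textup{out}$ joined by many length-two return paths instead of chaining them into a cycle, and pads symmetrically with $S_i^+$ and $S_i^-$ offset paths rather than with a signed correction $\Delta_i$ through an extra vertex $z$ --- but these are equivalent engineering choices and your balance arithmetic checks out.
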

\begin{proof}
  The proof is by reduction from \probbalanced\ on a directed graph
  $D$ with $k$ vertices $w_1$, $\dots$, $w_k$. We construct a
  \probnonrep\ instance on a directed graph $D^*$ the following
  way. First, let us introduce the vertices $w_1$, $\dots$, $w_k$ into
  $D^*$, as well as two auxiliary vertices $c_\textup{in}$ and
  $c_\textup{out}$. For every edge $e=(w_{i_1},w_{i_2})\in E(D)$ with
  a set $X_e$ of integers associated to it in the \probbalanced\
  instance, we construct a copy of the gadget $H_{X_e}$ defined by
  Lemma~\ref{lem:gadget-vertex} and identify external vertices
  $a_\textup{in}$, $a_\textup{out}$, $c_\textup{in}$, $c_\textup{out}$
  of the gadget $H_{X_e}$ with vertices $w_{i_1}$, $w_{i_2}$,
  $c_\textup{in}$, $c_\textup{out}$ of $D^*$, respectively. Let
  $S_e=\sum_{x\in X_e}x$ for every edge $e\in V(D)$, let
  $S^+_i=\sum_{e\in \delta_D^+(w_i)}S_s$, let
  $S^-_i=\sum_{e\in \delta_D^-(w_i)}S_s$, and let
  $S^*=\sum_{e\in E(D)}X_e=\sum_{i=1}^kS^+_i=\sum_{i=1}^kS^-_i$. We
  further extend $D^*$ the following way.
\begin{enumerate}
\item For every $1\le i \le k$, we introduce a set $\cP^+_i$ of $S^+_i$ paths of length two from $c_\textup{in}$ to $w_i$ (that is, each of these paths consists of vertex $c_\textup{in}$, vertex $w_i$, and one extra newly introduced vertex).
\item For every $1\le i \le k$, we introduce a set $\cP^-_i$ of $S^-_i$ paths of length two from $w_i$ to $c_\textup{out}$.
\item We introduce a set $\cP^*$ of $S^*+|E(D)|$ paths of length two from $c_\textup{out}$ to $c_\textup{in}$.
\end{enumerate}
Let $Z:=\{w_1, \dots, w_k, c_\textup{in}, c_\textup{out}\}$; note that
$Z$ form an independent set in $G^*$ (as the external vertices of each
gadget are independent). We define $U:=V(D^*)\setminus Z$ to be the
set of vertices that have to be visited exactly once.  This completes
the description of the reduction.

Observe that if we remove $Z$ from $D^*$, then what remains is the
disjoint union of the internal vertices of the gadgets $H_{X_e}$,
which have constant pathwidth by Lemma~\ref{lem:gadget-vertex}. As
removing a vertex can decrease pathwidth at most by one, it follows
that $D^*$ has pathwidth $|Z|+O(1)=O(k)$. Thus if we are able to show
that the constructed instance $D^*$ of \probnonrep\ is a yes-instance
if and only if $D$ is a yes-instance of \probbalanced, then this
implies that an $f(p)n^{o(p)}$ time algorithm for \probnonrep\ on
graphs of pathwidth $p$ can be used to solve \probbalanced on $k$
vertex graphs in time $(k)n^{o(k)}$, which would contradict ETH by
Lemma~\ref{lem:balancedhard}.

\textbf{Balanced assignment $\chi$ $\Rightarrow$ closed walk.}  Suppose that
balanced assignment $\chi:E(D)\to \mathbb{Z}^+$ is a solution to the
\probbalanced\ instance. For every $e=(w_{i_1},w_{i_2})\in E(D)$, the
construction of the gadget $H_{X_e}$ implies that $H_{X_e}$ can be
satisfied by a collection $\cP_e$ of paths having type
$(c_\textup{in},c_\textup{out})+(S_e-\chi(e))\times
(w_{i_1},w_{i_2})$.
Let $\cP$ be a collection of paths that is the union of the set
$\cP^*$, the sets $\cP^+_i$ and $\cP^-_i$ for $1\le i \le k$, and the
set $\cP_e$ for $e\in E(G)$. Observe that every vertex of $U$ is
contained in exactly one path in $\cP$ and the paths in $\cP$ are edge
disjoint. Let $H^*$ be the subgraph of $D^*$ formed by the union of
every path in $\cP$. It is easy to see that $H^*$ is connected: every
path in $\cP$ has endpoints in $Z$ and the paths in $\cP^*$,
$\cP^-_i$, $\cP^+_i$ ensure that every vertex of $Z$ is in the same
component of $H^*$. It is also clear that every vertex of $U$ has
indegree and outdegree exactly 1, as each vertex in $U$ is visited by
exactly one path in $\cP$. We show below that every vertex of $Z$ is
balanced in $H^*$ (its indegree equals its outdegree). If this is
true, then $H^*$ has a closed Eulerian walk, which gives a closed walk in
$G^*$ visiting every vertex at least once and every vertex in $U$
exactly once, what we had to show.

The endpoints of every path in $\cP$ are in $Z$, hence every vertex of $U$ is balanced in $H^*$ (in particular has indegree and outdegree exactly 1). Consider now a vertex $w_i$.
\begin{itemize}
\item For every $e\in \delta_D^+(w_i)$, the set $\cP_e$ contains $S_e-\chi(e)$ paths starting at $w_i$.
\item For every $e\in \delta_D^-(w_i)$, the set $\cP_e$ contains $S_e-\chi(e)$ paths ending at $w_i$. 
\item The set $\cP^+_i$ contains $S^+_i$ paths ending at $w_i$.
\item The set $\cP^-_i$ contains $S^-_i$ paths starting at $w_i$.
\end{itemize}
As these paths are edge disjoint, the difference between the outdegree and the indegree of $w_i$ in $H^*$ is
\[
(S^-_i+\sum_{e\in \delta_D^+(w_i)} (S_e-\chi(e)))-
(S^+_i+\sum_{e\in \delta_D^-(w_i)} (S_e-\chi(e)))=
(S^-_i+S^+_i-\chi(\delta_D^+(w_i)))-
(S^+_i+S^-_i-\chi(\delta_D^-(w_i)))=0,
\] 
since $\chi$ is balanced at $w_i$. Consider now vertex $c_\textup{in}$.
\begin{itemize}
\item For every $1\le i \le k$, the set $\cP^+_i$ contains $S^+_i$ paths starting at $c_\textup{in}$.
\item For every $e\in E(D)$, the set $\cP_e$ contains one path starting at $c_\textup{in}$.
\item The set $\cP^*$ contains $S^*+|E(D)|$ paths ending at $c_\textup{in}$.
\end{itemize}
It follows that $c_\textup{in}$ is balanced in $H^*$ with indegree and
outdegree exactly $S^*+|E(D)|=\sum_{i=1}^kS^+_i+|E(D)|$ and a similar
argument shows the same for $c_\textup{out}$. Thus we have shown that
the \probnonrep\ instance has a solution.

\textbf{Closed walk $\Rightarrow$ balanced assignment $\chi$.}  For
the reverse direction, suppose that the constructed \probnonrep\
instance has a solution (a closed walk $W$). The closed walk can be
split into a collection $\cP$ of walks with endpoints in $Z$ and every
internal vertex in $U$. In fact, these walks are paths: (1) as each
vertex of $U$ is visited only once, the internal vertices of each walk
are distinct, (2) the walk cannot be a cycle, since we have stated earlier
that no gadget has a cycle through an external vertex.  When defining
the sets $\cP^*$, $\cP^+_i$, $\cP^-_i$, we introduced a large number
of vertices into $D^*$ with indegree and outdegree 1. The fact that
these vertices are visited implies that $\cP$ has to contain the set
$\cP^*$ and the sets $\cP^+_i$ and $\cP^-_i$ for every $1\le i \le k$.
Moreover, every path of $\cP$ not in these sets contains an internal
vertex of some gadget $H_{X_e}$ (here we use that $Z$ is independent)
and a path of $\cP$ cannot contain the internal vertices of two
gadgets (as this would imply that it has an internal vertex in
$Z$). Therefore, the remaining paths can be partitioned into sets
$\cP_e$ for $e\in E(D)$ such that the internal vertices of $H_{X_e}$
are used only by the paths in $\cP_e$. This means that the set $\cP_e$
satisfies gadget $H_{X_e}$. If $e=(w_{i_1},w_{i_2})$, then it follows
by Lemma~\ref{lem:gadget-vertex} that $\cP_e$ has type
$(c_\textup{in},c_\textup{out})+(S_e-\chi(e))(w_{i_1},w_{i_2})$ for
some integer $\chi(e)\in X_e$. In particular, this means that $\cP_e$
contains $S_e-\chi(e)$ paths starting at $w_{i_1}$ and the same number
of paths ending at $w_{i_2}$.

 We claim that $\chi$ form a solution of the
\probbalanced\ problem.
Consider a vertex $w_i$. Taking into account the contribution of the paths in $\cP^-_i$ and $\cP_e$ for $e\in \delta_D^+(w_i)$, we have that the outdegree of $w_i$ in the walk $W$ is exactly
\[
S^-_i+\sum_{e\in \delta_D^+(w_i)}(S_e-\chi(e))=S^+_i+S^-_i-\chi(\delta_D^+(w_i)).
\]
Taking into account the contribution of the paths in $\cP^+_i$ and $\cP_e$ for $e\in \delta_D^-(w_i)$, we have that the indegree of $w_i$ in the walk $W$ is exactly
\[
S^+_i+\sum_{e\in \delta_D^-(w_i)}(S_e-\chi(e))=S^-_i+S^+_i-\chi(\delta_D^-(w_i)).
\]
As the indegree of $w_i$ in $W$ is clearly the same as its outdegree,
these two values have to be equal. This is only possible if
$\chi(\delta_D^+(w_i))=\chi(\delta_D^-(w_i))$, that is $\chi$ is
balanced at $w_i$. As this is true for every $1\le i \le k$, it
follows that the \probbalanced\ instance has a solution.
\end{proof}

\fi


\bibliography{bibfile}

\ifabstract
\newpage
\appendix
\appendixtrue
\mainfalse

\fi

\end{document}